\newcommand{\argmin}{\mathop{\mathrm{argmin}}}
\newcommand{\Tr}{\mathop{\mathrm{Tr}}}
\newcommand{\E}{\mathbb{E}_\theta}
\def\Spec{\mathop{\mathrm{Spec}}\nolimits}
\def\Proj{\mathop{\mathrm{Proj}}\nolimits}
\theoremstyle{plain}
\newtheorem{thm}{Theorem}[section]
\newtheorem{lem}[thm]{Lemma}
\newtheorem{cor}[thm]{Corollary}
\newtheorem{prop}[thm]{Proposition}
\newtheorem{conj}[thm]{Conjecture}
\theoremstyle{definition}
\newtheorem{rem}[thm]{Remark}
\newtheorem{defn}[thm]{Definition}
\newtheorem{ex}[thm]{Example}
\newenvironment{asswithname}[1]
  {\customthm}
  {\endcustomthm}
\newenvironment{fcwithname}[1]
  {\customfc}
  {\endcustomfc}
\newenvironment{proofsketch}{%
  \proof}{\endproof}
\begin{document}


\title{Statistical inference for quantum singular models}


\author{Hiroshi Yano\textsuperscript{\textsection}}
    \email[]{hiroshi.yano@keio.jp}
    \affiliation{Quantum Computing Center, Keio University, Hiyoshi 3-14-1, Kohoku,
223-8522, Yokohama, Japan}
\author{Yota Maeda\textsuperscript{\textsection}}
    \email[]{yota.maeda@sony.com}
    \affiliation{Quantum Computing Center, Keio University, Hiyoshi 3-14-1, Kohoku,
223-8522, Yokohama, Japan}
    \affiliation{Advanced Research Laboratory, Research Platform, Sony Group
Corporation, 1-7-1 Konan, Minato-ku, 108-0075, Tokyo, Japan}
\author{Naoki Yamamoto}
\email[]{yamamoto@appi.keio.ac.jp}
    \affiliation{Quantum Computing Center, Keio University, Hiyoshi 3-14-1, Kohoku,
223-8522, Yokohama, Japan}
    \affiliation{Department of Applied Physics and Physico-Informatics, Keio
University, Hiyoshi 3-14-1, Kohoku, 223-8522, Yokohama, Japan}

{\renewcommand{\thefootnote}{\fnsymbol{footnote}}
\footnotetext[4]{These authors contributed equally to this work.}}


\begin{abstract}
    Deep learning has seen substantial achievements, with numerical and theoretical evidence suggesting that singularities of statistical models are considered a contributing factor to its performance.
    From this remarkable success of classical statistical models, it is naturally expected that quantum singular models will play a vital role in many quantum statistical tasks.
    However, while the theory of quantum statistical models in regular cases has been established, theoretical understanding of quantum singular models is still limited.
    To investigate the statistical properties of quantum singular models, we focus on two prominent tasks in quantum statistical inference: quantum state estimation and model selection.
    In particular, we base our study on classical singular learning theory and seek to extend it within the framework of Bayesian quantum state estimation.
    To this end, we define quantum generalization and training loss functions and give their asymptotic expansions through algebraic geometrical methods.
    The key idea of the proof is the introduction of a quantum analog of the likelihood function using classical shadows.
    Consequently, we construct an asymptotically unbiased estimator of the quantum generalization loss, the \textit{quantum widely applicable information criterion (QWAIC)}, as a computable model selection metric from given measurement outcomes.
\end{abstract}


\maketitle


\section{Introduction}


Statistical inference \cite{cox2006Principles,casella2024statistical}
 is a paradigmatic methodology for understanding unknown systems based on given sample data and predicting their future behavior. 
The success of inference relies on the choice of probabilistic models, particularly 
parametric models, which use a finite number of parameters to represent a model probability distribution. 
An important prerequisite for parametric models is the regularity condition; roughly speaking, a parametric model is said to be \textit{regular} if its Hessian matrix of the Kullback-Leibler (KL) divergence between the true and model probability distribution is positive definite.
If a model is regular, for instance, in the case of estimation theory, the Maximum Likelihood Estimation (MLE) \cite{fisher1922mathematical} asymptotically achieves the lower bound of the Cramer-Rao inequality \cite{cramer1999mathematical} and offers desirable properties such as the asymptotic normality \cite{van2000asymptotic,lecam2000Asymptotics}. 
Moreover, the regularity is critical to formulating the \textit{model selection} procedure, with \textit{information criteria} being one of the main tools used to select an appropriate parametric model among candidates based on the observed data \cite{anderson2004model,burnham2004multimodel,claeskens2008model}.
In particular, based on the regular MLE property, Akaike Information Criterion (AIC) \cite{akaike1973Information,akaike1974New}, historically the first proposal of information criteria, enables us to choose an appropriate regular model that takes into account the model complexity for predicting its general performance.

However, statistical models in machine learning are typically non-regular parametric models \cite{amari2003learning,watanabe2007almost}, 
namely \textit{singular models}. 
Singularities are prevalent in models such as mixture models \cite{yamazaki2003singularities,sato2019bayesian,watanabe2021waic,kariya2022asymptotic,watanabe2022asymptotic}, neural networks \cite{aoyagi2012learning}, and deep learning \cite{wei2022deep} like Transformer-based models \cite{hoogland2024stagewise,wang2024loss}. 
Importantly, the beautiful theoretical guarantees for regular models, such as the asymptotic normality of MLE, do not generally hold anymore for singular models. 
The absence of such theoretical guarantees for singular models poses challenges in understanding their behavior and ensuring reliable performance.
On the other hand, with the recent computational advancements, numerous studies of deep learning \cite{lecun2015deep} have numerically shown that it is possible to train such singular models and learn complex behaviors, suggesting that, surprisingly, singularity might be a key to enhancing the performance \cite{lau2024locallearningcoefficientsingularityaware}. 
However, the theoretical foundation for understanding the effective learnability of singular models remains inadequately explored, underscoring the need for rigorous analysis \cite{zhang2021understanding}. 
This gap is critical, especially when considering the safety and consistent functioning of those models in practical applications \cite{bereska2024mechanistic,anwar2024foundationalchallengesassuringalignment}.

Yet there exist many theoretical studies on singular models. 
To provide theoretical insights into deep neural networks, which could be considered as singular models, several useful analyzing tools have been developed, such as the neural tangent kernel (NTK) \cite{jacot2018neural,lee2019wide,lee2020finite}, the mean-field theory \cite{yang2017mean,lee2017deep,mei2018mean}, and the double descent \cite{mei2022Generalization,hastie2022Surprises}.
A notable point is that these approaches take advantage of specific architectures of neural networks. 
On the other hand, there are few studies focusing on general singular models.
For example, algebraic statistics \cite{drton2008lectures} leverages algebraic techniques to study singular statistical parametric models, resulting in descriptions of the free energy. However, it requires the constraint that models are defined by polynomials.
A particularly successful framework is Watanabe's singular learning theory \cite{watanabe2009algebraic,watanabe2018mathematical}, which provides a general method for analyzing singularities based on algebraic geometry. 
Although its application to deep learning is still in the early stages, recent developments \cite{nagayasu2023bayesian,nagayasu2023freeenergybayesianconvolutional,wei2024variational} show promise, particularly in specific models like convolutional neural networks (CNN). 
Within this theory, an information criterion for singular models has been obtained, referred to as Widely Applicable Information Criterion (WAIC) \cite{watanabe2009algebraic}, as a generalization of AIC.

Now, let us move on to the case of quantum. 
In quantum statistical inference \cite{hayashi2005Asymptotic,jencova2006Sufficiency,gill2013Asymptotic}, significant theoretical developments have already emerged for regular models, including quantum Cramer-Rao bound \cite{helstrom1969Quantum} and quantum local asymptotic normality \cite{kahn2009Local} for quantum parameter estimation problems. 
Also, several studies have addressed the model selection problem in quantum state estimation \cite{usami2003Accuracy,yin2011Information,guta2012Rankbased,enk2013When,langford2013Errors,moroder2013Certifying,schwarz2013Error,knips2015How,scholten2018Behavior}, which mainly use techniques from classical hypothesis testing including AIC, based on a fixed measurement and a resultant regular classical probability distribution. 
On the other hand, in the practical scenarios of quantum statistical inference, singular models have been successfully used to deal with even traditionally challenging tasks. 
For example, in recent years, numerous expressive statistical models, such as neural network quantum states (restricted Boltzmann machine \cite{torlai2018Neuralnetwork}, recurrent neural networks \cite{carrasquilla2019Reconstructing}, CNN \cite{schmale2022Efficient}, Transformer \cite{cha2021Attentionbased}), quantum Boltzmann machines \cite{kieferova2017Tomography}, and quantum neural network \cite{kieferova2021Quantum}, have been proposed to enable efficient quantum state tomography.
Moreover, we are witnessing an investigation to explore the possibility of modern machine learning techniques such as dynamical Lie algebra \cite{larocca2023theory}, Gaussian processes \cite{garcia-martin2023Deep}, and NTK \cite{liu2022Representation} for analyzing quantum singular models. 
Definitely, now is the time to build a solid theory of singular models for quantum statistical inference.

This paper is an endeavor of this grand journey. 
We build upon the celebrated singular learning theory for classical Bayesian statistics \cite{watanabe2009algebraic,watanabe2018mathematical} established by Watanabe to 
formulate the Bayesian state tomography framework for quantum singular models. 
Specifically, we investigate the generalization performance of the estimated state through a quantum information-theoretic quantity. 
For regular models, two of the authors \cite{yano2023Quantuma} proposed a quantum generalization of AIC as a preliminary result. 
Following our previous work, we formulate the Bayesian quantum state estimation and define a quantum generalization and training loss functions based on the quantum relative entropy.
The most significant challenge in this work is how to properly define a quantum analog of the likelihood ratio, a key information-theoretic quantity to conduct statistical inference. 
Although the classical likelihood ratio in quantum state estimation can be naturally introduced as the ratio of the probability of obtaining a measurement outcome on two different quantum states, the ``quantum likelihood ratio'' does not yet have a confirmed definition; see for example \cite{yamagata2013Quantum}.
We avoid these formulations by introducing \textit{classical shadows}, originally developed for efficiently estimating specific features of a quantum state \cite{huang2020Predicting}. 
The classical shadow has since become an important object used in many subsequent learning methods for quantum information processing and quantum many-body physics \cite{huang2021Power,huang2022Provably}. 
Based on its efficient classical representation, we use the classical shadow to define a quantum analog of the log-likelihood ratio based on the quantum relative entropy. 
Our main result is the explicit form of asymptotic expansions of the generalization and training losses based on algebraic geometrical methods, which hold even for quantum singular models.
These formulas reflect the complexity of singularities arising from quantum singular models and the penalty of the used measurement (with respect to parameter estimation) through the simultaneous desingularities of the KL divergence and quantum relative entropy.
Using this representation, we propose an asymptotically unbiased estimator for the quantum generalization loss, which we term the \textit{quantum widely applicable information criterion (QWAIC)}.
It can be seen as a quantum generalization of WAIC \cite{watanabe2009algebraic}, allowing us to evaluate the trade-off between the model's adaptability to the observed data and the model complexity.

The rest of this paper is structured as follows.
Section \ref{sec:Preliminaries} is devoted to providing readers with introductory concepts about statistical inference in quantum state models and singular learning theory.
In Section \ref{section:main results}, we introduce the concrete problem setting in quantum state estimation and our main results, including the asymptotic statistical properties of quantum generalization/training loss and QWAIC, with a sketch of proof and their interpretations.
In Section \ref{sec:concrete_examples}, we demonstrate the analytic calculation of important quantities that appear in our main results and numerical simulation through several toy examples.
Section \ref{sec:Conclusion and open questions} presents conclusions and future perspectives. 
Proving our main theorems requires extensive mathematical preparation, which we would like to defer in the appendices.
Appendix \ref{app:Fundamental conditions} summarises the assumptions on which the theory is developed. 
In particular, we see here how the regularity condition can be generalized. 
We then introduce relevant mathematical tools in Appendix \ref{app:Mathematical tools}, consisting of two parts: algebraic geometry and empirical process.
In Appendix \ref{section:Revisiting singular learning theory and WAIC}, we review singular learning theory by Watanabe based on a part of the above assumptions and see that WAIC is an asymptotically unbiased estimator.
Appendix \ref{app:Proof of main results} includes the complete proof of our main results.
For our notation, see \hyperref[app:glossary]{Glossary}.


\section{Preliminaries}
\label{sec:Preliminaries}
This section provides preliminaries for our study.
First, we will formulate the problem of our interest: statistical inference of quantum state models.
Then, we review singular learning theory \cite{watanabe2009algebraic,watanabe2018mathematical}, focusing on the idea behind the derivation of WAIC.
Readers familiar with these notions can skip this section.

\subsection{Statistical inference in quantum state models}
The goal of statistical inference is to make a ``good'' guess about some properties of the underlying structure, using a finite number of observations.
Two important concepts of statistical inference are estimation and model selection.
This subsection aims to highlight these two concepts and outline the formulation of statistical inference in quantum state estimation.

Traditionally, estimation theory mainly focuses on the problem of optimizing the estimator for a given statistical model.
In simple terms, the primary concern is minimizing the following estimation error of a parameter by constructing a good estimator $\hat{\theta}$ given $n$ i.i.d. samples $x^n \coloneqq \{x_1,...,x_n\}$:
\begin{equation}
    L(\hat{\theta}) = d_{p(\cdot|\theta_0)}(\hat{\theta},\theta_0) = \mathbb{E}_{p(X^n|\theta_0)}\left[ \| \hat{\theta}(X^n) - \theta_0 \|^2 \right], \quad p(x^n|\theta_0) = \prod_{i=1}^n p(x_i|\theta_0),
\end{equation}
where $\theta_0$ is a true parameter, $\hat{\theta}$ an estimator, and $p(\cdot|\theta)$ a statistical model \footnote{In this paper, the notations $X$ and $X^n$ are random variables subject to the true probability distribution.}.
The second equality shows the expression when using the mean squared error as the distance measure between parameters.
Under the regularity condition, MLE attains the Cramer-Rao bound asymptotically.

However, there are often cases where not only the true parameter but also the true statistical model are unknown.
In such cases, notions such as \textit{statistical inference} or \textit{learning} come into play.
Now, the goal turns to minimizing the estimation error of a statistical model by constructing a suitable statistical model $p$ and an estimator $\hat{\theta}$:
\begin{equation}
    L(p, \hat{\theta}) = d_q(q, p(\cdot|\hat{\theta})) = \mathbb{E}_{q(X^n)}\left[ \mathrm{KL}(q||p(\cdot|\hat{\theta}(X^n))) \right], \quad q(x^n) = \prod_{i=1}^n q(x_i), 
\end{equation}
where $q$ is a true statistical model.
The second equality shows the expression when using the KL divergence as the distance measure between probability distributions.
Minimizing $L(p,\hat{\theta})$ requires choosing a proper statistical model as well as a parameter estimator, but this can be addressed, for example, by evaluating the KL divergence $\mathrm{KL}(\tilde{q}||p(\cdot|\hat{\theta}(X^n)))$, where $\tilde{q}(x) = (1/n) \sum_i \delta(x-x_i)$ is the empirical distribution, to update $p(\cdot|\theta)$.
Note that $p(\cdot|\theta)$ with many parameters is likely to yield a lower value of $\mathrm{KL}(\tilde{q}||p(\cdot|\hat{\theta}(X^n)))$; however, this often leads to overfitting.
Therefore, a suitable choice of $p(\cdot|\theta)$ is essential.
See \cite{burnham2004multimodel} for the details of model selection.

While model selection is a common topic in classical statistics, it is rarely discussed in quantum statistics.
The reason is considered to lie in the difficulties of quantum estimation.
In quantum estimation, it is necessary to optimize not only estimators but also measurements in order to minimize the estimation error of a parameter:
\begin{equation}
    L(\Pi, \hat{\theta}) = d_{p(\cdot|\theta_0)}(\hat{\theta},\theta_0) = \mathbb{E}_{p(X^n|\theta_0)}\left[ \| \hat{\theta}(X^n) - \theta_0 \|^2 \right], \quad p(x^n|\theta_0) = \Tr(\sigma(\theta_0)^{\otimes n} \Pi_{x^n}), 
\end{equation}
where $\sigma(\theta)$ is a quantum state model and $\Pi$ is a POVM on the space of $\sigma(\theta)^{\otimes n}$.
When the parameter is high dimensional, it is generally known that constructing a measurement and an estimator that minimizes the estimation error is highly challenging.
Thus, quantum estimation theory is still an active research area.

In this work, we consider the problem of \textit{statistical inference in quantum state models}.
In the context of quantum state estimation (or tomography), there are cases where not only the true parameter and the optimal measurement but also the true quantum state model are unknown.
In such a case, analogous to statistical inference in classical statistics, we need to minimize the following estimation error of a quantum state model:
\begin{equation}
    L(\sigma, \Pi, \hat{\theta}) = d_{\rho}(\rho,\sigma(\hat{\theta})) = \mathbb{E}_{q(X^n)}\left[ D(\rho||\sigma(\hat{\theta}(X^n))) \right], \quad q(x^n) = \Tr(\rho^{\otimes n} \Pi_{x^n}), 
\end{equation}
where $\rho$ is a true quantum state.
The second equation shows the expression when using the quantum relative entropy $D(\cdot||\cdot)$ as the distance measure between quantum states (the formal definition of $D(\cdot||\cdot)$ will be given later). 
However, because handling the triple $(\sigma, \Pi, \hat{\theta})$ is challenging, we restrict $\Pi$ to a separable and non-adaptive tomographic complete measurement and fix it, i.e., $q(x^n) = \prod_i \Tr(\rho \Pi_{x_i})$ ($\Pi$ is a POVM on the space of $\rho$), allowing the remaining two $(\sigma, \hat{\theta})$ to be our objectives, similar to classical statistical inference.
Roughly speaking, this makes it easier to evaluate $D(\tilde{\rho}||\sigma(\hat{\theta}(X^n)))$ with $\tilde{\rho}$ as an ``empirical'' state of $\rho$, similar to $\tilde{q}$ in classical statistics. 
In our work, we will consider the classical shadow of $\rho$ as $\tilde{\rho}$, although the classical shadow is more accurately the linear inversion estimator of $\rho$ than an empirical state.
It also may be interpreted as the estimation of an observable $\log \sigma(\hat{\theta})$ with the classical shadow.
Finally, it should also be noted that our theory was developed without reliance on the existing studies on \textit{quantum statistical inference} \cite{hayashi2005Asymptotic,jencova2006Sufficiency,gill2013Asymptotic}, in which quantum Gaussian states allow one to deal with many quantum statistical problems (e.g. testing problems \cite{kumagai2013Quantum}).

\subsection{Singular learning theory} 
\label{subsection:singular learning theory}
Shifting our focus back to classical statistics, it is widely known that parameter estimation becomes much more difficult when a statistical model has many parameters. 
Along with the advancement of classical computers, many strategies have been proposed to reduce the estimation error. 
In the context of learning, \textit{singular learning theory} has been recently established to deal with a statistical model with many parameters within the framework of Bayesian statistics. 
Building on these developments, our study attempts to advance singular learning theory in the realm of quantum state estimation.

We briefly summarize the Bayesian statistics for singular models \cite{watanabe2009algebraic,watanabe2018mathematical}; see also for the recent advance \cite{watanabe2024recent}.  
 Our primary interest here is on the asymptotic behavior of the KL divergence between the estimated statistical model and the true model. Under the standard regularity condition, the second-order Taylor expansion of the KL divergence with respect to the parameters effectively captures its asymptotic properties. However, analytical methods for cases where such assumptions do not hold are largely unknown. In singular learning theory, the use of mathematical tools such as algebraic geometry paves the way for such cases.
We aim to extend this theory to investigate quantum state estimation, which will be discussed in the next section. 
In the remainder of this subsection, we introduce several basic notions and key ideas in singular learning theory.

Given $n$ i.i.d. samples $x^n = \{x_1,...,x_n\}$ from an unknown probability distribution $q(x)$, one wants to predict $q(x)$ using a pair of a statistical model $p(x|\theta)$ and a prior distribution $\pi(\theta)$, where $\theta \in \Theta \subset \mathbb{R}^d$ and $\theta \sim \pi(\theta)$.
Then, the posterior distribution and posterior predictive distribution are defined by 
\begin{align*}
    p(\theta|x^n) \coloneqq \frac{1}{p(x^n)} \pi(\theta) \prod_{i=1}^{n} p(x_i|\theta), \quad
    p(x|x^n) \coloneqq \int_\Theta p(x|\theta) p(\theta|x^n) d\theta,
\end{align*}
where 
\[p(x^n) \coloneqq \int \pi(\theta) \prod_{i=1}^{n} p(x_i|\theta) d\theta\]
is the marginal likelihood. 
To quantify the error of prediction, based on the KL divergence
\[\mathrm{KL}(q\|p(\cdot|\theta)) \coloneq \mathbb{E}_X\left[ \log \frac{q(X)}{p(X|\theta)} \right],\]
the classical generalization loss $G_n$ and training loss $T_n$ are defined by 
\begin{align}
\label{eq:def of Gn and Tn}
    G_n \coloneqq - \mathbb{E}_X[ \log p(X|x^n) ], \quad T_n \coloneqq - \frac{1}{n} \sum_{i=1}^{n} \log p(x_i|x^n).
\end{align}
Here $\mathbb{E}_X[\cdot]$ represents the expectation with respect to $q(X)$.
To study the asymptotic behavior of $G_n$ and $T_n$, let us introduce an optimal parameter set 
\begin{align}
    \label{eq:Theta_0}
\Theta_0 \coloneq \left\{\theta_0\in\Theta \Big\vert\ \theta_0 = \argmin_{\theta \in \Theta} \mathrm{KL}(q\|p(\cdot|\theta)) \right\}.
\end{align}
In this paper, we denote by $\theta_0$ for an element of $\Theta_0$.
Let us recall the assumption commonly made in classical statistics.
\begin{defn}[{Definition \ref{def:regular for classical}}]
The function $q(x)$ is said to be \textit{regular} for $p(x|\theta)$ if the following conditions are satisfied:
\begin{enumerate}
    \item $\Theta_0$ consists of a single element $\theta_0$,
    \item the Hessian matrix $\nabla^2 \mathrm{KL}(q\|p(\cdot|\theta_0))$ is positive definite, and 
    \item there is an open neighborhood of $\theta_0$ in $\Theta$.
\end{enumerate}
\end{defn}
The regularity condition ensures that we can analyze the losses in traditional statistics; in such cases, the generalization loss is described by the dimension of parameters \cite{amari1992four,amari1993statistical,murata1994network}. 
One of its fruits is the construction of AIC \cite{akaike1973Information,akaike1974New}, the most famous information criterion. 
It has been widely applied in the analysis of model selection of statistical models in machine learning \cite{claeskens2008model,burnham2004multimodel,anderson2004model,fraley2002model}, biostatistics \cite{posada1998modeltest,posada2004model}, and econometrics \cite{sclove1987application} etc.

If the regularity condition is not satisfied, we call \textit{singular} \cite{watanabe2023mathematical}.
It is known that singular models usually appear in various statistical models in real problems; mixture models \cite{kariya2022asymptotic,yamazaki2003singularities,sato2019bayesian,watanabe2022asymptotic,watanabe2021waic}, neural networks \cite{aoyagi2012learning}, deep learning \cite{wei2022deep}, Gaussian processes \cite{seeger2004gaussian}, reduced rank regressions \cite{aoyagi2005stochastic,hayashi2017upper,zellner1976bayesian}, and hidden Markov models \cite{yamazaki2005algebraic,zwiernik2011asymptotic} etc.
Notably, in singular cases, the posterior distribution cannot be approximated by any normal distribution even in the asymptotic limit, and moreover, $\Theta_0$ contains singular points in general, which forces the estimation to be difficult.
Technically, the existence of singularities complicates the integral calculations that contribute to the marginal likelihood.
To address this issue, Watanabe proposed using desingularization of the KL divergence, inspired by the technique used to study the local zeta functions \cite{atiyah1970resolution,igusa2000introduction} in mathematics. 
The core concept of singular learning theory relies on an application of Hironaka's theorem on the resolution of singularities \cite{hironaka1964resolutionI,hironaka1964resolutionII} (Theorems \ref{thm:resolution_original} and 
 \ref{thm:resolution_Watanabe}). 
This theorem allows us to attribute the problem to integrals in spaces that do not contain singularities after taking the resolution of singularities. 

Through these observations, even in such a complex situation, that is, even if $q(x)$ is singular for $p(x|\theta)$, Watanabe's theory describes the asymptotic behaviors of the expectations of $G_n$ and $T_n$ as follows:
\begin{align}
    \mathbb{E}_{X^n}[G_n] &= - \mathbb{E}_X[ \log p(X|\theta_0) ] + \frac{\lambda}{n} + o\left(\frac{1}{n}\right), 
    \label{eq:exp_gen_loss} \\ 
    \mathbb{E}_{X^n}[T_n] &= - \mathbb{E}_X[\log p(X|\theta_0)] + \frac{\lambda-2\nu}{n}+ o\left(\frac{1}{n}\right), 
    \label{eq:exp_emp_loss}
\end{align}
where 
$\mathbb{E}_{X^n}[\cdot]$ represents the expectation with respect to $q(X^n) = \prod_{i=1}^n q(X_i)$.
The quantities $\lambda$ and $\nu$ are called the \textit{real log canonical threshold (RLCT)} and the \textit{singular fluctuation}, respectively; see also Proposition \ref{prop:singular for Gn and Tn} and Definition \ref{defn:invariants in singular learning theory}.
The former quantity $\lambda$ represents how bad the singularity is in algebraic geometry or minimal model program \cite{kollar1998birational,hacon2014acc}. 
The geometrical meaning of the latter one, $\nu$, is still unknown.
In the case a pair $(q(x), p(x|\theta))$ satisfies the regular and realizable conditions, it is known that $\lambda = \nu = d/2$, while they are generally different from each other for singular models.
These asymptotic expansion formulas are interesting in that they describe how different the generalization loss $G_n$ is from a computable training loss $T_n$ in view of algebraic geometrical quantities $\lambda$ and $\nu$. 
Beyond that, as a remarkable application of these observations, Watanabe \cite{watanabe2009algebraic,watanabe2010asymptotic} proposed an information criterion called $\mathrm{WAIC}$ (Widely Applicable Information Criterion) for singular models, 
\[\mathrm{WAIC}\coloneqq T_n + \frac{1}{n}\sum_{i=1}^n\mathbb{V}_{\theta}[\log p(x_i|\theta)],\] 
where the second term is defined by the posterior variance (Definition \ref{defn:posterior mean for classical}).
It is indeed an asymptotically unbiased estimator of $G_n$: 
\begin{align*}
    \mathbb{E}_{X^n}[G_n] = \mathbb{E}_{X^n}[\mathrm{WAIC}] + o\left(\frac{1}{n}\right).
\end{align*} 
As in the situations of AIC, WAIC has been applied in a wide range of areas: model selection \cite{vehtari2017practical,burkner2017brms,gronau2019limitations,watanabe2013waic}, anomaly detections \cite{choi2018waic}, analysis of missing data \cite{du2024comparing}, basics of Bayesian estimation \cite{gelman2014understanding,yao2018using}, and phylogenetics \cite{lartillot2023identifying} etc.

\section{Main results}\label{section:main results}

\subsection{Generalization and training losses in quantum state estimation}
\begin{figure}[t]
    \centering
    \includegraphics[width=0.45\textwidth]{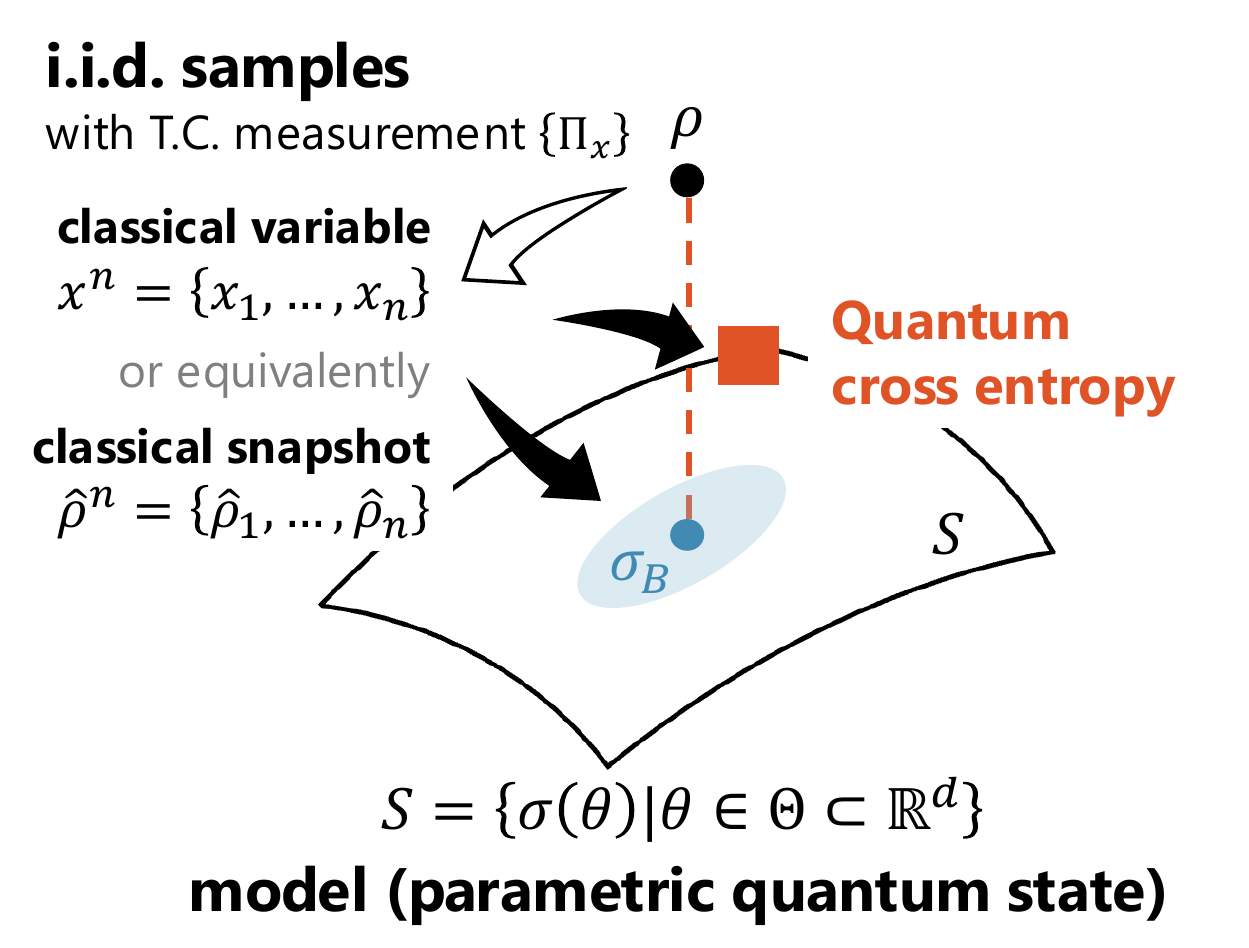}
    \caption{Our setting in quantum state estimation.}
    \label{fig:setting}
\end{figure}
In the present work, we restrict our attention to finite-dimensional systems and formulate the task of Bayesian quantum state estimation as follows (see Fig. \ref{fig:setting}). 
Let $\rho$ be an unknown true state and $x^n = \{ x_1, ..., x_n \}$ be a finite number of measurement data obtained through a tomographically complete (T.C.) measurement $\{\Pi_x\}$ (i.e. there exists $x$ such that $\Tr(\rho_A \Pi_x) \neq \Tr(\rho_B \Pi_x)$ for $\rho_A \neq \rho_B$) with uniform weights on $\rho$. 
In other words, the measurement data $x^n$ is a set of the i.i.d. samples from the corresponding true probability distribution $q(x) \coloneqq \Tr(\Pi_x \rho)$.
To predict $\rho$, one prepares a pair of a parametric quantum state model $\sigma(\theta)$ and a prior distribution $\pi(\theta)$.
Let us denote by $p(x|\theta) \coloneqq \Tr(\Pi_x \sigma(\theta))$ the corresponding probability distribution of the model $\sigma(\theta)$. 
In addition, we introduce an alternative representation of the measurement data using the classical shadow \cite{huang2020Predicting} by $\hat{\rho}^n \coloneqq \{ \hat{\rho}_{x_1}, ..., \hat{\rho}_{x_n} \}$ where $\hat{\rho}_{x_i}$ is called a classical snapshot, corresponding to a measurement outcome $x_i$ for each $i$.
In general, to construct the classical shadow, we repeatedly apply a random unitary $U$ taken from a set $\mathcal{U}$ to rotate the quantum state and perform a computational basis measurement to get the measurement outcome $\ket{b}$.
Two prominent examples are random Clifford measurement and random Pauli basis measurement.
After measurement, the measurement outcome is stored in the form of the classical snapshot $\hat{\rho}_{x} = \mathcal{M}^{-1}(U^\dagger \ketbra{b}{b} U)$ via the classical post-processing of applying the inverse of $U$ and the quantum channel $\mathcal{M}$ (corresponding to $\mathcal{U}$) to $\ketbra{b}{b}$, indicating that $x$ corresponds to $(U, b)$.
This formulation of the classical shadow facilitates efficient estimation of many non-commuting observables.
With the posterior distribution $p(\theta|x^n)$ defined in Section \ref{subsection:singular learning theory}, the posterior predictive quantum state, or simply the Bayesian mean, $\sigma_B$ is naturally defined by
    \[\sigma_B \coloneqq \int_\Theta \sigma(\theta) p(\theta|x^n) d\theta.\]
Now, instead of KL divergence in the classical learning theory, we use the quantum relative entropy between the two quantum states $\rho$ and $\sigma_B$:
\[
   D(\rho||\sigma_B) \coloneq \Tr\left[\rho (\log \rho - \log \sigma_B) \right],
\]
implicitly assuming $\mathrm{supp}(\rho) \subseteq \mathrm{supp} (\sigma_B)$.
Since the first term of $D(\rho||\sigma_B)$ does not depend on  $\sigma_B$, it is enough to evaluate the second term, which we call the \textit{quantum cross entropy} (QCE).
Using QCE, we further define the \textit{quantum generalization loss} $G_n^{Q}$ and \textit{training loss} $T_n^{Q}$: 
\begin{align}
\label{eq:G_n^Q and T_n^Q}
    G_n^{Q} \coloneqq - \Tr(\rho \log \sigma_B),  \quad
    T_n^{Q} \coloneqq - \frac{1}{n} \sum_{i=1}^{n} \Tr(\hat{\rho}_{x_i} \log \sigma_B).
\end{align}
Note that, while properly defining $T_n^Q$ is a non-trivial task, we opted to use the classical shadow due to its unbiasedness, i.e., $\mathbb{E}[\hat{\rho}] = \rho$.
We remark that $G_n^Q$ and $T_n^Q$ serve as our definitions of the quantum analogs of $G_n$ and $T_n$.

Before proceeding to our main results, we note that the Fundamental conditions (Fundamental conditions \ref{ass:fundamental condition} and \ref{ass:fundamental conditionII}) are assumed to hold throughout this paper as written in Appendix \ref{app:Fundamental conditions}.
They provide an appropriate framework for the Bayesian estimation in quantum singular models.
In Section \ref{subsec:main_result_Asymptotic_behaviors}, we begin by establishing the basic theorem for Bayesian quantum state estimation, which allows us to study the asymptotic behaviors of $G_n^Q$ and $T_n^Q$.
Next, we investigate the quantum generalization and training losses when our model is regular for the true state. 
In other words, we assume both the regularity condition between quantum states, where the Hessian matrix of the quantum relative entropy is positive definite, and the regularity condition between the associated probability distributions, where the Hessian matrix of the KL divergence is positive definite. 
For details on the assumptions in regular cases, refer to Assumptions \ref{ass:R1} and \ref{ass:R2}.
Subsequently, we discuss singular models using tools from algebraic geometry.
In such cases, we impose a weaker assumption than the regularity condition, namely a \textit{relatively finite variance}, introduced in classical singular learning theory.
For details on the assumptions in singular cases, refer to Assumptions \ref{ass:S1} and \ref{ass:S2}.
In Section \ref{subsec:QWAIC}, based on these asymptotic descriptions, we establish QWAIC.

\subsection{Asymptotic behaviors of \texorpdfstring{$G_n^{Q}$}{GnQ} and \texorpdfstring{$T_n^{Q}$}{TnQ}}
\label{subsec:main_result_Asymptotic_behaviors}
Investigating the asymptotic expansions of the quantum generalization loss $G_n^Q$ and training loss $T_n^Q$ defined in the previous subsection will provide valuable insights into the generalization performance in quantum state estimation.
Here, we show their asymptotic behaviors in both regular and singular cases.

As a starting point, analogous to the classical learning theory, examining cumulant generating functions in Bayesian statistics is essential for analysis, and the following theorem can be derived using these functions associated with $G_n^Q$ and $T_n^Q$.
It leads us to the following \textit{basic theorem} for the Bayesian quantum state estimation.
\begin{thm}[Basic theorem; informal version of Theorem \ref{thm:q_basic}]
\label{mainthm:expamsion of GnQ and TnQ}
    The generalization loss $G_n^{Q}$ and training loss $T_n^{Q}$ can be expanded as follows:
    \begin{align}
        G_n^Q &= - \Tr(\rho \mathbb{E}_\theta[\log \sigma(\theta)]) - \frac{1}{2} \Tr(\rho \mathbb{V}_{\theta}[\log \sigma(\theta)]) + o_p\left(\frac{1}{n}\right), 
        \label{eq:main_basic_theorem_G_n^Q}\\
        T_n^Q &= - \Tr( \left(\frac{1}{n}\sum_{i=1}^{n} \hat{\rho}_{x_i}\right) \mathbb{E}_\theta[\log \sigma(\theta)]) - \frac{1}{2} \Tr(\left(\frac{1}{n}\sum_{i=1}^{n} \hat{\rho}_{x_i}\right) \mathbb{V}_{\theta}[\log \sigma(\theta)]) + o_p\left(\frac{1}{n}\right).
        \label{eq:main_basic_theorem_T_n^Q}
    \end{align}
\end{thm}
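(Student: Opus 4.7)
The plan is to establish a single operator identity
\begin{equation}
\log \sigma_B \;=\; \mathbb{E}_\theta[\log\sigma(\theta)] \;+\; \tfrac{1}{2}\,\mathbb{V}_\theta[\log\sigma(\theta)] \;+\; R_n,
\end{equation}
where $R_n$ is an operator-valued remainder such that $\Tr[\omega R_n] = o_p(1/n)$ for every trace-class test operator $\omega$ appearing in the problem, and then to read off both \eqref{eq:main_basic_theorem_G_n^Q} and \eqref{eq:main_basic_theorem_T_n^Q} at once via the substitutions $\omega = \rho$ and $\omega = \tfrac{1}{n}\sum_i \hat\rho_{x_i}$. This is the operator analog of the classical derivation sketched in Section~\ref{subsection:singular learning theory}, where $\log p(x|x^n)=\log \mathbb{E}_\theta[e^{\log p(x|\theta)}]$ is expanded in its first two posterior cumulants; the genuinely new ingredient is the non-commutativity of the quantum observables $\log\sigma(\theta)$.

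Concretely, I would set $L(\theta):=\log\sigma(\theta)$, $\bar{L}:=\mathbb{E}_\theta[L(\theta)]$, and $\delta L(\theta):=L(\theta)-\bar{L}$, and invoke the Fundamental conditions to argue that the posterior concentrates tightly enough around $\Theta_0$ that $\delta L(\theta)$ is $O_p(n^{-1/2})$ on the bulk of the posterior. Applying Duhamel's formula twice to $e^{\bar{L}+\delta L}$ and then taking the posterior expectation annihilates the first-order term, leaving
\begin{equation}
\sigma_B - e^{\bar{L}} \;=\; \iint_{0\le s_2\le s_1\le 1} \mathbb{E}_\theta\!\bigl[\,e^{(1-s_1)\bar{L}}\,\delta L\, e^{(s_1-s_2)\bar{L}}\,\delta L\, e^{s_2\bar{L}}\,\bigr]\,ds_2\,ds_1 \;+\; O_p(n^{-3/2}).
\end{equation}
Expanding $\log\sigma_B$ around $\bar{L}$ through the Fr\'echet derivative of the matrix logarithm and using the identity $\int_0^\infty (e^{\bar{L}}+s)^{-2}\,ds = e^{-\bar{L}}$ together with cyclicity of the trace, the leading second-order contribution collapses to $\tfrac{1}{2}\Tr[\omega\,\mathbb{V}_\theta[L(\theta)]]$. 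Third- and higher-order Duhamel remainders are controlled by the empirical-process tools of Appendix~\ref{app:Mathematical tools} and the posterior concentration above.

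The main obstacle is the non-commutative algebra. Because $\omega$ does not generally commute with $\bar{L}$ or with the fluctuations $\delta L$, the Fr\'echet-derivative step produces Kubo--Mori--Bogoliubov-type integrals rather than a clean operator product, and the reduction to the naive form $\tfrac{1}{2}\Tr[\omega\,\mathbb{V}_\theta[L]]$ holds only modulo commutator corrections. One therefore has to show that these corrections are $o_p(1/n)$, and this is precisely where posterior concentration enters: as $n\to\infty$, $\bar{L}\to\log\sigma(\theta_0)$ in the neighborhood of any optimal parameter, so each commutator $[\omega,\bar{L}]$ or $[\bar{L},\delta L]$ contributes an additional small factor that downgrades the correction from $O_p(1/n)$ to $O_p(n^{-3/2})$. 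Once the operator expansion for $\log\sigma_B$ is established up to $o_p(1/n)$, linearity of $\Tr[\omega\,\cdot\,]$ in $\omega$ yields the two claimed asymptotic expansions in one stroke.
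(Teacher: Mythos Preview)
Your route is genuinely different from the paper's, and the step where you ``collapse'' the second-order term has a real gap.

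The paper does not expand $\log\sigma_B$ around $e^{\bar L}$. It introduces the one-parameter family $\Phi(\alpha)=\mathbb{E}_\theta[\sigma(\theta)^\alpha]$ and the cumulant generating function $s^Q(\hat\rho,\alpha)=\Tr\bigl(\hat\rho\,\log\Phi(\alpha)\bigr)$, Taylor-expands in $\alpha$ at $\alpha=0$, and evaluates at $\alpha=1$. The decisive point is that $\Phi(0)=I$: the resolvent $(sI+\Phi(0))^{-1}=(s+1)^{-1}I$ appearing in the Fr\'echet derivative of the logarithm is a scalar, so it commutes with everything, and the second $\alpha$-derivative comes out \emph{exactly} as $\Tr\bigl(\hat\rho\,\mathbb{V}_\theta[\log\sigma(\theta)]\bigr)$ with no commutator debris (Lemma~\ref{lem:derivatives of quantum cumulant}). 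All non-commutativity is pushed into the $\ell\ge 3$ cumulants, which enter Theorem~\ref{thm:q_basic} only through the separate smallness hypotheses, treated afterwards in Lemmas~\ref{lem:higher_order_scaling_regular} and~\ref{lem:higher_order_scaling_singular}.

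Your Duhamel expansion around $e^{\bar L}$ cannot reproduce this, because $\bar L$ is a genuine non-scalar operator: the cyclicity step you invoke would need $[\omega,\bar L]=0$, and the reduction of the Duhamel double integral to $\tfrac12\,e^{\bar L}\mathbb{E}_\theta[(\delta L)^2]$ would need $[\delta L,\bar L]=0$. Your justification that the leftover commutator corrections are $o_p(1/n)$ is incorrect: posterior concentration gives $\bar L\to\log\sigma(\theta_0)$, hence $[\omega,\bar L]\to[\omega,\log\sigma(\theta_0)]$, which is $O(1)$, not small. A two-point toy posterior already exhibits the failure: with $L_\pm=\bar L\pm\epsilon X$, $\bar L=\operatorname{diag}(0,-c)$ and $X=\bigl(\begin{smallmatrix}0&1\\1&0\end{smallmatrix}\bigr)$, one computes directly
\[
\log\sigma_B-\bar L \;=\; \epsilon^2\operatorname{diag}\!\Bigl(\tfrac{c-1+e^{-c}}{c^2},\ \tfrac{e^{c}-c-1}{c^2}\Bigr)+O(\epsilon^4),
\qquad
\tfrac12\,\mathbb{V}_\theta[L]=\tfrac{\epsilon^2}{2}\,I,
\]
and these differ at the leading order $\epsilon^2\sim n^{-1}$ for every $c\neq 0$. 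So the KMB correction sits at $O_p(1/n)$, not $o_p(1/n)$, and the proposed argument does not close. The missing idea is precisely the paper's $\Phi(0)=I$ device, which makes the first two cumulants come out clean and relegates all operator-ordering issues to the remainder.
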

Here, $\mathbb{E}_\theta[\cdot]$ and $\mathbb{V}_{\theta}[\cdot]$ are used to denote the posterior mean and variance for matrices, respectively; see Definition \ref{defn:posterior mean for classical} for the precise definition.
For sequences of random variables $X_n$ and $\epsilon_n$, the order in probability notation $X_n = o_p(\epsilon_n)$ means that $X_n/\epsilon_n$ converges to zero in probability as $n \to \infty$.
\begin{proofsketch}
    Let us introduce a cumulant generating function $s^Q(\hat{\rho}, \alpha) = \Tr(\hat{\rho} \log \mathbb{E}_\theta[\sigma(\theta)^\alpha])$ by noting the fact that the expectation and empirical sum of $s^Q(\hat{\rho},1)$ corresponds to $G_n^Q$ and $T_n^Q$, respectively.
    Then we analyze the cumulants of $\Tr(\hat{\rho} \log \mathbb{E}_\theta[\sigma(\theta)])$ with respect to the posterior distribution.
    The first terms of Eqs. \eqref{eq:main_basic_theorem_G_n^Q} and \eqref{eq:main_basic_theorem_T_n^Q} are derived from the first cumulant, while the second terms are derived from the second cumulant.
    The full proof is in Theorem \ref{thm:q_basic}, whose assumptions on the scaling of the higher-order terms are separately proved in Lemma \ref{lem:higher_order_scaling_regular} for regular cases and Lemma \ref{lem:higher_order_scaling_singular} for singular cases, respectively. 
\end{proofsketch}
Note that Theorem \ref{mainthm:expamsion of GnQ and TnQ} holds for both regular and singular cases, and thus serves as a basic theorem for further expansions, where regular or singular conditions will later be imposed.

Under the regularity condition, we obtain the following detailed description of the asymptotic expansions:
\begin{thm}[Regular asymptotic expansion; informal version of Theorem \ref{thm:q_expectations for regular cases}]
\label{mainthm:quantum expansion regular}
    Let $\theta_0$ be the unique element of $\Theta_0$.
    When a pair of a parametric quantum state model $\sigma(\theta)$ and a true state $\rho$ satisfies the classical and quantum regularity conditions, 
    the expectations of the generalization loss $G_n^Q$ and training loss $T_n^Q$ can be expanded as     
        \begin{align}
            \mathbb{E}_{X^n}[G_n^Q] &= - \Tr(\rho \log \sigma(\theta_0)) + \frac{1}{n}\left( \lambda^Q + \nu'^Q - \nu^Q \right) + o\left(\frac{1}{n}\right), \\
            \mathbb{E}_{X^n}[T_n^Q] &= - \Tr(\rho \log \sigma(\theta_0)) + \frac{1}{n}\left( \lambda^Q + \nu'^Q - 2\chi^Q - \nu^Q \right) + o\left(\frac{1}{n}\right),
        \end{align}
    with constants $\lambda^Q$, $\nu^Q$, and $\nu'^Q$, and $\chi^Q = c + o(1)$ for a constant $c$.
\end{thm}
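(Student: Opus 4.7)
The plan is to specialise the Basic Theorem (Theorem \ref{mainthm:expamsion of GnQ and TnQ}) to the regular case by taking the expectation over $X^n$ and then invoking standard Bayesian asymptotics for posteriors with a positive-definite Hessian. Concretely, I would expand the posterior-dependent quantities $\mathbb{E}_\theta[\log \sigma(\theta)]$ and $\mathbb{V}_\theta[\log \sigma(\theta)]$ around the unique minimiser $\theta_0$ using a second-order Taylor expansion of $\log \sigma(\theta)$. Under Assumption R1 (classical regularity), a Bernstein--von Mises-type argument yields a Gaussian approximation of the posterior centred at the MLE $\hat{\theta}_{\mathrm{MLE}}$ with covariance $J(\theta_0)^{-1}/n + o_p(1/n)$, where $J(\theta_0)$ is the Fisher information of $p(x|\theta) = \Tr(\Pi_x \sigma(\theta))$. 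Substituting back gives a first-order mean-shift term involving $\nabla \log \sigma(\theta_0)$ and a second-order trace term involving $\nabla^2 \log \sigma(\theta_0)$ contracted with the posterior covariance.

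Next, I would take the outer expectation over $X^n$. For the generalisation loss, contracting with $\rho$ and using Assumption R2 (positive-definiteness of the quantum relative-entropy Hessian at $\theta_0$) converts the quadratic terms into traces of matrix products between the classical Fisher information $J(\theta_0)$ and the quantum Hessian. The linear mean-shift term vanishes in expectation to leading order after symmetrising over the sampling distribution, leaving precisely the combination $\lambda^Q + \nu'^Q - \nu^Q$ divided by $n$. For the training loss, I would exploit the unbiasedness $\mathbb{E}[\hat{\rho}_{x_i}] = \rho$, but must carefully track the cross-correlation between the empirical average $\frac{1}{n}\sum_i \hat{\rho}_{x_i}$ and the posterior statistics, which depend on the same sample $x^n$. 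Expanding the posterior mean to first order and plugging in the influence-function representation of the MLE increment produces a covariance contribution that supplies exactly the additional $-2\chi^Q$ term; here $\chi^Q$ plays the role of a quantum analog of the singular fluctuation in the regular regime, which is why it has a well-defined constant limit $c$.

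The main obstacle is tracking the interaction between two distinct information-theoretic matrices at $\theta_0$: the classical Fisher information $J(\theta_0)$ associated with the fixed tomographically-complete measurement $\{\Pi_x\}$, and the quantum Hessian arising from $D(\rho\|\sigma(\theta))$. Because $\{\Pi_x\}$ is generally not statistically sufficient for the quantum model, these matrices do not coincide, and their mismatch is precisely what the constants $\nu^Q$, $\nu'^Q$, and $\chi^Q$ measure --- when the measurement is sufficient, one should recover a collapse analogous to the classical identity $\lambda = \nu = d/2$. Secondary technical hurdles are justifying the $o(1/n)$ remainder uniformly using Lemma \ref{lem:higher_order_scaling_regular} together with uniform integrability of posterior moments, and verifying that the extra randomness introduced by the classical shadow contributes only at subleading order, which follows from its unbiasedness and from the bounded variance of $\hat{\rho}_x$ for standard random-Clifford or random-Pauli schemes.
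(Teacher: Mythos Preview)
Your overall architecture matches the paper's: start from the Basic Theorem, Taylor-expand $\log\sigma(\theta)$ about $\theta_0$ to second order, feed in the regular posterior asymptotics (Lemma~\ref{lem:gb_lem14}), and then take $\mathbb{E}_{X^n}$. The paper's treatment of $T_n^Q$ is made precise by introducing the quantum empirical process $\eta_n^Q$ and the vector $\Delta_n^Q=\frac{1}{\sqrt{n}}(J^Q)^{-1}\nabla\eta_n^Q(\theta_0^Q)$ (Appendix~\ref{subsec:Empirical process with classical shadow}); the cross terms $\Delta_n^{T}J^Q\Delta_n^Q$ are exactly what produce $\chi^Q$, so your ``influence-function / covariance contribution'' intuition points at the right object.

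There is, however, a genuine gap in your handling of the first-order term. You claim the linear mean-shift term ``vanishes in expectation to leading order after symmetrising over the sampling distribution''. This is not how the paper kills it, and your argument does not reach $o(1/n)$ precision. The linear term is
\[
-\,\mathbb{E}_\theta[\theta-\theta_0]^{T}\,\Tr\bigl(\rho\,\nabla\log\sigma(\theta_0)\bigr).
\]
If the vector $c:=\Tr(\rho\,\nabla\log\sigma(\theta_0))$ were nonzero, then since $\mathbb{E}_\theta[\theta-\theta_0]=\Delta_n+o_p(n^{-1/2})$ and the posterior mean carries a generic $O(1/n)$ bias, taking $\mathbb{E}_{X^n}$ leaves an $O(1/n)$ contribution proportional to $c$, not $o(1/n)$. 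The paper instead invokes Assumption~\ref{ass:R2}, which you have misread: it is \emph{not} positive-definiteness of the quantum Hessian (that is part of the quantum regularity inside Assumption~\ref{ass:R1}) but the coincidence $\Theta_0\cap\Theta_0^Q\neq\emptyset$, i.e.\ $\theta_0=\theta_0^Q$. Because $\theta_0^Q$ minimises $D(\rho\|\sigma(\cdot))$, this forces $c=-\nabla_\theta D(\rho\|\sigma(\theta))\big|_{\theta=\theta_0}=0$, so the linear term vanishes \emph{pointwise}, not merely in expectation (see Eq.~\eqref{eq:reg_expn_G_n^Q_first}). The same identity $\nabla\Tr(\rho\log\sigma(\theta_0^Q))=\bm 0$ is used again in the $T_n^Q$ expansion (Eq.~\eqref{eq:reg_expn_T_n^Q_first_mid1}). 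Without it, the displayed $1/n$ coefficients in the theorem would acquire an extra term and the stated expansion would fail.
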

Note that $\theta_0$ is a parameter that minimizes the KL divergence defined in Section \ref{subsection:singular learning theory}.
The specific expressions for $\lambda^Q$ and $\nu^Q$ are given in Corollary \ref{cor:expansion of GnQ and TnQ for regular cases}, and those for $\nu'^Q$ and $\chi^Q$ in Theorem \ref{thm:q_expectations for regular cases}.
\begin{proofsketch}
    We consider the Taylor expansion of a function $\Tr(\hat{\rho} \log \sigma(\theta))$ around $\theta_0$ to the second-order and plug it in Eqs. \eqref{eq:main_basic_theorem_G_n^Q} and \eqref{eq:main_basic_theorem_T_n^Q} in Theorem \ref{mainthm:expamsion of GnQ and TnQ}.
    Then, we utilize the convergence of the posterior distribution to evaluate the asymptotic properties of the parameter estimation.
    For the expansion of $T_n^Q$, a quantum analog of the empirical process is utilized to deal with the fluctuation of a function $\Tr(\hat{\rho} \log \sigma(\theta))$.
    Taking the expectation $\mathbb{E}_{X^n}[\cdot]$ completes the proof.
    The full proof consists of Theorem \ref{thm:q_regular expansion formula for G and T} and Theorem \ref{thm:q_expectations for regular cases}.
\end{proofsketch}
We remark that $\lambda^Q$ is exactly the ratio of quantum Fisher information to classical Fisher information in the realizable case, where there exists a parameter $\theta \in \Theta$ such that $\sigma(\theta) = \rho$, which is also observed in \cite{yano2023Quantuma}.
The other quantities $\nu^Q$, $\nu'^Q$, and $\chi^Q$ should also be related to these values.
The analysis in regular cases offers valuable insights and interpretations of the generalization error in quantum state estimation from a quantum information theoretic perspective.

However, the regularity condition is frequently no longer satisfied in practical situations, in particular, if our quantum state model has many parameters. 
Thus, we will consider our quantum state model as a singular model in general. 
Below, we shall introduce the method to analyze the singular models in quantum state estimation.

To study the behavior of these losses for quantum singular models, let us formulate a geometric setting.
First, we introduce a parameter set whose parameter minimizes the quantum relative entropy, analogous to $\Theta_0$, as 
\[\Theta_0^Q \coloneq \left\{\theta_0^Q\in\Theta \Big\vert\ \theta_0^Q = \argmin_{\theta \in \Theta} D(\rho||\sigma(\theta)) \right\},\]
where we denote by $\theta_0^Q$ an element of this set.
Let us define the \textit{average log loss function} and \textit{average quantum log loss function}
\begin{align}
    K(\theta) &\coloneq \mathrm{KL}(p(x|\theta_0)\|p(x|\theta)) = \mathrm{KL}(\Tr(\Pi_x \sigma(\theta_0))\|\Tr(\Pi_x \sigma(\theta))), \label{eq:average log loss function}\\
    K^{Q}(\theta) &\coloneqq D(\sigma(\theta_0^Q)\|\sigma(\theta)), \label{eq:average quantum log loss function}
\end{align}
according to \cite{watanabe2018mathematical}.
In singular cases, $\Theta_0$ and $\Theta_0^Q$ generally contain singular points, making the analysis difficult.
In our theory, we consider the desingularization for both the KL divergence $K$ and the quantum relative entropy $K^Q$.
See Fig. \ref{fig:blowup} for an example of the desingularization of the nodal curve $y^2=x^2(x+1)$.

\begin{figure}[t]
    \centering
    \includegraphics[width=0.65\textwidth]{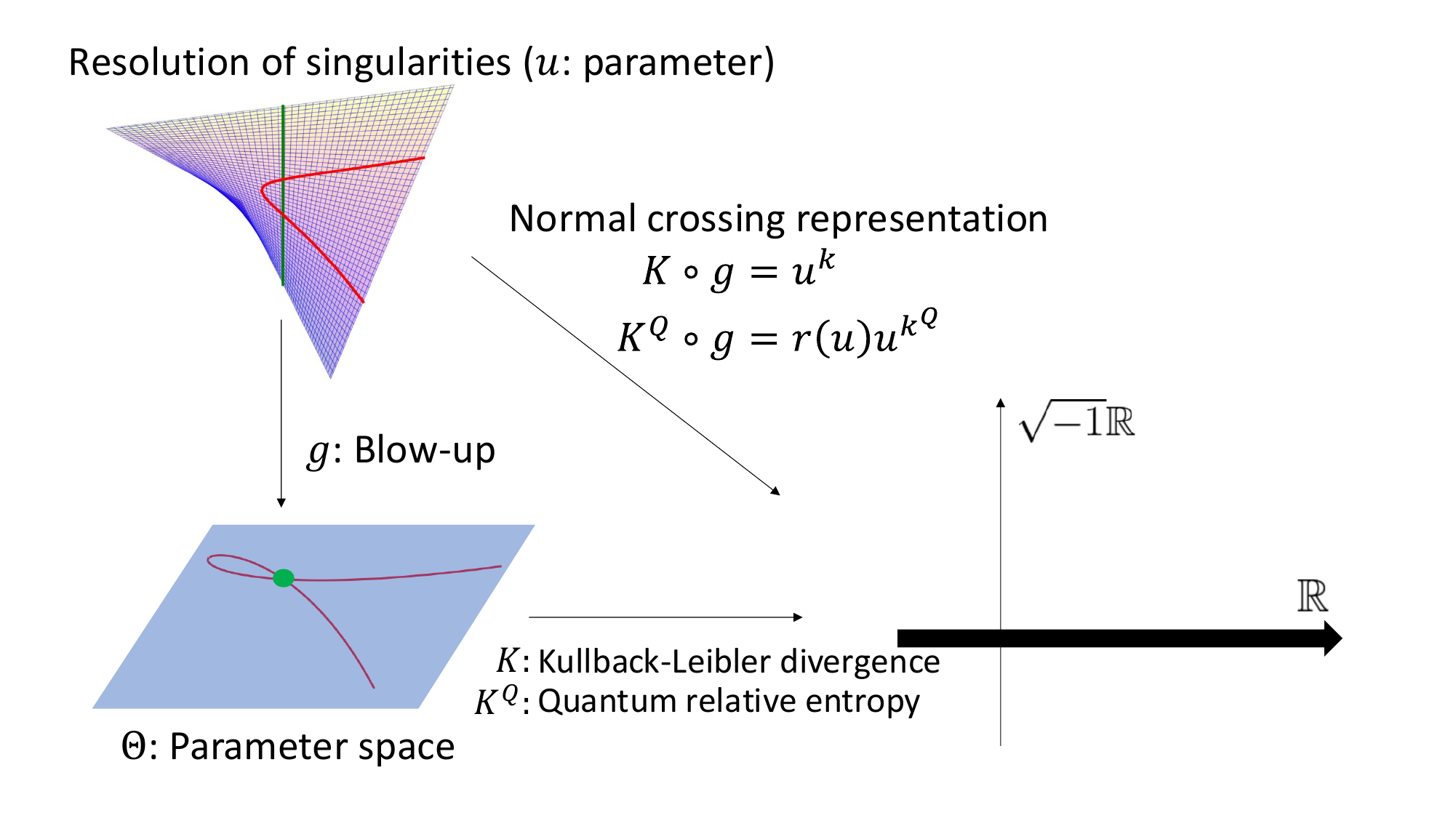}
    \caption{Resolution of singularities of a parameter space. 
    Technically, obtaining the normal crossing representation requires blowups to be repeated; however, to better convey the essence of the desingularization, a single application of the blowup is depicted in this figure.}
    \label{fig:blowup}
\end{figure}

Within our theory, we simultaneously resolve the singularities that appear in these two sets. More precisely, we see that there is a proper holomorphic morphism $g:\widetilde{\Theta}\to\Theta$ so that $g^{-1}(\Theta_0)$ and $g^{-1}(\Theta_0^Q)$ are simple normal crossing divisors; that is, the average loss functions are rewritten as 
\begin{equation}
\label{eq:normal_crossing_rep}
    K(\theta) = K(g(u)) = u^{2k}, \quad K^{Q}(\theta) = K^{Q}(g(u)) = r(u)u^{2k^{Q}}, 
\end{equation}
for $k, k^{Q}\in\mathbb{Z}^d$ by using multi-indices, a parameter $u$ of $\widetilde{\Theta}$ (Theorem \ref{thm:resolution_Watanabe}) and a real-analytic function $r(u)$ with $r(0)\neq 0$.
We remark that these normal crossing representations allow us to define the posterior distribution and empirical processes on the parameter space $\widetilde{\Theta}$, which also play crucial roles in formulating singular learning theory. 
These expressions generalize the quadratic representations around the optimal parameters when the Fisher information matrix degenerates.
Based on the normal crossing representations of both the KL divergence and the quantum relative entropy, we can derive the following asymptotic expansions, taking into account  singularities:

\begin{thm}[Singular asymptotic expansion; informal version of {Theorem \ref{thm:q_expansion formulas for the expectations for singular cases}}]
\label{mainthm:quantum expansion singular}
    Let $\theta_0$ be any element of $\Theta_0 \cap \Theta_0^Q$.
    Even when a pair of a parametric quantum state model $\sigma(\theta)$ and a true state $\rho$ satisfy neither the classical nor quantum regularity conditions, 
    the expectations of the generalization loss $G_n^Q$ and training loss $T_n^Q$ can be expanded as
    \begin{align*}
        \mathbb{E}_{X^n}[G_n^Q] &= - \Tr(\rho \log \sigma(\theta_0)) + \frac{1}{n}\left( r_{CQ} \lambda + r_{CQ} \nu - \nu^Q \right) + o\left(\frac{1}{n}\right), \\
        \mathbb{E}_{X^n}[T_n^Q] &= - \Tr(\rho \log \sigma(\theta_0)) + \frac{1}{n}\left( r_{CQ} \lambda + r_{CQ} \nu - 2\chi^Q - \nu^Q \right) + o\left(\frac{1}{n}\right).
    \end{align*}
    with constants $\lambda$, $\nu$, and $r_{CQ}$, and $\nu^Q = c_1 + o(1)$, and $\chi^Q = c_2 + o(1)$ for constants $c_1$ and $c_2$.
\end{thm}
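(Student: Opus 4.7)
The plan is to combine the basic theorem (Theorem \ref{mainthm:expamsion of GnQ and TnQ}) with the simultaneous desingularization of $K$ and $K^Q$ from Eq.~\eqref{eq:normal_crossing_rep}, and then run Watanabe's asymptotic analysis on the resolved parameter space $\widetilde{\Theta}$, this time carrying two algebraic objects along in parallel.

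First I would decompose the leading terms of Eqs.~\eqref{eq:main_basic_theorem_G_n^Q} and \eqref{eq:main_basic_theorem_T_n^Q} by inserting and subtracting $\log\sigma(\theta_0)$:
\begin{align*}
-\Tr(\rho \mathbb{E}_\theta[\log\sigma(\theta)]) &= -\Tr(\rho \log\sigma(\theta_0)) + \mathbb{E}_\theta\bigl[\Tr(\rho(\log\sigma(\theta_0)-\log\sigma(\theta)))\bigr].
\end{align*}
For $\theta_0\in\Theta_0\cap\Theta_0^Q$ the bracketed quantity differs from $K^Q(\theta)$ only by a term that averages to an $o(1/n)$ contribution under the posterior concentration driven by $K$, so the posterior mean becomes an integral of $K^Q$ against the posterior measure while the posterior itself is driven by $K$. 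An analogous decomposition for $T_n^Q$ replaces $\rho$ by $\frac{1}{n}\sum_i\hat{\rho}_{x_i}$ and produces an additional quantum empirical-process term $\frac{1}{n}\sum_i\Tr((\hat{\rho}_{x_i}-\rho)(\log\sigma(\theta_0)-\log\sigma(\theta)))$, which is precisely the object that generates $\chi^Q$.

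Next I would pull everything back along the proper holomorphic morphism $g:\widetilde{\Theta}\to\Theta$ of Theorem \ref{thm:resolution_Watanabe}. In the new coordinates $u$, Eq.~\eqref{eq:normal_crossing_rep} gives $K(g(u))=u^{2k}$ and $K^Q(g(u))=r(u)u^{2k^Q}$, so the posterior measure concentrates at scale $n^{-1/2k}$ along the exceptional divisor $\{u^k=0\}$, while the pulled-back loss integrand carries the monomial factor $r(u)u^{2k^Q}$. Evaluating the posterior mean of this factor via the state-density machinery used for Eqs.~\eqref{eq:exp_gen_loss} and \eqref{eq:exp_emp_loss} yields, to leading order, a constant multiple of $\lambda/n$ and $\nu/n$, the multiplicative constant being the posterior-averaged ratio of the dominant monomial of $K^Q$ to that of $K$, which I would package into the single constant $r_{CQ}$. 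The second cumulant term $-\frac{1}{2}\Tr(\rho\mathbb{V}_\theta[\log\sigma(\theta)])$ in the basic theorem contributes $\nu^Q/n$ by the same blowup argument, and the empirical-process correction contributes $-2\chi^Q/n$ in $T_n^Q$; the boundedness $\nu^Q=c_1+o(1)$ and $\chi^Q=c_2+o(1)$ follows from the uniform tightness controls in Lemma \ref{lem:higher_order_scaling_singular}. Finally, taking $\mathbb{E}_{X^n}[\cdot]$ against $q(x^n)=\prod_i \Tr(\Pi_{x_i}\rho)$ turns every residual empirical quantity into its mean and delivers the claimed expansions.

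The main obstacle will be the \emph{simultaneous} desingularization and the fact that $K$ and $K^Q$ need not share exponents on the various components of the exceptional divisor. On charts where $k^Q>k$ componentwise, the ratio $r(u)u^{2(k^Q-k)}$ vanishes faster than the posterior scale and contributes only to lower-order remainders, whereas on charts where $k^Q<k$ in some direction, the ratio formally blows up and threatens to destroy the $1/n$ scaling unless one shows that the posterior weight on those charts is itself suppressed. Proving that (i) the problematic charts can be handled by a refined chart-by-chart application of Hironaka's theorem to the ideal generated by $K$ and $K^Q$, and (ii) the remaining contributions assemble into a single well-defined constant $r_{CQ}$ independent of the chosen resolution, is where the bulk of the technical work in Appendix \ref{app:Proof of main results} will be spent; it is essentially the price one pays for the fact that the measurement driving the posterior and the quantum relative entropy driving the loss are two distinct algebro-geometric objects rather than one.
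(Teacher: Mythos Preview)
Your overall plan matches the paper's: start from the basic theorem, split off $\log\sigma(\theta_0)$, pull back along the resolution $g$, and evaluate the posterior integral of the quantum loss $r(u)u^{2k^Q}$ using Watanabe's renormalized-posterior machinery driven by $K(g(u))=u^{2k}$. The quantum empirical-process correction you isolate for $T_n^Q$ is exactly the $\xi_n^Q$ of Eq.~\eqref{eq:xi_n^Q}, and its posterior mean is what becomes $\chi^Q$.

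Where you diverge is your last paragraph. You anticipate that the core difficulty is reconciling the exponents $k$ and $k^Q$ on charts where they differ, and that the appendix will spend its effort on a refined chart-by-chart resolution of the ideal $(K,K^Q)$. The paper does \emph{not} do this: it imposes $k=k^Q$ as a standing hypothesis (Assumption~\ref{ass:S2}), proves it only in the complex-analytic realizable case via Liouville (Proposition~\ref{prop:Liouville and k=k^Q}), and leaves the general real-analytic case as an open problem (Conjecture~\ref{conj:k=k^Q}). Under $k=k^Q$ the ratio $K^Q/K=r(u)$ is bounded away from $0$ and $\infty$ on the compact charts, and $r_{CQ}$ is extracted by an intermediate-value argument from $\mathbb{E}_\theta[r(u)u^{2k}]$; no blow-up analysis is needed. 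The actual technical work in Appendix~\ref{app:Proof of main results} lies instead in (i) Lemma~\ref{lem:intro_a^Q}, which uses the quantum relatively-finite-variance part of Assumption~\ref{ass:S1} to factor $f^Q(\hat{\rho}_x,g(u))=u^{k^Q}a^Q(\hat{\rho}_x,u)$ with $a^Q$ analytic in $L^2(q)$, and (ii) Lemma~\ref{lem:higher_order_scaling_singular}, which verifies the cumulant hypotheses of the basic theorem by Schmidt-decomposing $\log\sigma(\theta)-\log\sigma(\theta_0)$ and reducing each eigenvalue to a $u^{k}a^Q$-type expression (again using $k=k^Q$). Two smaller corrections: $\nu^Q$ is not obtained ``by the same blowup argument'' but is simply \emph{defined} as $\tfrac{n}{2}\mathbb{E}_{X^n}[\Tr(\rho\,\mathbb{V}_\theta[\log\sigma(\theta)])]$ in Theorem~\ref{thm:q_expansion formulas for the expectations for singular cases}; and the $o(1)$ stability of $\nu^Q,\chi^Q$ comes from the weak convergence $\xi_n\Rightarrow\xi$, $\xi_n^Q\Rightarrow\xi^Q$ of Propositions~\ref{prop:classical convergence law} and~\ref{prop:quantum convergence law}, not from Lemma~\ref{lem:higher_order_scaling_singular}.
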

The quantities $\lambda$ and $\nu$ are defined by the average log loss function $K(\theta)$ as in singular learning theory (Definition \ref{defn:invariants in singular learning theory}).
The specific expressions for $\nu^Q$ and $\chi^Q$ are introduced in Theorem \ref{thm:q_expansion formulas for the expectations for singular cases}.
\begin{proofsketch}
    First, we construct a renormalized log-likelihood function defined in the space $\widetilde{\Theta}$.
    This allows us to expand a function $\Tr(\hat{\rho} \log \sigma(\theta))$ in Theorem \ref{mainthm:expamsion of GnQ and TnQ} with respect to a parameter $u$.
    Then, as in regular cases, we utilize the convergence of renormalized posterior distribution to evaluate the asymptotic properties of parameter estimation.
    For the expansion of $T_n^Q$, we also introduce a quantum analog of the renormalized empirical process.
    Taking the expectation $\mathbb{E}_{X^n}[\cdot]$ completes the proof.
    The full proof consists of Lemma \ref{lem:intro_a^Q}, Theorem \ref{thm:q_expansion formulas of G^Q and T^Q for singular cases}, and Theorem \ref{thm:q_expansion formulas for the expectations for singular cases}.
\end{proofsketch}
Again, $\lambda$ and $\nu$ are called the real log canonical threshold and the singular fluctuation already introduced in Section \ref{subsection:singular learning theory}, both of which do not depend on the choice of a pair of $(\widetilde{\Theta}, g)$. 
The remaining quantities are those that newly arise in the quantum setting.
The quantity $r_{CQ}\lambda$ generalizes $\lambda^Q$.
This comes from the fact that in the presentation of the normal crossing presentations, the ratio of the classical and quantum Fisher information matrices is encoded in the real analytic function $r(u)$.
$\chi^Q$ can be regarded as a singular fluctuation, especially for quantum state estimation, whereas the geometric interpretation of $\nu^Q$ is not studied in this work.

We observe that, from Theorems \ref{mainthm:quantum expansion regular} and \ref{mainthm:quantum expansion singular}, the difference of the expectations of the quantum generalization and training loss, $\mathbb{E}_{X^n}[G_n^Q - T_n^Q]$, is $2\chi^Q/n$ up to $o(1/n)$ in both regular and singular cases.
Since $T_n^Q$ is a quantity that can be computed from the data, if there is an estimator for $2\chi^Q/n$ up to $o(1/n)$, an asymptotically unbiased estimator for $G_n^Q$ can be constructed.
This is precisely what we introduce in the next as QWAIC.

\subsection{Quantum Widely Applicable Information Criterion}
\label{subsec:QWAIC}
Let us first define QWAIC (Quantum Widely Applicable Information Criterion) 
\begin{align}
    \mathrm{QWAIC} &\coloneqq T_n^{Q} + C_n^Q, \\
    C_n^Q &\coloneqq \frac{1}{n} \sum_{i=1}^{n} \mathrm{Cov}_{\theta}\left[ \log p (x_i|\theta), \Tr(\hat{\rho}_{x_i} \log \sigma(\theta)) \right], 
\end{align}
where $\mathrm{Cov}_{\theta}[\cdot,\cdot]$ is the posterior covariance; see also Definition \ref{def:QWAIC}.
This can be seen as a quantum extension of WAIC defined in Section \ref{subsection:singular learning theory}, by introducing a quantum state model $\sigma(\theta)$ and a classical snapshot $\hat{\rho}$.
We will discuss the computational efficiency issue caused by the matrix logarithm $\log \sigma(\theta)$ in Section \ref{sec:Conclusion and open questions}.
With this definition, we can prove that QWAIC is an asymptotically unbiased estimator for $G_n^Q$ in both regular and singular cases.
\begin{thm}[Asymptotic unbiasedness of QWAIC; informal version of {Theorems \ref{thm:q_QWAIC for regular cases} and  \ref{thm:q_QWAIC is unbiased estimator for singular cases}}]
\label{mainthm:QWAIC}
    Even when a pair of a parametric quantum state model $\sigma(\theta)$ and a true state $\rho$ satisfy neither the classical nor quantum regularity conditions, $\mathrm{QWAIC}$ is an asymptotically unbiased estimator of $G_n^{Q}$. 
    In other words, 
    \begin{equation}
        \mathbb{E}_{X^n}[G_n^{Q}] = \mathbb{E}_{X^n}[\mathrm{QWAIC}] + o\left(\frac{1}{n}\right)
    \end{equation}
    holds.
\end{thm}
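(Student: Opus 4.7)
\begin{proofsketch}
The plan is to reduce the claim to showing
\begin{equation}
    \mathbb{E}_{X^n}[C_n^Q] \;=\; \frac{2\chi^Q}{n} + o\!\left(\tfrac{1}{n}\right),
\end{equation}
since Theorems \ref{mainthm:quantum expansion regular} and \ref{mainthm:quantum expansion singular} already give $\mathbb{E}_{X^n}[G_n^Q - T_n^Q] = 2\chi^Q/n + o(1/n)$ uniformly in the regular and singular settings. Once this identification is established, the unbiasedness of QWAIC follows simply by the definition $\mathrm{QWAIC} = T_n^Q + C_n^Q$ and linearity of expectation. Hence the entire argument concentrates on the posterior-covariance term $C_n^Q$.

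First I would rewrite each summand of $C_n^Q$ via the cumulant generating function machinery that underlies Theorem \ref{mainthm:expamsion of GnQ and TnQ}. Specifically, introducing the mixed cumulant generating function
\begin{equation}
    s_i^Q(\alpha,\beta) \;\coloneqq\; \log \mathbb{E}_{\theta}\!\left[ p(x_i|\theta)^{\alpha} \exp\!\bigl(\beta\,\Tr(\hat{\rho}_{x_i}\log\sigma(\theta))\bigr)\right],
\end{equation}
the posterior covariance equals $\partial_\alpha\partial_\beta s_i^Q(\alpha,\beta)|_{\alpha=\beta=0}$ taken with respect to the posterior. Expanding in the same cumulant fashion as in the basic theorem (with the roles of the classical and quantum log-likelihoods both present), I would pick out the leading order contribution and show that it coincides term-by-term with the quantity which, in the proofs of Theorems \ref{mainthm:quantum expansion regular} and \ref{mainthm:quantum expansion singular}, gave rise to $2\chi^Q/n$. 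In the regular case this reduces to a Taylor expansion of $\log\sigma(\theta)$ and $\log p(x|\theta)$ around $\theta_0$ together with the standard Bayesian central limit theorem on the posterior, producing a trace of a product of quantum and classical Fisher-type matrices that matches $2\chi^Q$.

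In the singular case I would perform the same cumulant analysis but pull everything back through the resolution morphism $g\colon\widetilde{\Theta}\to\Theta$ of \eqref{eq:normal_crossing_rep}, so that $K(\theta)$ and $K^Q(\theta)$ take monomial form. The renormalized posterior and the quantum renormalized empirical process introduced in the proof of Theorem \ref{mainthm:quantum expansion singular} can then be applied to $C_n^Q$: each $\text{Cov}_\theta[\log p(x_i|\theta),\Tr(\hat{\rho}_{x_i}\log\sigma(\theta))]$ becomes, after pullback, an integral against the renormalized posterior that is manifestly of order $1/n$, and whose leading coefficient equals exactly $2\chi^Q$ as defined in Theorem \ref{thm:q_expansion formulas for the expectations for singular cases}. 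Taking $\mathbb{E}_{X^n}[\cdot]$ and averaging over $i$ gives $\mathbb{E}_{X^n}[C_n^Q]=2\chi^Q/n+o(1/n)$.

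The hardest step, I expect, will be the singular case: one must justify that the posterior covariance, which a priori is only a difference of two expectations, can be controlled uniformly after resolution, and that the fluctuations of the $\hat{\rho}_{x_i}$ (which are highly non-central and whose variance scales depend on the chosen classical shadow scheme) do not contaminate the leading order term. This requires the same kind of empirical-process moment bounds used for $T_n^Q$ in Theorem \ref{mainthm:quantum expansion singular}, combined with the relatively finite variance hypothesis and Lemma \ref{lem:higher_order_scaling_singular} to discard higher cumulants. Once these uniform bounds are in hand, matching the leading coefficient to $2\chi^Q$ is a direct comparison of the two cumulant expansions.
\end{proofsketch}
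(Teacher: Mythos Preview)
Your high-level reduction is exactly the paper's: both cases come down to proving $\mathbb{E}_{X^n}[C_n^Q]=2\chi^Q/n+o(1/n)$, and in the regular case your Taylor-expansion-plus-posterior-CLT sketch matches the paper's Theorem \ref{thm:q_QWAIC for regular cases} essentially line for line. The mixed cumulant generating function $s_i^Q(\alpha,\beta)$ is harmless but superfluous: its mixed second derivative at the origin \emph{is} the posterior covariance, so it adds no leverage beyond writing $C_n^Q$ itself.

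The gap is in the singular case. After pullback through $g$, the summands of $nC_n^Q$ become $\mathrm{Cov}_\theta\bigl[\sqrt{t}\,a(X_i,u),\sqrt{t^Q}\,a^Q(\hat\rho_{X_i},u)\bigr]$, whereas by definition $2\chi^Q=\mathbb{E}_{X^n}\bigl[\mathbb{E}_\theta[\sqrt{t^Q}\,\xi_n^Q]\bigr]$. These two objects are \emph{not} manifestly equal after pullback: one is a posterior covariance of two log-likelihood-type functions, the other is a posterior mean of $\sqrt{t^Q}$ times the renormalized empirical process $\xi_n^Q$. The paper closes this gap not by moment bounds or by ``matching coefficients'' but by passing to the Gaussian-process limit $\xi_n\to\xi$, $\xi_n^Q\to\xi^Q$ and then applying Stein's lemma $\mathbb{E}[g_iF(g)]=\mathbb{E}[\partial_{g_i}F(g)]$ to the orthonormal expansion of $\xi,\xi^Q$ in $L^2(q)$ (Lemma \ref{lem:q_empirical process}). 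Differentiating the renormalized posterior integral with respect to the Gaussian coordinates is what produces the covariance form $C^Q(\xi)$ out of $\mathbb{E}_\theta[\sqrt{t^Q}\xi^Q]$; a separate law-of-large-numbers argument (Lemma \ref{lem:convergence_C_n^Q}) then identifies $\mathbb{E}_\xi[C^Q(\xi)]$ with the limit of $\mathbb{E}_{X^n}[nC_n^Q]$. Your sketch misidentifies the crux as uniform control of $\hat\rho_{x_i}$-fluctuations and higher cumulants; in fact Lemma \ref{lem:higher_order_scaling_singular} plays no role here, and the essential step is this Gaussian integration-by-parts identity.
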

\begin{proofsketch}
    This theorem holds for both regular and singular cases.
    For both cases, it suffices to prove $\chi^Q = (1/2) \mathbb{E}_{X^n}[ C_n^Q ] + o(1)$ by utilizing the properties of the empirical processes.
    The full proof is in Theorem \ref{thm:q_QWAIC for regular cases} for regular cases and Theorem \ref{thm:q_QWAIC is unbiased estimator for singular cases} for singular cases.
\end{proofsketch}
In practical applications, WAIC can be employed for model selection among singular statistical models.
We expect QWAIC to play the same role in model selection among singular quantum state models, which is one of our greatest interests.
However, in the next section, we will instead focus on investigating the non-trivial quantities $\lambda$, $\lambda^Q$, $\nu$, $\nu^Q$, $\nu'^Q$, and $r_{CQ}$ in our main expansion formulas (Theorms \ref{mainthm:quantum expansion regular} and \ref{mainthm:quantum expansion singular}) and leave the study of model selection for future work \footnote{In our previous work, we numerically demonstrated that QAIC, only valid in regular cases, can serve as a tool for quantum state model selection.
For interested readers, refer to \cite{yano2023Quantuma}.}.
It is also an intriguing problem to define an appropriate complexity measure for the task of quantum state estimation that captures the influence of singularities of quantum state models.
In the original singular learning theory, the RLCT $\lambda$ is considered as a complexity measure that reflects the model's intrinsic geometric properties and determines the speed of learning according to the asymptotic expansion formula of $\mathbb{E}_{X^n}[G^n]$ in Section \ref{subsection:singular learning theory}.
For this reason, we acknowledge that $\lambda$ is sometimes referred to as the learning coefficient.
On the other hand, from the asymptotic expansion formula of $\mathbb{E}_{X^n}[G_n^Q]$ in Theorems \ref{mainthm:quantum expansion regular} (resp. \ref{mainthm:quantum expansion singular}), the quantity $\lambda^Q + \nu'^Q - \nu^Q$ (resp. $r_{CQ}\lambda + r_{CQ}\nu - \nu^Q$) determines the speed of learning for quantum regular (resp. singular) models.
Therefore, throughout this paper, we refer to $\lambda^Q + \nu'^Q - \nu^Q$ (or $r_{CQ}\lambda + r_{CQ}\nu - \nu^Q$) as the learning coefficient in quantum state estimation.

\section{Concrete examples}
\label{sec:concrete_examples}
Our main results are described by various quantities.
Such quantities include ($\lambda^Q, \nu'^Q$) first appeared in quantum state estimation as well as ($\lambda, \nu$) in singular learning theory.
In Section \ref{subsec:analytic_calculation}, we explicitly present what values they take through concrete models.
The purpose of this section is not to conduct large-scale computations involving realistic quantum state models but to introduce the calculations and assumptions of the mathematical tools that appear in our paper and to provide the reader with a concrete picture.
As a result, we show the concrete value of the learning coefficient for regular models in Example \ref{ex:Regular situations}.
Notably it differs from the one for classical notion, that is the second term of AIC, even though it is a classical system model.
In Section \ref{subsec:numerical_simulation}, we numerically compute QWAIC and check consistency with the analytic results.
This supports the results that QWAIC is an asymptotically unbiased estimator for $G_n^Q$.

\subsection{Analytic and algebraic calculation}
\label{subsec:analytic_calculation}
We illustrate an instance of regular cases in Example \ref{ex:Regular situations} and an instance of singular cases in Example \ref{ex:analy-1Q-depol-2-b}.
In these two examples, only classical states are considered for the simplicity of calculation, though enough to demonstrate the essence of the resolution of singularities via an algebraic geometrical method.
Finally, in Example \ref{ex:Singular situations}, we address a quantum state model (but the true state is still classical). 
We note that similar calculations should be performed when the true state is a quantum state or the system size is large, but since the calculation of the explicit form of $K^Q$ becomes complicated and there is little theoretical improvement, the discussion here is dedicated to these settings.

We first calculate the average log loss functions $K(\theta)$ (Eq. \eqref{eq:average log loss function}) and $K^Q(\theta)$ (Eq. \eqref{eq:average quantum log loss function}) to check whether the regularity condition is satisfied.
For simplicity, we work on the 1-qubit parametric quantum models that satisfy the realizability condition (Definition \ref{defn:realizability}).
Notably, we will confirm that Conjectures \ref{conj:qregular to cregular} and \ref{conj:k=k^Q} hold in all the examples.
For the computation of the average log loss functions, we use the Pauli basis measurement $\{\Pi_{m}\}_{m=1}^{6}$ with $6$ outcomes, i.e., 
\begin{align*}
    \Pi_{1} = \frac{1}{3} \ketbra{0}{0}, \,
    \Pi_{2} = \frac{1}{3} \ketbra{1}{1}, \,
    \Pi_{3} = \frac{1}{3} \ketbra{+}{+}, \,
    \Pi_{4} = \frac{1}{3} \ketbra{-}{-}, \,
    \Pi_{5} = \frac{1}{3} \ketbra{r}{r}, \,
    \Pi_{6} = \frac{1}{3} \ketbra{l}{l}, 
\end{align*}
where $\{\ket{0}, \ket{1}\}$, $\{\ket{+}, \ket{-}\}$, and $\{\ket{r}, \ket{l}\}$ are the eigenbases of the Pauli operators $Z$, $X$, and $Y$, respectively.

\vspace{3mm}
\begin{ex}[A regular case; a classical state model]
\label{ex:Regular situations}
    First, we work on a regular case 
    \[\rho = \frac{I}{2},\quad \sigma(\theta) = \begin{pmatrix}
    \cos^2(\theta) & 0\\
    0 & \sin^2(\theta) \\
    \end{pmatrix},\quad  \theta \in \Theta\coloneqq \left[0,\frac{\pi}{2}\right].
    \]
    Then, since $\rho = \sigma (\pi/4)$, the realizability condition is satisfied.
    The average log loss functions are given by 
    \begin{align*}
        K^Q(\theta) &= -\frac{1}{2}(\log(\sin^2(\theta)\cos^2(\theta)) + 2\log(2) ), \\
            K(\theta) &= -\frac{1}{6}(\log(\sin^2(\theta)\cos^2(\theta)) + 2\log(2) ).
        \end{align*}
        Note that we have $K^Q(\theta) = 3 K(\theta)$, which coincides with the situation in Proposition \ref{prop:Liouville and k=k^Q}.    
    Since the Hessians of these functions are positive, that is 
    \[\frac{d^2}{d\theta^2}K^Q(\theta)>0,\quad \frac{d^2}{d\theta^2}K(\theta)>0,\]
    this example satisfies the classical and quantum regularity conditions.
    Though Remark \ref{rem:regular_RLCT} with $d=1$ implies that $\lambda=1/2$ in this case, we show an explicit computation of RLCT in detail for readers' understanding.   
    In this paper, we introduce the statement of Hironaka's theorem in detail in Appendix \ref{app:algebraic geometry} for algebraic varieties, that is, a set defined as a zero set of polynomials,
    but a similar argument holds for analytic spaces.
    Since we can check that
    \[K\left(\frac{\pi}{4}\right) = 0,\quad \frac{d}{d\theta} K(\theta)\Bigm\vert_{\theta = \frac{\pi}{4}} = 0,\quad  \frac{d^2}{d\theta^2} K(\theta)\Bigm\vert_{\theta = \frac{\pi}{4}} \neq 0,\]
    the average log loss function $K(\theta)$ can be rewritten as 
        \[K(\theta) = \left(\theta-\frac{\pi}{4}\right)^2 b(\theta)\]
        for an analytic function $b(\theta)$ with $b(\pi/4) \neq 0$.
    In this case, since $\Theta=[0,\pi/2]$ is one dimensional, that is $d=1$, the set $\Theta_0 = \{\pi/4\}$ is already considered as a simple normal crossing divisor.
    In other words, if we take a variable transformation $\iota:\mathbb{R}_{u}\xrightarrow{\sim}\mathbb{R}_{\theta}$ with $\iota(u) \coloneqq u + \pi/4 = \theta$, then
    \begin{equation}
    \label{eq:example_A_regular}
        K(\iota(u)) = u^2\cdot b(\iota(u)).
    \end{equation}
    This is exactly a normal crossing representation.
    Hence, we can compute the RLCT without taking a resolution of singularities, which forces $h=0$ since the Jacobian of the identity map is trivial.
    Comparison of Eq. \eqref{eq:example_A_regular} with Eq. \eqref{eq:normal_crossing_rep} implies that $2k=2$
    \footnote{In analytic or geometric notion, for a real analytic function $f$, we denote by $\mathrm{div}(f)\coloneqq \sum_{i} a_i V_i$ where $V_i$ is a divisorial irreducible component of $f=0$, and $a_i$ is an associated vanishing order.
    Accordingly, we have $\mathrm{div}(K) = 2 \Theta_0$.}.
    Note that the existence of the real analytic function $b(\iota(u))$ does not affect the computation of $\lambda$ because it satisfies $b(\iota(0)) \neq 0$, that is, the Taylor expansion of $b(\iota(u))$ around $u=0$ has a non-zero constant term.  
    It concludes that the RLCT is
    \[\lambda = \frac{1}{2},\]
    which is consistent with Proposition \ref{prop:generalization loss of G_n and T_n}. 
    Since $\lambda = \nu$ in the regular and realizable condition, singular fluctuation is given by 
    \[\nu = \frac{1}{2}.\]
    In the above, the algebraic computation is demonstrated for $\lambda$ and $\nu$, although the analytic computation for the classical Fisher information matrix gives the same result.
    For $\lambda^Q$ and $\nu'^Q$, which appear only in regular cases, the explicit calculation of classical and quantum Fisher information matrices leads to
    \[\lambda^Q = \nu'^Q = 3.\]
    Also, we deduce 
    \[\nu^Q = 4\]
    from the explicit computation of the Fisher information matrix.
    Hence, the learning coefficient in this case is 
    \[\lambda^Q + \nu'^Q - \nu^Q = 2.\]
    Since the state model and the true state are classical, we can just consider this example as a problem in the classical learning theory.
    Contrary to the learning coefficient in the quantum case, the classical one is $\lambda = 1/2$ with $d=1$ as described in Section \ref{subsection:singular learning theory}.
\end{ex}

\begin{ex}[a singular case; a classical state model]
\label{ex:analy-1Q-depol-2-b}
    Now, we take an example such that $\dim \Theta_0^Q >0$, which forces the situation to become singular, i.e., it violates Definition \ref{def:quantum_regularity} (2).
    Let us define
    \[\sigma(\theta) \coloneqq \begin{pmatrix}
        \cos^2(f(\theta_1, \theta_2) + \pi/3) & 0\\
        0 & \sin^2(f(\theta_1, \theta_2) + \pi/3)
    \end{pmatrix},\quad f(\theta_1,\theta_2)\coloneqq \theta_1-\theta_2 .\]
    Take a true state $\rho \coloneqq \sigma(0,0)$ within a parameter space \[\Theta = \left\{(\theta_1,\theta_2)\in \mathbb{R}^2\ \middle|\ -\frac{\pi}{3} \leq \theta_1 - \theta_2 \leq \frac{\pi}{2}, -\pi \leq \theta_1 \leq \pi\right\},\]
    which is a compact set $\Theta\subset \mathbb{R}^2$, we can consider the realizable situation so that
    \[\Theta_0 = \Theta_0^Q = \{(\theta_1,\theta_2) \in \Theta\mid \theta_1 = \theta_2\},\]
    where the quantum relative entropy and the KL divergence are
    \begin{align}
        K^Q(\theta) &= - \frac{3}{4} \log(\sin^2(\theta_1-\theta_2+\frac{\pi}{3})) - \frac{1}{4} \log(\cos^2(\theta_1-\theta_2+\frac{\pi}{3})) - 2 \log(2) + \frac{3\log(3)}{4}, \\
        K(\theta) &= \frac{1}{3} K^Q(\theta).
    \end{align}
    Hence, the model is singular in both classical and quantum senses.
    In this case, since $\Theta_0 = \Theta_0^Q$ is already a simple normal crossing divisor, there is no need to take a log resolution to compute the invariants appearing in QWAIC.
    In fact, taking a variable transformation $x = \theta_1 - \theta_2$, $y = \theta_1 + \theta_2$, which induces an isomorphism $\iota:\mathbb{R}^2_{x,y} \xrightarrow{\sim}  \mathbb{R}^2_{\theta_1,\theta_2}$, the set of optimal parameters $\Theta_0$ is defined by the equation $(x = 0)$ in $\Theta$.
    Through this isomorphism, we can rewrite
    \[K(x,y) \coloneqq K\circ \iota (x,y) = x^2 \cdot b(x,y)\]
    where $b(x,y)$ is an analytic function with $b(0,0) \neq 0$.
    The identity map $\Theta \to \Theta$ being already a log resolution, the quantity $2k=2k^Q$ is 2 and a corresponding Jacobian is represented by a constant as Example \ref{ex:Regular situations}.
    Combining these observations, we obtain
    \[\lambda = \frac{1}{2}.\]
    This is consistent with the numerical experiments executed in Section \ref{subsec:numerical_simulation}.
    If the model were regular, it would be concluded as $\lambda = d/2 = 1$ with $d=2$ only from the information of the dimension of the parameter space.
    However, our computation above shows that it is incorrect, compatible with the fact that the model is singular.
    The computation of the quantum-related quantities in singular cases, such as $r_{CQ}$, is non-trivial, and we leave this for future work.
        Note that since $K^Q(\theta) = 3K(\theta)$ in this case, the function $r(u)$ is the constant function $3$, which implies that 
    \[r_{CQ} = 3.\]
\end{ex}

\begin{ex}[A singular case; a quantum state model]
\label{ex:Singular situations}
    While both the true state and the state model are classical states in the previous two examples, this example addresses a quantum state model, although the true state is still a classical state.
    For smooth functions $f$ and $g$, let 
    \[\rho = \frac{I}{2} = \begin{pmatrix}
         \frac{1}{2}  & 0\\
         0& \frac{1}{2}
    \end{pmatrix},\quad \sigma(\theta) = f(\theta_2)\ketbra{\phi(\theta_1)}{\phi(\theta_1)} + (1-f(\theta_2)) \frac{I}{2},\]
    where 
    \[\ket{\phi(\theta_1)} = \begin{pmatrix}
        \cos(g(\theta_1)) \\
        \sin(g(\theta_1))
    \end{pmatrix}.\]
    Note that $\sigma(\theta)$ can be regarded as a mixed state after applying the depolarizing channel, a famous example of quantum noise channels.
    Furthermore, we consider highly singular optimal parameter sets.
    Assuming that there exists a $\theta_2\in\Theta$ so that $f(\theta_2)=0$, this is an analog of a normal mixture.
    In particular, the realizability condition is satisfied and hence, we have
    \[\Theta_0 = \Theta_0^Q = \{\theta = (\theta_1,\theta_2)\in\mathbb{R}^{d_1+d_2}\mid f(\theta_2) =0 \},\]
    including singular points in general.
    
    In singular learning theory, variables must be transformed because the average log loss functions are not analytic along the endpoint, which realizes the true distribution.
    In other words, we cannot treat this model as itself because it does not satisfy the fundamental condition.
    To overcome this difficulty, it is useful to change the variables and consider them as the coordinates of the hyperspherical surface; see \cite[Remark 6.1 (8)]{watanabe2009algebraic}. 
    As an analog of the discussion in Example \ref{ex:normal mixture}, we shall replace the model with 
    \[\sigma(\theta) = \sin^2(f(\theta_2))\ketbra{\phi(\theta_1)}{\phi(\theta_1)} + \cos^2(f(\theta_2)) \frac{I}{2}.\]
    Note that with this change, the description of $\Theta_0$ and $\Theta_0^Q$ remains the same.
    
    Now, we shall compute algebraic invariants $k,k^Q,\lambda$ appearing in Theorem \ref{mainthm:QWAIC}.
    For this purpose, let us see the degree of decrease around the zero of $f(\theta_2)$.
    Below, we use the relation $0 \cdot \log 0 = 0$.
    On the one hand, the quantum relative entropy in this case is
    \begin{align*}
     K^Q(\theta) = D(\rho||\sigma(\theta)) =
    -\frac{1}{2}(\log(2-\cos^2(f(\theta_2))) +\log(\cos^2(f(\theta_2))))
    \end{align*}
    This expression implies that $K^Q(\theta)$ can be considered as $f(\theta_2)^4$ near the locus $\Theta^Q_0$.
    
    Now, let us move the computation of $K$.
    Then, on the other hand, the KL divergence is calculated as 
    \begin{align*}
    K(\theta) &= 
        \mathrm{KL}(q(x)||p(x|\theta))\\
        &=
    -\frac{1}{6}\{\log(2\sin^2(g(\theta_1))\sin^2(f(\theta_2)) + \cos^2(f(\theta_2)))+ \log(2\cos^2(g(\theta_1))\sin^2(f(\theta_2)) + \cos^2(f(\theta_2)))\\ &+ \log(-2\sin(g(\theta_1))\sin^2(f(\theta_2))\cos(g(\theta_1)) + 1)+ \log(2\sin(g(\theta_1))\sin^2(f(\theta_2))\cos(g(\theta_1)) + 1)\}
    \end{align*}
    This description implies that the KL divergence $K(\theta) = \mathrm{KL}(q(x)||p(x|\theta))$ also decreases like $f(\theta_2)^4$.
    Hence, like $K^Q$, it follows that $K$ is also approximated as $f(\theta_2)^4$ near $\Theta_0$ which concludes that the quantum relative entropy $D(\rho||\sigma(\theta))$ and the KL divergence $\mathrm{KL}(q(x)||p(x|\theta))$ have the same vanishing order at $\Theta_0^Q=\Theta_0$.
    Hence, the depolarizing noise model satisfies Fundamental conditions \ref{ass:fundamental condition} and \ref{ass:fundamental conditionII}, and Assumption \ref{ass:S2}.
    
    Now we shall apply the above to the two famous examples in algebraic geometry.
    They have the meaning in singular learning theory that they are in fact, singular models.
    Below, for simplicity, let us put $g$ be a constant function so that $d_1=0$.
    First case is the multiparameter function  $f(\theta_2) = \theta_{21}^2 + \cdots + \theta_{2d_2}^2$ where $\theta_2 = (\theta_{21},\cdots, \theta_{2d_2})\in\mathbb{R}^{d_2}$, which forces $\Theta_0 = \{(0,\cdots,0)\}$.
    Then we can use a log resolution of singularities taken in Example \ref{ex:primitive blowup example}.
    Note that the vanishing orders of $K$ and $K^Q$ at $o=(0,\cdots,0)$ are 4 and hence, $2k_2=2k_2^Q = (0,\cdots, 0,8,0,\cdots,0)$ on each affine open subset taken there.
    The computation associated with Jacobians remains the same, which concludes that 
    \[\lambda = \frac{1}{4}.\]
    
    \begin{figure}[t]
        \centering
        \includegraphics[width=0.65\textwidth]{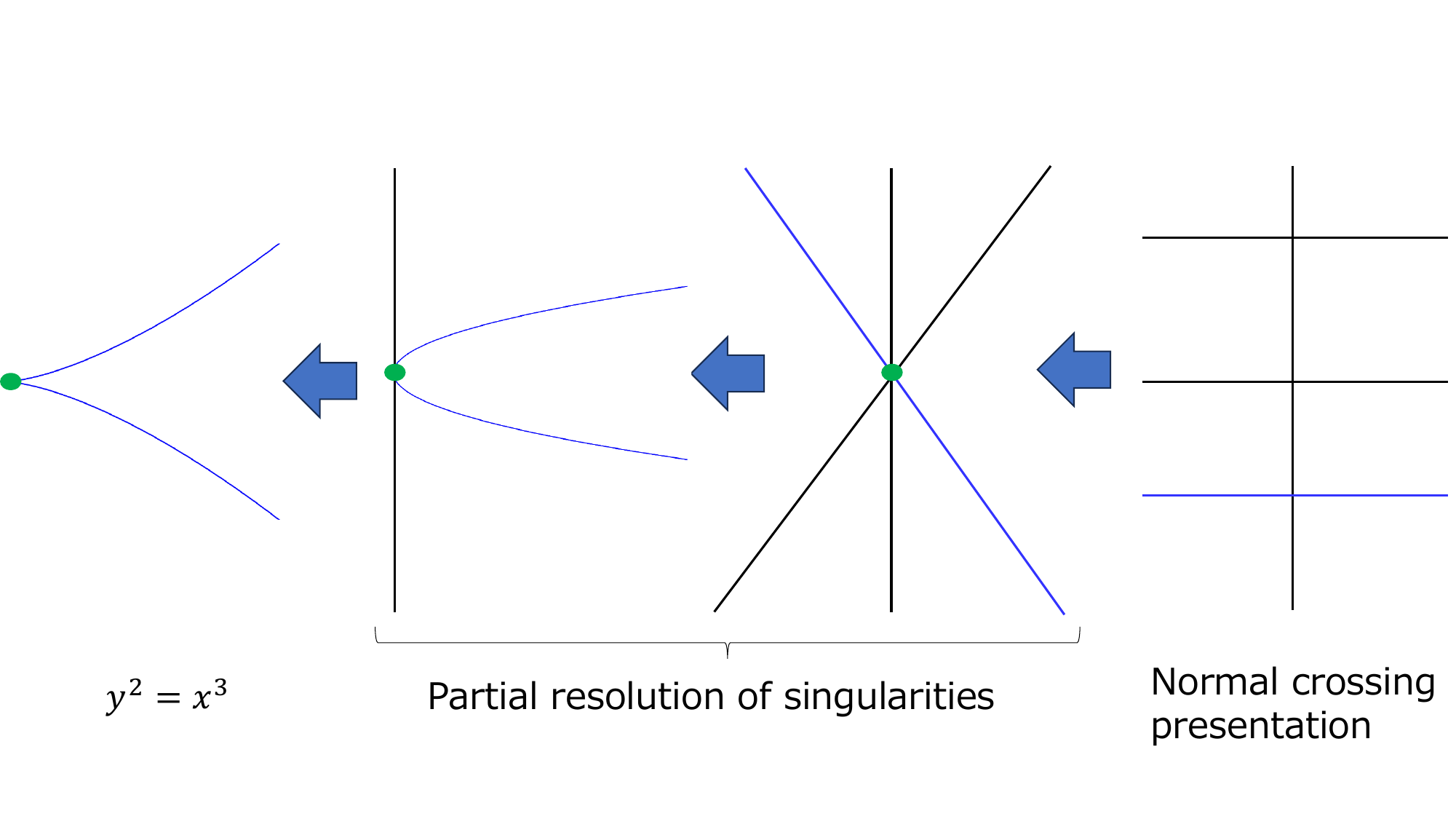}
        \caption{Resolution of the singularity of $y^2 = x^3$, called the monoidal transformation \cite{hartshorne2013algebraic}.}
        \label{fig:Monoidal transformation of the plane cuspidal curve}
    \end{figure}

    Secondally, let us choose the singular model $f(\theta_2) = (\theta_{21}^2-\theta_{22}^3)^2$ where $\theta_2=(\theta_{21},\theta_{22})\in \mathbb{R}^2$.
    This is the cuspidal plane curve and its resolution can be computed explicitly, related to the monoidal transformations; see Figure \ref{fig:Monoidal transformation of the plane cuspidal curve} and \cite{hartshorne2013algebraic}.
    We note that in our case, one has to treat $f(\theta_2)^4$, not $f(\theta_2)$, which forces that $k_2$ and $k_2^Q$ are multiplied by 4 as in the above case.
    In other words, taking a suitable affine cover, $2k_2=2k_2^Q$ are given by
    \[(24,0), (0,16), (16,48), (48,24).\]
    The quantity $h$, associated to the Jacobian, is given by
    \[(2,0), (0,1), (1,4), (4,2)\]
    as usual.
    Hence, 
    \[\lambda = \frac{5}{48},\quad m = 1\]
    where $m$ is the multiplicity.
    The multiplicity is needed to describe the asymptotic behavior of the free energy \cite[Theorem 11]{watanabe2018mathematical}, which we do not treat in this paper.
\end{ex}

\subsection{Numerical simulation}
\label{subsec:numerical_simulation}
In Section \ref{subsec:QWAIC}, we proposed QWAIC as an asymptotically unbiased estimator for $G_n^Q$.
Here, we numerically evaluate QWAIC and check the asymptotic unbiasedness for $G_n^Q$ (Thorem \ref{mainthm:QWAIC}) with an emphasis on the value of $C_n^Q$.
Below, we conduct measurement and estimation processes for different numbers of measurements (from $2,000$ to $8,000$) and repeat them $100$ times to see the statistical fluctuation of QWAIC for certain concrete parametric models of a 1-qubit system.
The random Pauli basis measurement, as described in Section \ref{subsec:analytic_calculation}, was chosen for a tomographically complete measurement in these numerical experiments.
The standard Metropolis-Hastings algorithm was used to conduct the Bayesian parameter estimation (we took $5,000$ samples, of which the first $500$ samples were set aside for the burn-in period).

First, we study a quantum regular model.
Suppose that 
\begin{align}
    \sigma(\theta) = \cos^2(\pi/32) |\phi(\theta)\rangle \langle\phi(\theta)| + \sin^2(\pi/32) \frac{I}{2}, \quad 
    \ket{\phi(\theta)} = \begin{pmatrix} \cos(\theta) \\ \sin(\theta) \end{pmatrix},
\end{align}
and let the true state $\rho = \sigma(\pi/4)$, which represents a superposition state with a slight addition of depolarizing noise.
The regularity property is confirmed by straightforward calculation.
Fig. \ref{fig:num_1Q-depol-1-a_AU} shows that the empirical average (over $100$ independent measurement and estimation processes) of the error $G_n^Q - \mathrm{QWAIC}$ (green line) converges to zero as the number of measurements increases, confirming the asymptotic unbiasedness of QWAIC.
Fig. \ref{fig:num_1Q-depol-1-a_C} shows the value of $C_n^Q$ (blue dot) compared with $8.08 \times 1/n$ (red dot), where $8.08$ is the ratio of the quantum and classical Fisher information computed analytically based on the value of $2\lambda^Q = \Tr(J^Q J^{-1})$ (Corollary \ref{cor:expansion of GnQ and TnQ for regular cases}) with $J^Q \approx 10.565$ and $J \approx 1.308$, and $1$ is the number of parameters.
We observe that the value of $C_n^Q$ is around the value of $8.08 \times 1/n$.
This may imply that, in regular cases, the second term of QWAIC reflects the model complexity (the number of parameters) and the measurement penalty (the ratio of quantum and classical Fisher information).
\begin{figure}[t]
    \centering
    \begin{subfigure}[b]{0.45\textwidth}
        \includegraphics[width=\textwidth]{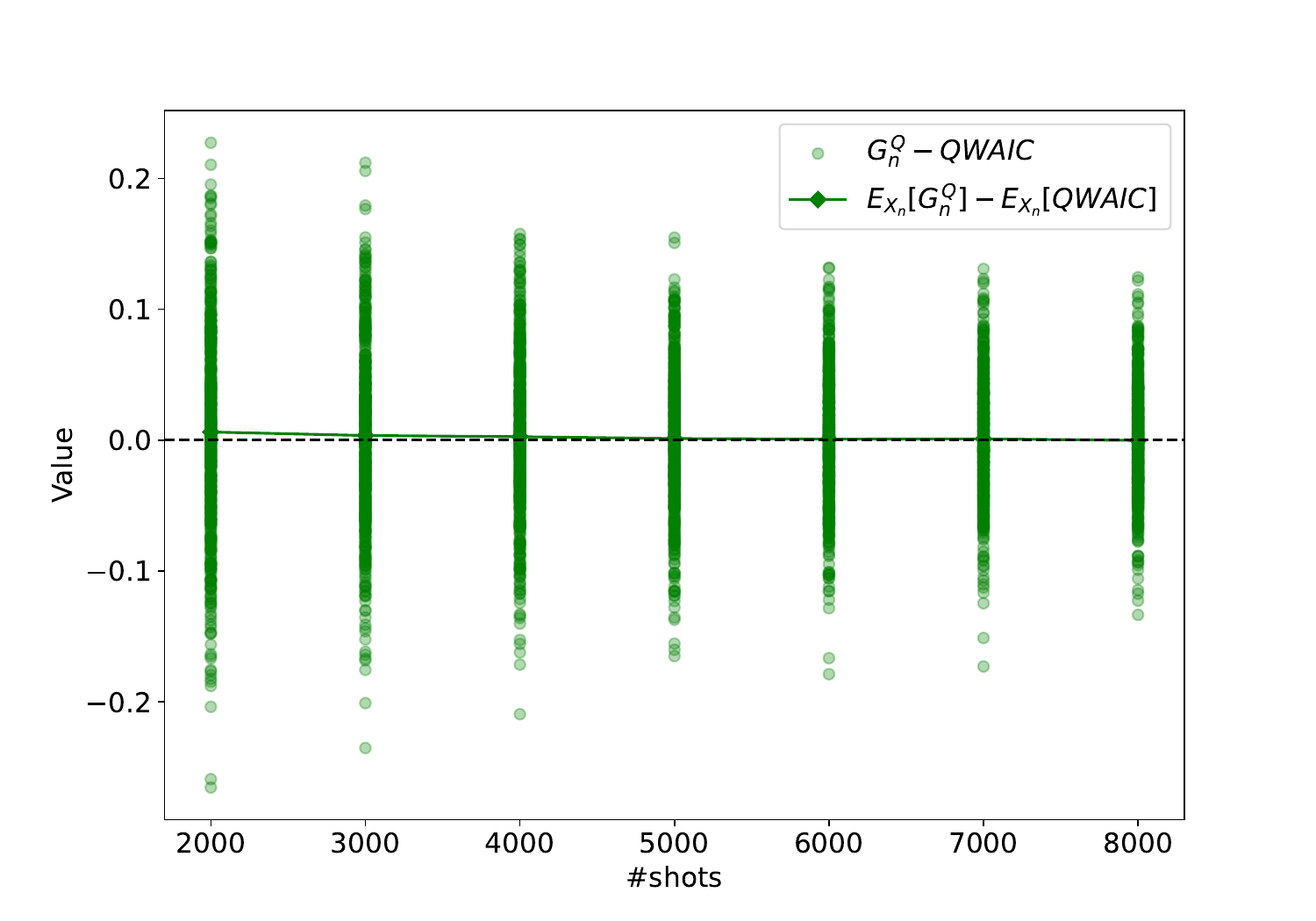}
        \caption{Asymptotic unbiasedness.}
        \label{fig:num_1Q-depol-1-a_AU}
    \end{subfigure}
    \begin{subfigure}[b]{0.45\textwidth}
        \includegraphics[width=\textwidth]{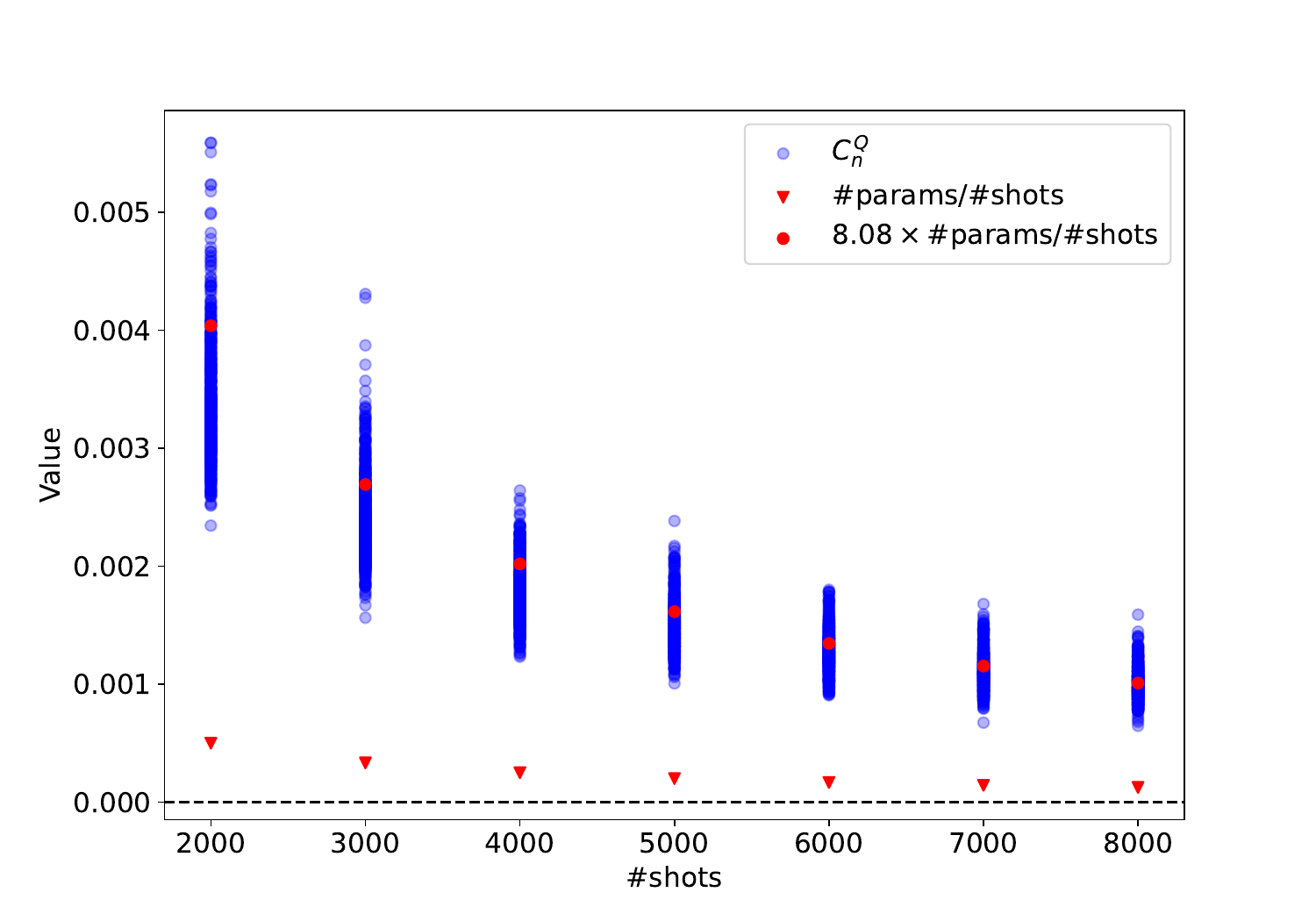}
        \caption{$C_n^Q$.}
        \label{fig:num_1Q-depol-1-a_C}
    \end{subfigure}
    \caption{Numerical results for a regular model.}
    \label{fig:num_1Q-depol-1-a}
\end{figure}

The second example focuses on the same singular (but classical) model as in Example \ref{ex:analy-1Q-depol-2-b}.
This example is suitable for comparing the analytical value of RLCT $\lambda$ calculated in Example \ref{ex:analy-1Q-depol-2-b} and the numerically evaluated value of $C_n^Q$.
The key difference from regular cases is the value of $C_n^Q$.
In Fig. \ref{fig:num_1Q-depol-2-b_C}, it is compared with the value of $3 \times 2/n$, where $2$ is the number of parameters and $3$ is the value of $r_{CQ}$.
Notably, it is observed that $C_n^Q$ reflects the value of $\lambda$ ($1/2$) calculated in Example \ref{ex:analy-1Q-depol-2-b} and takes almost half the value of $3 \times 2/n$.
\begin{figure}[t]
    \centering
    \includegraphics[width=0.5\textwidth]{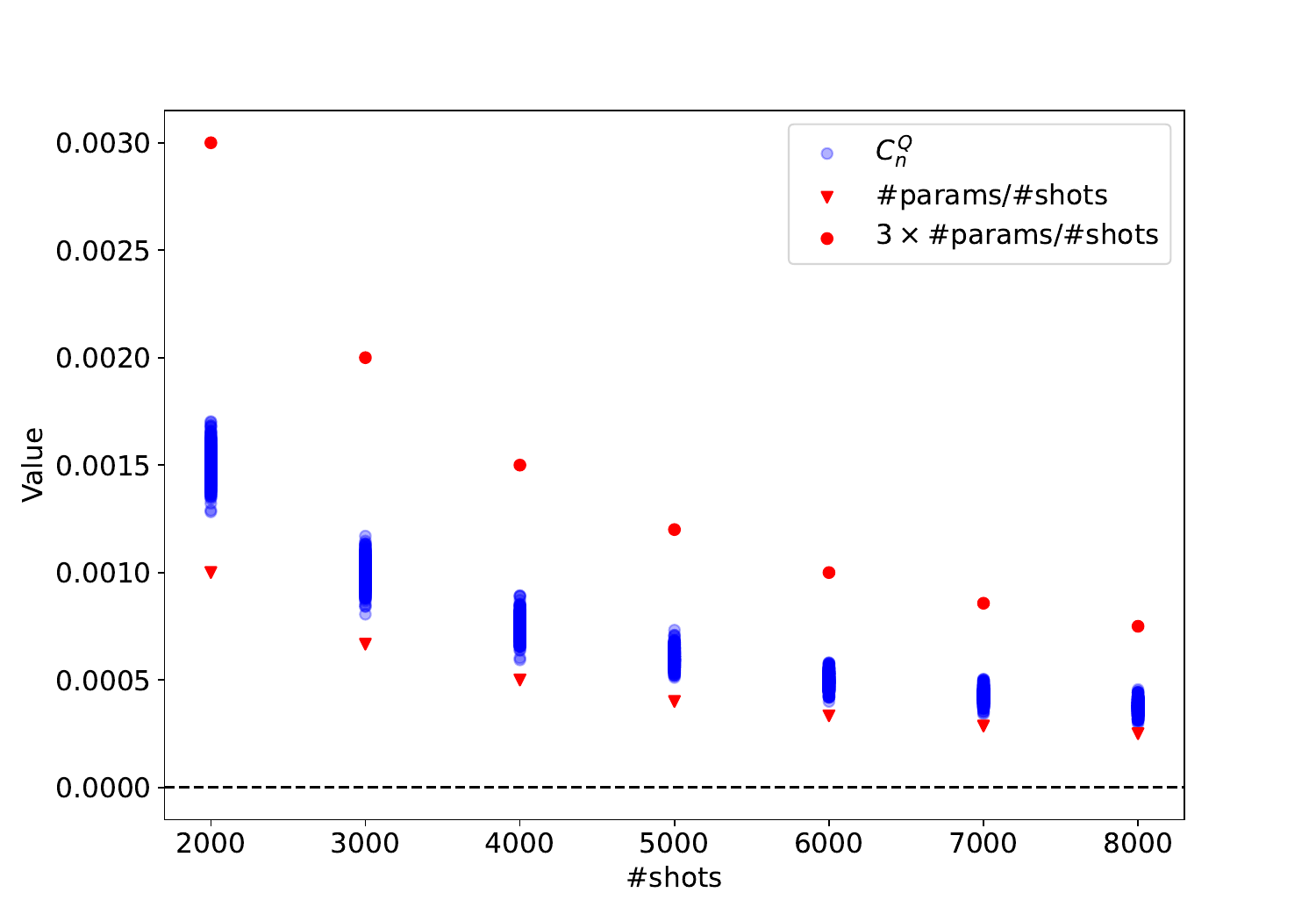}
    \caption{$C_n^Q$ for a singular model.}
    \label{fig:num_1Q-depol-2-b_C}
\end{figure}

We remark that the numerical results here align with our expectation that $C_n^Q$, the second term of QWAIC, represents the ``effective'' number of parameters in singular models (i.e. $\lambda$) as well as the penalty caused by the measurement with respect to parameter estimation (i.e. $r_{CQ}$), although there has been no clear theoretical guarantee yet.
Furthermore, since most of the concrete examples in this section are classical states for the simplicity of analytic calculation, the analytic and numerical validations for quantum states need to be done in order to demonstrate the practical effectiveness of QWAIC.

\section{Conclusion and open questions}
\label{sec:Conclusion and open questions}
This work initiated the study of statistical inference for quantum singular models. 
In our definition, quantum statistical models are singular if the Hessian matrix of the quantum relative entropy between the true and model states is degenerate, in analogy with classical statistics. 
Quantum singular models are commonly encountered in various contexts, such as neural network quantum states and quantum Boltzmann machines.
These models have been numerically shown to solve tasks that are challenging for conventional methods; however, a theoretical comprehension of these models remains insufficient.
To tackle this demanding issue, we formulate the problem of Bayesian quantum state estimation and model selection, representative examples of quantum statistical inference, and develop a theoretical framework that explains the learning behavior of quantum singular models, building upon singular learning theory by Watanabe  \cite{watanabe2009algebraic,watanabe2018mathematical}.
First, we described the asymptotic behavior of the generalization loss and training loss in both regular and singular cases. 
As a result, we observed the emergence of new quantities that do not appear in classical cases, particularly those related to the quantum Fisher information matrix. 
We demonstrated that these quantities are influenced by the measurement and the complexity of the singularities. 
Leveraging these findings, we also established the Quantum Widely Applicable Information Criterion (QWAIC), an information criterion that can be used for parametric model selection in quantum state estimation, even in singular situations. 
This quantity is a computable metric derived from given measurement outcomes, and we verified the validity of each term through numerical experiments.


Expanding the scope of our study reveals a variety of intriguing challenges.
First, the applicability of QWAIC can be better clarified by resolving some of the conjectures made in this paper (Conjectures \ref{conj:qregular to cregular} and \ref{conj:k=k^Q}).
This is related to the complexity of the singularities that arise in quantum information theory, highlighting the need for further research.
Secondly, singular learning theory, which underpins our work, employs profound mathematical concepts; zeta functions are a topic we have not dealt with in this paper. The relationship between the zeta function and algebraic geometry in the context of quantum state estimation will be explored in our next paper. 
The characterization of the learning coefficient that appears in the expansion formulas of $\mathbb{E}[G_n^Q]$ in terms of zeta functions and algebraic geometry is still an open question, unlike singular learning theory in the classical setting.
Thirdly, the computational aspect of QWAIC should also be improved for practical use.
Its computation involves the matrix logarithm of density operators by definition, which generally takes exponential time in the system size.
Some approaches \cite{verdon2019Quantuma,kieferova2021Quantum} have already been discussed, but improvements tailored for QWAIC are necessary. 
Lastly, despite the primary focus of this work on quantum state tomography, many other tasks in quantum statistical inference concerning singular models, such as quantum process tomography, have not yet been explored.
We anticipate that the insights gained from applying singular learning theory to quantum statistical problems will facilitate a better understanding of complex quantum information processing techniques.

\section*{Acknowledgements}
This work is supported by MEXT Quantum Leap Flagship Program (MEXT Q-LEAP) Grants No. JPMXS0118067285 and No. JPMXS0120319794.

\bibliography{main}

\newpage
\numberwithin{equation}{section}
\appendix
\section*{Glossary}
\label{app:glossary}

\renewcommand{\descriptionlabel}[1]{\hspace{\labelsep}{#1}}
\begin{description}[labelwidth=1.5cm, leftmargin=1.7cm]
    \item[$a(x,u)$] Factor of the standard form of $f(x,g(u))$ (Definition \ref{defn:invariants in singular learning theory}).
    \item[$a^Q(x,u)$] Factor of the standard form of 
        $f^Q(\hat{\rho}_x,g(u))$ (Eq. \eqref{eq:intro_a^Q}).
        \item[\rm{AIC}] Akaike Information Criterion (Definition \ref{defn:aic and waic})
        \[\mathrm{AIC} \coloneqq \frac{1}{n} \sum_{i=1}^{n} \{ - \log p(X_i|\hat{\theta}(X^n)) \} + \frac{d}{n}.\] 
    \item[$C^Q$]
    (Definition \ref{defn:quantum analog of invariants})
        \[C^Q(\xi_n) \coloneqq \mathbb{E}_X\left[ \mathrm{Cov}_{\theta}\left[ \sqrt{t}a(X,u), \sqrt{t^Q}a^Q(\hat{\rho}_X,u) \right]\right].\]
    \item[$C_n^Q$] Posterior covariance of a classical log-likelihood and its quantum analog with a classical snapshot \[C_n^Q\coloneqq \frac{1}{n} \sum_{i=1}^{n} \mathrm{Cov}_{\theta}\left[ \log p (X_i|\theta), \Tr(\hat{\rho}_{X_i} \log \sigma(\theta))\right].\]
    \item[$\mathrm{Cov}_{\theta}$] Covariance with respect to the posterior distribution, i.e., posterior covariance.
    \item[$D$] Dimension of a Hilbert space (proofs in the higher order scalings; Lemmas \ref{lem:higher_order_scaling_regular} and \ref{lem:higher_order_scaling_singular}).
    \item[$d$] Dimension of the parameter set $\Theta$.
    \item[$\mathbb{E}_X$] Expectation by $q(X)$.
    \item[$\mathbb{E}_{X^n}$] Expectation over the sets of $n$ i.i.d. training samples by $q(X^n) = \prod_{i=1}^n q(X_i)$. 
    \item[$\mathbb{E}_\theta$] Posterior mean (Definition \ref{defn:posterior mean for classical})
  \[\mathbb{E}_\theta[s(\theta)] \coloneqq \int_\Theta s(\theta) \, p(\theta|X^n) d\theta,\]
  or generalized posterior mean for matrices (Eq. \eqref{eq:quantum posterior mean})    \[\mathbb{E}_\theta[\log \sigma(\theta)] \coloneqq \int_{\Theta} \log \sigma(\theta) p(\theta|X^n) d\theta.\]
  \item[$F(\theta, \theta_0^Q)$] (Eq. \eqref{eq:large F})
  \[    F(\theta, \theta_0^Q) \coloneqq \log \sigma(\theta) - \log \sigma(\theta_0^Q).\]
        \item[$f(x,\theta)$] Log-likelihood ratio function (Appendix \ref{subsec:Empirical process with log-likelihood})
    \[f(x,\theta) \coloneqq \log \frac{p(x|\theta_0)}{p(x|\theta)}.\]
        \item[$f^Q(\hat{\rho}_x,\theta)$] Quantum log-likelihood ratio function (Appendix \ref{defn:matrix I and J_quantum})
    \[f^Q(\hat{\rho}_x,\theta) \coloneqq \Tr(\hat{\rho}_x \{ \log \sigma(\theta_0^Q) - \log \sigma(\theta) \}).\]
    \item[$G_n$] Generalization loss (Eq. \eqref{eq:def of Gn and Tn})
    \[G_n \coloneqq - \mathbb{E}_X[ \log p(X|X^n) ].\]
    \item[$\mathcal{G}_n(\alpha)$] Cumulant generating function of the generalization loss (Definition \eqref{defn:classical_cumulant})
    \[\mathcal{G}_n(\alpha) \coloneqq \mathbb{E}_X[\log \mathbb{E}_\theta[p(X|\theta)^{\alpha}]].\]
    \item[$G_n^Q$] Quantum generalization loss (Eq. \eqref{eq:G_n^Q and T_n^Q})
    \[    G_n^{Q} \coloneqq - \Tr(\rho \log \sigma_B).\]
        \item[$g$] Log resolution of the pair $(\Theta,\Theta_0)$ (Theorem \ref{thm:resolution_Watanabe} and Eq. \eqref{eq:resolution of K and K^Q})
        \[g:\widetilde{\Theta} \to \Theta,\quad g'(u)=b(u)u_1^{h_1}\cdots u_d^{h_d}.\]
        \item[$I$] Classical Fisher information matrix (Definition \ref{defn:matrix I and J})
    \[
        I(\theta) \coloneqq \mathbb{E}_X\left[ \left(\frac{\partial \log p(X|\theta)}{\partial \theta}\right) \left(\frac{\partial \log p(X|\theta)}{\partial \theta}\right)^T  \right],\quad I \coloneqq I(\theta_0).\]
        \item[$I^Q$] (Definition \ref{defn:matrix I and J_quantum})  \[I^Q(\theta) \coloneqq \Tr\left( \rho \left(\frac{\partial \log \sigma(\theta)}{\partial \theta}\right) \left(\frac{\partial \log \sigma(\theta)}{\partial \theta}\right)^T \right), \quad I^Q \coloneqq I^Q(\theta_0).\]
        \item[$J$]   Hessian of the KL divergence (Definition \ref{defn:matrix I and J}) 
    \[        J(\theta) \coloneqq \mathbb{E}_X \left[ - \frac{\partial^2 \log p(X|\theta)}{\partial \theta^2} \right], \quad J \coloneqq J(\theta_0).\]
        \item[$J^Q$] 
        Hessian of the quantum relative entropy (Definition \ref{defn:matrix I and J_quantum})
    \[J^Q(\theta) \coloneqq -  \Tr( \rho \frac{\partial^2 \log \sigma(\theta)}{\partial \theta^2}), \quad J^Q = J^Q(\theta_0^Q). \]
    \item[$K$] Average log loss function (KL divergence) (Eq. \eqref{eq:average log loss function})
    \[K(\theta)\coloneqq \mathrm{KL}(p(x|\theta_0) || p(x|\theta)).\]
    \item[$K^Q$] Average quantum log loss function (quantum relative entropy) (Eq. \eqref{eq:average quantum log loss function})
\[K^Q(\theta) \coloneqq D(\sigma(\theta_0^Q) || \sigma(\theta)).\]
\item[$k, k^Q$] Simple normal crossing representations (Eq. \eqref{eq:resolution of K and K^Q})
  \[        K(g(u))=u_1^{2k_1}\cdots u_d^{2k_d}=:u^{2k},\quad K^Q(g(u))=r(u)u_1^{2k^Q_1}\cdots u_d^{2k^Q_d}=:r(u)u^{2k^Q}.\]
\item[$n$] Number of data samples.
\item[$p(x|\theta)$] Parametrized model or $p(x|\theta) = \Tr(\Pi_x\sigma(\theta))$.
\item[\rm{QWAIC}] Quantum Widely Applicable Information Criterion (Definition \ref{def:QWAIC})
\[    \mathrm{QWAIC} \coloneqq T_n^{Q} + \frac{1}{n} \sum_{i=1}^{n} \mathrm{Cov}_{\theta}\left[ \log p (X_i|\theta), \Tr(\hat{\rho}_{X_i} \log \sigma(\theta)) \right].\]
   \item[$R_1^Q, R_2^Q$]  (Corollary \ref{cor:expansion of GnQ and TnQ for regular cases})
    \[ R_1^Q = \frac{n}{2} \Delta_n^T J^Q \Delta_n,\quad R_2^Q = \frac{n}{2}(\Delta_n^{QT} J^Q \Delta_n + \Delta_n^T J^Q \Delta_n^Q).\] 
\item[$s^Q$] Cumulant generating function of the quantum log-likelihood (Definition \ref{def:quantum cumulant})
\[               s^Q(\hat{\rho}, \alpha) \coloneqq \Tr(\hat{\rho} \log \Phi(\alpha)). \]
\item[$s^{Q(0)}$] Cumulant generating function of the quantum log-likelihood ratio  (Definition \ref{def:quantum cumulant})
        \[s^{Q(0)}(\hat{\rho}, \alpha) \coloneqq \Tr(\hat{\rho} \log \Phi^{(0)}(\alpha)).\]
\item[$T_n$] Training loss (Eq. \eqref{eq:def of Gn and Tn})
\[T_n \coloneqq - \frac{1}{n} \sum_{i=1}^{n} \log p(X_i|X^n).\]
    \item[$\mathcal{T}_n(\alpha)$] Cumulant generating functions of the training losse (Definition \eqref{defn:classical_cumulant})
    \[\mathcal{T}_n(\alpha) \coloneqq \frac{1}{n}\sum_{i=1}^n\log \mathbb{E}_\theta[p(X_i|\theta)^{\alpha}.\]
\item[$T_n^Q$] Quantum training loss (Eq. \eqref{eq:G_n^Q and T_n^Q})
\[T_n^{Q} \coloneqq - \frac{1}{n} \sum_{i=1}^{n} \Tr(\hat{\rho}_{X_i} \log \sigma_B).\]
\item[$t$] State density parameter (Definition \ref{defn:invariants in singular learning theory})
\[t\coloneqq d\cdot u^{2k} \in \mathbb{R}.\]
\item[$u$] Local parameter of $\widetilde{\Theta}$ (Theorem \ref{thm:resolution_Watanabe} and Eq. \eqref{eq:resolution of K and K^Q}).
\item[$q(x)$] True distribution or $q(x) = 
 \Tr(\Pi_x\rho)$.
       \item[$V(\xi)$] Functional variance (Definition \ref{defn:invariants in singular learning theory})
       \[V(\xi) \coloneqq \mathbb{E}_X[\mathbb{V}_{\theta}[\sqrt{t}a(X,u)]].\]
    \item[$\mathbb{V}_{\theta}$] Posterior variance (Definition \ref{defn:posterior mean for classical})
    \begin{align}
    \mathbb{V}_{\theta}[s(\theta)] &\coloneqq \int_\Theta s(\theta)^2 p(\theta|X^n) d\theta - \left( \int_\Theta s(\theta) p(\theta|X^n) d\theta \right)^2
    \end{align}
    or generalized posterior variance for matrices (Eq. \eqref{eq:quantum posterior variance})
    \[\mathbb{V}_{\theta}[\log \sigma(\theta)] \coloneqq \mathbb{E}_\theta[(\log \sigma(\theta))^2] - \mathbb{E}_\theta[\log \sigma(\theta)]^2.\]
        \item[\rm{WAIC}] Widely Applicable Information Criterion (Definition \ref{defn:aic and waic})
    \[    \mathrm{WAIC} \coloneqq \frac{1}{n} \sum_{i=1}^{n} \{ - \log \mathbb{E}_\theta[p(X_i|\theta)]\} + \frac{1}{n} \sum_{i=1}^{n} \{ \mathbb{E}_\theta[(\log p(X_i|\theta))^2] - \mathbb{E}_\theta[\log p(X_i|\theta)]^2 \}.\]
\item[$X_{\alpha}, x_{\alpha}$] Data samples.    
\item[$\alpha$] Variable in the cumulant generating functions (Definition \ref{def:quantum cumulant}).
\item[$\Delta_n$]  (Eq. \eqref{eq:Delta_n})
\[    \Delta_n \coloneqq \frac{1}{\sqrt{n}} J^{-1} \nabla \eta_n(\theta_0). \] 
\item[$\Delta_n^Q$] (Eq. \eqref{eq:def_Delta_n^Q}) \[ \Delta_n^Q \coloneqq \frac{1}{\sqrt{n}} {J^Q}^{-1} \nabla \eta_n^Q(\theta_0^Q).\]
\item[$\eta_n$] Renormalized empirical process for regular cases (Eq. \eqref{eq:eta_n})
    \[\eta_n(\theta) \coloneqq \frac{1}{\sqrt{n}} \sum_{i=1}^{n} \left\{ \mathbb{E}_X\left[ f(X,\theta) \right] - f(X_i,\theta) \right\}.\]
    \item[$\eta_n^Q$] Renormalized empirical process for regular cases (Eq. \eqref{eq:def_eta_n^Q})
\[ \eta_n^Q(\theta) \coloneqq \frac{1}{\sqrt{n}}\sum_{i=1}^{n} \left\{ \mathbb{E}_X\left[ f^Q(\hat{\rho}_X,\theta) \right] - f^Q(\hat{\rho}_{X_i},\theta) \right\}.\]
\item[$\Theta$] Set of parameters.
\item[$\Theta_0, \Theta_0^Q$] Set of optimal parameters (Eq. \eqref{eq:Theta_0})
\[\Theta_0\coloneqq \{\theta\in\Theta\mid K(\theta) = 0\},\quad \Theta_0^Q\coloneqq \{\theta\in\Theta\mid K^Q(\theta) = 0\}.\]
\item[$\theta_0, \theta_0^Q$] Fixed optimal parameter (Eqs. \eqref{eq:fixed_optimal_para} and \eqref{eq:fixed_optimal_para2})
    \[\theta_0 = \theta_0^Q \in \Theta_0 \cap \Theta_0^Q.\]
    \item[$\kappa$] Vanishing order, that is an exponent of $u$ (Theorem \ref{thm:resolution_Watanabe}).
    \item[$\lambda$]  Learning coefficient or RLCT (Definition \ref{defn:invariants in singular learning theory})
       \begin{align}
           \lambda &\coloneqq \min_i\left\{\frac{2k_i + 1}{h_i} \right\}\\
           & =\frac{d}{2}\quad \mathrm{(if\ Definition\ \ref{def:regular for classical}\ holds}).
       \end{align}
       \item[$\lambda^Q$] (Corollary \ref{cor:expansion of GnQ and TnQ for regular cases}) 
       \[ \lambda^Q \coloneqq \frac{1}{2} \Tr(J^Q J^{-1}).\]
       \item[$\nu$] Singular fluctuation (Definition \ref{defn:invariants in singular learning theory})
       \begin{align}
           \nu &\coloneqq \frac{1}{2}\mathbb{E}_{\xi}[V(\xi)]\\
           &= \frac{1}{2}\Tr(IJ^{-1})\quad \mathrm{(if\ Definition\ \ref{def:regular for classical}\ holds)}.
       \end{align}
       \item[$\nu^Q$] (Corollary \ref{cor:expansion of GnQ and TnQ for regular cases} and Theorem \ref{thm:q_expansion formulas for the expectations for singular cases})
      \begin{align}
           \nu^Q &\coloneqq \frac{n}{2} \mathbb{E}_{X^n} \left[ \Tr(\rho \mathbb{V}_{\theta}[\log \sigma(\theta)]) \right] \\
            &= \frac{1}{2} \Tr(I^Q J^{-1})\quad (\mathrm{if\ Assumptions\ \ref{ass:R1}\ and\ \ref{ass:R2}\ hold}).
       \end{align}
       \item[$\nu'^Q$] (Proposition \ref{thm:q_expectations for regular cases})
       \[\nu'^Q = \frac{1}{2} \Tr(J^Q J^{-1} I J^{-1}).\]
    \item[$\xi_n$] Renormalized empirical process for singular cases (Eq. \eqref{eq:renormalized empirical process})
    \[  \xi_n(u) \coloneqq \frac{1}{\sqrt{n}} \sum_{i=1}^{n} \{ u^k - a(X_i,u) \}.\]
    \item[$\xi_n^Q$] Renormalized empirical process for singular cases (Eq. \eqref{eq:xi_n^Q})
\[\xi_n^Q \coloneqq \frac{1}{\sqrt{n}} \sum_{i=1}^{n} \left\{ u^{k^Q} - a^Q(X_i, u) \right\}.\]
\item[$\Pi$] Tomographic complete measurement.
\item[$\rho$] True quantum state.
\item[$\hat{\rho}$] Classical shadow 
\[\hat{\rho}^n \coloneqq \{ \hat{\rho}_{x_1}, ..., \hat{\rho}_{x_n} \}\] 
where $\hat{\rho}_{x_\alpha}$ is a classical snapshot, corresponding to a measurement outcome $x_\alpha$.
\item[$\sigma(\theta)$] Parametrized quantum model.
    \item[$\sigma_B$] Bayesian mean 
    \[\sigma_B\coloneqq \int_\Theta \sigma(\theta) p(\theta|X^n) d\theta.\]
\item[$\Phi, \Phi^{(0)}$]  (Definition \ref{def:quantum cumulant})
\begin{align}
    \Phi(\alpha) &\coloneqq \int_\Theta \sigma(\theta)^\alpha p(\theta|X^n) d\theta \\
        \Phi^{(0)}(\alpha) &\coloneqq \int_\Theta \left( \int_{0}^{1} \sigma(\theta_0^Q)^{-(1-r)\alpha} \sigma(\theta)^\alpha \sigma(\theta_0^Q)^{-r\alpha} dr \right) p(\theta|X^n) d\theta.
\end{align}        

\item[$\chi^Q$] (Theorems \ref{thm:q_expectations for regular cases} and \ref{thm:q_expansion formulas for the expectations for singular cases})
\begin{align}
    \chi^Q &\coloneqq \frac{1}{2} \mathbb{E}_{X^n}\left[ \mathbb{E}_\theta\left[ \sqrt{t^Q} \xi_n^Q \right] \right]\\
&=\frac{n}{4}(\mathbb{E}_{X^n}[\Delta_n^{QT} J^Q \Delta_n] + \mathbb{E}_{X^n}[\Delta_n^T J^Q \Delta_n^Q])\quad (\mathrm{if\ Assumptions\ \ref{ass:R1}\ and\ \ref{ass:R2}\ hold}).
\end{align}
\end{description}

\section{Assumptions and Fundamental conditions}
\label{app:Fundamental conditions}
In classical learning theory, regularity conditions play a crucial role in establishing theoretical results. However, singular learning theory, as developed by Watanabe, circumvents these regularity conditions. To align with this framework, we introduce the necessary assumptions that underpin our work throughout this paper.
\subsection{Assumptions in classical learning theory}
Watanabe \cite{watanabe2009algebraic,watanabe2018mathematical} developed singular learning theory under a certain assumption, called the \textit{relatively finite variance}, which generalizes the regularity condition implicitly assumed in classical statistics. 
In this subsection, let us explain the framework within which his work was completed \cite[Chapter 3]{watanabe2018mathematical}.
\begin{defn}[{\cite[Definition 5]{watanabe2018mathematical}}]
\label{def:regular for classical}
    A pair of the probability distributions $(q(x),p(x|\theta))$ is said to be \textit{classiclly regular} if the following conditions are satisfied:
\begin{enumerate}
    \item $\Theta_0$ consists of a single element $\theta_0$,
    \item the Hessian matrix $\nabla^2 \mathrm{KL}(q\|p(\cdot|\theta_0))$ is positive definite, and 
    \item there is an open neighborhood of $\theta_0$ in $\Theta$.
\end{enumerate}
Otherwise, we call the model \textit{classically singular}.
\end{defn}

\begin{rem}
    Note that the definition of regularity varies slightly in the literature.
    It being cross-disciplinary, to the best of the authors' knowledge, the situation is as follows.
    This problem can be attributed to the difference between the estimation theory of parameters and statistical inference.
    In the former region, singular models are defined by the fact that the associated Fisher information matrix degenerates \cite[Section 12.2.6]{amari2016information}.
    This is because the realizability (Definition \ref{defn:realizability}) is implicitly assumed in such situations, which forces that the Fisher information matrix and the Hessian matrix of the KL divergence coincide at the optimal parameter.
    The latter theory, which we mainly consider in this paper, works on both the estimation and the model selection, generalizing the former.
    It means that the Fisher information matrix and the Hessian matrix of the KL divergence do not coincide in general, and hence, we have to generalize the notion of ``singular model", as we denoted.
    In this paper, we formulate our theory according to Watanabe's singular learning theory and his notion  \cite{watanabe2009algebraic,watanabe2018mathematical}.
    Finally, the definition of singular sometimes requires that $\Theta_0$ contains singularities as a manifold \cite{fukumizu2004statistical}.
    It is desirable to treat these from a unified perspective.
\end{rem}

One application of Definition \ref{def:regular for classical} is the construction of AIC.
However, as noted in \cite[Preface, Chapter 7]{watanabe2009algebraic}, this assumption is often not fulfilled in practice.
To address this issue, the following concepts were introduced as generalizations of Definition \ref{def:regular for classical}.
\begin{defn}[{\cite[Definition 6]{watanabe2018mathematical}}]
\label{def:classical homogeneous}
A probability distribution $p(x|\theta)$ is said to be \textit{homogeneous} if $p(x|\theta_0) = p(x|\theta_1)$ for  any $\theta_0,\theta_1 \in \Theta_0$.
\end{defn}
Note that the homogeneous is also referred to as \textit{essentially unique} in \cite{watanabe2018mathematical}.
This condition is used to ensure the well-definedness of the average log loss function $K(\theta)$; see also Proposition \ref{lem:homogeneous}.
However, it is insufficient for analyzing the asymptotic behavior of the generalization loss because it provides limited information about the growth of $p(x|\theta)$.
Hence, we introduce the following assumption.
\begin{defn}[{\cite[Definition 7]{watanabe2018mathematical}}]
\label{def:classical relatively finite variance}
A probability distribution $p(x|\theta)$ is said to \textit{have relatively finite variance} if there exists a constant $c>0$ so that the log-likelihood ratio function $f(x,\theta)$, defined in Definition \ref{defn:matrix I and J} (1), satisfies
\[\mathbb{E}_X[f(X, \theta)^2] \leq c \mathbb{E}_X[f(X, \theta)]\]
    holds for any $\theta \in \Theta$.
\end{defn}
This assumption is currently the weakest one necessary for developing singular learning theory. It generalizes Definition \ref{def:regular for classical} as shown in the following proposition.

\begin{prop}[{\cite[Summary]{watanabe2018mathematical}}]
\label{prop:relationship}
For a pair of probability functions $(q(x), p(x|\theta))$,  the following relationships hold:
\begin{enumerate}
    \item The classical regularity or realization condition implies the relatively finite variance.
    \item The relatively finite variance implies the homogeneous condition.
\end{enumerate}
\end{prop}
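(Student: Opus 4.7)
The plan is to treat the two implications separately, as they rely on quite different ideas. For the second implication (relatively finite variance $\Rightarrow$ homogeneous), the argument is essentially algebraic and quick. Fix a representative $\theta_0\in\Theta_0$ defining $f(x,\theta)=\log(p(x|\theta_0)/p(x|\theta))$, and take any other $\theta^\ast\in\Theta_0$. Since both parameters minimize $\theta\mapsto\mathrm{KL}(q\|p(\cdot|\theta))$, the two KL values agree and $\mathbb{E}_X[f(X,\theta^\ast)]=\mathrm{KL}(q\|p(\cdot|\theta^\ast))-\mathrm{KL}(q\|p(\cdot|\theta_0))=0$. Relatively finite variance then forces $\mathbb{E}_X[f(X,\theta^\ast)^2]\le c\cdot 0=0$, so $f(\cdot,\theta^\ast)=0$ almost surely under $q$, i.e.\ $p(x|\theta_0)=p(x|\theta^\ast)$ on the support of $q$. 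That is exactly the homogeneous condition of Definition \ref{def:classical homogeneous}.

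For the first implication, I would split further into the realizable case and the (not necessarily realizable) regular case. In the realizable case $q(x)=p(x|\theta_0)$, the key lemma is an elementary pointwise inequality: on any compact interval $[\varepsilon,M]\subset(0,\infty)$ there is a constant $C$ with $(\log t)^2\le C(t-1-\log t)$, since both sides vanish to second order at $t=1$ and their ratio extends continuously to the whole interval. Invoking the boundedness of the ratio $p(x|\theta)/q(x)$ supplied by the Fundamental conditions on compact $\Theta$, I substitute $t=p(x|\theta)/q(x)$, noting $\log t=-f(x,\theta)$. Taking the $q$-expectation and using $\mathbb{E}_X[p(X|\theta)/q(X)]=1$ collapses the right-hand side to $C\,\mathbb{E}_X[f(X,\theta)]$, which gives relatively finite variance with constant $c=C$ uniformly in $\theta$.

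For the regular but non-realizable case, the pointwise inequality approach is unavailable, so I would run a local-to-global argument. Locally near the unique minimizer $\theta_0$, Taylor expanding $f(x,\theta)$ in $\theta$ and taking expectations yields $\mathbb{E}_X[f(X,\theta)]=\tfrac12(\theta-\theta_0)^T J(\theta-\theta_0)+O(\|\theta-\theta_0\|^3)$ (using $\nabla K(\theta_0)=0$) and $\mathbb{E}_X[f(X,\theta)^2]=(\theta-\theta_0)^T I(\theta-\theta_0)+O(\|\theta-\theta_0\|^3)$, where $J=\nabla^2K(\theta_0)$ is positive definite by the regularity assumption. Hence the ratio $\mathbb{E}_X[f^2]/\mathbb{E}_X[f]$ is bounded on some neighborhood $U$ of $\theta_0$ by a multiple of $\lambda_{\max}(I)/\lambda_{\min}(J)$. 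Away from $U$, compactness of $\Theta$ and uniqueness of the minimizer yield a uniform lower bound $\mathbb{E}_X[f(X,\theta)]\ge c_U>0$, while the Fundamental conditions guarantee $\sup_{\theta}\mathbb{E}_X[f(X,\theta)^2]<\infty$; dividing gives a uniform upper bound on the ratio outside $U$. Taking the larger of the two constants finishes the proof.

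The main technical obstacle is the non-realizable regular case, because the slick pointwise inequality breaks down when $q\ne p(\cdot|\theta_0)$ and one cannot exploit $\mathbb{E}_X[p/q]=1$. What makes it delicate is ensuring the Taylor expansion and the ``away from $U$'' bound are actually uniform: this depends on the smoothness and integrability conditions contained in the Fundamental conditions of Appendix \ref{app:Fundamental conditions}, and one must check that higher-order remainders in the Taylor expansion are dominated by an integrable envelope so that the ratio bound holds with a single constant $c$ valid over all of $\Theta$.
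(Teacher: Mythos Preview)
The paper does not supply its own proof of this proposition; it simply cites \cite[Summary]{watanabe2018mathematical} and moves on. Your argument is essentially the standard one found there and is correct in all substantive respects: the implication (2) is exactly the two-line computation you give (this is \cite[Lemma 3]{watanabe2018mathematical}, also invoked in the paper's proof of Lemma~\ref{lem:homogeneous}), and for (1) your realizable case via the pointwise inequality $(\log t)^2\le C(t-1-\log t)$ together with $\mathbb{E}_X[p(X|\theta)/q(X)]=1$, and your regular case via a local Taylor bound near $\theta_0$ glued to a compactness argument away from it, are precisely the mechanisms Watanabe uses. The caveats you flag yourself---that the pointwise inequality needs a lower bound on $t$ so one must invoke boundedness of $p/q$ from the Fundamental conditions, and that uniform control of the Taylor remainder requires an integrable envelope---are the right ones, and are handled in Watanabe's framework by the $L^s(q)$-valued analyticity of $f(\cdot,\theta)$ together with compactness of $\Theta$.
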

\begin{figure}[t]
    \centering
    \includegraphics[width=0.65\textwidth]{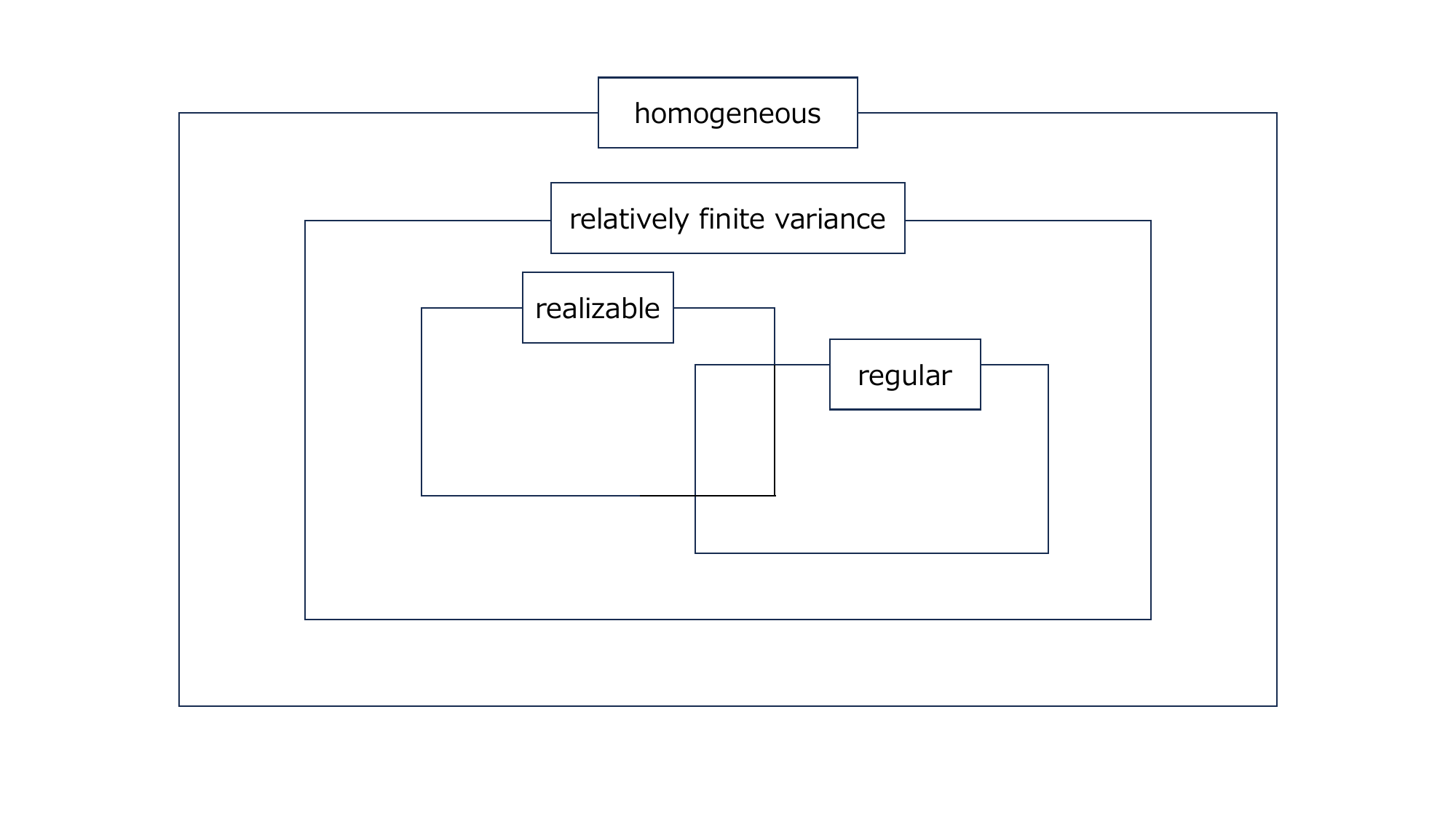}
    \caption{Relationship between the conditions of the model \cite[Figure 1.6]{suzuki2023WBayes}.}
    \label{fig:relation_1}
\end{figure}

We summarize the relationship in Figure \ref{fig:relation_1}. 
As noted above, Wayanabe's theory is developed under Definition \ref{def:classical relatively finite variance}. This assumption is essential for ensuring the consistency of desingularization and the expectation of the log-likelihood ratio functions. In particular, it allows us to obtain the standard forms of $K(\theta)$ and $f(x,\theta)$ \cite[Definition 14]{watanabe2018mathematical}. Moreover, this assumption ensures that $K(\theta)$ is well-defined; see Lemma \ref{lem:coincidence of Theta}. By drawing on this framework, we advance the theory of singular quantum state estimation. We will elaborate on this in the next subsection.

\subsection{Fundamental conditions}
\label{subsection:Fundamental conditions}
In previous work \cite{yano2023Quantuma}, the authors assumed the regularity condition for quantum models to develop their theoretical framework.
Our current study aims to generalize the Bayesian estimation framework for quantum state estimation from regular cases to singular situations. To achieve this, we do not require the full extent of the regularity condition previously assumed. Instead, it suffices to assume the regularity of the Hessian of the quantum relative entropy, which is a weaker condition than the full quantum regularity condition. For the sake of clarity and to streamline our exposition, we will adopt this weaker assumption throughout our work.

\begin{defn}
\label{def:quantum_regularity}
        A pair of the quantum states $(\rho,\sigma(\theta))$ is said to be \textit{quantum regular} if the following conditions are satisfied:
\begin{enumerate}
    \item $\Theta_0^Q$ consists of a single element $\theta_0^Q$,
    \item the Hessian matrix $\nabla^2 D(\rho\||\sigma(\theta_0^Q))$ is positive definite, and 
    \item there is an open neighborhood of $\theta_0^Q$ in $\Theta$.
\end{enumerate}
\end{defn}

To extend our analysis to singular situations, we introduce conditions analogous to those in classical learning theory. Before presenting these fundamental conditions, we first define the Banach space $L^s(q)$, which will be instrumental in our development.

\begin{defn}[{\cite[Section 5.2]{watanabe2009algebraic}}]
For $s\geq 1$, we denote by $L^s(q)$ the set consisting of all measurable functions $f$ from $\mathbb{R}^d$ to $\mathbb{C}$ satisfying
    \begin{align}
        \int |f(x)|^s q(x) dx < \infty, 
    \end{align}
    where $q(x)$ is a probability density function on $\mathbb{R}^d$.
\end{defn}

With this definition in place, we introduce the \textit{fundamental conditions}, which serve as the minimal framework required to apply singular learning theory to quantum state estimation. These conditions are analogous to \cite[Fundamental conditions I, II]{watanabe2009algebraic} and will be assumed throughout this paper.

\begin{fcwithname}{I}
\label{ass:fundamental condition}
\begin{enumerate}
    \item Quantum states $\rho$ and $\sigma(\theta)$ are full-rank.
    \item The function $f^Q(\hat{\rho},\theta)$ is an analytic function of $\theta \in \Theta$ which takes values in $L^2(q)$.
\end{enumerate}
\end{fcwithname}

\begin{fcwithname}{II}
\label{ass:fundamental conditionII}
 The set $\Theta$ is a semi-analytic compact subset of $\mathbb{R}^d$.
\end{fcwithname}

Fundamental condition \ref{ass:fundamental condition} (1) is necessary to define the quantum relative entropy between the true state $\rho$ and the parameterized model $\sigma(\theta)$.
We assume Fundamental conditions \ref{ass:fundamental condition} and \ref{ass:fundamental conditionII} throughout this paper to ensure the validity of our theoretical development.

We conclude this section by proposing a conjecture in quantum information theory, motivated by the observation that classical systems can exhibit greater the complexity of singularities than quantum systems when measurements are introduced.
\begin{conj}
    \label{conj:qregular to cregular}
    Let $(\rho, \sigma(\theta))$ be a quantum model and $(q(x), p(x|\theta))$ be the associated classical model.
    If $(q(x), p(x|\theta))$ is regular (Definition \ref{def:regular for classical}), then
 the  $(\rho, \sigma(\theta))$ is also regular (Definition \ref{def:quantum_regularity}).
\end{conj}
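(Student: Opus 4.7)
The plan is to verify the three defining conditions of quantum regularity (Definition \ref{def:quantum_regularity}) by leveraging the data processing inequality together with the tomographic completeness of the measurement. The cleanest argument proceeds under the realizability condition (Definition \ref{defn:realizability}), which often accompanies classical regularity in practice; I will then identify the non-realizable situation as the main obstacle.

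The first step is to show $\Theta_0^Q = \{\theta_0\}$ and in particular $\theta_0^Q = \theta_0$. Realizability supplies some $\theta^\ast \in \Theta$ with $\sigma(\theta^\ast) = \rho$; applying the measurement yields $p(x|\theta^\ast) = \Tr(\Pi_x \rho) = q(x)$, so $\theta^\ast \in \Theta_0$, and classical regularity forces $\theta^\ast = \theta_0$. In particular $D(\rho\|\sigma(\theta_0)) = 0$, so $\theta_0 \in \Theta_0^Q$. Conversely, any $\theta' \in \Theta_0^Q$ satisfies $D(\rho\|\sigma(\theta')) = 0$; faithfulness of the quantum relative entropy then forces $\sigma(\theta') = \rho$, and applying the measurement gives $p(\cdot|\theta') = q$, so $\theta' \in \Theta_0 = \{\theta_0\}$.

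For the Hessian condition, I would invoke the data processing inequality for the T.C. measurement channel: $D(\rho\|\sigma(\theta)) \geq \mathrm{KL}(q\|p(\cdot|\theta))$ for every $\theta \in \Theta$, with equality at $\theta_0$ (both sides vanish). Define $F(\theta) \coloneqq D(\rho\|\sigma(\theta)) - \mathrm{KL}(q\|p(\cdot|\theta)) \geq 0$. Then $\theta_0$ is a global minimum of $F$ on $\Theta$, and the second-order optimality condition yields $\nabla^2 F(\theta_0) \succeq 0$, i.e.\ $J^Q \succeq J$ at $\theta_0$. Since classical regularity provides $J \succ 0$, we conclude $J^Q \succ 0$. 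Condition (3) on the existence of an open neighborhood is then immediate, since $\theta_0^Q = \theta_0$ and classical regularity already supplies an open neighborhood of $\theta_0$ in $\Theta$.

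The main obstacle is to drop the realizability assumption. Without realizability, $\rho \notin \{\sigma(\theta) : \theta \in \Theta\}$, so $D(\rho\|\sigma(\theta_0)) > 0$ and $\theta_0$ need not lie in $\Theta_0^Q$; the classical and quantum optima can genuinely differ, and the Hessian comparison above no longer takes place at a single common minimizer. Handling this case would require a finer comparison of the two divergences as functions on $\Theta$, possibly by combining the injectivity of $\theta \mapsto \sigma(\theta)$ granted by tomographic completeness with a careful comparison of the Bogoliubov--Kubo--Mori-type and classical Fisher information matrices along an analytic path connecting $\theta_0$ and $\theta_0^Q$.
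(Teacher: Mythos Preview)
Your argument is correct and matches the paper's own treatment (Proposition \ref{prop:proof of the first conjecture}): both prove the conjecture only under the realizability assumption, using the data processing inequality $K(\theta)\le K^Q(\theta)$ together with $\Theta_0^Q=\Theta_0=\{\theta_0\}$ (the latter via Lemma \ref{lem:coincidence of Theta}), and both flag the non-realizable case as the remaining obstacle. Your Hessian step---observing that $F=K^Q-K\ge 0$ attains its minimum at $\theta_0$ so that $\nabla^2 F(\theta_0)\succeq 0$ and hence $J^Q\succeq J\succ 0$---is in fact cleaner and more directly yields positive definiteness than the paper's phrasing in terms of vanishing orders.
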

Intuitively, this conjecture seems reasonable because projection increases the complexity of the singularities.
We shall prove it in realizable situations (Definition \ref{defn:realizability}) in Proposition \ref{prop:proof of the first conjecture}.

\subsection{Assumptions in quantum state estimation}
In the following discussion, for a pair of quantum states $(\rho, \sigma(\theta))$, we consider the associated classical probability distributions derived from a tomographically complete measurement $\{\Pi_x\}$
\[q(x)\coloneqq\Tr(\Pi_x\rho),\quad  p(x|\theta)\coloneqq\Tr(\Pi_x\sigma(\theta)).\]
We refer to this situation by saying that the pair of classical distributions $(q(x), p(x|\theta))$ is associated with $(\rho,\sigma(\theta))$.
This association allows us to connect quantum estimation problems with classical statistical frameworks.

We now combine Definitions \ref{def:regular for classical} and \ref{def:quantum_regularity} into a single assumption for use in this paper.
\begin{asswithname}{R1}
\label{ass:R1}
       A pair of quantum states $(\rho,\sigma(\theta))$ and the associated pair $(q(x), p(x|\theta))$ satisfy Definition \ref{def:regular for classical} and Definition \ref{def:quantum_regularity} respectively.
\end{asswithname}
This assumption was implicitly made in the definition of QAIC \cite{yano2023Quantuma}. Our goal is to develop an information criterion that remains valid even in singular cases where Assumption \ref{ass:R1} does not hold. To achieve this, we propose a framework that relaxes this assumption, as described below.
For the definition of the log-likelihood ratio functions $f(x,\theta)$ and $f^Q(\hat{\rho},\theta)$, refer to Definition \ref{defn:matrix I and J} (1) and 
 Definition \ref{defn:matrix I and J_quantum} (1) respectively.
 These functions play a central role in analyzing the statistical properties of the models under consideration.
 We introduce the following assumption, which imposes a weaker condition than Assumption \ref{ass:R1} and allows us to handle singular cases.
\begin{asswithname}{S1}
\label{ass:S1}
       For a pair of quantum states $(\rho,\sigma(\theta))$, the log-likelihood ratio functions $f(x,\theta)$ and  $f^Q(\hat{\rho},\theta)$ have relatively finite variances.
 Specifically, there exists a constant $c>0$ such that, for all $\theta \in \Theta$, the following inequalities hold:
    \begin{equation}
        \mathbb{E}_{X}[f^Q(\hat{\rho}_X,\theta)^2] \leq c \mathbb{E}_{X}[f^Q(\hat{\rho}_X,\theta)],\quad     \mathbb{E}_X[f(X, \theta)^2] \leq c \mathbb{E}_X[f(X, \theta)].
    \end{equation}
\end{asswithname}
Assumption \ref{ass:S1} ensures that the variances of the log-likelihood ratio functions are controlled relative to their expectations, which is a crucial condition for extending singular learning theory to quantum models. By adopting this weaker assumption, we can analyze models that may not satisfy the regularity condition, thereby broadening the applicability of our theoretical framework.

In this paper, when we refer to$(\rho,\sigma(\theta))$ \textit{regular}, we mean that Assumption \ref{ass:R1} holds.
Now, let us introduce a condition used in regular cases regarding the compatibility of the quantum relative entropy and the KL divergence.

\begin{asswithname}{R2}
\label{ass:R2}
    The intersection $\Theta_0^Q\cap\Theta_0$ is nonempty.
In particular, if the model is regular, then this is equivalent to saying that $\Theta_0^Q = \Theta_0$, consisting of one point.
\end{asswithname}
We introduce this condition here for the first time. 
As we will see later, it is a necessary assumption for calculating the posterior distribution integral.
Assumption \ref{ass:R2} implies that we can choose an element 
\begin{align}
\label{eq:fixed_optimal_para}
    \theta_0=\theta_0^Q\in \Theta_0^Q\cap\Theta_0.
\end{align}
Hence, in the present paper, we will denote this common element as $\theta_0$.
Note that in particular, if moreover Assumption \ref{ass:R1} is satisfied, then we obtain $\Theta_0 = \Theta_0^Q = \{\theta_0\}$.

We will demonstrate how the introduced assumptions are necessary for our purposes, particularly for the well-definedness of the average log loss functions $K$ and $K^Q$. For a pair of quantum states, we define the notions of homogeneity and relatively finite variance in a similar way.
\begin{lem}
\label{lem:homogeneous}
Let us take any $\theta_1^Q\in\Theta_0^Q$ and $\theta_1\in\Theta_0$.
    \begin{enumerate}
        \item   Assumption \ref{ass:S1} implies that    
            \[\sigma(\theta_0^Q)=\sigma(\theta_1^Q),\quad p(x|\theta_0)=p(x|\theta_1).\]
        \item Assumptions \ref{ass:S1} and \ref{ass:R2} imply that $\Theta_0^Q = \Theta_0 \neq \phi$.
        \item Conversely, assuming $\Theta_0^Q = \Theta_0 \neq \phi$, the homogeneous condition for quantum models, that is $\sigma(\theta_0^Q)=\sigma(\theta_1^Q)$ is equivalent to the one for the associated pair $(q(x), p(x|\theta))$, that is $p(x|\theta_0)=p(x|\theta_1)$.
    \end{enumerate}
\end{lem}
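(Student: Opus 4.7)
The plan is to exploit the relatively finite variance in Assumption~\ref{ass:S1} to reduce each clause of the lemma to a statement that a certain log-likelihood ratio has vanishing expectation, and then propagate that vanishing through the tomographic completeness of the measurement.

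I would begin with part (1). Unbiasedness of the classical shadow gives
\[
    \mathbb{E}_X\bigl[f^Q(\hat\rho_X, \theta)\bigr] = D(\rho\|\sigma(\theta)) - D(\rho\|\sigma(\theta_0^Q)),
\]
and for $\theta_1^Q \in \Theta_0^Q$ the right-hand side vanishes because both $\theta_0^Q$ and $\theta_1^Q$ minimize $D(\rho\|\sigma(\cdot))$. Assumption~\ref{ass:S1} then forces $\mathbb{E}_X[f^Q(\hat\rho_X, \theta_1^Q)^2] \le 0$, so $f^Q(\hat\rho_X, \theta_1^Q) = 0$ holds $q$-almost surely. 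Unpacking the definition, this reads $\Tr\bigl(\hat\rho_x\{\log\sigma(\theta_0^Q) - \log\sigma(\theta_1^Q)\}\bigr) = 0$ for $q$-a.e.\ $x$. I would then invoke the tomographic completeness of $\{\Pi_x\}$, combined with the invertibility of the shadow channel $\mathcal{M}$, to conclude that $\{\hat\rho_x\}_{x \in \mathrm{supp}(q)}$ spans the space of Hermitian operators; this forces $\log\sigma(\theta_0^Q) = \log\sigma(\theta_1^Q)$, and therefore $\sigma(\theta_0^Q) = \sigma(\theta_1^Q)$. The classical statement $p(x|\theta_0) = p(x|\theta_1)$ follows by running the same argument for $f(x,\theta)$, starting from the identity $\mathbb{E}_X[f(X, \theta)] = \mathrm{KL}(q\|p(\cdot|\theta)) - \mathrm{KL}(q\|p(\cdot|\theta_0))$.

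Part (2) is mostly bookkeeping given part (1). Using Assumption~\ref{ass:R2}, I fix $\theta_0 \in \Theta_0 \cap \Theta_0^Q$. For $\theta \in \Theta_0^Q$, part (1) yields $\sigma(\theta) = \sigma(\theta_0)$, and taking traces against each $\Pi_x$ gives $p(x|\theta) = p(x|\theta_0)$, so $\theta \in \Theta_0$. Conversely, from $\theta \in \Theta_0$ part (1) yields $p(x|\theta) = p(x|\theta_0)$, which tomographic completeness then lifts to $\sigma(\theta) = \sigma(\theta_0)$, placing $\theta$ in $\Theta_0^Q$. Part (3) is the same translation without invoking relatively finite variance: with $\Theta_0^Q = \Theta_0$ given, $\sigma(\theta_0^Q) = \sigma(\theta_1^Q)$ implies $p(x|\theta_0^Q) = p(x|\theta_1^Q)$ by tracing against each POVM element, and the converse is exactly the defining property of a tomographically complete measurement.

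The only step with genuine content is the passage from $\Tr(\hat\rho_x A) = 0$ $q$-a.e.\ to $A = 0$ in part (1), and this is the place where the quantum structure enters. It relies both on the invertibility of $\mathcal{M}$ inherent in the classical shadow construction and on the positivity of $q$ on its support, the latter being guaranteed by the full-rank hypothesis in Fundamental Condition~\ref{ass:fundamental condition}. Everything else simply transports the classical homogeneity argument of \cite{watanabe2018mathematical} through the tomographically complete measurement.
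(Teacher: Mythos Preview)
Your proposal is correct and follows essentially the same route as the paper: relatively finite variance forces the log-likelihood ratio (classical or quantum) to vanish almost surely at any minimizer, and tomographic completeness then carries the conclusion between $\sigma(\theta)$ and $p(x|\theta)$. The paper is terser---it cites \cite[Lemma~3]{watanabe2018mathematical} for the classical half of (1) and says the quantum half ``follows in the same way'' from the full-rank hypothesis---whereas you spell out the spanning-of-snapshots step (via invertibility of $\mathcal{M}$) that the paper leaves implicit; this added detail is welcome, since the passage from $\Tr(\hat\rho_x A)=0$ to $A=0$ is indeed where the quantum structure enters.
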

\begin{proof}
   The latter statement of (1) follows from \cite[Lemma 3]{watanabe2018mathematical}.
   By our assumption on the supports of $\rho$ and $\sigma(\theta)$, the former part also follows in the same way.
     
   (2) Taking a measurement $\Pi_x$ on the both sides of $\sigma(\theta_0^Q)=\sigma(\theta_1^Q)$, it follows $p(x|\theta_0) = p(x|\theta_1^Q)$, which forces $\theta_1^Q\in\Theta_0$.
   Conversely, since $\{\Pi_x\}$ is a tomographically complete measurement, the equality $p(x|\theta_0) = p(x|\theta_1)$ also implies that $\sigma(\theta_0^Q) = \sigma(\theta_1)$.
   It concludes that $\theta_1 \in \Theta_0^Q$.

   Item (3) follows from a converse discussion of the proof of (2).
\end{proof}

We can verify that the proof strategy of  Proposition \ref{prop:relationship} also applies to quantum models, considering matrix computations.
\begin{lem}
\label{lem:quantum relationship}
    All claims in Proposition \ref{prop:relationship} hold for a pair of quantum models $(\rho,\sigma(\theta))$.
    Hence, the relationships described in Figure \ref{fig:relation_1} are also valid for quantum models. 
\end{lem}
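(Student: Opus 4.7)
The plan is to mirror the classical proof of Proposition \ref{prop:relationship} (cited from \cite[Summary]{watanabe2018mathematical}) in the quantum setting, replacing scalar identities in $p(x|\theta)$ with operator identities in $\sigma(\theta)$, and using the classical shadow's unbiasedness $\mathbb{E}[\hat{\rho}_X] = \rho$ together with the tomographic completeness of the POVM to bridge between scalar and operator statements. Two implications must be established: (i) quantum regularity or realizability implies relatively finite variance of $f^Q$, and (ii) relatively finite variance of $f^Q$ implies the quantum homogeneity condition. Together these place quantum regularity/realizability and homogeneity into the appropriate positions of Figure \ref{fig:relation_1}.

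For claim (i), the starting point is the identity $\mathbb{E}_X[f^Q(\hat{\rho}_X,\theta)] = D(\rho \| \sigma(\theta)) - D(\rho \| \sigma(\theta_0^Q))$, which follows from the classical shadow's unbiasedness. Under realizability, $\sigma(\theta_0^Q) = \rho$ and $f^Q(\hat{\rho}_x,\theta_0^Q) = 0$ identically, so a Taylor expansion of the analytic function $\log\sigma(\theta)$ (guaranteed by Fundamental condition \ref{ass:fundamental condition}(2)) shows that $\mathbb{E}_X[f^Q]$ and $\mathbb{E}_X[(f^Q)^2]$ have the same leading order in $|\theta-\theta_0^Q|$. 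Under quantum regularity (Definition \ref{def:quantum_regularity}), the optimum is interior with positive definite Hessian $J^Q$, so $\mathbb{E}_X[f^Q(\hat{\rho}_X,\theta)] \gtrsim |\theta-\theta_0^Q|^2$ in a neighborhood, while a direct second-order expansion involving the matrix $I^Q$ gives $\mathbb{E}_X[f^Q(\hat{\rho}_X,\theta)^2] = O(|\theta-\theta_0^Q|^2)$. Compactness of $\Theta$ (Fundamental condition \ref{ass:fundamental conditionII}) together with the fact that $\mathbb{E}_X[f^Q]$ is bounded away from zero outside any neighborhood of $\Theta_0^Q$ promotes the local ratio bound to a global one, yielding the required constant.

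For claim (ii), fix $\theta_1^Q \in \Theta_0^Q$. The minimizing property gives $\mathbb{E}_X[f^Q(\hat{\rho}_X,\theta_1^Q)] = 0$, and the relatively finite variance hypothesis then forces $\mathbb{E}_X[f^Q(\hat{\rho}_X,\theta_1^Q)^2] = 0$, so $f^Q(\hat{\rho}_X,\theta_1^Q) = 0$ almost surely under $q$. Writing $M \coloneqq \log\sigma(\theta_0^Q) - \log\sigma(\theta_1^Q)$, this reads $\Tr(\hat{\rho}_x M) = 0$ for $q$-almost every $x$. Because $\rho$ and $\sigma(\theta)$ are full-rank (Fundamental condition \ref{ass:fundamental condition}(1)), every outcome of the tomographically complete measurement occurs with positive weight, and the associated family of classical snapshots $\{\hat{\rho}_x\}$ spans the space of Hermitian operators. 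Hence $M = 0$, giving $\sigma(\theta_0^Q) = \sigma(\theta_1^Q)$; combined with Lemma \ref{lem:homogeneous}(1), this yields the quantum homogeneity condition, and hence the classical one on $p(x|\theta)$.

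The main technical obstacle is the operator-spanning step in claim (ii): the classical snapshots $\hat{\rho}_x$ are generally not positive operators, so tomographic completeness must be translated into a statement about the linear span of $\{\hat{\rho}_x\}$ rather than about pointwise positivity. Making this translation precise—without circularly invoking the very tomographic completeness being used, and while respecting the almost-sure (rather than pointwise) nature of the identity $\Tr(\hat{\rho}_x M) = 0$—is the delicate point. Once it is in hand, the rest of the classical argument of Proposition \ref{prop:relationship} transfers with only cosmetic matrix-algebraic modifications.
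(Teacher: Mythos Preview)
Your proposal is correct and follows essentially the same route as the paper, which simply asserts that the classical proof strategy of Proposition~\ref{prop:relationship} transfers to quantum models ``considering matrix computations.'' You have supplied considerably more detail than the paper's one-sentence proof—in particular, the span argument for $\{\hat{\rho}_x\}$ via tomographic completeness and full-rankness is exactly the matrix-algebraic ingredient the paper alludes to but does not spell out; the closing reference to Lemma~\ref{lem:homogeneous}(1) is superfluous (you have already established $\sigma(\theta_0^Q)=\sigma(\theta_1^Q)$ directly), but harmless.
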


Given that the above assumptions may be counterintuitive, we introduce a more commonly used, weaker assumption and demonstrate that it satisfies the necessary conditions.
\begin{defn}
\label{defn:realizability}
        For a given pair $(\sigma(\theta), \rho)$, the quantum state $\rho$ is said to be \textit{realizable} by the parametric quantum state $\{\sigma(\theta)|\theta \in \Theta\}$ if there exists $\theta \in \Theta$ with $\sigma(\theta) = \rho$.
\end{defn}
This condition is often used to analyze the Fisher information matrix and losses as a reasonable assumption \cite{watanabe2018mathematical}, which also simplifies our theory as follows.

\begin{lem}[{\cite[Appendix C]{yano2023Quantuma}}]
\label{lem:coincidence of Theta}
If $\sigma(\theta)$ realizes $\rho$, then $\Theta_0^Q = \Theta_0$.
In particular, Assumption \ref{ass:R2} holds.
\end{lem}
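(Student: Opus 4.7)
The plan is to exploit realizability to turn both $\Theta_0$ and $\Theta_0^Q$ into level sets at the minimum value zero, and then use tomographic completeness of $\{\Pi_x\}$ to identify the two level sets.

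First I would fix some $\theta^* \in \Theta$ with $\sigma(\theta^*) = \rho$, which exists by Definition \ref{defn:realizability}. Then $D(\rho\|\sigma(\theta^*)) = 0$ and $q(x) = \Tr(\Pi_x \rho) = \Tr(\Pi_x \sigma(\theta^*)) = p(x|\theta^*)$, so $\mathrm{KL}(q\|p(\cdot|\theta^*)) = 0$ as well. Since the quantum relative entropy and the KL divergence are both non-negative and vanish precisely on equal arguments (under the support hypothesis from Fundamental condition \ref{ass:fundamental condition}), I get the characterizations
\begin{align}
\Theta_0^Q &= \{\theta \in \Theta : \sigma(\theta) = \rho\}, \\
\Theta_0 &= \{\theta \in \Theta : \Tr(\Pi_x \sigma(\theta)) = \Tr(\Pi_x \rho) \text{ for all } x\}.
\end{align}

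Next I would show the two inclusions. The inclusion $\Theta_0^Q \subseteq \Theta_0$ is immediate: if $\sigma(\theta) = \rho$, then taking the trace against any $\Pi_x$ gives $p(x|\theta) = q(x)$. For the reverse inclusion $\Theta_0 \subseteq \Theta_0^Q$, suppose $\theta \in \Theta_0$, so $\Tr(\Pi_x \sigma(\theta)) = \Tr(\Pi_x \rho)$ for every outcome $x$. Because $\{\Pi_x\}$ is tomographically complete — meaning distinct states cannot produce identical outcome statistics on this POVM — this forces $\sigma(\theta) = \rho$, hence $\theta \in \Theta_0^Q$. Combining both inclusions gives $\Theta_0 = \Theta_0^Q$.

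Finally, the parameter $\theta^*$ belongs to $\Theta_0 = \Theta_0^Q$, so the intersection $\Theta_0 \cap \Theta_0^Q$ is nonempty, which is exactly Assumption \ref{ass:R2}. There is essentially no obstacle here: the only substantive ingredient is the tomographic completeness of $\{\Pi_x\}$ (used to promote matching classical outcome distributions to equality of density operators), which is already part of the standing setup introduced in Section \ref{section:main results}. The remaining content is bookkeeping about when the divergences vanish.
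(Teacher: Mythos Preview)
Your proof is correct and follows essentially the same approach as the paper's: use realizability to recast $\Theta_0$ and $\Theta_0^Q$ as zero-level sets of the KL divergence and quantum relative entropy respectively, then invoke tomographic completeness of $\{\Pi_x\}$ to conclude that $\mathrm{KL}(q\|p(\cdot|\theta)) = 0$ iff $D(\rho\|\sigma(\theta)) = 0$. Your version is slightly more explicit in separating the two inclusions, but the argument is the same.
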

\begin{proof}
The realizability assumption on the quantum models implies that $q(x)$ is also realized by $p(x|\theta)$.
Hence, we can rewrite the parameter spaces as 
\begin{align*}
\Theta_0^Q&=\{\theta_0^Q\in\Theta^Q\mid D(\rho||\sigma(\theta_0^Q))=0\}\\
\Theta_0&=\{\theta_0\in\Theta\mid \mathrm{KL}(q||p(\cdot|\theta_0)=0\}.
\end{align*} 
Now, since $\{\Pi_x\}$ is a tomographically complete measurement, more strongly we have $\mathrm{KL}(\Tr(\Pi_x\rho) || \Tr(\Pi_x \sigma (\theta))) = 0$ if and only if $D(\rho||\sigma(\theta))=0$ for $\theta\in\Theta$.
It concludes the proof.
\end{proof}

Based on this observation, we can prove that certain assertions as follows.

\begin{prop}
\label{prop:assumption_relation}
    The realizability condition (Definition \ref{defn:realizability}) implies  Assumptions \ref{ass:S1} and \ref{ass:R2}.
\end{prop}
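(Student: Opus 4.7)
The plan is to handle the two implications separately, reducing each to an earlier result in this section. For Assumption \ref{ass:R2}, realizability supplies $\theta^{\star}\in\Theta$ with $\sigma(\theta^{\star})=\rho$. Passing to the associated classical distributions via the tomographically complete $\{\Pi_x\}$, we also obtain $q(x)=\Tr(\Pi_x\rho)=\Tr(\Pi_x\sigma(\theta^{\star}))=p(x|\theta^{\star})$, so $D(\rho\|\sigma(\theta^{\star}))=0$ and $\mathrm{KL}(q\|p(\cdot|\theta^{\star}))=0$. Hence $\theta^{\star}\in\Theta_0\cap\Theta_0^Q$, which is Assumption \ref{ass:R2}; in fact Lemma \ref{lem:coincidence of Theta} already sharpens this to $\Theta_0^Q=\Theta_0$.

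For Assumption \ref{ass:S1} I would extract the two required bounds from the two relative-variance statements available. On the classical side, the observation above shows that realizability of $(\rho,\sigma(\theta))$ propagates to realizability of the associated pair $(q(x),p(x|\theta))$; therefore Proposition \ref{prop:relationship} (1), applied classically, yields a constant $c>0$ with
\begin{equation}
    \mathbb{E}_X[f(X,\theta)^2]\le c\,\mathbb{E}_X[f(X,\theta)]\quad\text{uniformly in }\theta\in\Theta.
\end{equation}
For the quantum side I would invoke Lemma \ref{lem:quantum relationship}, which asserts that Proposition \ref{prop:relationship} transfers verbatim to a pair of quantum models. Applied to the realizable pair $(\rho,\sigma(\theta))$, it produces a constant $c'>0$ with $\mathbb{E}_X[f^Q(\hat{\rho}_X,\theta)^2]\le c'\,\mathbb{E}_X[f^Q(\hat{\rho}_X,\theta)]$ uniformly in $\theta$. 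Replacing the two constants by $\max(c,c')$ gives the single constant demanded by Assumption \ref{ass:S1}.

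The main obstacle I anticipate is making the quantum half fully rigorous, since it is the only point at which the classical-shadow stochasticity intervenes: the snapshots $\hat{\rho}_x$ are only unbiased in expectation, so $\mathbb{E}_X[f^Q(\hat{\rho}_X,\theta)^2]$ carries shadow-variance contributions absent in the classical setting. The key step, which Lemma \ref{lem:quantum relationship} packages for us, is to combine three inputs: Fundamental condition \ref{ass:fundamental condition} makes $\rho$ and $\sigma(\theta)$ full-rank so that $\log\sigma(\theta_0^Q)-\log\sigma(\theta)$ is a bounded analytic matrix-valued function of $\theta$; Fundamental condition \ref{ass:fundamental conditionII} provides a compact semi-analytic $\Theta$ on which this bound is uniform; and realizability forces $\mathbb{E}_X[f^Q(\hat{\rho}_X,\theta)]=D(\rho\|\sigma(\theta))=K^Q(\theta)\ge 0$ with zero set exactly $\Theta_0^Q$. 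Together these control the ratio $\mathbb{E}_X[f^Q(\hat{\rho}_X,\theta)^2]/\mathbb{E}_X[f^Q(\hat{\rho}_X,\theta)]$ uniformly away from $\Theta_0^Q$, while near $\Theta_0^Q$ the analyticity forces numerator and denominator to vanish at compatible orders, exactly the content that Lemma \ref{lem:quantum relationship} already abstracts into the relatively finite variance property.
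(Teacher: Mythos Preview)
Your proposal is correct and matches the paper's approach: the paper's proof consists solely of citing Lemmas \ref{lem:quantum relationship} and \ref{lem:coincidence of Theta}, which is exactly the skeleton you flesh out (with Proposition \ref{prop:relationship} supplying the classical half of Assumption \ref{ass:S1}). Your final paragraph speculating about how Lemma \ref{lem:quantum relationship} is established is extraneous to the proof of this proposition---the paper simply takes that lemma as given---but it does no harm.
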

\begin{proof}
    The assertion follows from Lemmas \ref{lem:quantum relationship} and \ref{lem:coincidence of Theta}.
\end{proof}
Although the regularity condition does not appear in the main theorems of this paper, it indicates a certain class for the model we are dealing with.

\begin{prop}
\label{prop:proof of the first conjecture}
    If there is a parameter $\theta_0$ with $\rho = \sigma(\theta_0)$, then Conjecture \ref{conj:qregular to cregular} holds.
\end{prop}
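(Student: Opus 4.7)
The plan is to reduce the three parts of Definition \ref{def:quantum_regularity} to corresponding facts about the classical model, using realizability to identify $\theta_0$ as the unique common optimum and using the data processing inequality to compare the two Hessians.

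First, I would invoke Lemma \ref{lem:coincidence of Theta}: the realizability hypothesis $\rho = \sigma(\theta_0)$ gives $\Theta_0^Q = \Theta_0$. Consequently, if $(q(x),p(x|\theta))$ is classically regular, then $\Theta_0 = \{\theta_0\}$ forces $\Theta_0^Q = \{\theta_0\}$, proving part (1) of Definition \ref{def:quantum_regularity}; the open neighborhood in part (3) transfers automatically since it concerns the same point in the same parameter space. What remains is the Hessian condition (2), namely that $\nabla^2 D(\rho\|\sigma(\theta))|_{\theta_0}$ is positive definite.

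For the Hessian, the key is the monotonicity of relative entropy under the measurement channel. Since $\{\Pi_x\}$ is tomographically complete (in particular, a CPTP quantum-to-classical map), the data processing inequality yields
\begin{equation}
D(\rho\,\|\,\sigma(\theta)) \;\geq\; \mathrm{KL}(q(\cdot)\,\|\,p(\cdot|\theta)) \qquad \text{for all } \theta \in \Theta.
\end{equation}
Under realizability, both sides vanish at $\theta = \theta_0$, so the smooth, nonnegative difference $\Delta(\theta) \coloneqq D(\rho\|\sigma(\theta)) - \mathrm{KL}(q\|p(\cdot|\theta))$ attains its minimum $0$ at $\theta_0$. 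In particular, $\nabla \Delta(\theta_0) = 0$ and the Hessian $\nabla^2 \Delta(\theta_0)$ is positive semidefinite, which is equivalent to
\begin{equation}
J^Q \;=\; \nabla^2 D(\rho\,\|\,\sigma(\theta))\big|_{\theta_0} \;\succeq\; \nabla^2 \mathrm{KL}(q\,\|\,p(\cdot|\theta))\big|_{\theta_0} \;=\; J.
\end{equation}
The classical regularity assumption gives $J \succ 0$, and the semidefinite inequality $J^Q \succeq J$ then forces $J^Q \succ 0$, which is exactly part (2) of Definition \ref{def:quantum_regularity}.

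There is no genuine obstacle here once the two ingredients are in place: the identification $\Theta_0^Q = \Theta_0$ from Lemma \ref{lem:coincidence of Theta}, and the Hessian comparison from the data processing inequality. The only care needed is regularity of the relevant functions so that Hessians exist and the inequality $\nabla^2 \Delta(\theta_0) \succeq 0$ from the minimum is rigorous; this is guaranteed by Fundamental condition \ref{ass:fundamental condition} together with the full-rank assumption on $\rho$ and $\sigma(\theta)$, which ensures that both $\log \sigma(\theta)$ and $\log p(x|\theta)$ are smooth in a neighborhood of $\theta_0$.
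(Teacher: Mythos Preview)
Your proof is correct and follows essentially the same route as the paper: both use Lemma \ref{lem:coincidence of Theta} to identify $\Theta_0^Q=\Theta_0=\{\theta_0\}$ and the data processing inequality $K(\theta)\le K^Q(\theta)$ (with equality at $\theta_0$) to compare the two losses. The paper phrases the final step in terms of the vanishing order of $K^Q$ at $\theta_0$ being at most that of $K$, whereas your second-order minimum argument giving $J^Q\succeq J\succ 0$ is a cleaner and more direct way to extract positive definiteness of the Hessian; in multiple dimensions your formulation is in fact sharper, since ``vanishing order $2$'' alone does not immediately rule out a merely semidefinite Hessian.
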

\begin{proof}
    Since the realizability condition holds, Eqs. \eqref{eq:average log loss function} and \eqref{eq:average quantum log loss function} can be rewritten as
    \begin{align}
    \label{eq:proof of conj of models}
    K(\theta) = \mathrm{KL}(q(x)\|p(x|\theta)),\quad 
    K^{Q}(\theta) = D(\rho\|\sigma(\theta)).
\end{align}
By the data processing inequality in information theory, combined with Eq. \eqref{eq:proof of conj of models}, we get
\begin{align}
\label{eq:data processing inequality}
    K(\theta) \leq K^Q(\theta);
\end{align}
see also Ineq. \eqref{ineq:K less than K^Q} for the complex parameter case.
Moreover, the regularity condition for $(q(x),p(x|\theta))$ implies that the optimal parameter set consists of the unique point: $\Theta_0=\{\theta_0\}$.
It follows from Lemma \ref{lem:coincidence of Theta} that $\Theta_0^Q=\{\theta_0\}$.
Considering the Taylor expansion around $\theta_0$, Ineq. \eqref{eq:data processing inequality} shows that the vanishing order of $K^Q(\theta)$ is less than or equal to the vanishing order of $K(\theta)$ at $\theta_0$.
NHote that the latter is 2 because $K(\theta)$ can be approximated by a quadratic form around $\theta_0$ from the regularity condition for $(q(x),p(x|\theta))$.
Since $K(\theta_0) = 0$ and $K(\theta)\geq 0$, the former is a positive even integer.
Hence, the vanishing order of $K^Q(\theta)$ at $\theta_0$ is 2, which implies the regularity for $(\rho,\sigma(\theta))$.
\end{proof}

We illustrate the logical relationship between the conditions discussed so far in Figure \ref{fig:relation_2}. Here, Lemmas \ref{lem:homogeneous} and \ref{lem:coincidence of Theta} indicate that the homogeneous and realizability conditions are equivalent between quantum and classical situations, and hence we denote by ``homogeneous" and ``realizable" respectively; compare to Figure \ref{fig:relation_1}.

\begin{figure}[t]
    \centering    \includegraphics[width=0.65\textwidth]{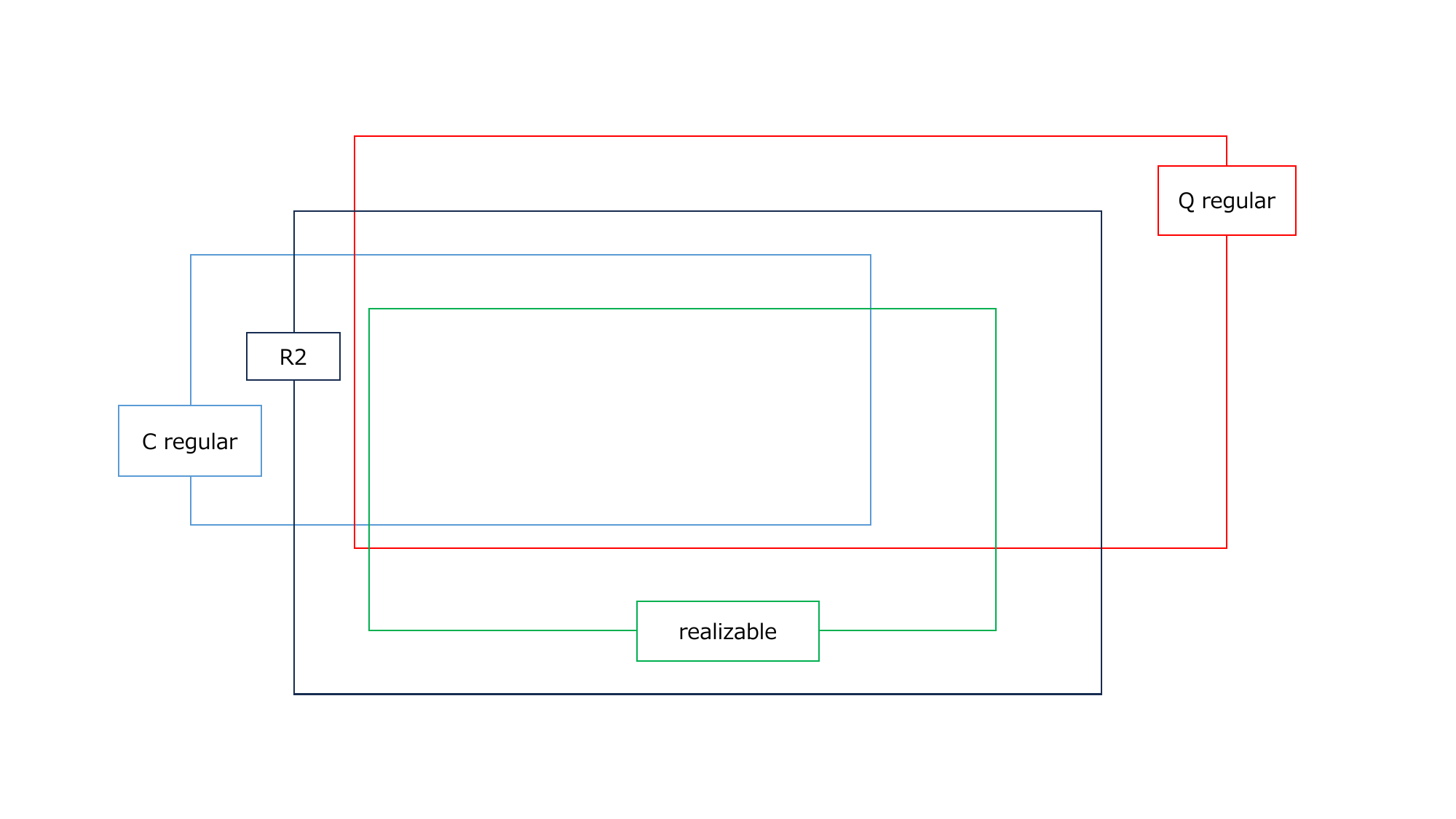}
    \caption{Relationship between the properties of quantum and classical models surrounding the realizability condition.}
    \label{fig:relation_2}
\end{figure}
We conclude this subsection with a final technical assumption, which generalizes the regularity of quantum models. As we will see in Section \ref{subsec:analytic_calculation}, reasonable examples satisfy this assumption.
\begin{asswithname}{S2}
\label{ass:S2}
In the expressions in Eq. \eqref{eq:resolution of K and K^Q}, the equality 
    $k=k^Q$
    holds.
    In particular, the learning coefficients associated with $K$ and $K^Q$ coincide.
\end{asswithname}
Intuitively, this is a generalization in a different direction from the relatively finite variance of the regularity condition. In the regular situation, $K$ (resp. $K^Q$) can be approximated around the optimal parameters in quadratic forms with the (resp. quantum) Fisher information matrix as coefficients, but in the singular situation, we get the normal crossing representation as a generalization of this. Assumption \ref{ass:S2} means the equality of the orders in this representation, or in other words, the equality of the vanishing orders of the derivatives of the average log loss functions.

We assume this condition only in certain proofs for singular cases, in addition to Fundamental conditions \ref{ass:fundamental condition} and \ref{ass:fundamental conditionII}.
Roughly speaking, it means that the behavior of the two average log loss functions $K(\theta)$ and $K^Q(\theta)$ near the zero locus are not significantly different.
\begin{rem}
\label{rem: k and k^Q}
    If we consider complex parameters for the models, then Assumption \ref{ass:S2} follows from Liouville's theorem (Proposition \ref{prop:Liouville and k=k^Q}).  We can observe that this assumption holds in reasonable circumstances (Section \ref{subsec:analytic_calculation}). However, since we are considering real parameters, the existence of $C^{\omega}$-class functions bounded on $\mathbb{R}^d$ makes the assumption nontrivial. Nonetheless, it is conceivable that Assumption \ref{ass:S1} shows it (Conjecture \ref{conj:k=k^Q}).
\end{rem}

\begin{prop}
\label{prop:ass_relation}
\begin{enumerate}
    \item Assumption \ref{ass:R1} implies Assumption \ref{ass:S1}.
    \item Assumptions \ref{ass:R1} and \ref{ass:R2} imply Assumption \ref{ass:S2}.
    \item Assumption \ref{ass:S2} implies Assumption \ref{ass:R2}.
    In particular, under Assumption \ref{ass:R1}, Assumption \ref{ass:R2} is equivalent to Assumption \ref{ass:S2}.
\end{enumerate}
\end{prop}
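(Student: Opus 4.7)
The plan is to handle the three claims in order, drawing on the structural results already built earlier in the appendix. For (1), I would apply the classical Proposition \ref{prop:relationship} together with its quantum counterpart Lemma \ref{lem:quantum relationship}; each says that regularity implies relatively finite variance. Since R1 bundles the classical and quantum regularity hypotheses, S1 follows immediately for both $f(x,\theta)$ and $f^Q(\hat{\rho},\theta)$.

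For (2), I would first use R1 to write $\Theta_0 = \{\theta_0\}$ and $\Theta_0^Q = \{\theta_0^Q\}$ as singletons whose Hessians of $K$ and $K^Q$ are positive definite, and then invoke R2 to identify $\theta_0 = \theta_0^Q$. A Taylor expansion at this common point yields positive constants with
\[
c_1 \|\theta-\theta_0\|^2 \le K(\theta) \le c_2 \|\theta-\theta_0\|^2, \qquad
c_3 \|\theta-\theta_0\|^2 \le K^Q(\theta) \le c_4 \|\theta-\theta_0\|^2
\]
on a neighbourhood of $\theta_0$. Pulling back along a simultaneous log resolution $g$ realising Eq.~\eqref{eq:resolution of K and K^Q}, the two monomials $u^{2k}$ and $r(u)\,u^{2k^Q}$ become pointwise comparable near the exceptional locus because $r(0) \neq 0$. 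Restricting to each coordinate axis (fix the remaining $u_i$ to $1$ and send $u_j \to 0$) then forces $2k_j = 2k_j^Q$ for every $j$, which is S2.

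For (3), assume S2. Then in every chart of the log resolution the divisors of $K \circ g$ and $K^Q \circ g$ coincide, so $g^{-1}(\Theta_0) = g^{-1}(\Theta_0^Q)$ as subsets of $\widetilde{\Theta}$. Since $g$ is proper and surjective, pushing forward gives $\Theta_0 = \Theta_0^Q$, which is non-empty by compactness of $\Theta$ (Fundamental condition \ref{ass:fundamental conditionII}) together with continuity of $K$ and $K^Q$. Hence R2 holds. The concluding equivalence under R1 follows by combining (2) and (3): given R1, part (2) yields R2 $\Rightarrow$ S2 and part (3) yields S2 $\Rightarrow$ R2 independently of R1.

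The most delicate step is (2): turning a two-sided comparison between $K$ and $K^Q$ into a componentwise identification of the multi-indices $k$ and $k^Q$ requires the two estimates to be read in one and the same normal-crossing chart, and one must check that the positive unit $r(u)$ cannot absorb any exponent mismatch under the axis restriction. Establishing that the log resolution $g$ can be chosen simultaneously for the pair $(K, K^Q)$ so that both admit monomial descriptions in a common chart above $\theta_0$, and handling charts that meet several exceptional components at once, are the points I would take most care over in the formal write-up.
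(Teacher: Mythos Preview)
Your proposal is correct and follows essentially the same approach as the paper. Part (1) is identical; for part (2) you make the quadratic-comparability argument explicit via an axis-restriction step, whereas the paper simply invokes the quadratic approximation and refers to Example~\ref{ex:primitive blowup example}; for part (3) you argue directly that $k=k^Q$ forces $g^{-1}(\Theta_0)=g^{-1}(\Theta_0^Q)$ and hence $\Theta_0=\Theta_0^Q$, while the paper runs the contrapositive (a point in $\Theta_0^Q\setminus\Theta_0$ would give $k^Q>0$ but $k=0$ in some chart)---these are logically equivalent presentations of the same idea.
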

\begin{proof}
    Item (1) follows from Proposition \ref{prop:relationship} and Lemma \ref{lem:quantum relationship}.
    
    (2) It follows from Assumption \ref{ass:R2} that $\Theta_0$ and $\Theta_0^Q$ consist of only one points.
    Combined with Assumption \ref{ass:R2}, two optimal sets $\Theta_0$ and $\Theta_0^Q$ must coincide.
    Furthermore, the Hessian of the quantum relative entropy and KL divergence are non-degenerate, which forces $K(\theta)$ and $K^Q(\theta)$ can be approximated by quadratic forms; see also Example \ref{ex:primitive blowup example}.
    Hence, the two learning coefficients are $d/2$, and we obtain $k=k^Q$.
    
    (3)     If Assumption \ref{ass:R2} does not hold, then we can take an element $\theta\in\Theta_0^Q\setminus \Theta_0$ or $\theta\in\Theta_0\setminus \Theta_0^Q$.
    Taking a simultaneous log resolution, in the former case, we can show that $k^Q >0$ while $k=0$ around $\theta$.
    The same applies in the other case.

\end{proof}
Notably, Assumptions \ref{ass:R1} and \ref{ass:R2} imply Assumptions \ref{ass:S1} and \ref{ass:S2}.
Hence, the former is referred to as \textit{regular cases} and the latter as \textit{singular cases}, an analog of classical singular learning theory.
  Assumption \ref{ass:S2} can be seen as a coincidence of the optimal parameter set with multiplicity.
  In Figure \ref{fig:relation_3}, we present a logical relationship between R1, R2, S1, S2 assumptions.
  This can be regarded as a special case of Figure \ref{fig:relation_2}.
  \begin{figure}[t]
    \centering
    \includegraphics[width=0.65\textwidth]{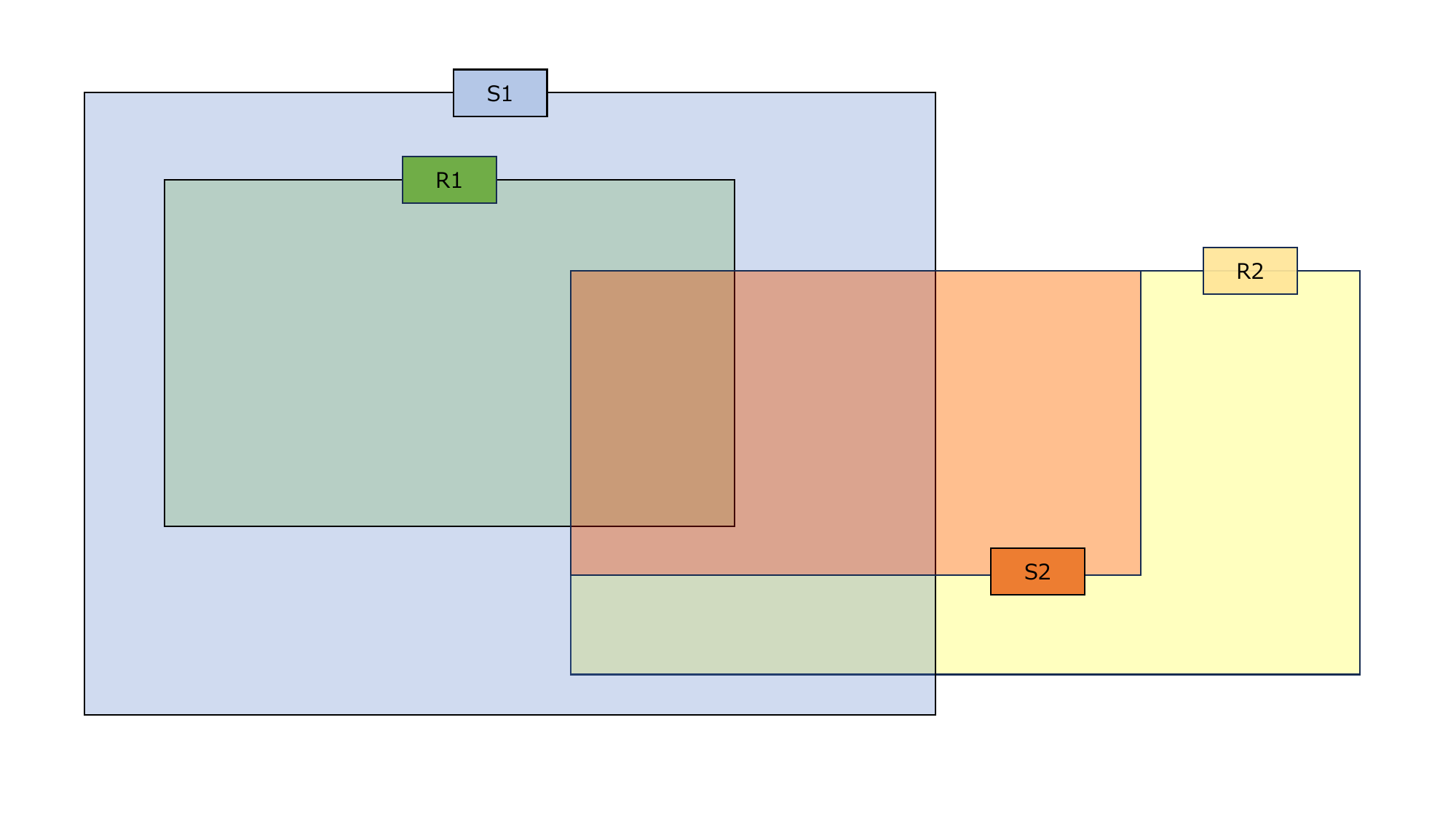}
    \caption{Relationship between the conditions of the quantum and classical models.}
    \label{fig:relation_3}
\end{figure}

From Proposition \ref{prop:ass_relation} (3), we can take and put 
\begin{align}
\label{eq:fixed_optimal_para2}
\theta_0=\theta_0^Q\in\Theta_0 \cap \Theta_0^Q
\end{align}
even in singular cases.
We will use this notation throughout the paper.

\section{Mathematical tools}
\label{app:Mathematical tools}

To consider the statistical behavior of quantum singular models, we redefine and analyze quantum state estimation using the methods of singular learning theory. Two main mathematical challenges emerge when dealing with quantum singular models.
The first one arises from the singularities associated with the analytic functions $K(\theta)$ in Eq. \eqref{eq:average log loss function} and $K^Q(\theta)$ in Eq. \eqref{eq:average quantum log loss function}. 
Approximating these functions using quadratic forms around the singularities is impossible in general, i.e., the Hessian matrix of $K(\theta)$ and $K^Q(\theta)$ would be degenerate. 
The second one is that empirical processes defined in regular situations are not properly defined on the optimal parameter set in singular cases. 
Consequently, the posterior distribution does not converge to a Gaussian distribution, 
rendering the integration with posterior distribution in Bayesian estimation of the quantum state more challenging.

To address these issues, in this section, we tackle the former problem by introducing normal crossing representations, a generalization of the quadratic approximations, through the resolution of singularities in algebraic geometry. 
Notably, we simultaneously deal with singularities arising from a quantum model $(\rho,\sigma(\theta))$ and the associated classical model $(q(x), p(x|\theta))$. 
For the latter consideration, we further introduce the theory of empirical processes. 
In singular learning theory, an empirical process of log-likelihood ratio function is redefined after the resolution of singularities to capture its behavior in the vicinity of the optimal parameter set.
Furthermore, in our case, another empirical process is also introduced to study the asymptotic behavior of the quantum training loss $T_n^Q$.
It is natural to consider the $\Theta$ as a parameter space for $\sigma(\theta)$, while this parameter space is unsuitable for analyzing singular models in Bayesian estimation. 
We explain in this paper that the space $\widetilde{\Theta}$, after being blown up, is the appropriate defining space of singular parametric models to describe the generalization loss.

\subsection{Algebraic geometry: log resolution of singularities}
\label{app:algebraic geometry}
In this subsection, we recall the basic notions of algebraic geometry, particularly the resolution of singularities and real log canonical thresholds, which are fundamental concepts in singular learning theory.
Before introducing the real log canonical thresholds, let us briefly recall the results on the log resolution of singularities, originally proved by Hironaka \cite{hironaka1964resolutionI,hironaka1964resolutionII}. 
Hironaka's groundbreaking work established that any algebraic variety over a field of characteristic zero admits a resolution of singularities.
His proof applies to both real and complex analytic spaces, as detailed in \cite{hironaka1964resolutionI} for real spaces and \cite{hironaka1964resolutionII} for complex spaces.

We present the theorem relevant to our discussion:
\begin{thm}
\label{thm:resolution_original}
    Let $X$ be a real analytic space and $X_{\mathrm{sm}}$ (resp. $X_{\mathrm{sing}}$) be its smooth (resp. singular) part.
    Then, a log resolution of singularities exists.
    In other words, there exists a proper bimeromorphic morphism $g:\widetilde{X}\to X$ so that:
    \begin{enumerate}
        \item The morphism $g$ is isomorphic over $X_{\mathrm{sm}}$.
        \item The inverse image $g^{-1}(X_{\mathrm{sing}})$ of the singular locus is a simple normal crossing divisor.
    \end{enumerate}
\end{thm}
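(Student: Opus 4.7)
The plan is to treat this as a citation of Hironaka's theorem \cite{hironaka1964resolutionI} together with a sketch of the iterative blowup strategy that underlies the proof. Since a self-contained proof would span hundreds of pages of commutative algebra and analytic geometry, I will instead outline the strategy and indicate the main obstacle, and appeal to Hironaka's original work for the technical core.

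First I would reduce to a local statement: since $X$ is a real analytic space, it can be covered by local models embedded in some ambient real analytic manifold $M$, and proper bimeromorphic morphisms glue. Thus it is enough, locally around any singular point $p \in X_{\mathrm{sing}}$, to construct a finite sequence of blowups
\begin{equation}
    M_N \xrightarrow{\pi_N} M_{N-1} \xrightarrow{\pi_{N-1}} \cdots \xrightarrow{\pi_1} M_0 = M,
\end{equation}
each $\pi_i$ a blowup along a smooth analytic center $C_{i-1} \subset M_{i-1}$ contained in the strict transform of $X_{\mathrm{sing}}$, such that the strict transform $\widetilde{X} \subset M_N$ is smooth and the total transform of $X_{\mathrm{sing}}$ is a simple normal crossing divisor. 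Setting $g \coloneqq \pi_1 \circ \cdots \circ \pi_N |_{\widetilde{X}}$ then gives the desired morphism; properness follows from properness of each blowup along a compact center, and bimeromorphicity from the fact that each $\pi_i$ is an isomorphism away from $C_{i-1}$.

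The main step — and the main obstacle — is the choice of centers $C_{i-1}$ ensuring that the iteration terminates. Hironaka's strategy is to attach to each point of the singular locus a local invariant (built out of the Hilbert--Samuel function, or equivalently a tuple encoding the order of vanishing along a maximal chain of smooth subvarieties) that is upper semicontinuous and strictly decreases under blowup along its maximum locus. One then defines $C_{i-1}$ to be the (smooth) maximum locus of this invariant, and verifies (i) smoothness of the center, (ii) equivariance of the invariant under blowup, and (iii) strict drop of the invariant on the strict transform. The delicate point is (iii): proving that no monomial factor of the defining ideal can re-accumulate after successive blowups, which requires Hironaka's careful bookkeeping with the characteristic polyhedron and normal flatness. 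Once the invariant reaches its minimum value, the singular locus has been resolved and the accumulated exceptional divisors automatically have simple normal crossings, because each was introduced as the exceptional divisor of a blowup along a smooth center transverse to the previously accumulated exceptional divisors.

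Finally I would note that Hironaka's proof is stated in the algebraic/complex analytic setting, but adapts directly to real analytic spaces because all the relevant local invariants are defined in terms of formal completions of local rings, and the completion of a real analytic local ring at a point is a formal power series ring over $\mathbb{R}$, for which Hironaka's inductive machinery applies verbatim; this extension is carried out in \cite{hironaka1964resolutionI}. For the purposes of this paper the only features of $g$ we ever use are properness, the isomorphism over $X_{\mathrm{sm}}$, and the simple normal crossing form of $g^{-1}(X_{\mathrm{sing}})$, all of which are guaranteed by the construction above and will be specialized in Theorem \ref{thm:resolution_Watanabe} to the pair $(\Theta, \Theta_0)$ used in singular learning theory.
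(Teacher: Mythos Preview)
Your proposal is appropriate: the paper does not prove this theorem at all but simply cites it as Hironaka's classical result \cite{hironaka1964resolutionI,hironaka1964resolutionII}, so your sketch of the iterative blowup strategy together with the citation is already more than the paper provides. There is nothing to compare here beyond noting that both you and the paper defer the actual proof to Hironaka's original work.
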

For our purposes, we refer to an alternative version of this theorem that is more directly applicable to our context, as Atiyah wrote \cite{atiyah1970resolution}.
\begin{thm}[{\cite[Theorem 2.3]{watanabe2009algebraic}}]
\label{thm:resolution_Watanabe}
    Let $f(x)$ be a non-constant real analytic function from a neighborhood of the origin in $\mathbb{R}^d$ to $\mathbb{R}^1$, which satisfies $f(0)=0$.
    Then, there exists an open neighborhood $W\subset \mathbb{R}^d$ of the origin, $d$-dimensional manifold, and a real analytic map $g:U\to W$ satisfies the following conditions.
    \begin{enumerate}
        \item The map $g$ is proper.
        \item The map $g$ induces a real analytic isomorphism between $U\setminus U_0$ and $W\setminus W_0$
        where 
            \[W_0\coloneqq\{x\in W\mid f(x)=0\},\quad U_0\coloneqq g^{-1}(U_0).\]
        \item Around any $p\in U_0$, there is a local coordinate $u=(u_1,\cdots,u_d)$ of $U$ so that 
        \begin{align}
        f(g(u))&=Su_1^{\kappa_1}\cdots u_d^{\kappa_d},\\
        g'(u)&=b(u)u_1^{h_1}\cdots u_d^{h_d}
        \end{align}
        where $S$ is a constant, $b(u)$ is a real analytic function with $b(p)\neq 0$ and $\kappa_1,\cdots,\kappa_d, h_1,\cdots, h_d$ are nonnegative integers.
        Moreover, if $f(x)\geq 0$ for any $x$, then the integers $\kappa_1,\cdots, \kappa_d$ are even.
    \end{enumerate}
\end{thm}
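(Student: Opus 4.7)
The plan is to deduce Theorem \ref{thm:resolution_Watanabe} directly from Hironaka's theorem (Theorem \ref{thm:resolution_original}) applied to the real analytic zero locus of $f$. First, I would shrink the domain of definition to a relatively compact open neighborhood $W$ of the origin on which $f$ is analytic, and set $W_0 \coloneqq \{x \in W \mid f(x) = 0\}$. Since $f$ is nonconstant and $f(0)=0$, $W_0$ is a proper real analytic subvariety. Invoking Theorem \ref{thm:resolution_original} produces a proper bimeromorphic morphism $g \colon \widetilde{W} \to W$ from a $d$-dimensional real analytic manifold such that $g$ is isomorphic over $W \setminus W_0$ and $g^{-1}(W_0)$ is a simple normal crossing divisor. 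Setting $U \coloneqq \widetilde{W}$ and $U_0 \coloneqq g^{-1}(W_0)$ immediately yields properties (1) and (2) of the theorem.

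The bulk of the work lies in extracting property (3), i.e., translating the SNC condition into the explicit monomial form at each point. For any $p \in U_0$, the SNC property supplies local analytic coordinates $u = (u_1, \ldots, u_d)$ centered at $p$ in which the irreducible components of $U_0$ through $p$ are a subset of the coordinate hyperplanes $\{u_i = 0\}$. Since $f \circ g$ vanishes exactly on $U_0$, unique factorization in the local ring of convergent real analytic power series at $p$ forces
\begin{equation*}
f(g(u)) = S(u)\, u_1^{\kappa_1} \cdots u_d^{\kappa_d}
\end{equation*}
for nonnegative integers $\kappa_i$ (with $\kappa_i = 0$ whenever $\{u_i=0\}$ is not a component of $U_0$ through $p$) and an analytic unit $S(u)$. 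Shrinking the coordinate chart so that $S(u)$ has constant sign, I can absorb $S(u)/S(p)$ into a redefinition and keep $S = S(p)$ as a constant. For the Jacobian expression, the critical locus of $g$ (the exceptional divisor) is itself SNC and, after possibly iterating the resolution so that it is contained in $U_0$ along with the pullback divisor, can be described in the same coordinates, giving $g'(u) = b(u)\, u_1^{h_1}\cdots u_d^{h_d}$ with $b(p) \neq 0$.

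The main obstacle is the final clause: when $f \geq 0$ globally, each $\kappa_i$ must be even. Given the normal crossing form above with $S(p) \neq 0$, I would fix all $u_j$ with $j \neq i$ at generic small nonzero values and regard $f(g(u))$ as a one-variable analytic function of $u_i$ near $0$. Up to a nonzero constant this function is $u_i^{\kappa_i}$, and the hypothesis $f \geq 0$ forces nonnegativity for both $u_i > 0$ and $u_i < 0$; since $u_i^{\kappa_i}$ changes sign at the origin for odd $\kappa_i$, only even exponents are compatible. Applying this coordinate-by-coordinate yields $\kappa_i \in 2\mathbb{Z}_{\geq 0}$ for all $i$. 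The delicate technical step is justifying that the chosen coordinates can be taken real analytic and that the unit $S(u)$ retains constant sign throughout a neighborhood — this is where reality, as opposed to the better-behaved complex-analytic setting, requires care and implicitly relies on the full strength of the real version of Hironaka's theorem cited above.
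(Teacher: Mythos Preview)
The paper does not supply its own proof of this theorem: it is quoted verbatim as \cite[Theorem 2.3]{watanabe2009algebraic} and attributed, via Atiyah \cite{atiyah1970resolution}, to Hironaka's resolution of singularities (Theorem \ref{thm:resolution_original}). Your outline is precisely the standard derivation the paper alludes to but does not spell out---apply Theorem \ref{thm:resolution_original} to the zero locus of $f$, read off the monomial normal form from the SNC property, and use a sign argument for the parity claim---so there is nothing in the paper to compare against beyond the one-line reference. Your sketch is sound; the only place that deserves more care is the step where you reduce the unit $S(u)$ to a constant, which in the real-analytic category requires choosing a coordinate $u_i$ with $\kappa_i>0$ and replacing it by $u_i\,|S(u)/S(p)|^{1/\kappa_i}$ (legitimate since $S(u)/S(p)>0$ near $p$), rather than merely ``shrinking the chart.''
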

For a variety $X$ and a subvariety $Y\subset X$ over $\mathbb{R}$, the \textit{real log canonical threshold (RLCT)} associated with $Y$ is defined as 
\[\mathrm{RLCT}\coloneqq\min_i\left\{\frac{h_i+1}{\kappa_i}\right\}\]
where $g:\widetilde{X}\to X$ is a proper birational morphism such that the exceptional locus is $Y$ and $g^{-1}(Y)$ is a simple normal crossing divisor.
The RLCT plays a significant role in singular learning theory as it quantifies the complexity of singularities around the optimal parameters.

Calculating resolutions explicitly can be challenging in general. However, we illustrate these concepts with a fundamental example.
\begin{ex}
\label{ex:primitive blowup example}
Let us consider the blowup of $\mathbb{A}_{\mathbb{R}}^d = \Spec \mathbb{R}[X_1,\cdots, X_d]$ at the origin $o\coloneqq (0,\cdots,0)$.
The blowup is $g:\mathrm{Bl}_o(\mathbb{A}_{\mathbb{R}}^d)\to\mathbb{A}_{\mathbb{R}}^d$ where \[\mathrm{Bl}_o(\mathbb{A}_{\mathbb{R}}^d)\coloneqq\Proj \mathbb{R}[X_1,\cdots,X_d,S_1,\cdots,S_d]/(S_iX_j - S_jX_i)_{i,j}\subset \mathbb{R}^d \times \mathbb{P}^{d-1}.\]
Let us take the affine open subset 
\begin{align}
    U_{S_i}\coloneqq (S_i\neq 0) &= \Spec\mathbb{R}[x_1,\cdots,x_d,s_1,\cdots,s_{i-1},s_{i+1},\cdots,s_d]/(x_j - x_i s_j)_j\\
    &\cong \Spec\mathbb{R}[x_i,s_1,\cdots,s_{i-1},s_{i+1},\cdots,s_d]
\end{align}
where $x_j\coloneqq X_j/S_i$ and $s_j\coloneqq S_j/S_i$.
Then, the restriction of $g$ to this affine open subset is described as
\[g\vert_{U_{S_i}}: (x_i,s_1,\cdots,s_{i-1},s_{i+1},\cdots,s_d) \mapsto (s_1x_i,\cdots, s_{i-1}x_i, x_i, s_{i+1}x_i, \cdots, s_dx_i),\]
which is isomorphism on $U_{S_i}\setminus g^{-1}(o)$.
Of course, the exceptional divisor is isomorphic to $\mathbb{P}^{d-1}$, which is, in particular, a simple normal crossing divisor.
Hence, this example realizes one concrete form of Theorem \ref{thm:resolution_Watanabe}.
Using this description, we now compute the normal form of the function $f(X_1,X_2) = X_1^2 + X_2^2$ as Theorem \ref{thm:resolution_Watanabe} (3).
On $U_{S_1}$, the function becomes $f(g(x_1,s_2)) = x_1^2(s_2^2 + 1)$.
This implies that $2k=(2,0)$ on this affine open subset.
Computing the differentials, we also have $h=(1,0)$.
A similar computation works for $U_{S_2}$, and hence $\mathrm{RLCT} = 1$.
The observation here can be generalized, asserting the coincidence of the learning coefficient $\lambda$ and the number $d/2$ in regular cases in singular learning theory; see also Remark \ref{rem:regular_RLCT}.
\end{ex}

Here, we explain how to utilize the resolution of singularities within our framework.
We now apply Theorem \ref{thm:resolution_Watanabe} to the real analytic functions $K$ and $K^Q$.
In this situation, we can use the simultaneous resolution of singularities \cite[Theorem 2.8]{watanabe2009algebraic}.
Since both the quantum relative entropy and KL divergence are non-negative functions, it produces the expressions as
\begin{align}\label{eq:resolution of K and K^Q}
    K(g(u))=u_1^{2k_1}\cdots u_d^{2k_d}=:u^{2k},\quad  K^Q(g(u))=r(u)u_1^{2k^Q_1}\cdots u_d^{2k^Q_d}=:r(u)u^{2k^Q}.
\end{align}
Here, $k_1,\cdots,k_d$ are non-negative integers and $r(u)$ be a real-analytic function with $r(0)\neq 0$.
Note that the exponents $\kappa_i$ in Theorem \ref{thm:resolution_Watanabe} (3) correspond to $2k_i$ and $ 2k_i^Q$ in our notation.
It is suggested that the function $r(u)$ reflects the ratio of the quantum and classical Fisher information matrices.
Throughout this paper, we denote by $g:\widetilde{\Theta} \to \Theta$ a log resolution associated with $\Theta_0 = \Theta_0^Q$, and $u_1,\cdots, u_d$ are local coordinates on $\widetilde{\Theta}$.
\begin{rem}
Here, we comment on the surrounding topics in algebraic geometry related to our setting in quantum state estimation.
\begin{enumerate}
    \item     As a more general situation, even when $\Theta_0^Q \neq \Theta_0$, we can also obtain Eq. \eqref{eq:resolution of K and K^Q}.
    This is a direct consequence of the simultaneous resolution of singularities.
    \item     As we introduce in Definition \ref{defn:invariants in singular learning theory}, we define the learning coefficient $\lambda$ of a model as the RLCT for $\Theta_0$.
    It is known that the smaller $\lambda$ is, the worse the singularity in algebraic geometry.
    Theorem \ref{mainthm:quantum expansion singular} tells us this situation corresponds to a smaller generalization loss.
    This suggests that the more complex singularity of a model, the better the performance. 
\end{enumerate}
\end{rem}

Now, we discuss the reasonableness of Assumption \ref{ass:S2}.
In Watanabe's original fundamental condition \cite[Definition 6.1]{watanabe2009algebraic}, he assumed the extendability of the log-likelihood ratio function $f$ to $\mathbb{C}^d$.
Although we do not make this assumption in our Fundamental Conditions \ref{ass:fundamental condition} and \ref{ass:fundamental conditionII}, we will show below that an analogous assumption implies Assumption \ref{ass:S2}.
If the realizability condition (Definition \ref{defn:realizability}) holds, then it is known that 
\begin{align}
\label{ineq:K less than K^Q}
    K(\theta) \leq K^Q(\theta)
\end{align}
in the quantum information theory, combined with the fundamental condition. 
\begin{prop}
\label{prop:Liouville and k=k^Q}
    Assume that the realizability condition (Definition \ref{defn:realizability}) holds.
    If there exist two holomorphic functions
    \[K_{\mathbb{C}}, K^Q_{\mathbb{C}}:\mathbb{C}^d \to \mathbb{C},\]
    whose restrictions to $\mathbb{R}^d$ coincide with $K$, $K^Q$, and Ineq. \eqref{ineq:K less than K^Q} still holds as in the form
    \[|K_{\mathbb{C}}(\theta)| \leq |K_{\mathbb{C}}^Q(\theta)|\]
    for all $\theta\in\mathbb{C}^d$, then Assumption \ref{ass:S2} is satisfied.
    More strongly, there exists a constant $c$ so that $K(\theta) = c\cdot K^Q(\theta)$.
\end{prop}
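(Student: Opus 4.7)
The plan is to exploit the hypothesis $|K_{\mathbb{C}}(\theta)| \leq |K_{\mathbb{C}}^Q(\theta)|$ to show that the two holomorphic functions differ only by a multiplicative constant, and then translate this into the desired equality of exponents in the normal crossing representations.

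First, I would introduce the meromorphic ratio $F(\theta) := K_{\mathbb{C}}(\theta)/K_{\mathbb{C}}^Q(\theta)$, which is a priori only defined on $\mathbb{C}^d \setminus V$, where $V := \{\theta \in \mathbb{C}^d \mid K_{\mathbb{C}}^Q(\theta) = 0\}$. Since $K_{\mathbb{C}}^Q$ is a non-constant holomorphic function (by the realizability condition together with the fact that $K^Q$ is not identically zero), $V$ is a proper analytic subset of $\mathbb{C}^d$, hence of codimension at least one. The key observation is that the inequality $|K_{\mathbb{C}}(\theta)| \leq |K_{\mathbb{C}}^Q(\theta)|$ shows that $|F(\theta)| \leq 1$ on $\mathbb{C}^d \setminus V$, and in particular $F$ is locally bounded near $V$.

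Next I would invoke Riemann's second removable singularity theorem in several complex variables: a holomorphic function on the complement of an analytic hypersurface that is locally bounded extends uniquely to a holomorphic function on the ambient domain. Applying this to $F$ produces an entire function $\tilde{F}: \mathbb{C}^d \to \mathbb{C}$ with $|\tilde{F}| \leq 1$ everywhere. Liouville's theorem for $\mathbb{C}^d$ (proved by restricting to complex lines and applying the one-variable Liouville theorem) then forces $\tilde{F} \equiv c$ for some constant $c \in \mathbb{C}$ with $|c| \leq 1$. Clearing denominators yields $K_{\mathbb{C}} = c \cdot K_{\mathbb{C}}^Q$ on $\mathbb{C}^d$, and restricting to $\mathbb{R}^d$ gives $K(\theta) = c \cdot K^Q(\theta)$ as real-analytic functions.

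Finally, to deduce Assumption \ref{ass:S2}, I would apply the common log resolution $g:\widetilde{\Theta} \to \Theta$ to both sides. From Eq. \eqref{eq:resolution of K and K^Q} we obtain, in local coordinates near any point of $g^{-1}(\Theta_0)$,
\begin{equation}
u^{2k} = K(g(u)) = c \cdot K^Q(g(u)) = c \cdot r(u) u^{2k^Q}.
\end{equation}
Since $r(0) \neq 0$ and $c$ is a nonzero constant (it cannot vanish because $K$ is not identically zero), dividing out and comparing the vanishing multi-orders of the two monomials at $u=0$ forces $2k = 2k^Q$ coordinate-wise, i.e.\ $k = k^Q$. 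This is exactly Assumption \ref{ass:S2}, and the equality of learning coefficients is then immediate from their definition via $k$ and $k^Q$.

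The main obstacle I anticipate is the justification that $F$ genuinely extends across $V$ via the removable singularity theorem: one must be careful that the locus where $K_{\mathbb{C}}^Q$ vanishes has codimension at least one (so Riemann's theorem applies) and that $F$ is indeed bounded across this locus, not merely on its complement. The first point follows because $K^Q$ is nonconstant, and the second follows directly from the pointwise inequality hypothesis; once these two facts are laid out carefully, the rest of the argument is a clean application of standard several-complex-variable tools.
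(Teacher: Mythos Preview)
Your proposal is correct and follows essentially the same approach as the paper: form the ratio $F = K_{\mathbb{C}}/K_{\mathbb{C}}^Q$, argue it is bounded and entire, apply Liouville's theorem for $\mathbb{C}^d$, and then read off $k=k^Q$ from the normal crossing representations. The paper's proof is terser and simply asserts that $F$ is a bounded entire function; your explicit use of the Riemann removable singularity theorem to justify the extension across the zero locus of $K_{\mathbb{C}}^Q$, and your remark that $c\neq 0$, are exactly the details that make that assertion rigorous.
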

\begin{proof}
The assumption on the inequality between $K_{\mathbb{C}}$ and $K_{\mathbb{C}}^Q$, the function \[F(\theta)\coloneqq \frac{K_{\mathbb{C}}(\theta)}{K_{\mathbb{C}}^Q(\theta)}\]
is a bounded entire function on $\mathbb{C}^d$.
Hence, applying Liouville's theorem for multivariable to $F$, we can deduce that $F$ is a constant function.
Note that a constant multiple does not affect the computation of $k$ in the presentation Eq. \eqref{eq:resolution of K and K^Q}, which concludes that $k=k^Q$.
\end{proof}

Though in our case, Proposition \ref{prop:Liouville and k=k^Q} does not hold in general, the fundamental conditions restrict us to only work on the functions behaving naturally on the parameter set.
Moreover, the existence of the nowhere-vanishing functions does not affect the computation of the geometrical quantities $k$ and $k^Q$.
Hence, in our context, we propose the following conjecture, connecting quantum information theory and algebraic geometry.

\begin{conj}
\label{conj:k=k^Q}
The average log loss functions $K$ and $K^Q$ behave similarly under the fundamental conditions.
In other words,
    \[k=k^Q\]
    holds on any affine open subsets.
\end{conj}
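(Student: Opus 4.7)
The plan is to establish the conjecture under the realizability condition (Definition \ref{defn:realizability}) by combining the data processing inequality with a transverse-curve argument at a generic point of each component of the exceptional divisor. The key observation is that a tomographically complete measurement induces an injective linear map $\mathcal{T}: A \mapsto \{\Tr(\Pi_x A)\}_x$ on Hermitian matrices, which preserves the order of vanishing along any analytic matrix-valued curve, while both $K$ and $K^Q$ admit positive-definite quadratic leading-order expansions in the state perturbation thanks to the full-rankness of $\rho = \sigma(\theta_0)$ (Fundamental condition \ref{ass:fundamental condition} (1)).

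First I would establish $k_i \geq k_i^Q$ componentwise. The data processing inequality (Ineq. \eqref{ineq:K less than K^Q}) yields $K(\theta) \leq K^Q(\theta)$ near $\theta_0$, and the normal crossing representations in Eq. \eqref{eq:resolution of K and K^Q} (reading $K(g(u)) = b(u) u^{2k}$ for a nowhere-vanishing analytic prefactor $b$) give
\[
b(u)\, u_1^{2k_1}\cdots u_d^{2k_d} \;\leq\; r(u)\, u_1^{2k_1^Q}\cdots u_d^{2k_d^Q}
\]
in a local SNC chart of $\widetilde{\Theta}$. Letting $u_i \to 0$ with the remaining coordinates fixed at generic nonzero values forces $2k_i \geq 2k_i^Q$ for every $i$.

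For the reverse inequality, I would work locally at a generic point $p$ of an irreducible component $D_i$ of $g^{-1}(\Theta_0)$, choosing SNC coordinates so that $D_i = \{u_i = 0\}$ is the only component through $p$. Along the transverse curve $\gamma(t) = p + t\, e_i$, define the analytic matrix-valued map $\Phi(t) = \sigma(g(\gamma(t))) - \rho$ (using $\sigma(g(p)) = \sigma(\theta_0) = \rho$ by realizability) and let $m$ be its vanishing order, so $\Phi(t) = t^m A + O(t^{m+1})$ with $A \neq 0$ Hermitian. Expanding the quantum relative entropy around the full-rank state $\rho$ produces a positive-definite quadratic leading term, giving $K^Q(g(\gamma(t))) \asymp t^{2m}$ and hence $k_i^Q = m$. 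Tomographic completeness makes $\mathcal{T}$ injective on Hermitian matrices, so $\mathcal{T}A \neq 0$ and $\mathcal{T}\Phi(t) = t^m \mathcal{T}A + O(t^{m+1})$; the analogous positive-definite quadratic expansion of the KL divergence at $\mathcal{T}\rho$ then yields $K(g(\gamma(t))) \asymp t^{2m}$, so $k_i = m = k_i^Q$.

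The main obstacle will be clarifying how essential the realizability hypothesis is. Without it, the one-sided comparison in Ineq. \eqref{ineq:K less than K^Q} is not immediate, and one must first argue (or assume, via Assumption \ref{ass:R2}) that $\Theta_0 = \Theta_0^Q$ at least locally before the transverse-curve argument can even be set up; see Lemma \ref{lem:coincidence of Theta}. A secondary technical point is to ensure that the argument is genuinely resolution-independent, which reduces to the standard fact that the multiplicity of a divisor along $D_i$ is computed by the vanishing order along any curve meeting $D_i$ transversally at a smooth point of the divisor.
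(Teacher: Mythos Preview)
The paper leaves this statement as an open conjecture; it does not supply a proof. Its only partial result in this direction is Proposition~\ref{prop:Liouville and k=k^Q}, which assumes realizability \emph{together with} a holomorphic extension of $K$ and $K^Q$ to $\mathbb{C}^d$ satisfying $|K_{\mathbb{C}}|\le |K^Q_{\mathbb{C}}|$ globally, and then invokes Liouville's theorem to conclude the much stronger $K=c\,K^Q$.

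Your approach is genuinely different and, under realizability and the Fundamental conditions alone, more direct. The transverse-curve argument is sound: with $\rho=\sigma(\theta_0)$ full rank (Fundamental condition~\ref{ass:fundamental condition}(1)), both $D(\rho\|\sigma)$ and $\mathrm{KL}(q\|p)$ have strictly positive-definite quadratic leading terms in the state perturbation; tomographic completeness makes the linear map $A\mapsto\{\Tr(\Pi_x A)\}_x$ injective on traceless Hermitian matrices, so the vanishing order of $\sigma(g(\gamma(t)))-\rho$ is transferred intact to $p(\cdot|g(\gamma(t)))-q$. This yields $k_i=k_i^Q$ componentwise without any complex-analytic hypothesis, improving on what the paper actually proves. (Your data-processing step for $k_i\ge k_i^Q$ is correct but redundant, since the curve argument already pins both exponents to the same $m$.)

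The limitation you flag is real and is precisely where the paper also stops: the conjecture is posed under the Fundamental conditions (and the paper suggests in Remark~\ref{rem: k and k^Q} that Assumption~\ref{ass:S1} might suffice), whereas your argument needs realizability to identify $\Theta_0=\{\theta:\sigma(\theta)=\rho\}$ and to anchor both second-order expansions at the \emph{same} reference state. Without realizability, $K$ and $K^Q$ are defined relative to $p(\cdot|\theta_0)$ and $\sigma(\theta_0^Q)$, and even granting $\Theta_0=\Theta_0^Q$ (Lemma~\ref{lem:homogeneous}(2)), the curve argument would compare a KL expansion around $p(\cdot|\theta_0)$ with a relative-entropy expansion around $\sigma(\theta_0^Q)$, and there is no a~priori reason the leading orders in the respective perturbations must match. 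So your proposal settles the realizable case by a cleaner route than the paper's Liouville argument, but the general conjecture remains open.
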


\subsection{Empirical process with log-likelihood} 
\label{subsec:Empirical process with log-likelihood}
Empirical processes provide a framework for analyzing the random behavior of empirical functions based on observed data, enabling investigations into the asymptotic behavior of estimators of parameters (or posterior distribution) and training loss.
In singular learning theory, it is also a crucial step to consider the empirical process of a log-likelihood ratio (Definition \ref{defn:matrix I and J} (1)) on the parameter space of $u$ after the resolution of singularities, ensuring that it converges to a Gaussian process on $u$ in distribution even in a neighborhood of $K(g(u))=0$.
For our purpose, we first review the empirical process with classical log-likelihood to investigate the behavior of the posterior distribution in this subsection and extend it to its quantum analog with classical shadow for the quantum training loss in the next subsection.

\begin{defn}
\label{defn:matrix I and J}
   Let us define the basic notion in classical statistics.
   \begin{enumerate}
       \item The \textit{log-likelihood ratio function} is defined by 
            \[f(x,\theta) \coloneqq \log \frac{p(x|\theta_0)}{p(x|\theta)}.\]
        \item Matrices $I(\theta)$ and $J(\theta)$ are defined by 
            \begin{align}
                I(\theta) \coloneqq \mathbb{E}_X\left[ \left(\frac{\partial \log p(X|\theta)}{\partial \theta}\right) \left(\frac{\partial \log p(X|\theta)}{\partial \theta}\right)^T  \right], \quad
                J(\theta) \coloneqq - \mathbb{E}_X \left[ \frac{\partial^2 \log p(X|\theta)}{\partial \theta^2} \right],
            \end{align}
            where $J(\theta)$ is the Hessian matrix of the KL divergence.
            We denote by $I \coloneqq I(\theta_0)$ and $J \coloneqq J(\theta_0)$.
            In the realizable case, where $p(x|\theta_0) = q(x), \forall x$ holds, $I(\theta)$ is equivalent to the Fisher information matrix, and the relation $I = J$ holds.
   \end{enumerate}
\end{defn}
The empirical process of the log-likelihood ratio function and its relevant quantity is defined as
\begin{align}
    \eta_n(\theta) &\coloneqq \frac{1}{\sqrt{n}} \sum_{i=1}^{n} \left\{ \mathbb{E}_X\left[ f(X,\theta) \right] - f(X_i,\theta) \right\}, \label{eq:eta_n}\\
    \Delta_n &\coloneqq \frac{1}{\sqrt{n}} J^{-1} \nabla \eta_n(\theta_0).
    \label{eq:Delta_n}
\end{align}
The vector $\Delta_n$ assumes the inverse of $J$ and is only valid in regular cases. 
Importantly, we can see that the mean and covariance of $\eta_n$ do not depend on $n$:
\begin{align}
    \mathbb{E}_{X^n}[\eta_n(\theta)] &= 0, \\
    \mathbb{E}_{X^n}[\eta_n(\theta) \eta_n(\theta')] 
    &= \mathbb{E}_X[f(X,\theta)f(X,\theta')] - \mathbb{E}_X[f(X,\theta)]\mathbb{E}_X[f(X,\theta')],
\end{align}
for $\theta,\theta' \in \Theta$.
Similarly, we can see that 
\begin{align}
    &\mathbb{E}_{X^n}[\nabla\eta_n(\theta)] = \bm{0}, \\
    &\mathbb{E}_{X^n}[\nabla\eta_n(\theta) \nabla\eta_n(\theta')^T] = \mathbb{E}_X[\nabla \log p(X|\theta)\nabla \log p(X|\theta')^T] - \mathbb{E}_X[\nabla \log p(X|\theta)]\mathbb{E}_X[\nabla \log p(X|\theta')^T].
\end{align}
In particular, when $\theta = \theta' = \theta_0$, 
\begin{align}\label{eq:cov_nabla_eta}
    \mathbb{E}_{X^n}[\nabla\eta_n(\theta_0) \nabla\eta_n(\theta_0)^T] = \mathbb{E}_X[\nabla \log p(X|\theta_0)\nabla \log p(X|\theta_0)^T] = I,
\end{align}
where we use $\mathbb{E}_X[\nabla \log p(X|\theta)] = \bm{0}$.

In singular cases, we consider the renormalized empirical process:
\begin{equation}
\label{eq:renormalized empirical process}
    \xi_n(u) \coloneqq \frac{1}{\sqrt{n}} \sum_{i=1}^{n} \{ u^k - a(X_i,u) \}, 
\end{equation}
where $u$ and $k$ are appeared in Eq. \eqref{eq:resolution of K and K^Q}, and $a(\cdot,\cdot)$ is appeared in Definition \ref{defn:invariants in singular learning theory} (3).
A similar calculation involved in $\eta_n$ implies that
\begin{align}
    \mathbb{E}_{X^n}[\xi_n(u)] &= 0, 
    \label{eq:mean_xi} \\
    \mathbb{E}_{X^n}[\xi_n(u) \xi_n(u')] 
    &= \mathbb{E}_X\left[ a(X,u)a(X,u') \right] - u^k u'^k, 
    \label{eq:cov_xi}
\end{align}
for $u,u' \in \widetilde{\Theta}$.

The empirical processes $\eta_n(\theta)$ and $\xi_n(u)$ introduced above can be shown to converge in law to Gaussian processes, as stated in the following:
\begin{prop}[{\cite[Theorem 6.2]{watanabe2009algebraic}}]
\label{prop:classical convergence law}
    Assume $f(\cdot,\theta)$ defined in Definition \ref{defn:matrix I and J} (1) is an $L^2(q)$-valued analytic function.
    When $\Theta$ is compact, the empirical processes $\eta_1, \eta_2, ...$ and $\xi_1, \xi_2, ...$ converge in law to the Gaussian process $\eta$ and $\xi$, respectively.
\end{prop}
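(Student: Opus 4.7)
The plan is to prove weak convergence of the processes $\eta_n$ and $\xi_n$ by the standard two-step strategy: first establish convergence of finite-dimensional distributions, then prove tightness in the appropriate space of continuous functions, and finally appeal to Prohorov's theorem. Since both $\eta_n$ and $\xi_n$ are centered empirical averages of $n$ i.i.d.\ terms indexed by a compact parameter set, the whole argument fits the framework of an empirical-process central limit theorem (Donsker-type) applied to the function classes $\{f(\cdot,\theta):\theta\in\Theta\}$ and $\{a(\cdot,u):u\in\widetilde\Theta\}$.

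First I would treat finite-dimensional convergence. For any finite collection $\theta_1,\dots,\theta_m\in\Theta$, the vector $(\eta_n(\theta_j))_{j=1}^m$ is $1/\sqrt{n}$ times a sum of i.i.d.\ centered random vectors with $j$-th component $\mathbb{E}_X[f(X,\theta_j)]-f(X_i,\theta_j)$. The $L^2(q)$-valued hypothesis on $f(\cdot,\theta)$ guarantees finite second moments, so the multivariate central limit theorem yields convergence to a centered Gaussian whose covariance matches the formula $\mathbb{E}_X[f(X,\theta)f(X,\theta')]-\mathbb{E}_X[f(X,\theta)]\mathbb{E}_X[f(X,\theta')]$ derived earlier. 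The same argument applies verbatim to $\xi_n$ using $a(X,u)$ in place of $f(X,\theta)$ and $u^k$ in place of $\mathbb{E}_X[f(X,\theta)]$, invoking Eqs.\ \eqref{eq:mean_xi}--\eqref{eq:cov_xi}.

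For tightness, the plan is to exploit the $L^2(q)$-valued analyticity of $\theta\mapsto f(\cdot,\theta)$ together with compactness of $\Theta$. Covering $\Theta$ by finitely many open sets on which $f$ admits a convergent Taylor series in $L^2(q)$ and summing the resulting local Lipschitz bounds yields a uniform estimate
\begin{equation*}
    \mathbb{E}_X\!\left[(f(X,\theta)-f(X,\theta'))^2\right] \le C\,\|\theta-\theta'\|^2,
\end{equation*}
and hence, by independence of the samples,
\begin{equation*}
    \mathbb{E}_{X^n}\!\left[(\eta_n(\theta)-\eta_n(\theta'))^2\right] \le C\,\|\theta-\theta'\|^2
\end{equation*}
uniformly in $n$. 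A Kolmogorov--Chentsov or Dudley chaining bound then produces stochastic equicontinuity and tightness of $\{\eta_n\}$ in $C(\Theta,\mathbb{R})$. For $\xi_n$ the same scheme works on the compact resolved space $\widetilde\Theta$: the function $a(x,u)$ is analytic in $u$ by construction (it is the factor of the standard form produced by the log resolution of Theorem \ref{thm:resolution_Watanabe}), and the resolution $g:\widetilde\Theta\to\Theta$ being proper with $\Theta$ compact ensures $\widetilde\Theta$ is compact as well.

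The main obstacle I anticipate is precisely the passage from pointwise $L^2(q)$-analyticity to a \emph{global} modulus of continuity on the compact parameter space. This step requires a careful finite-cover argument together with control of Taylor remainders in the Hilbert space $L^2(q)$, rather than a termwise almost-sure expansion. Once this uniform second-moment bound is in place, the chaining step is routine, and combining it with finite-dimensional convergence via Prohorov's theorem completes the proof for both $\eta_n$ and $\xi_n$.
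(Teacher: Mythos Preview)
Your proposal is correct and follows essentially the same route as the argument the paper defers to: the paper gives no self-contained proof here but cites \cite[Theorem 6.2]{watanabe2009algebraic}, which in turn rests on the general empirical-process result \cite[Theorem 5.9]{watanabe2009algebraic} --- exactly the finite-dimensional CLT plus tightness-via-$L^2(q)$-analyticity scheme you outline, with compactness of $\widetilde\Theta$ coming from properness of $g$. Your identification of the global modulus-of-continuity step as the only nontrivial point is accurate, and Watanabe handles it precisely by the local Taylor expansion in $L^2(q)$ plus finite cover you describe.
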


Note that we write the expectation of joint probability distribution over $X_1,...,X_n$ as $\mathbb{E}_{X^n}[\cdot]$ for each $n$, but we write the expectation concerning $X_1,...,X_n$ for $n \to \infty$, i.e., the Gaussian process, as $\mathbb{E}_\xi[\cdot]$.

\subsection{Empirical process with classical shadow}
\label{subsec:Empirical process with classical shadow}
First, we introduce the following as an analog of Appendix \ref{subsec:Empirical process with log-likelihood}.
\begin{defn}
\label{defn:matrix I and J_quantum}
   Let us define some quantities related to quantum information theory.
   \begin{enumerate}
        \item The \textit{quantum log-likelihood ratio function} is defined by 
            \begin{equation}
            \label{eq:q_log_likelihood_ratio}
                f^Q(\hat{\rho}_x,\theta) \coloneqq \Tr(\hat{\rho}_x \{ \log \sigma(\theta_0^Q) - \log \sigma(\theta) \}),
            \end{equation}
            with a classical snapshot $\hat{\rho}_x$.
        \item  Matrices $I^Q(\theta)$ and $J^Q(\theta)$ are defined by
            \begin{equation}
                I^Q(\theta) \coloneqq \Tr\left( \rho \left(\frac{\partial \log \sigma(\theta)}{\partial \theta}\right) \left(\frac{\partial \log \sigma(\theta)}{\partial \theta}\right)^T \right),\quad    
                J^Q(\theta) \coloneqq - \Tr\left( \rho \frac{\partial^2 \log \sigma(\theta)}{\partial \theta^2} \right),
            \end{equation}
            where $J^Q(\theta)$ is the Hessian matrix of the quantum relative entropy.
            We denote by $I^Q \coloneqq I(\theta_0^Q)$ and $J^Q \coloneqq J(\theta_0^Q)$.
            In the realizable case, where $\sigma(\theta_0^Q) = \rho$ holds, $J^Q$ is equivalent to the Bogoliubov Fisher information matrix \cite{amari2000methods}.
   \end{enumerate}
\end{defn}
Similar to the previous subsection, we also define the following empirical process and its relevant quantity:
\begin{align}
    \eta_n^Q(\theta) &\coloneqq \frac{1}{\sqrt{n}}\sum_{i=1}^{n} \left\{ \mathbb{E}_X\left[ f^Q(\hat{\rho}_X,\theta) \right] - f^Q(\hat{\rho}_{X_i},\theta) \right\},
    \label{eq:def_eta_n^Q} \\
    \Delta_n^Q &\coloneqq \frac{1}{\sqrt{n}} {J^Q}^{-1} \nabla \eta_n^Q(\theta_0^Q). \label{eq:def_Delta_n^Q}
\end{align}
Note that $\mathbb{E}_X\left[ f^Q(\hat{\rho}_X,\theta) \right] = \Tr(\rho \{ \log \sigma(\theta_0^Q) - \log \sigma(\theta) \})$.
The definition of $\Delta_n^Q$ requires the existence of the inverse of $J^Q$ and is only valid in quantum regular cases.

In singular cases, we also consider the quantum analog of the renormalized empirical process
\begin{align}\label{eq:xi_n^Q}
    \xi_n^Q(u) \coloneqq \frac{1}{\sqrt{n}} \sum_{i=1}^{n} \left\{ r(u) u^{k^Q} - a^Q(\hat{\rho}_{X_i}, u) \right\}, 
\end{align}
where $u$, $k^Q$, and $r(u)$ are appeared in Eq. \eqref{eq:resolution of K and K^Q} and $a^Q(\cdot,\cdot)$ is appeared in Eq. \eqref{eq:intro_a^Q}.

The empirical processes $\eta_n^Q(\theta)$ and $\xi_n^Q(u)$ converge in law to Gaussian processes.
\begin{prop}
\label{prop:quantum convergence law}
    Assume $f^Q(\cdot,\theta)$ defined in Definition \ref{defn:matrix I and J_quantum} (1) is an $L^2(q)$-valued analytic function.
    When $\Theta$ is compact, the empirical processes $\eta_1^Q, \eta_2^Q, ...$ and $\xi_1^Q, \xi_2^Q, ...$ converge in law to the Gaussian process $\eta^Q$ and $\xi^Q$, respectively.
\end{prop}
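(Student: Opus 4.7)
The plan is to mirror the proof of the classical counterpart Proposition \ref{prop:classical convergence law} (i.e.\ Watanabe's \cite[Theorem 6.2]{watanabe2009algebraic}) because, structurally, $\eta_n^Q$ and $\xi_n^Q$ are $n^{-1/2}$--normalized sums of centered i.i.d.\ functionals indexed by a parameter. The only substitution is that $\log p(x|\theta)$ is replaced by $\Tr(\hat\rho_x \log\sigma(\theta))$. Since $\hat\rho_{X_i}$ is a deterministic (post-processing) function of the classical outcome $X_i$, the summands $f^Q(\hat\rho_{X_i},\theta)$ remain i.i.d.\ under $q(x)$, so a Donsker-type functional central limit theorem applies.

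For $\eta_n^Q$, I would first establish finite-dimensional convergence. For any $\theta_1,\dots,\theta_m\in\Theta$, the multivariate CLT applied to the centered i.i.d.\ vectors $(f^Q(\hat\rho_{X_i},\theta_j))_{j=1}^m$ yields joint Gaussian convergence with covariance kernel
\begin{equation}
c^Q(\theta,\theta')=\mathbb{E}_X\!\left[f^Q(\hat\rho_X,\theta)f^Q(\hat\rho_X,\theta')\right]-\mathbb{E}_X[f^Q(\hat\rho_X,\theta)]\,\mathbb{E}_X[f^Q(\hat\rho_X,\theta')],
\end{equation}
which is finite by the assumed $L^2(q)$ property of $f^Q(\cdot,\theta)$. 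Next I would verify tightness of $\{\eta_n^Q\}$ on $C(\Theta)$. Because $\theta\mapsto \log\sigma(\theta)$ is analytic and $\Theta$ is compact, uniform bounds on the Jacobian give an envelope of the form $|f^Q(\hat\rho_x,\theta)-f^Q(\hat\rho_x,\theta')|\le L(\hat\rho_x)\|\theta-\theta'\|$ with $L\in L^2(q)$; this supplies the bracketing-entropy condition and yields tightness by the same argument Watanabe uses for $\eta_n$.

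For $\xi_n^Q$, I would transport the analysis to the resolved parameter space $\widetilde\Theta$ via the simultaneous log resolution $g:\widetilde\Theta\to\Theta$ of Eq.\ \eqref{eq:resolution of K and K^Q}. On each chart of $\widetilde\Theta$, the pulled-back quantum log-likelihood admits a standard form whose factor $a^Q(\hat\rho_x,u)$ is extracted by dividing out the normal-crossing monomial $r(u)u^{2k^Q}$ (see Eq.\ \eqref{eq:intro_a^Q}). Writing $\xi_n^Q(u)=n^{-1/2}\sum_i\{r(u)u^{2k^Q}-a^Q(\hat\rho_{X_i},u)\}$ then expresses $\xi_n^Q$ as a centered i.i.d.\ empirical process on the smooth space $\widetilde\Theta$. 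The very same pair of steps—finite-dimensional CLT and modulus-of-continuity tightness—now carried out in the local coordinates $u$ on each chart and then glued by a partition-of-unity argument, delivers convergence in law to a centered Gaussian process $\xi^Q$ with the analogous covariance kernel.

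The main obstacle is verifying that the renormalized factor $a^Q(\hat\rho_x,u)$ is an $L^2(q)$-valued \emph{analytic} function in $u$ on each chart of $\widetilde\Theta$, together with possessing an $L^2$ envelope for its $u$-derivatives. This is the quantum counterpart of the standard-form lemma \cite[Theorem 6.1]{watanabe2009algebraic}; its proof rests on the fact that $g$ simultaneously resolves both $K$ and $K^Q$, so the zeros of $f^Q(\hat\rho_x,g(u))$ that sit along $g^{-1}(\Theta_0^Q)$ are exactly absorbed into the monomial $r(u)u^{2k^Q}$, leaving an analytic quotient with controlled $L^2$-norm. Once this regularity is available—which is guaranteed under Fundamental conditions \ref{ass:fundamental condition} and \ref{ass:fundamental conditionII} together with the existence of the simultaneous resolution—the Donsker-type conclusion for both processes follows by Watanabe's original argument with only cosmetic changes.
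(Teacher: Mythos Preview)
Your approach is correct and matches the paper's: the proof there simply invokes Watanabe's \cite[Theorem 6.2]{watanabe2009algebraic} (via \cite[Theorem 5.9]{watanabe2009algebraic}) after noting that Fundamental conditions \ref{ass:fundamental condition} and \ref{ass:fundamental conditionII} place $f^Q(\cdot,\theta)$ in exactly the same $L^2(q)$-valued analytic framework as $f(\cdot,\theta)$, so the Donsker-type argument transfers verbatim. One minor slip: the monomial divided out in the standard form is $u^{k^Q}$, not $r(u)u^{2k^Q}$, so the correct expression is $\xi_n^Q(u)=n^{-1/2}\sum_i\{r(u)u^{k^Q}-a^Q(\hat\rho_{X_i},u)\}$ (cf.\ Eq.\ \eqref{eq:xi_n^Q} and Lemma \ref{lem:intro_a^Q}).
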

\begin{proof}
    We follow the proof of \cite[Theorem 6.2]{watanabe2009algebraic} for the case of $f(\cdot,\theta)$.
    Then, from \cite[Theorem 5.9]{watanabe2009algebraic}, we immediately obtain this proposition for $\eta_n^Q$ and $\xi_n^Q$.
    Here, note that we used Fundamental conditions \ref{ass:fundamental condition} and \ref{ass:fundamental conditionII}.
\end{proof}

\section{Review of singular learning theory}
\label{section:Revisiting singular learning theory and WAIC}
To prove our main results for quantum singular models, we utilize singular learning theory established by Watanabe \cite{watanabe2009algebraic,watanabe2018mathematical}.
In this section, we briefly recall the setting and results of singular learning theory; see also the recently published textbook \cite{suzuki2023WBayes}.

Classically, Bayesian statistics \cite{savage1972foundations} provides a posterior distribution that reflects prior knowledge, accounting for the uncertainty of statistical models. It is useful for analyzing complex systems such as deep learning \cite{lecun2015deep} while minimizing generalization loss is a generally difficult problem when the model is too complex, and singularities arise in the parameter space. This leads to the practical issue that the probability distribution representing all of nature is too large, meaning any model chosen by humans is wrong in some sense \cite{watanabe2022allmodels}. It is commonly referred to as ``all models are wrong" \cite{box1976science}.
However, the statement continues; ``but some are useful".
As Boyle's Law offers \cite{boyle1967new} or von Neumann said \cite{von1947mathematician}, we need to find a model that provides a good approximation to a true probability distribution.
We introduce singular learning theory, which provides the mathematical foundation for Bayesian statistics for such complex models.

Only in this section, we assume \cite[Fundamental conditions I, II]{watanabe2009algebraic}, the classical analogs of Fundamental conditions \ref{ass:fundamental condition} and \ref{ass:fundamental conditionII}, but omit the detailed descriptions for brevity.

\subsection{Introductory concepts and regular learning theory}
\label{subsection:Preliminaries on WAIC in terms of $G_n$ and $T_n$}
In this subsection, let us review the basic theorem of Bayesian statistics and the learning theory for regular models.
Specifically, we focus on the asymptotic behavior of the generalization and training losses \eqref{eq:def of Gn and Tn}, which shed light on many aspects of Bayesian statistics, including the effective number of parameters.

First, we define the posterior mean and variance.
\begin{defn}
\label{defn:posterior mean for classical}
    For a function $s:\Theta\to \mathbb{R}$ and given samples $x_1,\cdots,x_n$, let us define the \textit{posterior mean} and \textit{posterior variance}
    as
    \begin{align}
        \E[s(\theta)] &\coloneq \int_{\Theta} s(\theta) p(\theta|x_1,\cdots,x_n) d\theta,\\
        \mathbb{V}_{\theta}[s(\theta)] &\coloneqq \int_\Theta s(\theta)^2 p(\theta|x_1,\cdots,x_n) d\theta - \left( \int_\Theta s(\theta) p(\theta|x_1,\cdots,x_n) d\theta \right)^2.
    \end{align}
\end{defn} 

Recall that the generalization and training losses are defined as
\begin{align}
    G_n \coloneqq - \mathbb{E}_X[ \log \mathbb{E}_\theta[p(X|\theta)]], \quad T_n \coloneqq - \frac{1}{n} \sum_{i=1}^{n}\log \mathbb{E}_\theta[p(x_i|\theta)].
\end{align}
An important observation is that cumulant generating functions are valuable tools for studying the asymptotic behaviors of $G_n$ and $T_n$.
\begin{defn}[{\cite[Definition 8]{watanabe2018mathematical}}]
\label{defn:classical_cumulant}
    For a real number $\alpha\in\mathbb{R}$, the cumulant generating functions of the generalization and training losses are respectively defined by
    \[\mathcal{G}_n(\alpha) \coloneqq \mathbb{E}_X[\log \mathbb{E}_\theta[p(X|\theta)^{\alpha}]],\quad \mathcal{T}_n(\alpha) \coloneqq \frac{1}{n}\sum_{i=1}^n\log \mathbb{E}_\theta[p(X_i|\theta)^{\alpha}]. \]
\end{defn}

Using these cumulant generating functions, we can derive the following fundamental result, which serves as a starting point for asymptotic learning theory.
Assuming certain assumptions on the cumulants, the following proposition holds:
\begin{prop}[Basic theorem of Bayesian statistics; informal statement of {\cite[Theorem 3]{watanabe2018mathematical}}]
\label{prop:generalization loss $G_n$ and the training loss $T_n$ for general cases}
    The generalization loss $G_n$ and the training loss $T_n$ can be expanded as 
    \begin{align}
        G_n &= - \mathcal{G}_n(1) = - \mathcal{G}_n'(0) - \frac{1}{2}\mathcal{G}_n''(0) + o_p\left(\frac{1}{n}\right), \\
        T_n &= - \mathcal{T}_n(1) = - \mathcal{T}_n'(0) - \frac{1}{2}\mathcal{T}_n''(0) + o_p\left(\frac{1}{n}\right).
    \end{align}
    Furthermore, the first and second cumulants of the generalization and training losses at $\alpha=0$ are provided below:
    \begin{align}
        \mathcal{G}_n'(0) = \mathbb{E}_X[\mathbb{E}_\theta[\log p(X|\theta)]], &\quad
        \mathcal{T}_n'(0) = \frac{1}{n} \sum_{i=1}^{n} \mathbb{E}_\theta[\log p(X_i|\theta)], \\
        \mathcal{G}_n''(0) = \mathbb{E}_X[\mathbb{V}_{\theta}[\log p(X|\theta)]], &\quad
        \mathcal{T}_n''(0) = \frac{1}{n} \sum_{i=1}^{n} \mathbb{V}_{\theta}[\log p(X_i|\theta)].
    \end{align}
\end{prop}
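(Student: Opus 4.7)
The plan is to recognize $\mathcal{G}_n(\alpha)$ and $\mathcal{T}_n(\alpha)$ as cumulant generating functions of $\log p(X|\theta)$ under the posterior $p(\theta|X^n)$ (averaged against $q$ and against the empirical measure, respectively) and to Taylor-expand them around $\alpha=0$ to second order. First I would observe that $\mathcal{G}_n(0)=\mathbb{E}_X[\log 1]=0$ and $\mathcal{T}_n(0)=0$, so the zeroth-order term vanishes. Since $G_n=-\mathcal{G}_n(1)$ and $T_n=-\mathcal{T}_n(1)$ by the definitions in \eqref{eq:def of Gn and Tn} and Definition \ref{defn:classical_cumulant}, evaluating the Taylor expansion at $\alpha=1$ will yield the claimed form up to a remainder term that must be controlled.

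Next, I would compute the first two derivatives by differentiating $\log\mathbb{E}_\theta[p(X|\theta)^\alpha]$ under the integral over $\theta$. A direct calculation gives that the first derivative at $\alpha=0$ equals the posterior mean $\mathbb{E}_\theta[\log p(X|\theta)]$ and the second derivative at $\alpha=0$ equals the posterior variance $\mathbb{V}_\theta[\log p(X|\theta)]$, that is, the first and second cumulants of $\log p(X|\theta)$ under the posterior. Taking either $\mathbb{E}_X[\cdot]$ or the empirical average $\tfrac{1}{n}\sum_{i=1}^n$ then produces the four identities asserted in the proposition; this is the routine computational part.

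The substantive step is to show that the Taylor remainder is $o_p(1/n)$. In Lagrange form it reads $\tfrac{1}{2}\int_0^1(1-\alpha)^2\,\partial_\alpha^3\log\mathbb{E}_\theta[p(X|\theta)^\alpha]\,d\alpha$, averaged over $X$ or the sample. At each $\alpha$ this third derivative is the third posterior central moment of $\log p(X|\theta)$ with respect to the tilted posterior proportional to $p(\theta|X^n)\,p(X|\theta)^\alpha$, and the goal is to bound it by $o_p(1/n)$ uniformly in $\alpha\in[0,1]$. This is precisely the content of the scaling lemmas cited in the sketch of the basic theorem, namely Lemmas \ref{lem:higher_order_scaling_regular} (regular case) and \ref{lem:higher_order_scaling_singular} (singular case).

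The main obstacle is this uniform-in-$\alpha$ control of the higher posterior cumulants in the singular regime, where the log-likelihood has degenerate curvature and a classical Laplace-type argument on $\Theta$ fails. The plan is to transport the problem to the desingularized parameter space $\widetilde{\Theta}$ via the simultaneous log resolution in \eqref{eq:resolution of K and K^Q}, rewrite the tilted posterior in normal-crossing coordinates, and then invoke the convergence in law of the renormalized empirical process $\xi_n$ from \eqref{eq:renormalized empirical process} together with the relatively-finite-variance hypothesis of Definition \ref{def:classical relatively finite variance} to bound the third cumulant by a quantity of order $n^{-3/2}$ that is tight in $\alpha$. The $L^2$-integrability assumed in Fundamental condition \ref{ass:fundamental condition} and the semi-analytic compactness of $\Theta$ from Fundamental condition \ref{ass:fundamental conditionII} are what allow the bound to be uniform across $\alpha\in[0,1]$, closing the argument.
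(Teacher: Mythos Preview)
Your plan is essentially the same as the paper's (and Watanabe's): recognise $\mathcal{G}_n,\mathcal{T}_n$ as averaged cumulant generating functions of $\log p(X|\theta)$ under the posterior, Taylor-expand at $\alpha=0$, read off the first two cumulants, and control the remainder via the higher-order scaling lemmas. The derivative computations you outline are exactly the routine ones.

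The one genuine difference is in how the remainder is packaged. You use the Lagrange integral form and therefore need to bound the third cumulant of $\log p(X|\theta)$ with respect to the \emph{tilted} posterior $p(\theta|X^n)p(X|\theta)^\alpha$, uniformly in $\alpha\in[0,1]$. The paper (see the quantum analogue in Theorem~\ref{thm:q_basic} and the statement of the hypotheses there) instead writes the full Taylor series at $\alpha=0$ and assumes $|\partial_\alpha^\ell(\cdot)|_{\alpha=0}|\le O_p(n^{-\ell/2})$ for all $\ell\ge 3$, so that $\sum_{\ell\ge 3}\tfrac{1}{\ell!}O_p(n^{-\ell/2})=o_p(1/n)$. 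This avoids the uniform-in-$\alpha$ tilted-posterior analysis entirely, at the price of needing estimates for every order rather than just the third; those estimates are precisely what Lemmas~\ref{lem:higher_order_scaling_regular} and~\ref{lem:higher_order_scaling_singular} supply (and what the phrase ``assuming certain assumptions on the cumulants'' preceding the proposition refers to). Your route is valid too, but be aware that the uniform-in-$\alpha$ control you flag as the main obstacle is a self-imposed one: switching to the full series at $\alpha=0$ sidesteps it and aligns you with the argument the paper actually carries out.
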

Building upon this basic theorem, Watanabe \cite{watanabe2018mathematical} investigated the asymptotic behaviors in both regular and singular cases, leading to a natural generalization of the previous work on model selection criteria like AIC. 
Let us first introduce the descriptions for regular cases where the Hessian of the KL divergence does not degenerate.
In such situations, famous useful properties in classical statics hold, such as the asymptotic normality, which implies that the posterior distribution can be approximated by a normal distribution.

With the quantities introduced in Appendix \ref{subsec:Empirical process with log-likelihood}, the mean and variance of parameters with respect to the posterior distribution are described as follows:
\begin{lem}[{\cite[Lemma 14]{watanabe2018mathematical}}]
\label{lem:gb_lem14}
    When the regular condition (Definition \ref{def:regular for classical}) is satisfied, the following equations hold.
    \begin{align}
        \mathbb{E}_\theta[\theta - \theta_0] &= \Delta_n + o_p\left( \frac{1}{\sqrt{n}} \right), \\
        \mathbb{E}_\theta[(\theta - \theta_0)(\theta - \theta_0)^T] &= \frac{1}{n} J^{-1} + \Delta_n \Delta_n^T + o_p\left( \frac{1}{n} \right), \\
        \mathbb{E}_{X^n}\left[ \mathbb{E}_\theta[(\theta - \theta_0)(\theta - \theta_0)^T] \right] &= \frac{1}{n} \left( J^{-1} + J^{-1} I J^{-1} \right) + o\left( \frac{1}{n} \right).
    \end{align}
\end{lem}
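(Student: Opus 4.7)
The plan is a Laplace expansion of the posterior around the unique optimal parameter $\theta_0$. Write $L_n(\theta)=\sum_{i=1}^n\log p(X_i\mid\theta)$ and Taylor expand at $\theta_0$. Because $\theta_0\in\Theta_0$ satisfies the first-order condition $\mathbb{E}_X[\nabla\log p(X\mid\theta_0)]=0$, and because $f(x,\theta)=\log p(x\mid\theta_0)-\log p(x\mid\theta)$ gives $\nabla\eta_n(\theta_0)=\frac{1}{\sqrt{n}}\sum_{i=1}^n\nabla\log p(X_i\mid\theta_0)$, we have $\nabla L_n(\theta_0)=\sqrt{n}\,\nabla\eta_n(\theta_0)$. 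The law of large numbers together with the regularity condition gives $\frac{1}{n}\nabla^2 L_n(\theta_0)=-J+o_p(1)$ with $J\succ 0$. Substituting $u=\sqrt{n}(\theta-\theta_0)$ and completing the square shows the un-normalized log posterior in $u$ is
\begin{equation*}
\log\pi(\theta_0+u/\sqrt{n})-\tfrac{1}{2}(u-\sqrt{n}\Delta_n)^T J(u-\sqrt{n}\Delta_n)+\tfrac{1}{2}(\sqrt{n}\Delta_n)^T J(\sqrt{n}\Delta_n)+R_n(u),
\end{equation*}
where $\sqrt{n}\Delta_n=J^{-1}\nabla\eta_n(\theta_0)$ by definition \eqref{eq:Delta_n} and $R_n(u)$ collects the higher-order Taylor remainder.

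Assuming $\pi$ is continuous with $\pi(\theta_0)>0$ (part of the fundamental conditions), the prior contributes $1+o_p(1)$ on any bounded region of $u$, so the posterior density of $u$ is asymptotically $\mathcal{N}(\sqrt{n}\Delta_n,\, J^{-1})$ there. To promote this from weak convergence to moment convergence I would combine the local Gaussian approximation with a standard tail argument: for $\|\theta-\theta_0\|>\delta$ the identity $\frac{1}{n}(L_n(\theta_0)-L_n(\theta))=K(\theta)-\frac{1}{\sqrt{n}}\eta_n(\theta)$, the strict positivity $\inf_{\|\theta-\theta_0\|>\delta}K(\theta)>0$ (regularity), and compactness of $\Theta$ together imply that the posterior mass outside any fixed neighborhood decays faster than any polynomial in $n^{-1}$. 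Hence
\begin{align*}
\mathbb{E}_\theta[u]&=\sqrt{n}\,\Delta_n+o_p(1),\\
\mathbb{E}_\theta[uu^T]&=J^{-1}+n\,\Delta_n\Delta_n^T+o_p(1),
\end{align*}
and dividing by $\sqrt{n}$ and $n$ respectively yields the first two identities.

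The third identity follows by taking $\mathbb{E}_{X^n}$ on both sides of the second: linearity and the compactness of $\Theta$ (giving uniform integrability of $\mathbb{E}_\theta[(\theta-\theta_0)(\theta-\theta_0)^T]$) let us exchange expectation with the $o_p(1/n)$ remainder, and Eq.~\eqref{eq:cov_nabla_eta} gives
\begin{equation*}
\mathbb{E}_{X^n}[\Delta_n\Delta_n^T]=\tfrac{1}{n}J^{-1}\,\mathbb{E}_{X^n}[\nabla\eta_n(\theta_0)\nabla\eta_n(\theta_0)^T]\,J^{-1}=\tfrac{1}{n}J^{-1}IJ^{-1}.
\end{equation*}
The main obstacle is the second step: controlling $R_n(u)$ uniformly on a growing window $\|u\|\le C\sqrt{\log n}$ and matching tail control outside, so that the $o_p$ terms in the moment expansion are genuinely of the claimed order rather than merely vanishing in distribution. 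This is the standard technical core of the Bernstein--von Mises theorem and uses only the regularity condition (positive definite $J$, unique interior minimizer $\theta_0$) together with the analyticity of $\log p(x\mid\theta)$ built into the fundamental conditions, so the argument is routine but requires care.
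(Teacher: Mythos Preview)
The paper does not supply its own proof of this lemma; it is quoted from Watanabe's book \cite{watanabe2018mathematical} without argument, so there is no paper proof to compare against. Your Laplace/Bernstein--von Mises expansion is exactly the standard route to this result (and is what Watanabe does in Chapter~4 of \cite{watanabe2018mathematical}): quadratic Taylor expansion of the log-likelihood around the interior minimizer $\theta_0$, completing the square to identify the Gaussian mean $\sqrt{n}\Delta_n$ and precision $J$, exponential tail control outside a neighborhood via $K(\theta)>0$ on compacta, and then reading off first and second posterior moments. Your derivation of the third line from Eq.~\eqref{eq:cov_nabla_eta} is also correct. The only place to be careful, which you already flag, is upgrading the distributional approximation to $o_p(1)$ statements for the moments themselves; this uses compactness of $\Theta$ and the analyticity assumptions in the fundamental conditions to get uniform bounds on the third-order remainder and on posterior tail mass, but nothing beyond what you sketch.
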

\begin{rem}
    If a pair $(q(x),p(x|\theta))$ is classically regular, then the asymptotic normality of the posterior distribution holds.
    This is a typical property resulting from the regular condition \cite[Chapter 4]{watanabe2018mathematical}. 
\end{rem}

Having discerned the asymptotic behavior of the parameters, we perform a second-order Taylor expansion of the equations in Proposition \ref{prop:generalization loss $G_n$ and the training loss $T_n$ for general cases} around the optimal parameter $\theta_0$ to obtain more detailed descriptions.
\begin{prop}[{\cite[Theorems 6 and 7]{watanabe2018mathematical}}]
\label{prop:generalization loss of G_n and T_n}
     Let us suppose that the regular condition (Definition \ref{def:regular for classical}) is satisfied.
     \begin{enumerate}
         \item The generalization loss $G_n$ and training loss $T_n$ can be expanded as follows:
        \begin{align}
            G_n &= \mathbb{E}_X[- \log p(X|\theta_0)] + \frac{d}{2n} + \frac{1}{2} \Delta_n^T J \Delta_n - \frac{1}{2n} \Tr(IJ^{-1}) + o_p\left(\frac{1}{n}\right), \\
            T_n &= - \frac{1}{n} \sum_{i=1}^{n} \log p(x_i|\theta_0) + \frac{d}{2n} - \frac{1}{2} \Delta_n^T J \Delta_n - \frac{1}{2n} \Tr(IJ^{-1}) + o_p\left(\frac{1}{n}\right).
        \end{align}
    \item The expected values of $G_n$ and $T_n$ with respect to given samples are
    \begin{eqnarray*}
        \mathbb{E}_{X^n}[G_n] &=& \mathbb{E}_X[- \log p(X|\theta_0)] + \frac{\lambda}{n} + o\left(\frac{1}{n}\right), \\
        \mathbb{E}_{X^n}[T_n] &=& \mathbb{E}_X[- \log p(X|\theta_0)] + \frac{\lambda - 2 \nu}{n} + o\left(\frac{1}{n}\right),
        \label{eq:exp_T}
    \end{eqnarray*}
    where
    \begin{equation}
        \lambda = \frac{d}{2}, \quad
        \nu = \frac{1}{2} \Tr(IJ^{-1}).
        \label{eq:lambda_and_nu}
    \end{equation}
    \end{enumerate}
\end{prop}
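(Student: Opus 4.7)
The plan is to derive both parts of the proposition by combining the basic theorem of Bayesian statistics (Proposition \ref{prop:generalization loss $G_n$ and the training loss $T_n$ for general cases}) with the posterior moment estimates in Lemma \ref{lem:gb_lem14}. Since the regularity condition guarantees the existence of $J^{-1}$ and the concentration of the posterior near $\theta_0$, the strategy is to replace $\log p(\cdot|\theta)$ by its quadratic Taylor expansion about $\theta_0$, then compute posterior means and variances term by term.

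More concretely, for the first assertion, I would start from
\[
G_n = -\mathbb{E}_X[\mathbb{E}_\theta[\log p(X|\theta)]] - \tfrac{1}{2}\mathbb{E}_X[\mathbb{V}_\theta[\log p(X|\theta)]] + o_p(1/n).
\]
Expanding $\log p(x|\theta)$ to second order in $\theta - \theta_0$ and applying Lemma \ref{lem:gb_lem14} produces
\[
\mathbb{E}_\theta[\log p(X|\theta)] = \log p(X|\theta_0) + \nabla\log p(X|\theta_0)^{T}\Delta_n + \tfrac{1}{2}\Tr\!\bigl(\nabla^2\log p(X|\theta_0)[\tfrac{1}{n}J^{-1}+\Delta_n\Delta_n^{T}]\bigr) + o_p(1/n).
\]
Taking $\mathbb{E}_X$ kills the linear term because $\theta_0$ is a critical point of $\mathbb{E}_X[\log p(X|\theta)]$, and uses $\mathbb{E}_X[\nabla^2\log p(X|\theta_0)] = -J$ to yield the $d/(2n)$ and $\tfrac12\Delta_n^{T}J\Delta_n$ contributions. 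For the variance piece, a first-order expansion $\log p(X|\theta)-\mathbb{E}_\theta[\log p(X|\theta)]\approx\nabla\log p(X|\theta_0)^{T}(\theta-\mathbb{E}_\theta[\theta])$ combined with $\mathbb{V}_\theta[\theta]=\tfrac{1}{n}J^{-1}+o_p(1/n)$ and $\mathbb{E}_X[\nabla\log p\,\nabla\log p^{T}]=I$ produces the $-\tfrac{1}{2n}\Tr(IJ^{-1})$ term. The expansion of $T_n$ follows the same pattern, except that $(1/n)\sum_i\nabla\log p(x_i|\theta_0) = J\Delta_n$ (by the definition of $\Delta_n$ in Eq.~\eqref{eq:Delta_n} together with $\nabla\eta_n(\theta_0) = (1/\sqrt n)\sum_i\nabla\log p(X_i|\theta_0)$), so the sign of the $\Delta_n^{T}J\Delta_n$ term is flipped relative to $G_n$, while the $d/(2n)$ and trace terms appear with the same sign because a law-of-large-numbers step turns $(1/n)\sum_i\nabla^2\log p(x_i|\theta_0)$ into $-J$ and the empirical variance into its expectation.

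For the second assertion, I would take $\mathbb{E}_{X^n}$ of the expansions from part (1). The only nontrivial expectation is $\mathbb{E}_{X^n}[\Delta_n^{T}J\Delta_n]$; since $\Delta_n = (1/\sqrt n)J^{-1}\nabla\eta_n(\theta_0)$ and Eq.~\eqref{eq:cov_nabla_eta} gives $\mathbb{E}_{X^n}[\nabla\eta_n(\theta_0)\nabla\eta_n(\theta_0)^{T}] = I$, this evaluates to $\tfrac{1}{n}\Tr(IJ^{-1})$. Substituting back shows the $\tfrac12\Delta_n^{T}J\Delta_n$ term in $G_n$ exactly cancels the trace correction, leaving $\lambda/n = d/(2n)$, while in $T_n$ the two contributions add to give $(\lambda - 2\nu)/n$ with $\nu = \tfrac12\Tr(IJ^{-1})$.

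The only delicate step is controlling the Taylor remainders uniformly to justify that the higher-order terms are $o_p(1/n)$ after posterior integration and $o(1/n)$ after taking $\mathbb{E}_{X^n}$. Under the regularity condition this is standard because the posterior concentrates on an $O(1/\sqrt{n})$-neighborhood of $\theta_0$ so that $\mathbb{E}_\theta[\|\theta-\theta_0\|^{k}]=O_p(n^{-k/2})$, and the analyticity assumption in the Fundamental conditions provides the bounds on third-order derivatives required to push the remainders through the expectation; Proposition \ref{prop:classical convergence law} supplies the empirical-process tightness needed to handle the random coefficients appearing in $T_n$.
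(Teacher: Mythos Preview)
Your proposal is correct and follows precisely the standard route: start from the cumulant expansion in Proposition~\ref{prop:generalization loss $G_n$ and the training loss $T_n$ for general cases}, Taylor-expand $\log p(\cdot|\theta)$ about $\theta_0$, and feed in the posterior moment formulas of Lemma~\ref{lem:gb_lem14}. The paper does not actually prove this proposition---it is quoted from Watanabe's textbook---but your argument is essentially the classical specialization of the paper's own proof of the quantum analogue (Theorem~\ref{thm:q_regular expansion formula for G and T}), where the same three-step pattern (Taylor expansion, Lemma~\ref{lem:gb_lem14}, empirical-process identity for the linear term in $T_n$) is carried out with $\log\sigma(\theta)$ in place of $\log p(\cdot|\theta)$.

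One minor stylistic difference: in its quantum proof the paper handles the first cumulant of $T_n^Q$ by writing it through the empirical process $\eta_n^Q$, applying the mean-value theorem, and completing the square (Eq.~\eqref{eq:reg_expn_T_n^Q_first_mid1}), rather than by directly evaluating the posterior mean of the Taylor expansion as you do. Your direct computation is equally valid and arguably cleaner in the classical case, since the cross term $-\Delta_n^{T}J\Delta_n$ arising from the linear part combines transparently with the $+\tfrac12\Delta_n^{T}J\Delta_n$ from the quadratic part to flip the sign.
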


\begin{rem}
\label{rem:regular_RLCT}
    If the regularity (Definition \ref{def:regular for classical}) and realizability (Definition \ref{defn:realizability}) conditions are satisfied, then $\lambda = \nu = d/2$.
    However, as we see below, in singular cases, these two quantities do not coincide and are not equal to $d/2$ in general.
    This reflects the complexity of the singularities contained in $\Theta_0$, a key insight in singular learning theory.    
\end{rem}
These formulas imply that the generalization and training losses can be expressed by the dimension of parameters $d$, the Fisher information matrix $I$, and the Hessian of the KL divergence $J$ in classically regular cases.

In learning theory and the analysis of machine learning models, constructing criteria for model selection is of significant importance \cite{anderson2004model}.
The expressions in Proposition \ref{prop:generalization loss of G_n and T_n} lead to deriving an information criterion WAIC.

\begin{defn}[{\cite[Definitions 3]{watanabe2018mathematical}}]
\label{defn:aic and waic}
The \textit{Widely Applicable Information Criterion} (WAIC) is defined as
    \begin{equation}
        \mathrm{WAIC} \coloneqq T_n + \frac{1}{n} \sum_{i=1}^{n} \mathbb{V}_{\theta}[\log p(x_i|\theta)].
        \label{def:waic}
    \end{equation}
\end{defn}

One can prove that WAIC is an asymptotically unbiased estimator for the generalization loss $G_n$ in regular cases.
\begin{cor}
\label{cor:waic}
    Assuming that the statistical model satisfies the regularity condition (Definition \ref{def:regular for classical}), WAIC is an asymptotically unbiased estimator for $G_n$.
    More precisely, 
    \begin{equation} 
      \mathbb{E}_{X^n}[G_n] = \mathbb{E}_{X^n}[\mathrm{WAIC}] + o\left(\frac{1}{n}\right).
    \end{equation}
\end{cor}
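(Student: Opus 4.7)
The plan is straightforward because Proposition~\ref{prop:generalization loss of G_n and T_n} already performs the heavy lifting of comparing $G_n$ with $T_n$. By part~(2) of that proposition,
\begin{equation*}
    \mathbb{E}_{X^n}[G_n - T_n] = \frac{2\nu}{n} + o\!\left(\frac{1}{n}\right) = \frac{1}{n}\Tr(IJ^{-1}) + o\!\left(\frac{1}{n}\right).
\end{equation*}
Since $\mathrm{WAIC} = T_n + \frac{1}{n}\sum_{i=1}^n \mathbb{V}_{\theta}[\log p(x_i|\theta)]$ by definition, proving the corollary reduces to establishing
\begin{equation*}
    \mathbb{E}_{X^n}\!\left[\frac{1}{n}\sum_{i=1}^n \mathbb{V}_{\theta}[\log p(X_i|\theta)]\right] = \frac{1}{n}\Tr(IJ^{-1}) + o\!\left(\frac{1}{n}\right).
\end{equation*}
This is the one new piece of work, and its verification is the target of the proof.

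For the posterior variance I would Taylor-expand $\log p(X_i|\theta)$ around the unique optimal parameter $\theta_0$ (whose existence is guaranteed by Definition~\ref{def:regular for classical}) to first order in $\theta - \theta_0$, obtaining
\begin{equation*}
    \mathbb{V}_{\theta}[\log p(X_i|\theta)] = \nabla \log p(X_i|\theta_0)^{T}\, \mathbb{V}_{\theta}[\theta]\, \nabla \log p(X_i|\theta_0) + (\text{remainder}).
\end{equation*}
The identity $\mathbb{V}_{\theta}[\theta] = \mathbb{E}_{\theta}[(\theta-\theta_0)(\theta-\theta_0)^{T}] - \mathbb{E}_{\theta}[\theta-\theta_0]\,\mathbb{E}_{\theta}[\theta-\theta_0]^{T}$, combined with Lemma~\ref{lem:gb_lem14}, yields $\mathbb{V}_{\theta}[\theta] = \tfrac{1}{n}J^{-1} + o_p(1/n)$ because the $\Delta_n\Delta_n^{T}$ contributions cancel between the first two formulas of that lemma. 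Averaging the leading term over $i$ and invoking the law of large numbers together with Eq.~\eqref{eq:cov_nabla_eta} (equivalently, $\mathbb{E}_X[\nabla \log p(X|\theta_0)\nabla \log p(X|\theta_0)^{T}] = I$ since $\mathbb{E}_X[\nabla \log p(X|\theta_0)] = 0$ follows from $\theta_0$ being an interior minimizer of the KL divergence) gives the empirical average $\frac{1}{n^2}\sum_i \nabla \log p(X_i|\theta_0)^{T} J^{-1} \nabla \log p(X_i|\theta_0)$ with expectation $\frac{1}{n}\Tr(IJ^{-1})$, which matches $2\nu/n$ exactly.

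The main obstacle will be controlling the Taylor remainder after taking $\mathbb{E}_{X^n}[\cdot]$. Its contribution to $\mathbb{V}_{\theta}[\log p(X_i|\theta)]$ scales like $\mathbb{E}_{\theta}[\|\theta-\theta_0\|^{3}]\cdot\|\nabla^{2}\log p(X_i|\theta_0)\|$, which the posterior asymptotic normality underlying Lemma~\ref{lem:gb_lem14} places at $O_p(n^{-3/2})$. Upgrading these pointwise $o_p(1/n)$ estimates to $o(1/n)$ bounds after expectation demands uniform integrability; this is furnished by the $L^{2}(q)$-analyticity hypothesis inherited from \cite[Fundamental conditions I, II]{watanabe2009algebraic}, by the compactness of $\Theta$, and by the empirical-process tightness of Proposition~\ref{prop:classical convergence law}, which together deliver the moment bounds needed to interchange limits and expectations. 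The same three ingredients form the template that the fully quantum analogue in Theorem~\ref{mainthm:QWAIC} later generalizes, with the classical shadow playing the role of $p(x|\theta)$ in the quantum log-likelihood ratio $f^Q$.
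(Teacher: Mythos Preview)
Your proposal is correct and follows essentially the same route the paper takes: the paper does not spell out a proof of this corollary but relies on Proposition~\ref{prop:generalization loss of G_n and T_n} together with the first-order Taylor expansion of the log-likelihood and Lemma~\ref{lem:gb_lem14} to identify the posterior-variance term with $\Tr(IJ^{-1})/n$, exactly as you do (this is the classical template that the paper later reproduces verbatim in the quantum setting, e.g.\ Eqs.~\eqref{eq:post_log_sigma}--\eqref{eq:reg_expn_G_n^Q_second} and the proof of Theorem~\ref{thm:q_QWAIC for regular cases}).
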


\begin{rem}
    Under the regularity and realizability condition, the information criterion AIC \cite{akaike1973Information,akaike1974New} can also be derived from the KL divergence 
    \begin{equation}
        \mathrm{AIC} \coloneqq -\frac{1}{n}\sum_{i=1}^n \log p(x_i|\hat{\theta}) + \frac{d}{n}.
        \label{def:aic}
    \end{equation}
    It is an asymptotically unbiased estimator for the generalization loss:
    \begin{equation}
        \mathbb{E}_{X^n}[- \mathbb{E}_X[ \log p(X|\hat{\theta})]] = \mathbb{E}_{X^n}[\mathrm{AIC}] + o\left(\frac{1}{n}\right),
    \end{equation}
    Note that AIC is defined for the maximum likelihood estimator $\hat{\theta}$, and so is the generalization loss.
\end{rem}

For applications of these criteria, see Section \ref{subsection:singular learning theory}.
In Corollary \ref{cor:waic}, we assumed that the pair $(q(x),p(x|\theta))$ is classically regular.
However, one can generalize the results without assuming the regularity condition.
In the next subsection, we will see how this has been done using mathematical tools such as algebraic geometry (Appendix \ref{app:algebraic geometry}).

\subsection{Singular learning theory}
For singular models, the optimal parameter set $\Theta_0$ contains multiple (singular) points in general.
In such cases, the Hessian of the KL divergence degenerates, which hinders the formulation of a learning theory in the same way as in regular cases.
One solution is to evaluate the complexity of singularities using algebraic geometrical methods and develop a new theory that accounts for this complexity.
To analyze the complexity of such a singularity, recall that singular learning theory introduces the following average log loss function (Eq. \eqref{eq:average log loss function})
    \[K(\theta)\coloneqq \mathrm{KL}(p(x|\theta_0) || p(x|\theta)).\]

The key insight of singular learning theory is that the space in which the function $K(\theta)$ should be properly defined is not the original parameter space $\Theta$, but rather a space with milder singularities.
Specifically, the resolution theorems (Theorems \ref{thm:resolution_original} and \ref{thm:resolution_Watanabe}) ensure that there exists an analytic map $g$ from a manifold to the parameter space, i.e., $\theta = g(u)$, such that the average log loss function $K(g(u))$ has a normal crossing form at a neighborhood of $K(g(u))=0$.
By virtue of this theorem, one can derive that the volume of the almost optimal set $\{\theta | K(\theta)<t \}$ as $t \to +0$ is in proportion to $t^\lambda (- \log t)^{m-1}$ with the RLCT $\lambda$ and a multiplicity $m$, both of which are well-known invariants in algebraic geometry.
Moreover, the empirical process $\eta_n(\theta)$, which characterizes the posterior distribution, can be transformed into a well-defined function $\xi_n(u)$ of $u$ at a neighborhood of $K(g(u))=0$ that converges in law to a Gaussian process as discussed in Appendix \ref{subsec:Empirical process with log-likelihood}.

Singular learning theory builds upon these observations to study the asymptotic behavior of the posterior distribution.
Note that the marginal likelihood, or the partition function, is proportional to \[\int \exp(-\sum_{i=1}^{n} f(x_i,\theta)) \pi(\theta) d\theta.\]
We can redefine this marginal likelihood by the integral with respect to $u$ using the renormalized empirical process $\xi_n(u)$.
One of the key insights is that, as is well known in statistical physics, the partition function is the Laplace transform of a state density, which corresponds to the volume of the optimal parameter set in the learning theory.
Therefore, the marginal likelihood can also be described by the volume of the optimal set, $t^\lambda (- \log t)^{m-1}$.
Consequently, one can successfully represent the posterior distribution with the RLCT $\lambda$ and a multiplicity $m$, which express the complexity of singularities.
Especially, as will be shown in subsequent discussions, the RLCT helps in understanding the singular dimension of a statistical model and the accuracy of Bayesian statistics.

\begin{rem}
The pair of the invariants $(\lambda, m)$ has at least three aspects: the description of the marginal likelihood, as introduced above, the invariants in algebraic geometry, and finally, the connection with the zeta function.
Although this paper does not address it in depth, it is an interesting multidisciplinary issue, and hence, it is addressed here.

In the Amsterdam Congress of 1954, Gel'fand proposed the problem of the meromorphic continuation of the zeta function associated with the complex power of holomorphic functions.
Nowadays, this zeta function is called the local zeta function and is treated over local fields \cite{atiyah1970resolution}, including non-archimedian fields.
We will only introduce it here mainly over $\mathbb{R}$ (or $\mathbb{C}$).
The above problem was solved by Atiyah \cite{atiyah1970resolution} and Bernstein-Gel'fand  \cite{bernstein1969meromorphic} by using Hironaka's resolution theorem (Theorem \ref{thm:resolution_original}).
This solution method connects the pair $(\lambda, m)$ and the zeta function of the learning model, associated with $K(\theta)$ as follows.
In singular learning theory, the zeta function is defined as
\begin{align}
\label{eq:zeta function for learning theory}
\zeta(z) \coloneqq \int K(\theta)^z\pi(\theta) d\theta,    
\end{align}
which is a special version of the local zeta function.
Substituting a normal crossing presentation (Eq. \eqref{eq:resolution of K and K^Q}) into Eq. \eqref{eq:zeta function for learning theory}, the Laurent series expansion of the zeta function has the form
\[\zeta(z) = \sum_{k=1}^{\infty} \frac{S_k(\pi)}{(z + \lambda_k)^{m_k}}\]
where $\{S_k\}$ is a set of Schwartz distributions.
Here, the sequence 
\[\cdots < -\lambda_2 < -\lambda_1 < 0\]
defines poles of the zeta functions with multiplicities $m_k$.
Now, the learning coefficient, or the RLCT, $\lambda$ coincides with $\lambda_1$, and the multiplicity $m$ is $m_1$ in the above definitions.
Note that there is another solution to the above problem.
It is the method using algebraic analysis, particularly $b$-functions developed by many mathematicians like Kashiwara \cite{kashiwara1976b}, Bernstein \cite{bernstein1972analytic}, Sato and Shintani \cite{sato1974zeta}.
See also \cite[Chapter 4]{watanabe2009algebraic} for history.

This shows that singular learning theory can be explained in terms of algebraic geometry and zeta functions. For regular models, the Fisher information matrix has played a central role. However, the non-degeneracy of the Fisher information matrix does not hold under birational transformations. In the analysis of singular models, the invariants introduced above characterize the learning generalization error $G_n$ as a generalization of the fact that the Gaussian distribution is characterized by its mean and variance.
\end{rem}

Before studying the asymptotic behaviors of $G_n$ and $T_n$, we summarize several important quantities introduced in the singular theory.
\begin{defn}
\label{defn:invariants in singular learning theory}
    Let $K(\theta)$ be the average log loss function with respect to a model $(q(x), p(x|\theta))$.
    \begin{enumerate}
        \item The \textit{learning coefficient} $\lambda$ is defined as the real log canonical threshold associated to $\Theta_0$.
        In other words, 
        \[\lambda \coloneqq \min_i\left\{\frac{2k_i + 1}{h_i} \right\},\]
        where  $k_i$ and $h_i$ are defined through a log resolution of singularities Eq. \eqref{eq:resolution of K and K^Q}.
        \item The \textit{renormalized log likelihood function} $a(x,u)$ is defined from the standard form $f(x,g(u)) = u^k \cdot a(x,u)$ in \cite[Definition 14]{watanabe2018mathematical}.
        \item The \textit{renormalized empirical process} $\xi_n(u)$ is introduced in Eq. \eqref{eq:renormalized empirical process}.
        \item Let $t \coloneqq n \cdot u^{2k} \in \mathbb{R}$. The origin of this is the state density formula and the Mellin transformation. In this paper, we treat it simply as a variable.
        \item Let us define the \textit{functional variance} and \textit{singular fluctuation} as
        \begin{align*}
            V(\xi_n) &\coloneqq \mathbb{E}_X[\mathbb{V}_{\theta}[\sqrt{t}a(X,u)]], \\
            \nu &\coloneqq \frac{1}{2}\mathbb{E}_{\xi}[V(\xi)].
        \end{align*}
        Note that $V(\xi_n)$ is a functional of $\xi_n$ because $\mathbb{V}_{\theta}[\cdot]$ depends on $\xi_n$.
    \end{enumerate}
\end{defn}
It is worth reminding that, contrary to regular cases, the posterior distribution cannot be asymptotically approximated by a normal distribution. 
However, it can be represented with parameters $(t,u)$ after taking the resolution of singularities.
\begin{defn}[{\cite[Definition 19]{watanabe2018mathematical}}]
\label{defn:renormalized_posterior_dist}
    For a function $z(t,u)$, the renormalized posterior distribution is defined by
    \begin{equation}
        \mathbb{E}_\theta[z(t,u)] = \frac{\int du \, D(u) \int_0^\infty dt \, z(t,u) t^{\lambda-1} \exp(-t+\sqrt{t}\xi_n(u))}{\int du \, D(u) \int_0^\infty dt \, t^{\lambda-1} \exp(-t+\sqrt{t}\xi_n(u))}
        \label{eq:renormalized_posterior_dist}
    \end{equation}
    where $D(u)$ is defined in Eq. (5.27) of Ref. \cite{watanabe2018mathematical}.
\end{defn}
As noted in \cite[Chapter 6]{watanabe2018mathematical}, the posterior distribution for many models and priors is represented as a finite mixture of locally standard forms.
That is, there exists a division of a parameter set such that the average log loss function can be expressed by a normal crossing representation in each local parameter set.
For such general posterior distribution, we consider a set of local coordinates $\{U_j\}$ in the resolution theorem, and only the $U_j$ that maximize $t^{\lambda_j} (- \log t)^{m_j-1}$ are referred to as essential local coordinates (ELC).
Thus, $D(u)$ in Definition \ref{defn:renormalized_posterior_dist} is replaced by $\sum_{j \in \mathrm{ELC}} D_j(u)$ for general posterior distribution.

Having presented several key notions and quantities, the asymptotic expansions of $G_n$ and $T_n$ can be obtained by representing the cumulants in Proposition \ref{prop:generalization loss $G_n$ and the training loss $T_n$ for general cases} using the standard form and some integral calculations with the renormalized posterior distribution
This generalizes the result of Proposition \ref{prop:generalization loss of G_n and T_n} to singular cases.
\begin{prop}[{\cite[Theorems 14, 15]{watanabe2018mathematical}}]
\label{prop:singular for Gn and Tn}
Even if a pair ($q(x), p(x|\theta))$ is singular, under Definition \ref{def:classical relatively finite variance}, we obtain the following results.
    \begin{enumerate}
        \item The generalization and training losses are expanded as
    \begin{eqnarray*}
        G_n &=& \mathbb{E}_X \left[ - \log p(X|\theta_0) \right] + \frac{1}{n}\left( \lambda + \frac{1}{2}\mathbb{E}_\theta[\sqrt{t} \xi_n(u)] - \frac{1}{2} V(\xi_n) \right) + o_p\left(\frac{1}{n}\right), \\
        T_n &=& \frac{1}{n} \sum_{i=1}^n \{ - \log p(x_i|\theta_0) \} + \frac{1}{n}\left( \lambda - \frac{1}{2}\mathbb{E}_\theta[\sqrt{t} \xi_n(u)] - \frac{1}{2} V(\xi_n) \right) + o_p\left(\frac{1}{n}\right).
    \end{eqnarray*}
    \item 
   Their expectations are given by
    \begin{align*}
        \mathbb{E}_{X^n}[G_n] &= \mathbb{E}_X[- \log p(X|\theta_0)] + \frac{\lambda}{n} + o\left(\frac{1}{n}\right), \\
        \mathbb{E}_{X^n}[T_n] &= \mathbb{E}_X[- \log p(X|\theta_0)] + \frac{\lambda - 2 \nu}{n} + o\left(\frac{1}{n}\right).
    \end{align*}
    \end{enumerate}
\end{prop}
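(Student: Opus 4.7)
The plan is to derive both parts as sequential consequences of the basic theorem of Bayesian statistics (Proposition \ref{prop:generalization loss $G_n$ and the training loss $T_n$ for general cases}) after transporting every posterior integral from $\Theta$ to the desingularized space $\widetilde{\Theta}$. Recall that the basic theorem reduces $G_n$ and $T_n$, up to $o_p(1/n)$, to the first and second cumulants at $\alpha=0$, which are posterior means and variances of $\log p(X|\theta)$. Writing $\log p(x|\theta) = \log p(x|\theta_0) - f(x,\theta)$ and pulling back by the log resolution $g:\widetilde{\Theta}\to\Theta$, the factorization $f(x,g(u)) = u^{k}\, a(x,u)$ (Definition \ref{defn:invariants in singular learning theory}) makes the cumulants manifestly analytic in $u$ near $\Theta_0$, which was not true before blowing up.

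Next I would substitute the normal crossing coordinates into the posterior. After the change of variables $t = n\, u^{2k}$ (Definition \ref{defn:invariants in singular learning theory} (4)), the Laplace-type integral defining the posterior takes the renormalized form \eqref{eq:renormalized_posterior_dist}, in which the stochastic fluctuation is entirely carried by the renormalized empirical process $\xi_n(u)$ of \eqref{eq:renormalized empirical process}. In these coordinates one gets
\begin{align}
\mathbb{E}_\theta[\log p(X|\theta)] &= \log p(X|\theta_0) - \frac{1}{\sqrt{n}}\,\mathbb{E}_\theta\!\left[\sqrt{t}\,a(X,u)\right], \\
\mathbb{V}_{\theta}[\log p(X|\theta)] &= \frac{1}{n}\,\mathbb{V}_{\theta}\!\left[\sqrt{t}\,a(X,u)\right].
\end{align}
The relatively finite variance assumption (Definition \ref{def:classical relatively finite variance}) is what keeps $a(x,u)$ in $L^2(q)$ uniformly in $u$, so these posterior integrals are well defined and the $o_p(1/n)$ remainders from the basic theorem survive the pullback.

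Plugging these expressions back into the basic theorem and separating the expectation $\mathbb{E}_X$ of $a(X,u)$ (which equals $u^k$) from the fluctuation $\xi_n(u) = n^{-1/2}\sum_i\{u^k - a(X_i,u)\}$ yields
\begin{align}
G_n &= \mathbb{E}_X[-\log p(X|\theta_0)] + \frac{1}{n}\mathbb{E}_\theta[t] + \frac{1}{n}\cdot\tfrac{1}{2}\mathbb{E}_\theta[\sqrt{t}\xi_n(u)] - \frac{1}{2n}V(\xi_n) + o_p(1/n), \\
T_n &= \frac{1}{n}\sum_i\{-\log p(x_i|\theta_0)\} + \frac{1}{n}\mathbb{E}_\theta[t] - \frac{1}{n}\cdot\tfrac{1}{2}\mathbb{E}_\theta[\sqrt{t}\xi_n(u)] - \frac{1}{2n}V(\xi_n) + o_p(1/n),
\end{align}
after collecting the cross term $\mathbb{E}_\theta[t]\cdot\mathbb{E}_X[a]$ with $\mathbb{E}_X[u^k a(X,u)] = u^{2k}$ and identifying $n u^{2k}=t$. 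The key identity $\mathbb{E}_\theta[t] = \lambda + \tfrac{1}{2}\mathbb{E}_\theta[\sqrt{t}\xi_n(u)] + o_p(1)$ now comes from a Laplace/Mellin-type computation of the integral \eqref{eq:renormalized_posterior_dist}: differentiating the log of the normalization with respect to the exponent of $t$ produces $\lambda$ as the leading order, with a Gaussian correction supplied by the linearization of the exponent $\sqrt{t}\xi_n(u)$. Substituting this identity yields part (1) in the form stated in the proposition.

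For part (2), take $\mathbb{E}_{X^n}$ of the two expansions. The fluctuation term $-\log p(x_i|\theta_0)$ in $T_n$ averages to $\mathbb{E}_X[-\log p(X|\theta_0)]$, and $\mathbb{E}_{X^n}[\mathbb{E}_\theta[\sqrt{t}\xi_n(u)]]$ converges by Proposition \ref{prop:classical convergence law} to the analogous Gaussian-process expectation $\mathbb{E}_\xi[\mathbb{E}_\theta[\sqrt{t}\xi(u)]]$. One further Laplace-integration lemma, giving $\mathbb{E}_\xi[\mathbb{E}_\theta[\sqrt{t}\xi(u)]] = \mathbb{E}_\xi[V(\xi)] = 2\nu$, then cancels half of the $V(\xi_n)$ term in $G_n$ while doubling it in $T_n$, producing the announced coefficients $\lambda$ and $\lambda - 2\nu$. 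The main obstacle is rigorously justifying that the renormalized posterior of Definition \ref{defn:renormalized_posterior_dist} captures $\mathbb{E}_\theta[\,\cdot\,]$ to the required order uniformly in $\xi_n$; this requires the tightness of $\xi_n$ from Proposition \ref{prop:classical convergence law}, the division of $\Theta$ into essential local coordinates noted after Definition \ref{defn:renormalized_posterior_dist}, and uniform control of the remainders via the relatively finite variance hypothesis, which together upgrade the pointwise Laplace asymptotics into the uniform stochastic bounds $o_p(1/n)$ appearing in the statement.
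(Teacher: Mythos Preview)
The paper does not actually prove this proposition; it is quoted verbatim from \cite[Theorems 14, 15]{watanabe2018mathematical} as part of the review section. Your overall strategy --- start from the basic theorem, pull back via the log resolution, use the factorization $f(x,g(u))=u^k a(x,u)$, and finish with the identity $\mathbb{E}_\theta[t]=\lambda+\tfrac12\mathbb{E}_\theta[\sqrt{t}\,\xi_n(u)]$ together with $\mathbb{E}_\xi[\mathbb{E}_\theta[\sqrt{t}\,\xi]]=\mathbb{E}_\xi[V(\xi)]=2\nu$ --- is exactly Watanabe's route, so there is nothing novel to compare.

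There is, however, a concrete bookkeeping error in your intermediate display. For $G_n$ the first cumulant is $\mathbb{E}_X[\mathbb{E}_\theta[\log p(X|\theta)]]$, which involves the \emph{true} expectation $\mathbb{E}_X$, not a sample average. Since $\mathbb{E}_X[a(X,u)]=u^k$ exactly, no fluctuation term $\xi_n$ appears there at all: you should get
\[
G_n \;=\; \mathbb{E}_X[-\log p(X|\theta_0)] + \tfrac{1}{n}\,\mathbb{E}_\theta[t] - \tfrac{1}{2n}V(\xi_n) + o_p\!\left(\tfrac{1}{n}\right),
\]
with no extra $+\tfrac{1}{2n}\mathbb{E}_\theta[\sqrt{t}\,\xi_n]$. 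Likewise, for $T_n$ the sample average $\tfrac1n\sum_i a(X_i,u)=u^k-\tfrac{1}{\sqrt{n}}\xi_n(u)$ gives a coefficient $-\tfrac{1}{n}$ (not $-\tfrac{1}{2n}$) in front of $\mathbb{E}_\theta[\sqrt{t}\,\xi_n]$:
\[
T_n \;=\; \tfrac{1}{n}\sum_i\{-\log p(x_i|\theta_0)\} + \tfrac{1}{n}\,\mathbb{E}_\theta[t] - \tfrac{1}{n}\,\mathbb{E}_\theta[\sqrt{t}\,\xi_n] - \tfrac{1}{2n}V(\xi_n) + o_p\!\left(\tfrac{1}{n}\right).
\]
If you substitute your (correct) identity $\mathbb{E}_\theta[t]=\lambda+\tfrac12\mathbb{E}_\theta[\sqrt{t}\,\xi_n]$ into \emph{your} intermediate formulas, you obtain $\lambda+\mathbb{E}_\theta[\sqrt{t}\,\xi_n]$ for $G_n$ and $\lambda$ for $T_n$, both off by $\tfrac12\mathbb{E}_\theta[\sqrt{t}\,\xi_n]$; with the corrected intermediates above, the stated result falls out. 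The sentence ``after collecting the cross term $\mathbb{E}_\theta[t]\cdot\mathbb{E}_X[a]$'' is where the slip occurs --- there is no such cross term in the $G_n$ computation.
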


Now, though we omit the details of the proof here, let us state that WAIC is a generalization of AIC in the following sense.
\begin{thm}[{\cite[Theorem 16]{watanabe2018mathematical}}]
\label{thm:WAIC is unbiased estimator}
    Even if a pair ($q(x), p(x|\theta))$ is singular, under Definition \ref{def:classical relatively finite variance}, WAIC has a property as an asympto unbiased estimator:
    \[  \mathbb{E}_{X^n}[G_n] = \mathbb{E}_{X^n}[\mathrm{WAIC}] + o\left(\frac{1}{n}\right).\]
\end{thm}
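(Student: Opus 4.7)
The plan is to invoke Proposition \ref{prop:singular for Gn and Tn} to reduce Theorem \ref{thm:WAIC is unbiased estimator} to a statement about the expected posterior variance term in WAIC. Subtracting the two expansions in that proposition gives $\mathbb{E}_{X^n}[G_n - T_n] = 2\nu/n + o(1/n)$, so it suffices to establish the ``functional variance identity''
\[
\mathbb{E}_{X^n}\!\left[\frac{1}{n}\sum_{i=1}^n \mathbb{V}_{\theta}[\log p(x_i|\theta)]\right] = \frac{2\nu}{n} + o\!\left(\frac{1}{n}\right).
\]

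First I would pass to the resolved parameter space via $g:\widetilde{\Theta}\to\Theta$ and use the standard form $f(x,g(u))=u^k a(x,u)$ from Definition \ref{defn:invariants in singular learning theory}. Since $\log p(x|\theta)-\log p(x|\theta_0)=-f(x,\theta)$ and the constant $\log p(x|\theta_0)$ contributes nothing to the posterior variance, we get $\mathbb{V}_{\theta}[\log p(x|\theta)] = \mathbb{V}_{\theta}[u^k a(x,u)]$. Under the change of variables $t\coloneqq n u^{2k}$ in the renormalized posterior (Definition \ref{defn:renormalized_posterior_dist}), one has $u^k=\sqrt{t/n}$, so
\[
\mathbb{V}_{\theta}[\log p(x|\theta)] \;=\; \frac{1}{n}\,\mathbb{V}_{\theta}\!\bigl[\sqrt{t}\,a(x,u)\bigr].
\]
Summing over $i$ therefore gives
\[
\frac{1}{n}\sum_{i=1}^n \mathbb{V}_{\theta}[\log p(x_i|\theta)]
\;=\; \frac{1}{n}\cdot\frac{1}{n}\sum_{i=1}^n \mathbb{V}_{\theta}\!\bigl[\sqrt{t}\,a(x_i,u)\bigr].
\]

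Next I would replace the inner empirical average by its population version. Conditional on $X^n$ (hence on the renormalized posterior and on $\xi_n$), a law-of-large-numbers argument using the $L^2(q)$-analyticity of $f$ in Fundamental condition \ref{ass:fundamental condition} and the relatively finite variance (Definition \ref{def:classical relatively finite variance}) yields
\[
\frac{1}{n}\sum_{i=1}^n \mathbb{V}_{\theta}\!\bigl[\sqrt{t}\,a(x_i,u)\bigr]
\;=\; \mathbb{E}_X\!\bigl[\mathbb{V}_{\theta}[\sqrt{t}\,a(X,u)]\bigr] + o_p(1)
\;=\; V(\xi_n) + o_p(1),
\]
so that $\mathbb{E}_{X^n}[\,\tfrac{1}{n}\sum_i \mathbb{V}_\theta[\log p(x_i|\theta)]\,] = \tfrac{1}{n}\mathbb{E}_{X^n}[V(\xi_n)]+o(1/n)$. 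Finally, Proposition \ref{prop:classical convergence law} gives $\xi_n \Rightarrow \xi$ in law, the functional $V$ is continuous in $\xi$, and with suitable uniform integrability one obtains $\mathbb{E}_{X^n}[V(\xi_n)] \to \mathbb{E}_{\xi}[V(\xi)] = 2\nu$ by definition of $\nu$ in Definition \ref{defn:invariants in singular learning theory}. Assembling these three pieces yields the required identity.

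The hard part will be the last step: passing from weak convergence $\xi_n\Rightarrow\xi$ to convergence of the expectations $\mathbb{E}_{X^n}[V(\xi_n)]\to \mathbb{E}_\xi[V(\xi)]$. This demands uniform integrability of $V(\xi_n)$, which in turn requires uniform tail control of $\mathbb{V}_\theta[\sqrt{t}\,a(X,u)]$ as $n$ varies. The relatively finite variance condition is exactly what provides this control, bounding $\mathbb{E}_X[a(X,u)^2]$ uniformly near the exceptional locus $u^{2k}=0$ where the renormalized posterior concentrates; without it the posterior tails could blow up and the expectation identity would fail. A secondary technical issue is controlling the $o_p(1)$ remainder from the empirical-law replacement at the expectation level, which is handled by splitting $\widetilde{\Theta}$ into a small neighborhood of $\{u^{2k}=0\}$ (where the renormalized presentation applies) and its complement (where standard Laplace asymptotics are exponentially small), using the compactness of $\Theta$ in Fundamental condition \ref{ass:fundamental conditionII}.
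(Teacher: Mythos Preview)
The paper does not prove this statement: it explicitly omits the details and cites \cite[Theorem~16]{watanabe2018mathematical}. Your proposal is the standard argument and mirrors exactly the structure the paper uses for the quantum analogue, Theorem~\ref{thm:q_QWAIC is unbiased estimator for singular cases}: your ``functional variance identity'' is the classical version of Lemma~\ref{lem:convergence_C_n^Q}, proved there in the same way you describe (rewrite via the standard form $f(x,g(u))=u^k a(x,u)$, then law of large numbers together with the convergence in distribution of $\xi_n$).
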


Note that to develop singular learning theory, including Theorem \ref{thm:WAIC is unbiased estimator}, we have to prepare a reasonable situation for analysis \cite[Fundamental conditions I, II]{watanabe2009algebraic}, which will be reformulated in Fundamental conditions \ref{ass:fundamental condition} and \ref{ass:fundamental conditionII} for the usage of our setting.
This condition is intended to make the setting explicit and is considered to be fulfilled in most situations, but caution is required in the following example.

\begin{ex}
\label{ex:normal mixture}
    One of the most famous examples of singular models is a normal mixture \cite[Section 7.3]{watanabe2009algebraic}, \cite[Section 2.5]{watanabe2018mathematical}.
    This is a model in which the two Gaussian distributions are parameterized in suitable proportions.
    For $i=1,2$, let 
    \[f_i(x|\mu_i) \coloneqq \frac{1}{\sqrt{2\pi}}\exp\left(-\frac{(x-\mu_i)^2}{2}\right)\]
    be the Gaussian distribution with mean $\mu_i$ and variance $1$.
    We now consider the normal mixture 
    \begin{align}
    \label{eq:normal mixture first}
        p(x|a,\mu_1,\mu_2) \coloneqq (1-a)f_1(x|\mu_1) + af_2(x|\mu_2)
    \end{align}
    for $0\leq a \leq 1$.
    For simplicity, putting $\mu_1 = 0$, $\mu_2 = b$, 
    \[\Theta \coloneqq \{(a,b) \in \mathbb{R}^2 \mid 0\leq a \leq 1\} \subset \mathbb{R}^2,\]
    we assume the true distribution is $q(x) = p(x|\alpha, 0, \beta)$ for some $\alpha, \beta \in \Theta$ with $\alpha\beta = 0$.
    Then, by \cite[Example 14]{suzuki2023WBayes}, we have
    \[K(a,b) = \mathrm{KL}(q||p) = \int_{\mathbb{R}} \log \left(\frac{1 + \alpha(\exp(\beta x - \beta^2/2) - 1)}{1 + a(\exp(bx - b^2/2) - 1)}\right)q(x)dx,\]
which implies that 
\[\Theta_0 = \{(a,b) \in \mathbb{R}^2\mid ab = 0\}.\]
Hence, the model $(q(x), p(x|a,b))$ is realizable but singular.
We note the parametrization of this model for further discussion.
The parameter $(0,b)$ is contained in $\Theta_0$ but located in the boundary of $\Theta$.
This means that we cannot discuss the differential of the average log loss function $K(a,b)$ there.
Hence, we cannot apply singular learning theory a priori.
Though it is known that a similar approach can be developed in this case through individual discussions \cite[Section 7.3]{watanabe2009algebraic}, we would like to present a simpler way around this.
It is to reparametrize the model Eq. \eqref{eq:normal mixture first} as
\begin{align}
    p(x|\theta,\mu_1,\mu_2) \coloneqq \cos^2(\theta) f_1(x|\mu_1) + \sin^2(\theta) f_2(x|\mu_2)
\end{align}
for $0\leq \theta < 2\pi$.
Then we can avoid the above difficulty.
We also use this reparametrization method in the quantum setting; see \cite[Remark 6.1 (8)]{watanabe2009algebraic}.
\end{ex}

\section{Proof of main results}
\label{app:Proof of main results}
In this section, we prove the main theorems concerning the asymptotic behaviors of $G_n^Q$ and $T_n^Q$ as well as the asymptotic unbiasedenss of our proposed information criterion QWAIC.

First, we consider generalizations of the posterior mean and variance to matrices, extending Definition \ref{defn:posterior mean for classical} 
\begin{align}
    \mathbb{E}_\theta[\log \sigma(\theta)] &\coloneqq \int_{\Theta} \log \sigma(\theta) p(\theta|x^n) d\theta, \label{eq:quantum posterior mean}\\
    \mathbb{V}_{\theta}[\log \sigma(\theta)] &\coloneqq \mathbb{E}_\theta[(\log \sigma(\theta))^2] - \mathbb{E}_\theta[\log \sigma(\theta)]^2, \label{eq:quantum posterior variance}
\end{align}
to the quantum setting.
Here, recall that $\sigma(\theta)$ is a quantum statistical model parameterized by $\theta$, and $p(\theta|x^n)$ is the posterior distribution.

\subsection{Basic Bayes theorem}
To study the asymptotic behaviors of $G_n^Q$ and $T_n^Q$, we introduce the \textit{cumulant generating function of the quantum log-likelihood}.
\begin{defn}
\label{def:quantum cumulant}
    Let $\alpha \in \mathbb{R}$. 
    The \textit{cumulant generating function of the quantum log-likelihood} is defined by 
    \begin{equation}
        s^Q(\hat{\rho}, \alpha) \coloneqq \Tr(\hat{\rho} \log \Phi(\alpha)), \quad 
        \Phi(\alpha) \coloneqq \int_\Theta \sigma(\theta)^\alpha p(\theta|x^n) d\theta,
    \end{equation}
    where $\hat{\rho}$ is the classical snapshot of a true quantum state $\rho$.
    
    Similarly, the \textit{cumulant generating function of the quantum log-likelihood ratio} is defined by 
    \begin{equation}
        s^{Q(0)}(\hat{\rho}, \alpha) \coloneqq \Tr(\hat{\rho} \log \Phi^{(0)}(\alpha)), 
    \end{equation}
    where
    \begin{equation}
        \Phi^{(0)}(\alpha) \coloneqq \int_\Theta \left( \int_{0}^{1} \sigma(\theta_0^Q)^{-(1-r)\alpha} \sigma(\theta)^\alpha \sigma(\theta_0^Q)^{-r\alpha} dr \right) p(\theta|x^n) d\theta.
    \end{equation}
\end{defn}
These definitions differ from their classical counterparts due to the non-commutative nature of quantum statistical models. 
The cumulant generating function in the classical case is originally defined as $s(x,\alpha) = \log \mathbb{E}_\theta[p(x|\theta)^\alpha]$ for a measurement outcome $x$ (see Definition \ref{defn:classical_cumulant}).
In our definition for the quantum case, we consider $s^Q(\hat{\rho}_x,\alpha) = \Tr(\hat{\rho}_x \log \mathbb{E}_\theta[\sigma(\theta)^\alpha])$ as its counterpart, where we first raise $\sigma(\theta)$ to the $\alpha$-th power and lastly take the trace after applying the snapshot corresponding to a measurement outcome $x$. 
Note that $s^Q(x,\alpha)$ can be viewed as a quantum generalization of $s(x,\alpha)$ if both $\sigma(\theta)$ and $\hat{\rho}_x$ were classical, i.e., they are diagonal and non-negative matrices with unit trace (although in practice $\hat{\rho}_x$ cannot be a classical state).
Additionally, the cumulant generating function of the log-likelihood ratio function plays a crucial role when studying the asymptotic behavior around the optimal parameter $\theta_0$ \cite[Section 3.3]{watanabe2018mathematical}.
Since the ``ratio'' is not uniquely determined between density operators, we use the form of $\int_0^1 A^{-(1-r)} B A^{-r} dr$ for density operators $A$ and $B$, inspired by the Bogoliubov inner product \cite{amari2000methods}.
We remark that the Bogoliubov Fisher metric can be characterized as the limit of the quantum relative entropy \cite{hayashi2002Two}.
Before proceeding, let us study the first and second cumulants.
\begin{lem}
\label{lem:derivatives of quantum cumulant}
    The first and second derivatives of of $s^Q(\hat{\rho}, \alpha)$ and $s^{Q(0)}(\hat{\rho}, \alpha)$ at $\alpha = 0$ are described using the posterior mean and variance as follows:
    \begin{align}
        &\left. \frac{\partial s^Q(\hat{\rho}, \alpha)}{\partial \alpha} \right|_{\alpha=0} 
        = \Tr( \hat{\rho} \mathbb{E}_\theta[\log \sigma(\theta)] ), \quad
        \left. \frac{\partial^2 s^Q(\hat{\rho}, \alpha)}{\partial \alpha^2} \right|_{\alpha=0} 
        =\Tr(\hat{\rho} \mathbb{V}_{\theta}[\log \sigma(\theta)]),\\
        &\left. \frac{\partial s^{Q(0)}(\hat{\rho}, \alpha)}{\partial \alpha} \right|_{\alpha=0} 
        = \Tr( \hat{\rho} \left\{ \mathbb{E}_\theta[\log \sigma(\theta)] - \log \sigma(\theta_0^Q) \right\} ), \quad
        \left. \frac{\partial^2 s^{Q(0)}(\hat{\rho}, \alpha)}{\partial \alpha^2} \right|_{\alpha=0} 
        =\Tr(\hat{\rho} \mathbb{V}_{\theta}[\log \sigma(\theta)]).
    \end{align}
\end{lem}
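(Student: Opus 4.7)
\begin{proofsketch}
The plan is to Taylor-expand $\Phi(\alpha)$ and $\Phi^{(0)}(\alpha)$ in $\alpha$ around $\alpha=0$ and combine this with the matrix expansion $\log(I+X)=X-X^2/2+O(X^3)$. The key observation is that $\sigma(\theta)^0=I$, hence $\Phi(0)=\Phi^{(0)}(0)=I$, so the logarithm is analytic in a neighborhood of $\alpha=0$ and one can read off the derivatives from the Taylor coefficients of the argument.

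First I would treat $s^Q$. Expanding $\sigma(\theta)^\alpha = I + \alpha\log\sigma(\theta)+\tfrac{\alpha^2}{2}(\log\sigma(\theta))^2+O(\alpha^3)$ and integrating against $p(\theta|x^n)$ yields $\Phi(\alpha)=I+\alpha A_1+\tfrac{\alpha^2}{2}A_2+O(\alpha^3)$ with $A_1=\mathbb{E}_\theta[\log\sigma(\theta)]$ and $A_2=\mathbb{E}_\theta[(\log\sigma(\theta))^2]$. Substituting into $\log(I+X)=X-X^2/2+O(X^3)$ gives
\begin{equation}
\log\Phi(\alpha)=\alpha\,\mathbb{E}_\theta[\log\sigma(\theta)]+\tfrac{\alpha^2}{2}\bigl(\mathbb{E}_\theta[(\log\sigma(\theta))^2]-\mathbb{E}_\theta[\log\sigma(\theta)]^2\bigr)+O(\alpha^3),
\end{equation}
so that $\partial_\alpha\log\Phi|_0=\mathbb{E}_\theta[\log\sigma(\theta)]$ and $\partial_\alpha^2\log\Phi|_0=\mathbb{V}_\theta[\log\sigma(\theta)]$. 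Tracing against $\hat\rho$ produces the stated formulas for $s^Q$.

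The harder case is $s^{Q(0)}$, because the factors $\sigma(\theta_0^Q)^{-(1-r)\alpha}$, $\sigma(\theta)^\alpha$, $\sigma(\theta_0^Q)^{-r\alpha}$ do not commute in general. Writing $L:=\log\sigma(\theta)$ and $L_0:=\log\sigma(\theta_0^Q)$, I would Taylor-expand each exponential and multiply them out carefully to second order in $\alpha$. The first-order term of the integrand is $L-L_0$ for every $r$, giving $\partial_\alpha\Phi^{(0)}|_0=\mathbb{E}_\theta[L]-L_0$. The second-order term equals $L_0^2+L^2-2(1-r)L_0L-2rLL_0$; this is where $r$-integration does crucial work: $\int_0^1(1-r)\,dr=\int_0^1 r\,dr=\tfrac12$, so the $r$-integral collapses to $L_0^2+L^2-L_0L-LL_0=(L-L_0)^2$, a perfect square despite the non-commutativity. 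Thus $\partial_\alpha^2\Phi^{(0)}|_0=\mathbb{E}_\theta[(L-L_0)^2]$.

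Finally I would feed these Taylor coefficients into $\log(I+X)$ as before: the subtraction $X_2-X_1^2$ yields
\begin{equation}
\mathbb{E}_\theta[(L-L_0)^2]-(\mathbb{E}_\theta[L]-L_0)^2=\mathbb{E}_\theta[L^2]-\mathbb{E}_\theta[L]^2=\mathbb{V}_\theta[\log\sigma(\theta)],
\end{equation}
which is exactly the claim, while the linear part is $\mathbb{E}_\theta[L]-L_0$; multiplying by $\hat\rho$ and tracing gives the formulas for $s^{Q(0)}$. The only real obstacle is the bookkeeping of non-commuting factors in step three, and the mild verification that one may differentiate under the posterior integral (which follows from analyticity of $\sigma(\theta)^\alpha$ in $\alpha$ together with compactness of $\Theta$ under Fundamental condition \ref{ass:fundamental conditionII}).
\end{proofsketch}
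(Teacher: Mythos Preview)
Your argument is correct and takes a genuinely different route from the paper. The paper does not Taylor-expand $\log(I+X)$ directly; instead it differentiates $\log\Phi(\alpha)$ via the resolvent integral representation
\[
\log P-\log Q=\int_0^\infty (sI+P)^{-1}(P-Q)(sI+Q)^{-1}\,ds,
\]
obtaining $\partial_\alpha s^Q(\hat\rho,\alpha)=\Tr\bigl(\hat\rho\int_0^\infty (sI+\Phi)^{-1}(\partial_\alpha\Phi)(sI+\Phi)^{-1}\,ds\bigr)$ at general $\alpha$, and then specializes to $\alpha=0$ where $(sI+I)^{-1}=(s+1)^{-1}I$ and the $s$-integrals $\int_0^\infty(s+1)^{-2}ds=1$, $\int_0^\infty(s+1)^{-3}ds=\tfrac12$ reproduce the same answers you obtain.

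Your approach is cleaner for the first two derivatives: the identity $\log(I+X)=X-\tfrac12 X^2+O(X^3)$ together with $\Phi(0)=I$ immediately gives the cumulant structure, and your observation that the $r$-average forces the second-order term of $\Phi^{(0)}$ to be the perfect square $(L-L_0)^2$ is exactly what makes the variance formula drop out. The paper's resolvent approach is more laborious here but pays off later: it yields a closed formula for $\partial_\alpha^\ell s^Q$ at general $\alpha$ (via the auxiliary $Y_\ell=(sI+\Phi)^{-1}(\partial_\alpha^\ell\Phi)(sI+\Phi)^{-1}$ and the recursion $\partial_\alpha Y_\ell=Y_{\ell+1}-\cdots$), which is then reused in Lemma~\ref{lem:quantum higher order} to bound the higher-order cumulants needed for Theorem~\ref{thm:q_basic}. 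Your Taylor method would also extend to higher order, but the bookkeeping of all non-commuting cross terms in $\log(I+X)$ becomes heavier than the resolvent recursion.
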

\begin{proof} 
    The claim follows from straightforward calculations using matrix calculus and properties of the trace.
    First, recall the relation
    \begin{align*}
        \frac{\partial}{\partial \alpha} \sigma(\theta)^\alpha &= \frac{\partial}{\partial \alpha} \exp( \alpha \log \sigma(\theta) ) 
        = \exp( \alpha \log \sigma(\theta) ) \frac{\partial}{\partial \alpha}( \alpha \log \sigma(\theta) ) 
        = \sigma(\theta)^\alpha \log \sigma(\theta)
    \end{align*}
    for $\sigma(\theta)^\alpha = \exp( \alpha \log \sigma(\theta) )$,
    following from the fact that $\exp(Ax)$ and $A$ are commutative for all $A \in \mathbb{C}^{n \times n}$ and $x\in\mathbb{C}$ in the second equation.
    Then, the first derivative of $\Phi(\alpha)$ is 
    \begin{align}
        \frac{\partial \Phi(\alpha)}{\partial \alpha} = \int_\Theta \sigma(\theta)^\alpha \log \sigma(\theta) p(\theta|x^n) d\theta, \quad 
        \left. \frac{\partial \Phi(\alpha)}{\partial \alpha} \right|_{\alpha=0} = \mathbb{E}_\theta[\log \sigma(\theta)],
    \end{align}
    and the first derivative of $\Phi^{(0)}(\alpha)$ and its value are given by 
    \begin{align*}
        \frac{\partial \Phi^{(0)}(\alpha)}{\partial \alpha} &= \int_\Theta \left[ \int_{0}^{1} \left\{ \left(-(1-r) \sigma(\theta_0^Q)^{-(1-r)\alpha} \log \sigma(\theta_0^Q) \sigma(\theta)^\alpha \sigma(\theta_0^Q)^{-r\alpha}\right) \right.\right. \\
        &\quad \left.\left. + \left( \sigma(\theta_0^Q)^{-(1-r)\alpha} \sigma(\theta)^\alpha \log \sigma(\theta) \sigma(\theta_0^Q)^{-r\alpha} \right) \right.\right. \\
        &\quad \left.\left. + \left( -r \sigma(\theta_0^Q)^{-(1-r)\alpha} \sigma(\theta)^\alpha \sigma(\theta_0^Q)^{-r\alpha} \log \sigma(\theta_0^Q) \right) \right\} dr \right] p(\theta|x^n) d\theta, \\ 
        \left. \frac{\partial \Phi^{(0)}(\alpha)}{\partial \alpha} \right|_{\alpha=0} &= \int_\Theta \left[ \int_{0}^{1} \left\{ -(1-r) \log \sigma(\theta_0^Q) + \log \sigma(\theta) -r \log \sigma(\theta_0^Q) \right\} dr \right] p(\theta|x^n) d\theta \\
        &= \mathbb{E}_\theta[\log \sigma(\theta)] - \log \sigma(\theta_0^Q).
    \end{align*}
    Similarly, we compute the second derivatives of $\Phi(\alpha)$ and $\Phi^{(0)}(\alpha)$:
    \begin{align}
        \left. \frac{\partial^2 \Phi(\alpha)}{\partial \alpha^2} \right|_{\alpha=0} = \mathbb{E}_\theta[ (\log \sigma(\theta))^2 ], \quad 
        \left. \frac{\partial^2 \Phi^{(0)}(\alpha)}{\partial \alpha^2} \right|_{\alpha=0} = \mathbb{E}_\theta[ (\log \sigma(\theta) - \log \sigma(\theta_0^Q))^2 ], 
    \end{align}
    respectively.
    Before calculating the derivatives of $s^Q(\hat{\rho}, \alpha)$ and $s^{Q(0)}(\hat{\rho}, \alpha)$,
    we recall the basic resolvent formulas
    \begin{align}\label{eq:resolvent}
        (sI+Q)^{-1} - (sI+P)^{-1} &=  (sI+P)^{-1} (P-Q)  (sI+Q)^{-1},\\
        \log P - \log Q &= \int_{0}^{\infty} (sI+P)^{-1} (P-Q) (sI+Q)^{-1} ds,\label{eq:resolvent_int}
    \end{align}
    for all positive Hermitian matrices $P, Q > 0$; see \cite{haber2023Notes}.
    Then, the first derivative of $s^Q(\hat{\rho}, \alpha)$ is 
    \begin{align*}
        \frac{\partial s^Q(\hat{\rho}, \alpha)}{\partial \alpha} &= \Tr(\hat{\rho} \frac{\partial}{\partial \alpha} \log \Phi(\alpha)) \\ 
        &= \lim_{\epsilon \rightarrow 0} \Tr(\hat{\rho} \frac{\log \Phi(\alpha+\epsilon) - \log \Phi(\alpha)}{\epsilon}) \\
        &= \lim_{\epsilon \rightarrow 0} \Tr(\hat{\rho} \int_{0}^{\infty} (sI+\Phi(\alpha+\epsilon))^{-1} \frac{\Phi(\alpha+\epsilon)-\Phi(\alpha)}{\epsilon} (sI+\Phi(\alpha))^{-1} ds ) \\
        &= \Tr(\hat{\rho} \int_{0}^{\infty} (sI+\Phi(\alpha))^{-1} \frac{\partial \Phi(\alpha)}{\partial \alpha} (sI+\Phi(\alpha))^{-1} ds ). \\
    \end{align*}
    The third equation can be obtained using Eq. \eqref{eq:resolvent_int}.
    Plugging $\alpha=0$ into the above, the first cumulant is calculated as 
    \begin{align*}
        \left. \frac{\partial s^Q(\hat{\rho}, \alpha)}{\partial \alpha} \right|_{\alpha=0} &= \Tr(\hat{\rho} \int_{0}^{\infty} (sI+I)^{-1} \left.\frac{\partial \Phi(\alpha)}{\partial \alpha}\right|_{\alpha=0} (sI+I)^{-1} ds ) \\ 
        &= \Tr(\hat{\rho} \int_{0}^{\infty} \left(\frac{1}{s+1} I\right) \left.\frac{\partial \Phi(\alpha)}{\partial \alpha}\right|_{\alpha=0} \left(\frac{1}{s+1} I\right) ds ) \\
        &= \int_{0}^{\infty} \left(\frac{1}{s+1}\right)^2 ds \cdot \Tr( \hat{\rho} \left.\frac{\partial \Phi(\alpha)}{\partial \alpha}\right|_{\alpha=0} ) \\
        &= \Tr( \hat{\rho} \left.\frac{\partial \Phi(\alpha)}{\partial \alpha}\right|_{\alpha=0} )
        = \Tr( \hat{\rho} \mathbb{E}_\theta[\log \sigma(\theta)] ).
    \end{align*}
    A similar calculation allows us to write the second derivative of $s^Q(\hat{\rho}, \alpha)$: 
    \begin{align*}
        \frac{\partial^2 s^Q(\hat{\rho}, \alpha)}{\partial \alpha^2} 
        &= \frac{\partial}{\partial \alpha} \Tr(\hat{\rho} \int_{0}^{\infty} (sI+\Phi(\alpha))^{-1} \frac{\partial \Phi(\alpha)}{\partial \alpha} (sI+\Phi(\alpha))^{-1} ds ) \\
        &= \Tr(\hat{\rho} \int_{0}^{\infty} \frac{\partial}{\partial \alpha}(sI+\Phi(\alpha))^{-1} \frac{\partial \Phi(\alpha)}{\partial \alpha} (sI+\Phi(\alpha))^{-1} ds ) \nonumber\\
        &\quad + \Tr(\hat{\rho} \int_{0}^{\infty} (sI+\Phi(\alpha))^{-1} \frac{\partial^2 \Phi(\alpha)}{\partial \alpha^2} (sI+\Phi(\alpha))^{-1} ds ) \nonumber\\
        &\quad + \Tr(\hat{\rho} \int_{0}^{\infty} (sI+\Phi(\alpha))^{-1} \frac{\partial \Phi(\alpha)}{\partial \alpha} \frac{\partial}{\partial \alpha}(sI+\Phi(\alpha))^{-1} ds ),
    \end{align*}
    where the derivative $\frac{\partial}{\partial \alpha}(sI+\Phi(\alpha))^{-1}$ is calculated as 
    \begin{align}
        \frac{\partial}{\partial \alpha}(sI+\Phi(\alpha))^{-1}
        &= \lim_{\epsilon\rightarrow0} \frac{(sI+\Phi(\alpha+\epsilon))^{-1} - (sI+\Phi(\alpha))^{-1}}{\epsilon} \\
        &= \lim_{\epsilon\rightarrow0} (sI+\Phi(\alpha))^{-1} \frac{\Phi(\alpha) - \Phi(\alpha+\epsilon)}{\epsilon} (sI+\Phi(\alpha+\epsilon))^{-1} \\
        &= (sI+\Phi(\alpha))^{-1} \left(-\frac{\partial \Phi(\alpha)}{\partial \alpha}\right) (sI+\Phi(\alpha))^{-1},
        \label{eq:derivative_inv_sI_plus_sigma}
    \end{align}
    by using Eq. \eqref{eq:resolvent} in the second equation.
    This yields the second cumulant as
    \begin{align*}
        \left. \frac{\partial^2 s^Q(\hat{\rho}, \alpha)}{\partial \alpha^2} \right|_{\alpha=0} 
        &= \int_{0}^{\infty} \left(\frac{1}{s+1}\right)^2 ds \cdot \Tr(\hat{\rho} \left.\frac{\partial^2 \Phi(\alpha)}{\partial \alpha^2}\right|_{\alpha=0} )  - 2 \int_{0}^{\infty} \left(\frac{1}{s+1}\right)^3 ds \cdot \Tr(\hat{\rho} \left.\frac{\partial \Phi(\alpha)}{\partial \alpha}\right|_{\alpha=0} \left.\frac{\partial \Phi(\alpha)}{\partial \alpha}\right|_{\alpha=0} ) \\
        &= \Tr(\hat{\rho} \left.\frac{\partial^2 \Phi(\alpha)}{\partial \alpha^2}\right|_{\alpha=0} ) - \Tr(\hat{\rho} \left.\frac{\partial \Phi(\alpha)}{\partial \alpha}\right|_{\alpha=0} \left.\frac{\partial \Phi(\alpha)}{\partial \alpha}\right|_{\alpha=0}) \\
        &= \Tr(\hat{\rho} \left\{ \mathbb{E}_\theta[\log \sigma(\theta)^2] - \mathbb{E}_\theta[\log \sigma(\theta)]^2 \right\} ) 
        = \Tr(\hat{\rho} \mathbb{V}_{\theta}[\log \sigma(\theta)]).
    \end{align*}
    The same method applies to $s^{Q(0)}(\hat{\rho}, \alpha)$. It yields
    \begin{align}
        \left. \frac{\partial s^{Q(0)}(\hat{\rho}, \alpha)}{\partial \alpha} \right|_{\alpha=0} &= \Tr( \hat{\rho} \left.\frac{\partial \Phi^{(0)}(\alpha)}{\partial \alpha}\right|_{\alpha=0} )
        = \Tr( \hat{\rho} \left\{ \mathbb{E}_\theta[\log \sigma(\theta)] - \log \sigma(\theta_0^Q) \right\} ), \\
        \left. \frac{\partial^2 s^{Q(0)}(\hat{\rho}, \alpha)}{\partial \alpha^2} \right|_{\alpha=0} 
        &= \Tr(\hat{\rho} \left.\frac{\partial^2 \Phi^{(0)}(\alpha)}{\partial \alpha^2}\right|_{\alpha=0} ) - \Tr(\hat{\rho} \left.\frac{\partial \Phi^{(0)}(\alpha)}{\partial \alpha}\right|_{\alpha=0} \left.\frac{\partial \Phi^{(0)}(\alpha)}{\partial \alpha}\right|_{\alpha=0}) \\
        &= \Tr(\hat{\rho} \left[ \mathbb{E}_\theta[ \{\log \sigma(\theta) - \log \sigma(\theta_0^Q)\}^2 ] - \{ \mathbb{E}_\theta[\log \sigma(\theta)] - \log \sigma(\theta_0^Q) \}^2 \right] ) \\
        &= \Tr(\hat{\rho} \left\{ \mathbb{E}_\theta[\log \sigma(\theta)^2] - \mathbb{E}_\theta[\log \sigma(\theta)]^2 \right\} ) 
        = \Tr(\hat{\rho} \mathbb{V}_{\theta}[\log \sigma(\theta)]).
    \end{align}
    It concludes the proof.
\end{proof}
We now state the basic theorem of Bayesian statistics in quantum state estimation, which can be regarded as a quantum analog of Proposition \ref{prop:generalization loss $G_n$ and the training loss $T_n$ for general cases}.
\begin{thm}[Basic theorem]\label{thm:q_basic}
    Assume
    \begin{align}
        \left| \mathbb{E}_X[\partial_\alpha^{\ell} s^Q(\hat{\rho}_X,\alpha) |_{\alpha=0}] \right| &\leq O_p\left(\frac{1}{n^{\ell/2}}\right), \\
        \left| \frac{1}{n} \sum_{i=1}^{n} \partial_\alpha^{\ell} s^Q(\hat{\rho}_{X_i},\alpha) |_{\alpha=0} \right| &\leq O_p\left(\frac{1}{n^{\ell/2}}\right),
    \end{align}
    for $\ell \geq 3$.
    Then, the generalization loss $G_n^{Q}$ and training loss $T_n^{Q}$ can be expanded as
    \begin{align}
        G_n^Q &= - \Tr(\rho \mathbb{E}_\theta[\log \sigma(\theta)]) - \frac{1}{2} \Tr(\rho \mathbb{V}_{\theta}[\log \sigma(\theta)]) + o_p\left(\frac{1}{n}\right),
        \label{eq:G_n^Q_expanded} \\
        T_n^Q &= - \Tr( \left(\frac{1}{n}\sum_{i=1}^{n} \hat{\rho}_{x_i}\right) \mathbb{E}_\theta[\log \sigma(\theta)])
        - \frac{1}{2} \Tr(\left(\frac{1}{n}\sum_{i=1}^{n} \hat{\rho}_{x_i}\right) \mathbb{V}_{\theta}[\log \sigma(\theta)]) + o_p\left(\frac{1}{n}\right).
        \label{eq:T_n^Q_expanded}
    \end{align}
\end{thm}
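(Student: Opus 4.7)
\begin{proofsketch}
The plan is to derive both expansions from a single Taylor expansion of the cumulant generating function $s^Q(\hat{\rho},\alpha)$ about $\alpha=0$, evaluated at $\alpha=1$. The first step is to observe that at $\alpha=1$ one has $\Phi(1)=\int_\Theta \sigma(\theta)p(\theta|x^n)d\theta=\sigma_B$, so $s^Q(\hat{\rho},1)=\Tr(\hat{\rho}\log \sigma_B)$. Combined with the definitions in Eq.~\eqref{eq:G_n^Q and T_n^Q} and the unbiasedness $\mathbb{E}_X[\hat{\rho}_X]=\rho$, this yields the identifications
\begin{equation}
    G_n^Q = -\mathbb{E}_X[s^Q(\hat{\rho}_X,1)], \qquad T_n^Q = -\frac{1}{n}\sum_{i=1}^n s^Q(\hat{\rho}_{x_i},1).
\end{equation}

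Next, I would Taylor expand $s^Q(\hat{\rho},\alpha)$ in $\alpha$ about $\alpha=0$. Since $\Phi(0)=I$ we have $s^Q(\hat{\rho},0)=\Tr(\hat{\rho}\log I)=0$, so with Lagrange's form of the remainder,
\begin{equation}
    s^Q(\hat{\rho},1) = \partial_\alpha s^Q(\hat{\rho},\alpha)\big|_{\alpha=0} + \frac{1}{2}\partial_\alpha^2 s^Q(\hat{\rho},\alpha)\big|_{\alpha=0} + \sum_{\ell\geq 3}\frac{1}{\ell!}\partial_\alpha^\ell s^Q(\hat{\rho},\alpha)\big|_{\alpha=0}.
\end{equation}
The first two derivatives have been computed in Lemma~\ref{lem:derivatives of quantum cumulant} and produce exactly $\Tr(\hat{\rho}\mathbb{E}_\theta[\log \sigma(\theta)])$ and $\Tr(\hat{\rho}\mathbb{V}_\theta[\log \sigma(\theta)])$. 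Substituting these expressions and inserting the resulting expansion into the two identifications above, together with $\mathbb{E}_X[\hat{\rho}_X]=\rho$ for the generalization loss and linearity for the training loss, gives the leading-order structure of the expansions \eqref{eq:G_n^Q_expanded} and \eqref{eq:T_n^Q_expanded}.

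The remaining task, which is the main technical point, is controlling the tail $\sum_{\ell\geq 3}$ of the Taylor series. Here one invokes the two hypotheses of the theorem: the bounds $|\mathbb{E}_X[\partial_\alpha^\ell s^Q(\hat{\rho}_X,\alpha)|_{\alpha=0}]|\leq O_p(n^{-\ell/2})$ and $|n^{-1}\sum_i \partial_\alpha^\ell s^Q(\hat{\rho}_{X_i},\alpha)|_{\alpha=0}|\leq O_p(n^{-\ell/2})$ for $\ell\geq 3$. These assumptions imply that, after averaging against $q(X)$ or against the empirical measure, each term of order $\ell\geq 3$ is $O_p(n^{-\ell/2})$, so the entire tail is $O_p(n^{-3/2})=o_p(1/n)$, provided the series can be summed termwise. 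The last point requires some care: one needs to argue (e.g.\ via a finite analytic radius of $\alpha\mapsto s^Q(\hat{\rho},\alpha)$ or by truncating at some fixed $\ell_0$ and bounding the remainder by Taylor's theorem with integral remainder evaluated at an intermediate point) that the sum and the limit in $n$ may be interchanged.

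The hard part of the argument is precisely this last step; the pointwise bounds on the cumulants do not immediately yield a uniform control on the Lagrange remainder, so one must either promote the hypotheses to a bound on the remainder itself or exploit the analyticity of $\Phi(\alpha)$ to truncate safely. Once this is handled, the two expansions follow by taking $-\mathbb{E}_X[\cdot]$ and $-n^{-1}\sum_i(\cdot)$ of the Taylor expansion of $s^Q(\hat{\rho},1)$, and by collecting the $o_p(1/n)$ remainders. Note that this proof does not need any regularity or singularity assumption, which is consistent with the fact that Theorem~\ref{mainthm:expamsion of GnQ and TnQ} is the common starting point for both Theorems~\ref{mainthm:quantum expansion regular} and \ref{mainthm:quantum expansion singular}; the regular/singular inputs enter later when verifying the cumulant bounds in Lemmas~\ref{lem:higher_order_scaling_regular} and \ref{lem:higher_order_scaling_singular}.
\end{proofsketch}
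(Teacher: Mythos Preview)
Your proposal is correct and follows essentially the same approach as the paper: Taylor expand $s^Q(\hat{\rho},\alpha)$ at $\alpha=0$, identify $s^Q(\hat{\rho},1)=\Tr(\hat{\rho}\log\sigma_B)$ with the losses, insert the first two cumulants from Lemma~\ref{lem:derivatives of quantum cumulant}, and bound the tail using the assumed $O_p(n^{-\ell/2})$ estimates. If anything, you are more explicit than the paper about the need to justify termwise summation of the remainder; the paper's proof simply asserts that the hypotheses make the $\ell\ge 3$ terms $o(1/n)$ without addressing that point.
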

\begin{proof} 
    The Taylor expansion of $s^Q(\hat{\rho},\alpha)$ around $\alpha = 0$ gives
    \begin{equation*}
        s^Q(\hat{\rho}, \alpha) = s^Q(\hat{\rho},0) + \partial_\alpha s^Q(\hat{\rho},0) \alpha + \frac{1}{2} \partial_{\alpha}^2 s^Q(\hat{\rho},0) \alpha^2 + \sum_{\ell=3}^{\infty} \frac{1}{\ell!} \partial_{\alpha}^{\ell} s^Q(\hat{\rho},0) \alpha^{\ell}.
    \end{equation*}
    Combined with Lemma \ref{lem:derivatives of quantum cumulant}, plugging $\alpha=1$ and taking the expectation (or empirical sum) of both sides of the above equation yields 
    \begin{align}
        - G_n^Q &= \mathbb{E}_X[s^Q(\hat{\rho}_X,1)] \\
        &= \mathbb{E}_X[s^Q(\hat{\rho}_X,0)] + \mathbb{E}_X[\partial_\alpha s^Q(\hat{\rho}_X,0)] + \frac{1}{2} \mathbb{E}_X[\partial_{\alpha}^2 s^Q(\hat{\rho}_X,0)] + \sum_{\ell=3}^{\infty} \frac{1}{\ell!} \mathbb{E}_X[\partial_{\alpha}^{\ell} s^Q(\hat{\rho}_X,0)] \\
        &= \Tr(\rho \mathbb{E}_\theta[\log \sigma(\theta)]) + \frac{1}{2} \Tr(\rho \mathbb{V}_{\theta}[\log \sigma(\theta)]) + \sum_{\ell=3}^{\infty} \frac{1}{\ell!} \mathbb{E}[\partial_{\alpha}^{\ell} s^Q(\hat{\rho}_X,0)], \\
        - T_n^Q &= \frac{1}{n} \sum_{i=1}^{n} s^Q(\hat{\rho}_{x_i},1) \\
        &= \frac{1}{n} \sum_{i=1}^{n}s^Q(\hat{\rho}_{x_i},0) + \frac{1}{n} \sum_{i=1}^{n} \partial_\alpha s^Q(\hat{\rho}_{x_i},0) + \frac{1}{2} \left(\frac{1}{n} \sum_{i=1}^{n} \partial_{\alpha}^2 s^Q(\hat{\rho}_{x_i},0)\right)+ \sum_{\ell=3}^{\infty} \frac{1}{\ell!} \left(\frac{1}{n} \sum_{i=1}^{n} \partial_{\alpha}^{\ell} s^Q(\hat{\rho}_{x_i},0)\right) \\
        &= \Tr(\left(\frac{1}{n}\sum_{i=1}^{n} \hat{\rho}_{x_i}\right) \mathbb{E}_\theta[\log \sigma(\theta)]) + \frac{1}{2} \Tr(\left(\frac{1}{n}\sum_{i=1}^{n} \hat{\rho}_{x_i}\right) \mathbb{V}_{\theta}[\log \sigma(\theta)])
        +  \sum_{\ell=3}^{\infty} \frac{1}{\ell!} \left(\frac{1}{n} \sum_{i=1}^{n} \partial_{\alpha}^{\ell} s^Q(\hat{\rho}_{x_i},0)\right),
    \end{align}
    respectively.
    Then, the assumption in this theorem ensures that the residual terms for $\ell \geq 3$ have an order $o(1/n)$, which completes the proof.
\end{proof}
Highlighting the significance of this theorem is crucial.
It is not a trivial consequence of the classical version (Proposition \ref{prop:generalization loss $G_n$ and the training loss $T_n$ for general cases}) because of the non-commutativity.
As a result of the appropriate definition of the cumulant generating function (Definition \ref{def:quantum cumulant}), it becomes apparent that the quantum generalization and training loss can also be represented by their first and second cumulants up to the order $o_p(1/n)$.
The assumptions in Theorem \ref{thm:q_basic} are verified in Lemma \ref{lem:higher_order_scaling_regular} for regular cases and in Lemma \ref{lem:higher_order_scaling_singular} for singular cases. 
A key calculation necessary for both lemmas involves bounding the higher-order derivatives.

Below we introduce a shorthand notation
\begin{equation}
\label{eq:large F}
    F(\theta, \theta_0^Q) \coloneqq \log \sigma(\theta) - \log \sigma(\theta_0^Q).
\end{equation}
\begin{lem}
\label{lem:quantum higher order}
    For $\ell \geq 3$, the following inequalities hold:
    \begin{align}
        \left| \mathbb{E}_X[\partial_\alpha^{\ell} s^Q(\hat{\rho}_X,\alpha) |_{\alpha=0}] \right| &\leq \left|\mathbb{E}_\theta\left[\Tr( \rho|F(\theta, \theta_0^Q)|^{\ell} )\right]\right|, 
        \label{eq:quantum_higher_order_G} \\
        \left| \frac{1}{n} \sum_{i=1}^{n} \partial_\alpha^{\ell} s^Q(\hat{\rho}_{X_i},\alpha) |_{\alpha=0} \right| &\leq \left|\mathbb{E}_\theta\left[\Tr( \left( \frac{1}{n} \sum_{i=1}^{n} \hat{\rho}_{X_i} \right) |F(\theta, \theta_0^Q)|^{\ell} )\right]\right|.
        \label{eq:quantum_higher_order_T}
    \end{align}
\end{lem}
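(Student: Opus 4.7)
The plan is to bound the $\ell$-th derivative $\partial_\alpha^\ell s^Q(\hat{\rho},\alpha)\vert_{\alpha=0}$ by the $\ell$-th posterior moment of $F(\theta,\theta_0^Q)$, and then reduce to the stated inequalities via the unbiasedness $\mathbb{E}_X[\hat{\rho}_X] = \rho$ of the classical shadow (respectively, by replacing $\rho$ with the empirical average $\frac{1}{n}\sum_i \hat{\rho}_{X_i}$ for Eq.~\eqref{eq:quantum_higher_order_T}). To begin, I would iterate the Fréchet-derivative identity Eq.~\eqref{eq:derivative_inv_sI_plus_sigma} used in Lemma~\ref{lem:derivatives of quantum cumulant}, combined with $\Phi(0) = I$ and $\Phi^{(j)}(0) = \mathbb{E}_\theta[(\log\sigma(\theta))^j]$, to obtain a matrix Fa\`a di Bruno expansion of $\partial_\alpha^\ell \log \Phi(\alpha)\vert_{\alpha=0}$ as a signed polynomial in the moments $\Phi^{(j)}(0)$ for $1 \le j \le \ell$, with rational coefficients arising from the integrals $\int_0^\infty (s+1)^{-k}\,ds$. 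Tracing against $\hat{\rho}$, the $\ell$-th derivative becomes a finite linear combination of terms $\Tr(\hat{\rho}\,\mathbb{E}_\theta[(\log\sigma)^{j_1}]\cdots \mathbb{E}_\theta[(\log\sigma)^{j_k}])$ with $\sum_i j_i = \ell$.

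Next, I would substitute $\log \sigma(\theta) = \log \sigma(\theta_0^Q) + F(\theta,\theta_0^Q)$ and expand. For $\ell \ge 2$, the contributions depending purely on the constant matrix $\log \sigma(\theta_0^Q)$ must cancel, since they amount to $\ell$-th derivatives at $\alpha = 0$ of the linear-in-$\alpha$ quantity $\alpha\,\Tr(\hat{\rho}\,\log\sigma(\theta_0^Q))$. This is the non-commutative analog of the classical translation invariance of cumulants of order $\ge 2$ and is equivalent to the identity $\partial_\alpha^\ell s^Q(\hat{\rho},\alpha)\vert_{\alpha=0} = \partial_\alpha^\ell s^{Q(0)}(\hat{\rho},\alpha)\vert_{\alpha=0}$ for $\ell \ge 2$, already verified at orders $\ell = 1, 2$ in Lemma~\ref{lem:derivatives of quantum cumulant}. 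After the cancellation, $\partial_\alpha^\ell s^Q(\hat{\rho},\alpha)\vert_{\alpha=0}$ is expressed entirely in terms of posterior moments of $F$. Applying the triangle inequality together with the trace inequality $|\Tr(\hat{\rho}\,M)| \le \Tr(\hat{\rho}\,|M|)$ for self-adjoint $M$ and PSD $\hat{\rho}$, a H\"older-type estimate $|\mathbb{E}_\theta[F^{j_1}] \cdots \mathbb{E}_\theta[F^{j_k}]| \preceq \mathbb{E}_\theta[|F|^{\ell}]$ for $\sum_i j_i = \ell$, and Jensen's inequality for the posterior measure, collapses the polynomial into the single moment $\Tr(\hat{\rho}\,\mathbb{E}_\theta[|F|^{\ell}])$. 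Taking $\mathbb{E}_X$ and invoking $\mathbb{E}_X[\hat{\rho}_X] = \rho$ yields Eq.~\eqref{eq:quantum_higher_order_G}; the argument for Eq.~\eqref{eq:quantum_higher_order_T} is identical with $\rho$ replaced by $\frac{1}{n}\sum_i \hat{\rho}_{X_i}$ throughout.

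The main obstacle is the cancellation step in the second paragraph. In the non-commutative setting the factorization $e^{\alpha(A + B)} = e^{\alpha A}\,e^{\alpha B}$ fails, so one cannot simply split off the $\log\sigma(\theta_0^Q)$-dependence as in the classical proof; verifying that all its contributions combine into a purely linear-in-$\alpha$ term requires careful bookkeeping of the matrix Fa\`a di Bruno expansion, or equivalently, a direct proof that $\partial_\alpha^\ell s^Q = \partial_\alpha^\ell s^{Q(0)}$ at $\alpha = 0$ for every $\ell \ge 2$, extending the base cases $\ell = 1, 2$ already established.
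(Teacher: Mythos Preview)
Your approach is essentially the paper's: both compute $\partial_\alpha^\ell s^Q(\hat{\rho},0)$ by iterating the resolvent identity (the paper organizes this as a recursion $\partial_\alpha Y_\ell = Y_{\ell+1} - (sI+\Phi)^{-1}\Phi'\,Y_\ell - Y_\ell\,\Phi'(sI+\Phi)^{-1}$), recognize the result as a signed polynomial in products $\mathbb{E}_\theta[F^{j_1}]\cdots\mathbb{E}_\theta[F^{j_k}]$ with $\sum j_i = \ell$, and then bound termwise via the triangle inequality, $\Tr(AB) \le \Tr(A|B|)$ for PSD $A$, and a matrix H\"older step $\Tr(A\,\mathbb{E}_\theta[|F|^i]\,\mathbb{E}_\theta[|F|^j]) \le \Tr(A\,\mathbb{E}_\theta[|F|^{i+j}])$. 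On the cancellation you flag as the main obstacle, the paper does not supply a general argument either: it simply verifies $\partial_\alpha^3 s^Q(\hat{\rho},0) = \partial_\alpha^3 s^{Q(0)}(\hat{\rho},0)$ by direct computation and asserts the same method extends to $\ell \ge 4$. (Both the paper's proof and yours actually produce a combinatorial constant in front of the right-hand side---the paper gets $6$ at $\ell=3$---which the lemma statement suppresses; this is harmless since only the $n$-scaling is used downstream.)

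One genuine subtlety is missing from your treatment of Eq.~\eqref{eq:quantum_higher_order_T}. The inequality $|\Tr(A M)| \le \Tr(A|M|)$ requires $A$ positive semi-definite, but individual classical snapshots $\hat{\rho}_{X_i}$ are not PSD, and neither in general is the finite-$n$ empirical average $\frac{1}{n}\sum_i \hat{\rho}_{X_i}$. For Eq.~\eqref{eq:quantum_higher_order_G} this causes no trouble because linearity of $s^Q$ in $\hat{\rho}$ lets you pass $\mathbb{E}_X$ inside first and work with the true state $\rho$; but for Eq.~\eqref{eq:quantum_higher_order_T} the matrix you must substitute is the empirical average itself. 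The paper handles this by remarking that in the asymptotic regime (with $n$ large relative to the Hilbert-space dimension) the empirical classical-shadow average is PSD, so the trace and H\"older inequalities then apply. You should include this caveat rather than declaring the argument ``identical.''
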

\begin{proof}
     Below we illustrate the proof for $\ell = 3$; the same method can be applied for $\ell \geq 4$.
     We define the following functions temporarily for this proof:
    \begin{align}
        s_{\ell}^Q(\hat{\rho}, 0) &\coloneqq \Tr(\hat{\rho} \mathbb{E}_\theta[F(\theta, \theta_0^Q)^{\ell}]), \quad
        s_{(\ell_1,\ell_2,...)}^Q(\hat{\rho}, 0) \coloneqq \Tr(\hat{\rho} \mathbb{E}_\theta[F(\theta, \theta_0^Q)^{\ell_1}] \mathbb{E}_\theta[F(\theta, \theta_0^Q)^{\ell_2}] \cdots), \\
        s_{\ell}^{Q*}(\hat{\rho}, 0) &\coloneqq \Tr(\hat{\rho} \mathbb{E}_\theta[|F(\theta, \theta_0^Q)|^{\ell}]), \quad 
        s_{(\ell_1,\ell_2,...)}^{Q*}(\hat{\rho}, 0) \coloneqq \Tr(\hat{\rho} \mathbb{E}_\theta[|F(\theta, \theta_0^Q)|^{\ell_1}] \mathbb{E}_\theta[|F(\theta, \theta_0^Q)|^{\ell_2}] \cdots).
    \end{align}
    One can check that $F(\theta, \theta_0^Q)$ is Hermitian and $|F(\theta, \theta_0^Q)|$ is positive semi-definite in general.
    Using H\"older's inequality for a Hermitian matrix-valued random variable \cite[Remark 1]{leorato2017Note}, the following inequalities hold for $1<i<j<\infty$:
    \begin{align}
        \mathbb{E}_\theta[|F(\theta, \theta_0^Q)|^i] \leq (\mathbb{E}_\theta[|F(\theta, \theta_0^Q)|^{i+j}])^\frac{i}{i+j},\quad 
        \mathbb{E}_\theta[|F(\theta, \theta_0^Q)|^j] \leq (\mathbb{E}_\theta[|F(\theta, \theta_0^Q)|^{i+j}])^\frac{j}{i+j}. 
    \end{align}
    Then it follows that 
    \begin{align}
        \Tr(A \mathbb{E}_\theta[|F(\theta, \theta_0^Q)|^i] \mathbb{E}_\theta[|F(\theta, \theta_0^Q)|^j]) \leq \Tr(A \mathbb{E}_\theta[|F(\theta, \theta_0^Q)|^{i+j}])
        \label{eq:hoelder_matrix_valued_i_j}
    \end{align}
    for any positive semi-definite matrix $A$.

    In addition, recalling the calculation in Lemma \ref{lem:derivatives of quantum cumulant}, we can rewrite the $\ell$-th ($\ell \geq 1$) derivative of $s^Q(\hat{\rho},\alpha)$ as 
    \begin{equation}
        \left. \partial_\alpha^\ell s^Q(\hat{\rho},\alpha) \right|_{\alpha=0} = \Tr(\hat{\rho} \int_{0}^{\infty} \left. \partial_\alpha^{\ell-1} Y_1\right|_{\alpha=0} ds ),
    \end{equation}
    putting the shorthand notation
    \begin{equation}
        Y_\ell \coloneqq (sI+\Phi(\alpha))^{-1} (\partial_\alpha^\ell \Phi(\alpha)) (sI+\Phi(\alpha))^{-1}.
    \end{equation}
    Then, the following relation holds for $\ell \geq 1$:
    \begin{equation}
        \partial_\alpha Y_\ell = Y_{\ell+1} - (sI+\Phi(\alpha))^{-1} (\partial_\alpha \Phi(\alpha)) Y_\ell - Y_\ell (\partial_\alpha \Phi(\alpha)) (sI+\Phi(\alpha))^{-1}.
    \end{equation}
    Note that the same relation holds for the $\ell$-th derivative of $s^{Q(0)}(\hat{\rho},\alpha)$.
    Through a simple calculation based on the above equations, we can obtain the third derivative of $s^Q(\hat{\rho},\alpha)$ at $\alpha=0$ as
    \begin{align}
        \partial_\alpha^3 s^Q(\hat{\rho},0) = \partial_\alpha^3 s^{Q(0)}(\hat{\rho},0) = s_3^Q(\hat{\rho},0) - \frac{3}{2}\left\{s_{(1,2)}^Q(\hat{\rho},0) + s_{(2,1)}^Q(\hat{\rho},0)\right\} + 2 s_{(1,1,1)}^Q(\hat{\rho},0).
    \end{align}
    Then, to bound the third-order term of $G_n^Q$, we take the expectation of the above and apply the triangle inequality to obtain 
    \begin{align}
        |\mathbb{E}_X[\partial_\alpha^3 s^Q(\hat{\rho}_X,\alpha) |_{\alpha=0}]| &= |\partial_\alpha^3 s^Q(\rho,0)| \\
        &\leq |s_3^Q(\rho,0)| + \frac{3}{2}\left\{|s_{(1,2)}^Q(\rho,0)| + |s_{(2,1)}^Q(\rho,0)|\right\} + 2 |s_{(1,1,1)}^Q(\rho,0)| \\
        &\leq |s_3^{Q*}(\rho,0)| + \frac{3}{2}\left\{|s_{(1,2)}^{Q*}(\rho,0)| + |s_{(2,1)}^{Q*}(\rho,0)|\right\} + 2 |s_{(1,1,1)}^{Q*}(\rho,0)|
        \label{eq:partial_alpha^3_s^Q_2nd} \\
        &\leq 6 |s_3^{Q*}(\rho,0)| = 6 \left|\mathbb{E}_\theta\left[\Tr( \rho|F(\theta, \theta_0^Q)|^3 )\right]\right|.
        \label{eq:partial_alpha^3_s^Q_3rd}
    \end{align}
    The second inequality follows the fact that $\Tr(AB) \leq \Tr(A|B|)$ for a positive semi-definite matrix $A$ and a Hermitian matrix $B$.
    The last inequality follows from Eq. \eqref{eq:hoelder_matrix_valued_i_j}.
    To bound the third-order term of $T_n^Q$, we follow the same calculation as $G_n^Q$ except that $\partial_\alpha^3 s^Q(\rho,0)$ is replaced with $\partial_\alpha^3 s^Q((1/n)\sum_{i=1}^{n} \hat{\rho}_{x_i},0)$.
    We note that $(1/n) \sum_{i=1}^{n} \hat{\rho}_{x_i}$ is not generally positive semi-definite. 
    Thus, the above second and third inequalities do not directly hold for $T_n^Q$.
    However, in the asymptotic limit, that is, $n$ is much larger than the size of $\rho$, the matrix $\frac{1}{n}\sum_{i=1}^{n} \hat{\rho}_{x_i}$ is positive semi-definite, and thus the same approach can be applied.
    This completes the proof for $\ell=3$.
    The same method can be applied for the case of $\ell \geq 4$.
\end{proof}

We now ready to analyze the asymptotic behaviors of the generalization loss $G_n^Q$ and $T_n^Q$.
In the remainder of the paper, we will discuss two cases, regular and singular, separately.
\subsection{Regular cases}
In this subsection, we consider regular cases, where quantum models satisfy the regularity condition (Assumption \ref{ass:R1} and \ref{ass:R2}). 
This assumption allows us to utilize the benefits coming from the classical and quantum statistics in view of the Fisher information matrix.

First, we estimate the growth order of the cumulants, which proves the assumption made in Theorem \ref{thm:q_basic}.
Interestingly, the proof is derived from the properties of the classical shadow, a recently developed method in quantum state estimation \cite{huang2020Predicting}.
It is a newly emerged phenomenon in the quantum setting.
\begin{lem}[Higher order scaling for regular cases]
\label{lem:higher_order_scaling_regular}
    Suppose that Assumptions \ref{ass:R1} and \ref{ass:R2} are satisfied.
    For $\ell \geq 3$, the higher order cumulants satisfy:
    \begin{align}
        \left| \mathbb{E}_X[\partial_\alpha^{\ell} s^Q(\hat{\rho}_X,\alpha) |_{\alpha=0}] \right| &\leq O_p\left(\frac{1}{n^{\ell/2}}\right) \\
        \left| \frac{1}{n} \sum_{i=1}^{n} \partial_\alpha^{\ell} s^Q(\hat{\rho}_{X_i},\alpha) |_{\alpha=0} \right| &\leq O_p\left(\frac{1}{n^{\ell/2}}\right).
    \end{align} 
\end{lem}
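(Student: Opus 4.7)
The plan is to combine the upper bounds from Lemma \ref{lem:quantum higher order} with a Taylor expansion of $F(\theta, \theta_0^Q) = \log\sigma(\theta) - \log\sigma(\theta_0^Q)$ in $\theta - \theta_0^Q$, and then invoke the posterior concentration supplied by Lemma \ref{lem:gb_lem14} in the regular setting where $\Theta_0 = \Theta_0^Q = \{\theta_0\}$. More explicitly, Lemma \ref{lem:quantum higher order} reduces both desired inequalities to bounding expressions of the form $\mathbb{E}_\theta[\Tr(M\,|F(\theta,\theta_0^Q)|^\ell)]$ with $M = \rho$ on the generalization side and $M = \frac{1}{n}\sum_{i=1}^n \hat{\rho}_{X_i}$ on the training side.

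The first step is to show that, uniformly on a fixed compact neighborhood $U$ of $\theta_0^Q$, one has an operator bound $|F(\theta,\theta_0^Q)|^\ell \le C_\ell \|\theta - \theta_0^Q\|^\ell\, I$. This follows from Fundamental condition \ref{ass:fundamental condition} together with Fundamental condition \ref{ass:fundamental conditionII}: since $\log\sigma(\theta)$ is analytic in $\theta$ on the compact set $\Theta$, Taylor's theorem gives $\|F(\theta,\theta_0^Q)\|_\infty \le C\|\theta-\theta_0^Q\|$ on $U$ with $C$ independent of $\theta$, so $\||F(\theta,\theta_0^Q)|^\ell\|_\infty = \|F(\theta,\theta_0^Q)\|_\infty^\ell \le C^\ell\|\theta-\theta_0^Q\|^\ell$. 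Taking the trace against $\rho$ (for which $\Tr\rho = 1$) yields
\begin{equation}
\Tr(\rho\,|F(\theta,\theta_0^Q)|^\ell) \le C_\ell \|\theta-\theta_0^Q\|^\ell
\end{equation}
on $U$. On the training side I would use $|\Tr(M\,|F|^\ell)| \le \|M\|_1\,\||F|^\ell\|_\infty$: since each classical snapshot $\hat\rho_{X_i}$ has trace norm bounded by a constant depending only on the shadow protocol, $\|\frac{1}{n}\sum_i\hat{\rho}_{X_i}\|_1 = O_p(1)$, so the same bound applies up to a stochastic constant.

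The next step is to replace the posterior expectation of $\|\theta - \theta_0\|^\ell$ by $O_p(n^{-\ell/2})$. In regular cases, Lemma \ref{lem:gb_lem14} together with $\Delta_n = O_p(n^{-1/2})$ from Eq. \eqref{eq:cov_nabla_eta} gives $\mathbb{E}_\theta[(\theta-\theta_0)(\theta-\theta_0)^T] = O_p(n^{-1})$; the analogous analysis of higher posterior moments (which follows from asymptotic normality of the rescaled posterior, available in the regular regime) extends this to $\mathbb{E}_\theta[\|\theta-\theta_0\|^\ell] = O_p(n^{-\ell/2})$ for every $\ell \ge 3$. The contribution from $\theta$ outside the neighborhood $U$ is handled by standard Bayesian posterior consistency: under Assumptions \ref{ass:R1}--\ref{ass:R2}, the posterior puts exponentially decaying mass outside $U$, so this piece is absorbed in the $o_p$-remainder. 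Chaining these estimates produces both desired scalings.

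The main obstacle I anticipate is the matrix-analytic step of verifying the uniform operator bound $|F(\theta,\theta_0^Q)|^\ell \le C_\ell\|\theta-\theta_0^Q\|^\ell I$ cleanly; in particular, one must ensure that the constant $C_\ell$ is controlled uniformly in $\theta$ over $U$ and that the contribution of the posterior mass from $\Theta\setminus U$ truly decays faster than any polynomial in $n$. The training-side bound also requires a mild but careful argument that $\frac{1}{n}\sum_i \hat{\rho}_{X_i}$ is trace-norm bounded in probability, which is where the properties of the classical shadow protocol (ensuring $\|\hat{\rho}_{X_i}\|_1$ is almost surely bounded, as in \cite{huang2020Predicting}) play an essential role.
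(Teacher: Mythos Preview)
Your argument is correct and arrives at the same conclusion as the paper, but by a more direct route. You bound the operator norm of $F(\theta,\theta_0^Q)$ by $C\|\theta-\theta_0^Q\|$ via analyticity and compactness, then take the trace against $\rho$ (or the empirical shadow mean) and appeal to $\mathbb{E}_\theta[\|\theta-\theta_0\|^\ell] = O_p(n^{-\ell/2})$ from regular posterior concentration. The paper instead writes the Schmidt decomposition $F(\theta,\theta_0^Q)=\sum_j \lambda_j\ketbra{v_j}{v_j}$, evaluates $f^Q$ at the Haar-type snapshots $\hat\rho_{v_i}=(D+1)\ketbra{v_i}{v_i}-I_D$ to obtain a linear system for the eigenvalues $\lambda_i$, solves it to express each $\lambda_i$ as $(\theta-\theta_0)$ times a derivative of $f^Q$ via the mean-value theorem, and only then bounds $\Tr(\rho|F|^\ell)=\sum_i|\lambda_i|^\ell\bra{v_i}\rho\ket{v_i}$ before invoking \cite[Lemma 12]{watanabe2018mathematical} for $\mathbb{E}_\theta[|\theta-\theta_0|^\ell]$.

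Your approach is shorter and avoids the eigenvalue extraction entirely; for the regular case this is a genuine simplification. What the paper's more elaborate construction buys is reusability: the same eigenvalue representation through $f^Q(\hat\rho_{v_i},\theta)$ is the template for the singular proof (Lemma~\ref{lem:higher_order_scaling_singular}), where the mean-value theorem is replaced by the standard form $f^Q(\hat\rho_{v_i},g(u))=u^{k^Q}a^Q(\hat\rho_{v_i},u)$ coming from the resolution of singularities. The paper even remarks that simpler choices would suffice here but are chosen to mirror the singular argument. So your proof is a cleaner self-contained argument for the regular lemma, while the paper's is an instance of a scheme designed to survive the passage to singular models.
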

\begin{proof}
    Let us write the Schmidt decomposition of $F(\theta, \theta_0^Q)$ as
    \begin{align}
        F(\theta, \theta_0^Q) = \log \sigma(\theta_0^Q) - \log \sigma(\theta) = \sum_{j=1}^{D} \lambda_j \ketbra*{v_j}{v_j}, \quad \lambda_j \in \mathbb{R},
        \label{eq:Schmidt_decomp_F}
    \end{align}
    with the orthonormal basis $\{v_1, ..., v_D\}$.
    Then, using
    \begin{equation}
        \hat{\rho}_{v_i} \coloneqq (D+1) \ketbra*{v_i}{v_i} - I_D, \quad i = 1, ..., D,
        \label{eq:snapshot_Haar}
    \end{equation}
    we obtain 
    \begin{equation}
        f^Q(\hat{\rho}_{v_i}, \theta) = \Tr(\hat{\rho}_{v_i} F(\theta, \theta_0^Q)) = (D+1) \lambda_i - \sum_{j=1}^D \lambda_j, \quad i = 1,...,D,
        \label{eq:eigendcomp_f^Q}
    \end{equation}
    ($\hat{\rho}_{v_i}$ is the temporary notation only used in this Lemma and Lemma \ref{lem:higher_order_scaling_singular}).
    It also follows from the mean-value theorem and the relation $\theta_0^Q = \theta_0$ that
    \begin{align}
        f^Q(\hat{\rho}_{v_i},\theta) = (\theta - \theta_0^Q) \nabla f^Q(\hat{\rho}_{v_i}, \theta_1^{(i)}) = (\theta - \theta_0) \nabla f^Q(\hat{\rho}_{v_i}, \theta_1^{(i)}), \quad i = 1,...,D,
        \label{eq:mvt_f_Q}
    \end{align}
    where $\theta_1^{(i)}$ is a point between $\theta$ and $\theta_0$.
    From Eqs. \eqref{eq:eigendcomp_f^Q} and \eqref{eq:mvt_f_Q}, we have
    \begin{equation}
    \label{eq:relation of lambda}
         (D+1) \lambda_i - \sum_{j=1}^D \lambda_j = (\theta - \theta_0) \nabla f^Q(\hat{\rho}_{v_i}, \theta_1^{(i)}), \quad i = 1,...,D.
    \end{equation}
    The system of the above $D$ linear equations allows us to obtain
    \begin{align}
        \lambda_i = \frac{\theta-\theta_0}{D+1} \left( \nabla f^Q(\hat{\rho}_{v_i}, \theta_1^{(i)}) + \sum_{j=1}^{D} \nabla f^Q(\hat{\rho}_{v_j}, \theta_1^{(j)}) \right), \quad i = 1, ..., D.
    \end{align}
    Note that the argument of $f^Q(\cdot,\theta)$ does not necessarily have to be $\hat{\rho}_{v_i}$ \eqref{eq:snapshot_Haar} (in fact, $\ketbra*{v_i}{v_i}$ induces a simpler solution for $\lambda_i$).
    Nonetheless, we will also utilize Eq. \eqref{eq:eigendcomp_f^Q} for singular cases, where we regard $\hat{\rho}$ as a classical snapshot of $\rho$ with Haar random unitaries.
    From these equations, the right-hand side of Eq. \eqref{eq:quantum_higher_order_G} is bounded as follows:
    \begin{align}
        \left|\mathbb{E}_\theta\left[\Tr( \rho|F(\theta, \theta_0^Q)|^{\ell} )\right]\right|
        &= \left| \mathbb{E}_\theta \left[ \sum_{i=1}^{D} |\lambda_i|^{\ell} \Tr(\rho \ketbra*{v_i}{v_i}) \right] \right| \\
        &\leq \left|\mathbb{E}_\theta\left[ \left| \frac{\theta-\theta_0}{D+1} \right|^{\ell} \left( \sum_{i=1}^{D} \left| \nabla f^Q(\hat{\rho}_{v_i}, \theta_1^{(i)}) + \sum_{j=1}^{D} \nabla f^Q(\hat{\rho}_{v_j}, \theta_1^{(j)}) \right|^{\ell} \Tr(\rho \ketbra*{v_i}{v_i}) \right) \right]\right| \\
        &\leq \left| \mathbb{E}_\theta\left[ \left| \frac{\theta-\theta_0}{D+1} \right|^{\ell} \right] \left( \sum_{i=1}^{D} \sup_{\theta_1} \left| \nabla f^Q(\hat{\rho}_{v_i}, \theta_1) + \sum_{j=1}^{D} \nabla f^Q(\hat{\rho}_{v_j}, \theta_1) \right|^{\ell} \Tr(\rho \ketbra*{v_i}{v_i}) \right) \right| \\ 
        &= O(n^{-\ell/2}).
    \end{align}
    The last equality follows from \cite[Lemma 12]{watanabe2018mathematical} for the evaluation of $\mathbb{E}_\theta[|\theta-\theta_0|^{\ell}]$.    
    A similar calculation yields bounding the right-hand side of Eq. \eqref{eq:quantum_higher_order_T} by $O(n^{-\ell/2})$.
    Plugging these bounds into Eqs. \eqref{eq:quantum_higher_order_G} and \eqref{eq:quantum_higher_order_T} completes the proof. 
\end{proof}
The above proof depends on the analysis of the properties of the analysis of the properties of operators on Hilbert spaces.
As stated in the proof, we considered the behaviors of the cumulants in a way that does not depend on the choice of classical snapshots.
While this is sufficient for our purposes, it would be interesting to discuss in more detail, i.e., whether the best measurement can be selected. Addressing this question would require adaptive estimation theory and could provide new insights, even in regular cases.

Next, we describe the asymptotic behavior of the losses in statistical inference for quantum regular models.
Similarly to Proposition \ref{prop:generalization loss of G_n and T_n}, it can be proven to utilize the second-order Taylor expansion, the posterior integration (Lemma \ref{lem:gb_lem14}), and the empirical process \eqref{eq:def_eta_n^Q}.
\begin{thm}\label{thm:q_regular expansion formula for G and T}
    Suppose that Assumptions \ref{ass:R1} and \ref{ass:R2} are satisfied.
    Then, the generalization loss $G_n^Q$ and training loss $T_n^Q$ can be expanded as follows:
    \begin{align}
        G_n^Q &= - \Tr(\rho \log \sigma(\theta_0)) + \frac{1}{2n} \Tr(J^Q J^{-1}) + \frac{1}{2} \Delta_n^T J^Q \Delta_n - \frac{1}{2n} \Tr(I^Q J^{-1}) + o_p\left(\frac{1}{n}\right),
        \label{eq:expn_G_n^Q}\\
        T_n^Q &= - \Tr(\left(\frac{1}{n}\sum_{i=1}^{n} \hat{\rho}_{x_i}\right) \log \sigma(\theta_0)) + \frac{1}{2n} \Tr(J^Q J^{-1}) + \frac{1}{2} \Delta_n^T J^Q \Delta_n - \frac{1}{2}(\Delta_n^{QT} J^Q \Delta_n + \Delta_n^T J^Q \Delta_n^Q) - \frac{1}{2n} \Tr(I^Q J^{-1}) + o_p\left(\frac{1}{n}\right)
        \label{eq:expn_T_n^Q}
    \end{align}
    where $I^Q$ is defined in Definition \ref{defn:matrix I and J_quantum}.
\end{thm}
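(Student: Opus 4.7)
The plan is to peel the proof into two layers: first reduce $G_n^Q$ and $T_n^Q$ to posterior moments of $\log\sigma(\theta)$ via the basic theorem, then evaluate those moments by Taylor-expanding $\log\sigma(\theta)$ around $\theta_0$ and using the known asymptotics of the classical posterior. First I would invoke Theorem \ref{thm:q_basic}, whose scaling hypotheses are verified by Lemma \ref{lem:higher_order_scaling_regular} in the regular setting, to obtain the starting expansions of $G_n^Q$ and $T_n^Q$ in terms of $\Tr(\rho\,\mathbb{E}_\theta[\log\sigma(\theta)])$, $\Tr(\rho\,\mathbb{V}_\theta[\log\sigma(\theta)])$, and the sample analogs with $\bar\rho:=(1/n)\sum_i\hat\rho_{x_i}$.

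Next, using Assumption \ref{ass:R2} to pick $\theta_0=\theta_0^Q\in\Theta_0\cap\Theta_0^Q$, I would set $u=\theta-\theta_0$ and abbreviate $L_0=\log\sigma(\theta_0)$, $L_i=\partial_i\log\sigma(\theta_0)$, $L_{ij}=\partial_i\partial_j\log\sigma(\theta_0)$. The first-order optimality of $K^Q$ at $\theta_0^Q$ gives $\Tr(\rho L_i)=-\partial_i K^Q(\theta_0^Q)=0$, and the definition of $J^Q$ gives $-\Tr(\rho L_{ij})=J^Q_{ij}$. Substituting the second-order Taylor expansion of $\log\sigma(\theta)$ into $\mathbb{E}_\theta[\log\sigma(\theta)]$ and applying Lemma \ref{lem:gb_lem14} to get $\mathbb{E}_\theta[u]=\Delta_n+o_p(n^{-1/2})$ and $\mathbb{E}_\theta[uu^T]=n^{-1}J^{-1}+\Delta_n\Delta_n^T+o_p(1/n)$ produces
\[
-\Tr(\rho\,\mathbb{E}_\theta[\log\sigma(\theta)])=-\Tr(\rho L_0)+\tfrac{1}{2n}\Tr(J^QJ^{-1})+\tfrac12\Delta_n^T J^Q\Delta_n+o_p(1/n).
\]
For the variance term, the leading contribution is $\mathbb{V}_\theta[\log\sigma(\theta)]=\sum_{ij}\mathrm{Cov}_\theta[u_i,u_j]\,L_iL_j+o_p(1/n)=n^{-1}\sum_{ij}[J^{-1}]_{ij}L_iL_j+o_p(1/n)$, so $-\tfrac12\Tr(\rho\,\mathbb{V}_\theta[\log\sigma(\theta)])=-\tfrac{1}{2n}\Tr(I^QJ^{-1})+o_p(1/n)$. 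Combining these gives \eqref{eq:expn_G_n^Q}.

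For $T_n^Q$ I would repeat the same expansion with $\bar\rho$ replacing $\rho$; the point is that every term where $\bar\rho$ multiplies a factor of order $o_p(n^{-1/2})$ differs from its $\rho$-counterpart by only $o_p(1/n)$, so the only new contribution comes from the linear term. There $\Tr(\rho L_i)=0$ no longer kills it: instead $\Tr(\bar\rho L_i)=\Tr((\bar\rho-\rho)L_i)$, and by the definition of $\eta_n^Q$ in \eqref{eq:def_eta_n^Q} together with \eqref{eq:def_Delta_n^Q},
\[
\Tr((\bar\rho-\rho)L_i)=\tfrac{1}{\sqrt n}[\nabla\eta_n^Q(\theta_0)]_i=[J^Q\Delta_n^Q]_i.
\]
Multiplying by $\mathbb{E}_\theta[u_i]=\Delta_{n,i}+o_p(n^{-1/2})$ and summing yields an extra $-\Delta_n^T J^Q\Delta_n^Q$, which by symmetry of $J^Q$ I would rewrite as $-\tfrac12(\Delta_n^{QT}J^Q\Delta_n+\Delta_n^T J^Q\Delta_n^Q)$ so as to match the symmetric form in \eqref{eq:expn_T_n^Q}.

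The main obstacle will be justifying the variance identity $\mathbb{V}_\theta[\log\sigma(\theta)]=n^{-1}\sum_{ij}[J^{-1}]_{ij}L_iL_j+o_p(1/n)$ in the quantum setting, since non-commutativity forces one to track cross products such as $L_0L_i+L_iL_0$ and the second-order Taylor remainders of $\log\sigma(\theta)$; these must be shown to cancel between $\mathbb{E}_\theta[(\log\sigma)^2]$ and $\mathbb{E}_\theta[\log\sigma]^2$ up to $o_p(1/n)$, using the posterior moment bounds $\mathbb{E}_\theta[\|u\|^\ell]=O_p(n^{-\ell/2})$ implied by regularity. A secondary technical point is verifying uniformly (in $\theta$ in a neighborhood of $\theta_0$) that analytic Taylor remainders integrated against the posterior are $o_p(1/n)$; this follows from Fundamental conditions \ref{ass:fundamental condition}--\ref{ass:fundamental conditionII} combined with the same higher-order moment bound used in Lemma \ref{lem:higher_order_scaling_regular}.
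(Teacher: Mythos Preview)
Your proposal is correct. The treatment of $G_n^Q$ and of the posterior variance term matches the paper's proof essentially line for line: second-order Taylor expansion of $\log\sigma(\theta)$ around $\theta_0$, the optimality condition $\Tr(\rho L_i)=0$, and Lemma~\ref{lem:gb_lem14} for the posterior moments.

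The one genuine difference is in how you handle $T_n^Q$. The paper rewrites $-\Tr(\bar\rho\log\sigma(\theta))$ using the identity $-\Tr(\bar\rho\log\sigma(\theta)) = -\Tr(\rho\log\sigma(\theta)) + \Tr(\rho\log\sigma(\theta_0^Q)) - \Tr(\bar\rho\log\sigma(\theta_0^Q)) - n^{-1/2}\eta_n^Q(\theta)$, then applies the mean-value theorem to both $\Tr(\rho\log\sigma(\theta))$ and $\eta_n^Q$, completes the square, and only afterward takes the posterior mean. You instead take the posterior mean first, reuse the $G_n^Q$ computation verbatim with $\bar\rho$ in place of $\rho$, and isolate the single extra contribution $-\sum_i\mathbb{E}_\theta[u_i]\,\Tr((\bar\rho-\rho)L_i)=-\Delta_n^T J^Q\Delta_n^Q$ coming from the linear term. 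Your route is shorter and avoids the completing-the-square manipulation; the paper's route has the virtue of displaying the empirical-process structure $\eta_n^Q$ explicitly, which parallels the classical regular theory and foreshadows the singular case where that structure is essential. Both arguments rely on the same posterior-moment estimates and the same order bounds $\bar\rho-\rho=O_p(n^{-1/2})$, $\mathbb{V}_\theta[\log\sigma(\theta)]=O_p(1/n)$ to dispose of cross terms, so neither gains in rigor over the other.
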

\begin{proof}
    \noindent\textit{Expansion of $G_n^Q$: }
    We further expand Eq. \eqref{eq:G_n^Q_expanded}.
    First, the Taylor expansion of $\log \sigma(\theta)$ around $\theta_0$ gives
    \begin{align}
        \log \sigma(\theta)
        &= \log \sigma(\theta_0) + (\theta - \theta_0)^T \nabla \log \sigma(\theta_0) + o_p(\| \theta - \theta_0 \|)
        \label{eq:expn_log_sigma_1st}\\
        &= \log \sigma(\theta_0) + (\theta - \theta_0)^T \nabla \log \sigma(\theta_0) + \frac{1}{2} (\theta - \theta_0)^T \nabla^2 \log \sigma(\theta_0) (\theta - \theta_0) + o_p(\| \theta - \theta_0 \|^2)
        \label{eq:expn_log_sigma_2nd}
    \end{align}
    where the first (resp. second) equation is the expansion to the first (resp, second) order.
    We can use Eq. \eqref{eq:expn_log_sigma_2nd} to find the expansion of the first term of Eq. \eqref{eq:G_n^Q_expanded} as
    \begin{align}
        &- \Tr(\rho \mathbb{E}_\theta[\log \sigma(\theta)])\\ 
        &= - \Tr(\rho \log \sigma(\theta_0)) - \mathbb{E}_\theta[\theta - \theta_0]^T \Tr( \rho \nabla \log \sigma(\theta_0) ) - \frac{1}{2} \mathbb{E}_\theta[ (\theta - \theta_0)^T \Tr( \rho \nabla^2 \log \sigma(\theta_0) ) (\theta - \theta_0) ] + o_p\left(\frac{1}{n}\right) \\
        &= - \Tr(\rho \log \sigma(\theta_0)) - \frac{1}{2} \Tr( \left( \frac{J^{-1}}{n} + \Delta_n \Delta_n^T \right) J^Q ) + o_p\left(\frac{1}{n}\right) \\
        &= - \Tr(\rho \log \sigma(\theta_0)) + \frac{1}{2n} \Tr(J^Q J^{-1}) + \frac{1}{2} \Delta_n^T J^Q \Delta_n + o_p\left(\frac{1}{n}\right).
        \label{eq:reg_expn_G_n^Q_first}
    \end{align}
    The second equality follows the facts $\theta_0^Q = \theta_0$, $\nabla \Tr(\rho \log \sigma(\theta_0^Q)) = \bm0$, $\Tr( \rho \nabla^2 \log \sigma(\theta_0) ) = J^Q$, and Lemma \ref{lem:gb_lem14}. 
    Next, we work on the second term of Eq. \eqref{eq:G_n^Q_expanded}:
    \begin{equation*}
        \Tr(\rho \mathbb{V}_{\theta}[\log \sigma(\theta)]) =  
        \Tr(\rho \{ \mathbb{E}_\theta[(\log \sigma(\theta))^2] - \mathbb{E}_\theta[\log \sigma(\theta)]^2 \}).
    \end{equation*}
    We use Eq. \eqref{eq:expn_log_sigma_1st} to obtain
    \begin{align}
        \mathbb{E}_\theta[\log \sigma(\theta) - \log \sigma(\theta_0)] 
        &= \mathbb{E}_\theta\left[ (\theta - \theta_0)^T \left( \nabla \log \sigma(\theta_0) + o_p(1) \right) \right] \notag\\
        &= \mathbb{E}_\theta[ \theta - \theta_0 ]^T \left( \nabla \log \sigma(\theta_0) + o_p(1) \right)  \notag\\
        &= \left( \Delta_n + o_p\left(\frac{1}{\sqrt{n}}\right) \right)^T \left( \nabla \log \sigma(\theta_0) + o_p(1) \right) \\
        &= \Delta_n^T \nabla \log \sigma(\theta_0) + o_p\left(\frac{1}{\sqrt{n}}\right).
        \label{eq:post_log_sigma} 
    \end{align}
    The third equality follows from Lemma \ref{lem:gb_lem14}.
    Then, it is easy to check that 
    \begin{align}
        \left( \mathbb{E}_\theta[\log \sigma(\theta) - \log \sigma(\theta_0)] \right)^2 &= \left( \Delta_n^T \nabla \log \sigma(\theta_0) + o_p\left(\frac{1}{\sqrt{n}}\right) \right)^2 \notag\\
        &= \Tr( \Delta_n \Delta_n^T (\nabla \log \sigma(\theta_0)) (\nabla \log \sigma(\theta_0))^T ) + o_p\left(\frac{1}{n}\right).
        \label{eq:squared_post_log_sigma}
    \end{align}
    Similarly, Eq. \eqref{eq:expn_log_sigma_1st} allows us to obtain 
    \begin{align}
        \mathbb{E}_\theta\left[ \left( \log \sigma(\theta) - \log \sigma(\theta_0) \right)^2 \right]
        &= \mathbb{E}_\theta\left[ \left\{ (\theta - \theta_0)^T \left( \nabla \log \sigma(\theta_0) + o_p(1) \right) \right\}^2 \right] \notag \\
        &= \mathbb{E}_\theta\left[ (\theta - \theta_0)^T (\nabla \log \sigma(\theta_0)) (\nabla \log \sigma(\theta_0))^T (\theta - \theta_0) (1 + o_p(1)) \right] \notag\\
        &= \Tr( \left(\frac{J^{-1}}{n} + \Delta_n \Delta_n^T\right) (\nabla \log \sigma(\theta_0)) (\nabla \log \sigma(\theta_0))^T ) + o_p\left(\frac{1}{n}\right) 
        \label{eq:post_log_sigma_squared}
    \end{align}
    where the third equality follows from Lemma \ref{lem:gb_lem14}.
    Use a combination of Eqs. \eqref{eq:squared_post_log_sigma} and \eqref{eq:post_log_sigma_squared} to conclude 
    \begin{align}
        \mathbb{V}_{\theta}[\log \sigma(\theta)] &= \frac{1}{n} \Tr( J^{-1} (\nabla \log \sigma(\theta_0)) (\nabla \log \sigma(\theta_0))^T ) + o_p\left(\frac{1}{n}\right),\\
        \Tr( \rho \mathbb{V}_{\theta}[\log \sigma(\theta)] ) 
        &= \frac{1}{n} \Tr( \Tr( \rho (\nabla \log \sigma(\theta_0)) (\nabla \log \sigma(\theta_0))^T ) J^{-1} ) + o_p\left(\frac{1}{n}\right) \\
        &= \frac{1}{n} \Tr( I^Q J^{-1} ) + o_p\left(\frac{1}{n}\right).
        \label{eq:reg_expn_G_n^Q_second}
    \end{align}
    Combining Eqs. \eqref{eq:G_n^Q_expanded}, \eqref{eq:reg_expn_G_n^Q_first}, and \eqref{eq:reg_expn_G_n^Q_second} concludes the proof for $G_n^Q$.
    
    \noindent\textit{Expansion of $T_n^Q$: }
    We further expand Eq. \eqref{eq:T_n^Q_expanded}.
    We start the expansion with recalling the empirical process $\eta_n^Q(\theta)$ defined in Appendix \ref{subsec:Empirical process with classical shadow}:
    \begin{align}
        - \Tr( \left(\frac{1}{n}\sum_{i=1}^{n} \hat{\rho}_{x_i}\right) \log \sigma(\theta)) 
        &= - \Tr(\rho \log \sigma(\theta)) + \Tr(\rho \log \sigma(\theta_0^Q) ) - \Tr( \left(\frac{1}{n}\sum_{i=1}^{n} \hat{\rho}_{x_i}\right) \log \sigma(\theta_0^Q)) - \frac{1}{\sqrt{n}} \eta_n^Q(\theta) \notag\\
        &= - \Tr(\rho \log \sigma(\theta_0^Q)) - (\theta-\theta_0^Q)^T \nabla \Tr(\rho \log \sigma(\theta_0^Q)) + \frac{1}{2} (\theta-\theta_0^Q)^T J^Q(\theta_1) (\theta-\theta_0^Q)  \nonumber\\
        &\quad + \Tr(\rho \log \sigma(\theta_0^Q))- \Tr( \left(\frac{1}{n}\sum_{i=1}^{n} \hat{\rho}_{x_i}\right) \log \sigma(\theta_0^Q)) - \frac{1}{\sqrt{n}} \left\{\eta_n^Q(\theta_0^Q) + (\theta-\theta_0^Q)^T \nabla \eta_n^Q(\theta_2)\right\} \\
        &= - \Tr( \left(\frac{1}{n}\sum_{i=1}^{n} \hat{\rho}_{x_i}\right) \log \sigma(\theta_0^Q)) + \frac{1}{2} (\theta-\theta_0^Q)^T J^Q(\theta_1) (\theta-\theta_0^Q) - \frac{1}{\sqrt{n}} (\theta-\theta_0^Q)^T \nabla \eta_n^Q(\theta_2) \notag\\
        &= - \Tr( \left(\frac{1}{n}\sum_{i=1}^{n} \hat{\rho}_{x_i}\right) \log \sigma(\theta_0^Q)) + \frac{1}{2} \left\| J^Q(\theta_1)^{1/2} \left(\theta - \theta_0^Q - \frac{J^Q(\theta_1)^{-1} \nabla \eta_n^Q(\theta_2)}{\sqrt{n}}\right) \right\|^2 \nonumber\\ 
        &\quad - \frac{1}{2} \left\| J^Q(\theta_1)^{-1/2} \frac{\nabla \eta_n^Q(\theta_2)}{\sqrt{n}} \right\|^2.
        \label{eq:reg_expn_T_n^Q_first_mid1}
    \end{align}
    The second equality uses the mean-value theorem for $\Tr(\rho \log \sigma(\theta))$ to the second order and $\eta_n^Q(\theta)$ to the first order, where $\theta_1$ and $\theta_2$ are points between $\theta$ and $\theta_0^Q$, respectively.
    The third equality follows from the fact that $\nabla \Tr(\rho \log \sigma(\theta_0^Q)) = \bm 0$ and $\eta_n^Q(\theta_0^Q) = 0$.
    The last equality is obtained by completing the square.
    Since $\theta_1$ and $\theta_2$ converge $\theta_0$ as $n$ grows to $\infty$, the right hand side of Eq. \eqref{eq:reg_expn_T_n^Q_first_mid1} converges to 
    \begin{align}
        & - \Tr( \left(\frac{1}{n}\sum_{i=1}^{n} \hat{\rho}_{x_i}\right) \log \sigma(\theta_0^Q)) + \frac{1}{2} \left\| \sqrt{J^Q} (\theta - \theta_0^Q - \Delta_n^Q) \right\|^2 - \frac{1}{2} \left\| \sqrt{J^Q} \Delta_n^Q \right\|^2 + o_p\left(\frac{1}{n}\right) \\
        &= - \Tr( \left(\frac{1}{n}\sum_{i=1}^{n} \hat{\rho}_{x_i}\right) \log \sigma(\theta_0^Q)) + \frac{1}{2}  (\theta - \theta_0^Q - \Delta_n^Q)^T J^Q (\theta - \theta_0^Q - \Delta_n^Q)
        - \frac{1}{2} \Delta_n^{QT} J^Q \Delta_n^{Q} + o_p\left(\frac{1}{n}\right),
        \label{eq:reg_expn_T_n^Q_first_mid2}
    \end{align}
    with $\Delta_n^Q$ defined in Appendix \ref{subsec:Empirical process with classical shadow}.
    Then, evaluate the posterior mean of the above second term as 
    \begin{align}
        &\mathbb{E}_\theta[ (\theta - \theta_0^Q - \Delta_n^Q)^T J^Q (\theta - \theta_0^Q - \Delta_n^Q) ] \nonumber\\
        &= \mathbb{E}_\theta[ (\theta - \theta_0^Q)^T J^Q (\theta - \theta_0^Q) ] - \mathbb{E}_\theta[ \Delta_n^{QT} J^Q (\theta - \theta_0^Q) ] - \mathbb{E}_\theta[ (\theta - \theta_0^Q)^T J^Q \Delta_n^{Q} ] + \Delta_n^{QT} J^Q \Delta_n^{Q} \notag \\
        &= \left( \frac{1}{n} \Tr(J^Q J^{-1}) + \Delta_n^T J^Q \Delta_n + o_p\left(\frac{1}{n}\right) \right) - \Delta_n^{QT} J^Q \left( \Delta_n + o_p\left(\frac{1}{\sqrt{n}}\right) \right) 
        -\left( \Delta_n + o_p\left(\frac{1}{\sqrt{n}}\right) \right)^T J^Q  \Delta_n^{Q} + \Delta_n^{QT} J^Q \Delta_n^{Q} \\ 
        &= \frac{1}{n} \Tr(J^Q J^{-1}) + \Delta_n^T J^Q \Delta_n - \Delta_n^{QT} J^Q \Delta_n 
        - \Delta_n^T J^Q \Delta_n^{Q} + \Delta_n^{QT} J^Q \Delta_n^{Q} + o_p\left(\frac{1}{n}\right).
        \label{eq:expn_T_n^Q_1st_2nd}
    \end{align}
    The second equality uses the relation $\theta_0^Q=\theta_0$ and Lemma \ref{lem:gb_lem14}.
    It is also easy to check that
    \begin{align}
        \Tr(\left(\frac{1}{n}\sum_{i=1}^{n} \hat{\rho}_{x_i}\right) \mathbb{V}_{\theta}[\log \sigma(\theta)]) 
        &= \frac{1}{n}\sum_{i=1}^{n} \Tr(\hat{\rho}_{x_i} \mathbb{V}_{\theta}[\log \sigma(\theta)])\notag \\
        &= \frac{1}{n}\sum_{i=1}^{n} \left\{ \frac{1}{n} \Tr((I^Q + o_p(1)) J^{-1}) \right\} + o_p\left(\frac{1}{n}\right) \\
        &= \frac{1}{n} \Tr(I^Q J^{-1}) + o_p\left(\frac{1}{n}\right).
        \label{eq:expnT_n^Q_2nd}
    \end{align}
    The second equality follows from the law of large numbers for $\Tr(\hat{\rho}_{x_\alpha} \mathbb{V}_{\theta}[\log \sigma(\theta)])$ and Eq. \eqref{eq:reg_expn_G_n^Q_second}.
    Combining Eqs. \eqref{eq:T_n^Q_expanded}, \eqref{eq:reg_expn_T_n^Q_first_mid1}, \eqref{eq:reg_expn_T_n^Q_first_mid2}, \eqref{eq:expn_T_n^Q_1st_2nd}, and \eqref{eq:expnT_n^Q_2nd} concludes the proof for $T_n^Q$.
\end{proof}

As an application of this claim, we obtain the following simple representations. 
\begin{cor}
\label{cor:expansion of GnQ and TnQ for regular cases}
    Under Definition \ref{def:regular for classical}, the generalization loss $G_n^Q$ and training loss $T_n^Q$ are expanded as 
    \begin{align*}
        G_n^Q &=
        - \Tr(\rho \log \sigma(\theta_0)) + \frac{1}{n} (\lambda^Q + R_1^Q - \nu^Q) + o_p\left(\frac{1}{n}\right), \\
        T_n^Q &= - \Tr(\left(\frac{1}{n}\sum_{i=1}^{n} \hat{\rho}_{x_i}\right) \log \sigma(\theta_0)) + \frac{1}{n} (\lambda^Q + R_1^Q - R_2^Q - \nu^Q) + o_p\left(\frac{1}{n}\right),
    \end{align*}
    where 
    \begin{align}
        &\lambda^Q \coloneqq \frac{1}{2} \Tr(J^Q J^{-1}), 
        \quad \nu^Q \coloneqq \frac{1}{2} \Tr(I^Q J^{-1}),\\
            &R_1^Q \coloneqq \frac{n}{2} \Delta_n^T J^Q \Delta_n,
        \quad R_2^Q \coloneqq \frac{n}{2}(\Delta_n^{QT} J^Q \Delta_n + \Delta_n^T J^Q \Delta_n^Q).
    \end{align}
\end{cor}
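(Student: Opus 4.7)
The plan is to obtain the corollary as a direct substitution of the newly defined shorthands $\lambda^Q$, $\nu^Q$, $R_1^Q$, $R_2^Q$ into the expansions already established in Theorem \ref{thm:q_regular expansion formula for G and T}, so the work is essentially notational rather than analytical.

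Concretely, I would pattern-match each term appearing in Eqs. \eqref{eq:expn_G_n^Q} and \eqref{eq:expn_T_n^Q} against the definitions in the corollary statement, obtaining the four identifications
\[
\frac{1}{2n}\Tr(J^Q J^{-1}) = \frac{\lambda^Q}{n}, \qquad
\frac{1}{2n}\Tr(I^Q J^{-1}) = \frac{\nu^Q}{n},
\]
\[
\frac{1}{2}\Delta_n^T J^Q \Delta_n = \frac{R_1^Q}{n}, \qquad
\frac{1}{2}\bigl(\Delta_n^{QT} J^Q \Delta_n + \Delta_n^T J^Q \Delta_n^Q\bigr) = \frac{R_2^Q}{n}.
\]
Each is immediate from the definitions, and substituting them into the two expansions of Theorem \ref{thm:q_regular expansion formula for G and T}, with the $o_p(1/n)$ remainder left untouched, converts them line-by-line into the compact forms stated in the corollary.

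Because all the analytical content (Taylor expansion of $\log\sigma(\theta)$ to second order, posterior moment control via Lemma \ref{lem:gb_lem14}, the empirical-process identity for $\eta_n^Q$, and the law-of-large-numbers reduction $(1/n)\sum_i \hat{\rho}_{x_i} \to \rho$) has already been carried out in Theorem \ref{thm:q_regular expansion formula for G and T}, there is no genuine analytical obstacle left to overcome; the proof is a clean repackaging. The conceptual payoff of the restatement is that it isolates the non-fluctuating quantum invariants $\lambda^Q$ and $\nu^Q$, which play the roles of the classical $\lambda = d/2$ and $\nu = (1/2)\Tr(IJ^{-1})$ in Proposition \ref{prop:generalization loss of G_n and T_n}, from the stochastic terms $R_1^Q, R_2^Q$, and it makes the key difference $G_n^Q - T_n^Q = (R_2^Q - R_1^Q + R_1^Q)/n + o_p(1/n)$ structurally evident, preparing the ground for Theorem \ref{mainthm:quantum expansion regular} where $\mathbb{E}_{X^n}[R_1^Q]$ and $\mathbb{E}_{X^n}[R_2^Q]$ will be evaluated explicitly using Eq. \eqref{eq:cov_nabla_eta}. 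As a final sanity check I would verify the classical limit: when $\sigma(\theta)$ and $\rho$ commute and the measurement is Fisher-preserving, one has $J^Q = J$, $I^Q = I$, and $\Delta_n^Q = \Delta_n$, so $\lambda^Q = d/2$, $\nu^Q = \nu$, and $R_1^Q = R_2^Q$, recovering Proposition \ref{prop:generalization loss of G_n and T_n} verbatim.
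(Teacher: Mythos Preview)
Your proposal is correct and matches the paper's own treatment: the corollary is stated without proof immediately after Theorem \ref{thm:q_regular expansion formula for G and T} as a ``simple representation'' of that theorem, so the substitution you describe is exactly what is intended. The additional commentary on the classical limit and the structural role of $R_1^Q, R_2^Q$ is sound but goes beyond what the paper records.
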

Several distinctions can be observed when compared with Proposition \ref{prop:generalization loss of G_n and T_n} (1), which is the corresponding result in the classical case.
Let us first compare each term in the expansion of the generalization loss $G_n^Q$ and $G_n$.
At first, while the second term explicitly represents the number of parameters $d$ in the classical case, $\lambda^Q$ involves the Hessian of the quantum relative entropy and KL divergence, leading to the ratio of quantum (Bogoliubov) and classical Fisher information in the realizable case.
Secondly, in the quantum case, $J$ in the third term and $I$ in the forth term of the classical case are replaced with $J^Q$ and $I^Q$, respectively.
Thirdly, while the difference between the generalization and training loss up to the order $o_p(1/n)$ is $\Delta_n^T J \Delta_n /2$ in the classical case, it is $R_2^Q$ in the quantum case.
As we will see shortly, these differences account for the differences between the second term in WAIC and QWAIC.

Taking the expectations, we obtain the necessary presentations to define QWAIC.
\begin{thm}\label{thm:q_expectations for regular cases}
    Suppose that Assumptions \ref{ass:R1} and \ref{ass:R2} are satisfied.
    Then, the generalization loss $G_n^Q$ and training loss $T_n^Q$ can be expanded as follows:
    \begin{align}
        \mathbb{E}_{X^n}[G_n^Q] &= - \Tr(\rho \log \sigma(\theta_0)) + \frac{1}{n}\left( \lambda^Q + \nu'^Q - \nu^Q \right) + o\left(\frac{1}{n}\right),
        \\
        \mathbb{E}_{X^n}[T_n^Q] &= - \Tr(\rho \log \sigma(\theta_0)) + \frac{1}{n}\left( \lambda^Q + \nu'^Q - 2\chi^Q - \nu^Q \right) + o\left(\frac{1}{n}\right), 
    \end{align}
    where 
    \begin{align}
        \nu'^Q & \coloneqq \frac{1}{2} \Tr(J^Q J^{-1} I J^{-1}), \\
        \quad \chi^Q & \coloneqq \frac{n}{4}(\mathbb{E}_{X^n}[\Delta_n^{QT} J^Q \Delta_n] + \mathbb{E}_{X^n}[\Delta_n^T J^Q \Delta_n^Q]).
        \label{eq:lambda^Q_and_chi^Q}
    \end{align}
\end{thm}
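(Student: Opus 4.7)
The plan is to obtain this theorem by taking the expectation $\mathbb{E}_{X^n}[\cdot]$ of the pointwise expansions already established in Theorem \ref{thm:q_regular expansion formula for G and T}. All deterministic terms there pass through the expectation unchanged, so the entire proof reduces to evaluating three stochastic quantities: (i) $\mathbb{E}_{X^n}[\Delta_n^{T}J^Q\Delta_n]$, (ii) $\mathbb{E}_{X^n}[\Tr((\tfrac{1}{n}\sum_i \hat\rho_{x_i})\log\sigma(\theta_0))]$, and (iii) $\mathbb{E}_{X^n}[\Delta_n^{QT}J^Q\Delta_n + \Delta_n^{T}J^Q\Delta_n^{Q}]$, together with an argument that the $o_p(1/n)$ remainders become $o(1/n)$ after taking expectation.

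First, I would compute (i) by expanding $\Delta_n = \tfrac{1}{\sqrt{n}}J^{-1}\nabla\eta_n(\theta_0)$, which gives
\begin{equation}
\mathbb{E}_{X^n}[\Delta_n^{T}J^Q\Delta_n] = \tfrac{1}{n}\Tr\!\bigl(J^{-1}J^Q J^{-1}\,\mathbb{E}_{X^n}[\nabla\eta_n(\theta_0)\nabla\eta_n(\theta_0)^{T}]\bigr) = \tfrac{1}{n}\Tr(J^Q J^{-1} I J^{-1}),
\end{equation}
where the last equality uses Eq. \eqref{eq:cov_nabla_eta}. Combined with the $\tfrac{1}{2n}\Tr(J^Q J^{-1})$ and $-\tfrac{1}{2n}\Tr(I^QJ^{-1})$ terms of Eq. \eqref{eq:expn_G_n^Q}, this yields exactly $(\lambda^Q + \nu'^Q - \nu^Q)/n$ using the definitions of $\lambda^Q$, $\nu^Q$, and $\nu'^Q$, proving the $G_n^Q$ expansion.

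For the $T_n^Q$ expansion I would next dispatch (ii) by invoking the unbiasedness of the classical shadow, i.e.\ $\mathbb{E}[\hat\rho_{x}] = \rho$, which immediately upgrades the leading term of Eq. \eqref{eq:expn_T_n^Q} to $-\Tr(\rho\log\sigma(\theta_0))$; the same quadratic terms as in $G_n^Q$ then reproduce $(\lambda^Q + \nu'^Q - \nu^Q)/n$. The remaining new piece is the cross term $-\tfrac{1}{2}\mathbb{E}_{X^n}[\Delta_n^{QT}J^Q\Delta_n + \Delta_n^{T}J^Q\Delta_n^{Q}]$, which by the very definition of $\chi^Q$ in Eq. \eqref{eq:lambda^Q_and_chi^Q} equals $-2\chi^Q/n$. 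Assembling the four contributions gives the claimed formula for $\mathbb{E}_{X^n}[T_n^Q]$.

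The main technical obstacle is justifying that $\mathbb{E}_{X^n}[o_p(1/n)] = o(1/n)$, since taking expectation of an in-probability statement is not automatic. I would handle this in the standard way used in singular learning theory: the third-order Taylor remainders from $\log\sigma(\theta)$ around $\theta_0$ are controlled by $\mathbb{E}_\theta[\|\theta-\theta_0\|^3]$, and Lemma \ref{lem:higher_order_scaling_regular} (via Lemma \ref{lem:quantum higher order}) together with the compactness of $\Theta$ from Fundamental condition \ref{ass:fundamental conditionII} provides the required uniform integrability bounds, so that each $o_p(1/n)$ quantity in Theorem \ref{thm:q_regular expansion formula for G and T} is dominated by an integrable envelope of order $O(1/n)$. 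Once this dominated-convergence step is in place, interchanging expectation and limit is routine, and the proof is complete.
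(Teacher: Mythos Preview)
Your proposal is correct and follows essentially the same approach as the paper's proof: take $\mathbb{E}_{X^n}$ of the pointwise expansions in Theorem \ref{thm:q_regular expansion formula for G and T}, compute $\mathbb{E}_{X^n}[\Delta_n^T J^Q \Delta_n]$ (the paper cites Lemma \ref{lem:gb_lem14}, you use Eq.~\eqref{eq:cov_nabla_eta} directly, which amounts to the same thing), use unbiasedness of the classical shadow for the leading term of $T_n^Q$, and absorb the cross term into the definition of $\chi^Q$. Your explicit discussion of why $\mathbb{E}_{X^n}[o_p(1/n)]=o(1/n)$ is a point the paper passes over silently, so if anything you are slightly more careful than the original.
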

\begin{proof}
    Lemma \ref{lem:gb_lem14} implies  
    \begin{align}
        \mathbb{E}_{X^n}[\Delta_n^T J^Q \Delta_n] = \frac{1}{n} \Tr(J^Q J^{-1} I J^{-1}).
    \end{align}
    From the law of large numbers,
    \begin{align*}
        \mathbb{E}_{X^n}\left[- \Tr(\left(\frac{1}{n}\sum_{i=1}^{n} \hat{\rho}_{x_i}\right) \log \sigma(\theta_0))\right] &= - \Tr(\rho \log \sigma(\theta_0)) + o\left(\frac{1}{n}\right).
    \end{align*}

    Taking the expectation $\mathbb{E}_{X^n}[\cdot]$ of both sides of Eqs. \eqref{eq:expn_G_n^Q} and \eqref{eq:expn_T_n^Q} and inserting the above relations completes the proof.
\end{proof}

We shall introduce a criterion for model selection based on the observations in WAIC (Theorem \ref{thm:WAIC is unbiased estimator}).
\begin{defn}
\label{def:QWAIC}
    Let us define \textit{QWAIC (Quantum Widely Applicable Information Criterion)} as
    \begin{align}
        \mathrm{QWAIC} &\coloneqq T_n^{Q} + C_n^Q
        \label{eq:defn_qwaic}
    \end{align}
    where $C_n^Q$ is the posterior covariance of classical and quantum log-likelihood, using a classical snapshot defined by 
    \begin{align}
        C_n^Q &\coloneqq \frac{1}{n} \sum_{i=1}^{n} \mathrm{Cov}_{\theta}\left[ \log p (x_i|\theta), \Tr(\hat{\rho}_{x_i} \log \sigma(\theta)) \right], 
        \label{eq:C_n^Q} \\
        \mathrm{Cov}_{\theta}\left[ \log p (x|\theta), \Tr(\hat{\rho}_x \log \sigma(\theta)) \right]
        &\coloneqq \mathbb{E}_\theta[(\log p(x|\theta) - \mathbb{E}_\theta[\log p(x|\theta)])(\Tr(\hat{\rho}_x \log \sigma(\theta)) - \mathbb{E}_\theta[\Tr(\hat{\rho}_x \log \sigma(\theta))])] \notag \\ 
        &= \mathbb{E}_\theta[f(x,\theta)f^Q(\hat{\rho}_x,\theta)] - \mathbb{E}_\theta[f(x,\theta)] \mathbb{E}_\theta[f^Q(\hat{\rho}_x,\theta)],
        \label{eq:cov_f_f^Q}
    \end{align}
    where we recall that $f(x,\theta)$ (resp. $f^Q(\hat{\rho}_x,\theta)$) is the log-likelihood ratio function (resp. quantum log-likelihood ratio function) defined in Definition \ref{defn:matrix I and J} (resp. \ref{defn:matrix I and J_quantum}).
\end{defn}
\begin{rem}
    For the purpose of extending the reach of WAIC, Refs. \cite{iba2023Posterior,iba2022Posterior} also utilize the posterior covariance to estimate the predictive risks for weighted likelihood and arbitrary loss functions.
    While they share the same idea of using posterior covariance as QWAIC, our proposal suggests that it can also be applied to quantum loss functions, underscoring that the idea can be applied to quantum state models $\sigma(\theta)$ that do not commute with the target state $\rho$, which is not obvious in classical statistics.
\end{rem}
To conclude this subsection, we show that QWAIC is an asymptotically unbiased estimator of the quantum generalization loss $G_n^Q$ for regular cases; see also Corollary \ref{cor:waic}.
\begin{thm}\label{thm:q_QWAIC for regular cases}
    Suppose that Assumptions \ref{ass:R1} and \ref{ass:R2} are satisfied.
    Then, the following equation holds:
    \begin{align*}
        \mathbb{E}_{X^n}[G_n^Q] = \mathbb{E}_{X^n}[\mathrm{QWAIC}] + o\left(\frac{1}{n}\right).
    \end{align*}
\end{thm}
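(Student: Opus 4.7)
\begin{proofsketch}
The plan is to reduce the statement to a single identity and then verify it by a short Taylor-plus-law-of-large-numbers argument. Subtracting the two asymptotic expansions in Theorem \ref{thm:q_expectations for regular cases} gives
\[
\mathbb{E}_{X^n}[G_n^Q - T_n^Q] = \frac{2\chi^Q}{n} + o\!\left(\frac{1}{n}\right),
\]
so since $\mathrm{QWAIC} = T_n^Q + C_n^Q$ it suffices to establish
\[
\mathbb{E}_{X^n}[C_n^Q] = \frac{2\chi^Q}{n} + o\!\left(\frac{1}{n}\right).
\]

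To evaluate the left-hand side I would Taylor-expand $f(x,\theta)$ and $f^Q(\hat{\rho}_x,\theta)$ to first order around the common optimum $\theta_0 = \theta_0^Q$ (legitimate under Assumptions \ref{ass:R1} and \ref{ass:R2}), producing linear expressions in $\theta-\theta_0$ with gradients $-\nabla\log p(x|\theta_0)$ and $-\Tr(\hat{\rho}_x\nabla\log\sigma(\theta_0))$ respectively. Substituting into \eqref{eq:cov_f_f^Q} and replacing the posterior covariance $\mathbb{E}_\theta[(\theta-\theta_0)(\theta-\theta_0)^T]-\mathbb{E}_\theta[\theta-\theta_0]\mathbb{E}_\theta[\theta-\theta_0]^T$ by $J^{-1}/n + o_p(1/n)$ via Lemma \ref{lem:gb_lem14} yields
\[
\mathrm{Cov}_\theta[f(x,\theta),f^Q(\hat{\rho}_x,\theta)] = \frac{1}{n}\,\nabla\log p(x|\theta_0)^T J^{-1}\,\Tr(\hat{\rho}_x\nabla\log\sigma(\theta_0)) + o_p\!\left(\frac{1}{n}\right).
\]
Averaging over $i=1,\dots,n$ and taking $\mathbb{E}_{X^n}[\cdot]$ (law of large numbers) gives
\[
\mathbb{E}_{X^n}[C_n^Q] = \frac{1}{n}\,\mathbb{E}_X\!\left[\nabla\log p(X|\theta_0)^T J^{-1}\,\Tr(\hat{\rho}_X\nabla\log\sigma(\theta_0))\right] + o\!\left(\frac{1}{n}\right).
\]

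It remains to identify the right-hand side with $2\chi^Q/n$. Expanding $\Delta_n = \frac{1}{n}J^{-1}\sum_i\nabla\log p(X_i|\theta_0)$ and $\Delta_n^Q = \frac{1}{n}{J^Q}^{-1}\sum_j\Tr(\hat{\rho}_{X_j}\nabla\log\sigma(\theta_0))$ and plugging into the definition of $\chi^Q$ in \eqref{eq:lambda^Q_and_chi^Q}, all off-diagonal ($i\neq j$) contributions vanish because each factor is centred: $\mathbb{E}_X[\nabla\log p(X|\theta_0)]=\bm{0}$ from the stationarity of $K$ at $\theta_0$, and $\mathbb{E}_X[\Tr(\hat{\rho}_X\nabla\log\sigma(\theta_0))] = \Tr(\rho\nabla\log\sigma(\theta_0^Q)) = \bm{0}$ from the unbiasedness $\mathbb{E}[\hat{\rho}_X]=\rho$ of the classical shadow together with the stationarity of $K^Q$. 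The remaining $n$ diagonal terms combine, the two cross products $\Delta_n^T J^Q \Delta_n^Q$ and $\Delta_n^{QT} J^Q \Delta_n$ coincide after using $J^{-1}J^Q{J^Q}^{-1}=J^{-1}$ and the symmetry of $J^{-1}$, and one obtains $\chi^Q = \tfrac{1}{2}\mathbb{E}_X[\nabla\log p(X|\theta_0)^T J^{-1}\Tr(\hat{\rho}_X\nabla\log\sigma(\theta_0))]$, exactly matching the expression for $n\,\mathbb{E}_{X^n}[C_n^Q]/2$ derived above.

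The main obstacle is making the $o_p(1/n)$ control rigorous, in particular showing that the quadratic and higher-order Taylor remainders in the expansion of the posterior covariance contribute only at order $o(1/n)$ after taking $\mathbb{E}_{X^n}$. This requires bounding mixed moments of $\theta-\theta_0$ against random matrices built from the classical shadow; the higher-order posterior moments already scale as $O_p(n^{-\ell/2})$ by Lemma \ref{lem:higher_order_scaling_regular}, and the same type of argument used to justify the residuals in Theorem \ref{thm:q_regular expansion formula for G and T} can be reused verbatim for $C_n^Q$.
\end{proofsketch}
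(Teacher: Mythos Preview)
Your proposal is correct and follows essentially the same route as the paper: reduce to the identity $\mathbb{E}_{X^n}[C_n^Q]=2\chi^Q/n+o(1/n)$, linearize $f$ and $f^Q$ around $\theta_0=\theta_0^Q$, invoke Lemma \ref{lem:gb_lem14} so that the posterior covariance of $\theta$ contributes $J^{-1}/n$, and then match the resulting cross-gradient expectation with $\chi^Q$. The only cosmetic difference is that the paper keeps the $\Delta_n\Delta_n^T$ contributions in the two pieces of \eqref{eq:cov_f_f^Q} separately and cancels them, then passes through $\nabla\eta_n,\nabla\eta_n^Q$ to reach $\Delta_n,\Delta_n^Q$, whereas you subtract first and then expand $\Delta_n,\Delta_n^Q$ directly as i.i.d.\ sums; both yield the same scalar $\mathbb{E}_X[\nabla\log p(X|\theta_0)^T J^{-1}\Tr(\hat{\rho}_X\nabla\log\sigma(\theta_0))]$.
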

\begin{proof}
    It suffices to show the asymptotic equation
    \begin{align}\label{eq:chi^Q_equal_cov}
        \frac{2 \chi^Q}{n} = \mathbb{E}_{X^n}\left[ C_n^Q \right] + o\left(\frac{1}{n}\right).
    \end{align}
    Using the Taylor expansion to the first-order 
    \begin{align}
        \log p(x|\theta) &= \log p(x|\theta_0) + (\theta - \theta_0)^T (\nabla \log p(x|\theta_0) + o_p(1)), \\
        \log \sigma(\theta) &= \log \sigma(\theta_0^Q) + (\theta - \theta_0^Q)^T (\nabla \log \sigma(\theta_0^Q) + o_p(1)),
    \end{align}
    the empirical sum of the first term in Eq. \eqref{eq:cov_f_f^Q} can be expanded as
    \begin{align}
        &\frac{1}{n} \sum_{i=1}^{n} \mathbb{E}_\theta[f(x_i,\theta)f^Q(\hat{\rho}_{x_i},\theta)] \nonumber\\
        &= \frac{1}{n} \sum_{i=1}^{n} \mathbb{E}_\theta\left[ \left\{(\theta - \theta_0)^T (\nabla \log p(x_i|\theta_0) + o_p(1)) \right\} \left\{ (\theta - \theta_0^Q)^T \left(\nabla \Tr(\hat{\rho}_{x_i} \log \sigma(\theta_0^Q)) + o_p(1)\right)\right\} \right] \\
        &= \frac{1}{n} \sum_{i=1}^{n} \mathbb{E}_\theta\left[ \Tr\left\{  (\theta - \theta_0^Q) (\theta - \theta_0)^T \left(\nabla \log p(x_i|\theta_0)\right) \left(\nabla \Tr(\hat{\rho}_{x_i} \log \sigma(\theta_0^Q)) \right)^T \right\} (1 + o_p(1)) \right] \\
        &= \frac{1}{n} \sum_{i=1}^{n} \Tr\left\{ \mathbb{E}_\theta\left[ (\theta - \theta_0^Q) (\theta - \theta_0)^T \right] \left(\nabla \log p(x_i|\theta_0)\right) \left(\nabla \Tr(\hat{\rho}_{x_i} \log \sigma(\theta_0^Q)) \right)^T \right\} (1 + o_p(1)) \\
        &= \Tr\left\{ \left( \frac{J^{-1}}{n} + \Delta_n \Delta_n^T \right) \left( \frac{1}{n} \sum_{i=1}^{n} \left(\nabla \log p(x_i|\theta_0)\right) \left(\nabla \Tr(\hat{\rho}_{x_i} \log \sigma(\theta_0^Q)) \right)^T \right) \right\} + o_p\left(\frac{1}{n}\right). 
    \end{align}
    The last equality follows from the relation $\theta_0^Q = \theta_0$ and Lemma \ref{lem:gb_lem14}.
    The expansion of the empirical sum of the second term in Eq. \eqref{eq:cov_f_f^Q} follows the same calculation yielding 
    \begin{align}
        &\frac{1}{n} \sum_{i=1}^{n} \mathbb{E}_\theta[f(x_i,\theta)] \mathbb{E}_\theta[f^Q(\hat{\rho}_{x_i},\theta)] \nonumber\\
        &= \frac{1}{n} \sum_{i=1}^{n} \mathbb{E}_\theta\left[ (\theta - \theta_0)^T (\nabla \log p(x_i|\theta_0) + o_p(1)) \right] \mathbb{E}_\theta\left[ (\theta - \theta_0^Q)^T \left(\nabla \Tr(\hat{\rho}_{x_i} \log \sigma(\theta_0^Q)) + o_p(1)\right) \right] \\
        &= \frac{1}{n} \sum_{i=1}^{n} \left( \Delta_n^T \nabla \log p(x_i|\theta_0) + o_p(1/\sqrt{n}) \right) \left( \Delta_n^T \nabla \Tr(\hat{\rho}_{x_i} \log \sigma(\theta_0^Q)) + o_p(1/\sqrt{n}) \right)\\ 
        &= \frac{1}{n} \sum_{i=1}^{n} \Tr\left\{ \Delta_n \Delta_n^T (\nabla \log p(x_i|\theta_0)) \left(\nabla \Tr(\hat{\rho}_{x_i} \log \sigma(\theta_0^Q))\right)^T \right\} + o_p\left(\frac{1}{n}\right) \\
        &= \Tr\left\{ \Delta_n \Delta_n^T \left( \frac{1}{n} \sum_{i=1}^{n} \left(\nabla \log p(x_i|\theta_0)\right) \left(\nabla \Tr(\hat{\rho}_{x_i} \log \sigma(\theta_0^Q)) \right)^T \right) \right\} + o_p\left(\frac{1}{n}\right).
    \end{align}
    Hence, plugging these expansions into Eq. \eqref{eq:C_n^Q} and then taking the expectation $\mathbb{E}_{X_n}[\cdot]$ yields
    \begin{align}
        \mathbb{E}_{X^n}\left[ C_n^Q \right]
        &= \mathbb{E}_{X^n}\left[ \frac{1}{n} \sum_{i=1}^{n} \mathrm{Cov}_{\theta}\left[ \log p (x_i|\theta), \Tr(\hat{\rho}_{x_i} \log \sigma(\theta)) \right] + o_p\left(\frac{1}{n}\right) \right] \\
        &= \mathbb{E}_{X^n}\left[ \Tr\left\{ \frac{J^{-1}}{n} \left( \frac{1}{n} \sum_{i=1}^{n} \left(\nabla \log p(x_i|\theta_0)\right) \left(\nabla \Tr(\hat{\rho}_{x_i} \log \sigma(\theta_0^Q)) \right)^T \right) \right\} \right] + o\left(\frac{1}{n}\right) \\
        &= \Tr\left\{ \frac{J^{-1}}{n} \mathbb{E}_X\left[ \left(\nabla \log p(X|\theta_0)\right) \left(\nabla \Tr(\hat{\rho}_X \log \sigma(\theta_0^Q)) \right)^T  \right] \right\} + o\left(\frac{1}{n}\right) \\
        &= \Tr\left\{ \frac{J^{-1}}{n} \mathbb{E}_{X^n}\left[ (\nabla \eta_n(\theta_0)) (\nabla \eta_n^Q(\theta_0^Q))^T \right] \right\} + o\left(\frac{1}{n}\right) \\
        &= \Tr\left\{ \frac{J^{-1}}{n} \left( n \mathbb{E}_{X^n}\left[ J \Delta_n J^Q \Delta_n^{QT} \right] \right) \right\} + o\left(\frac{1}{n}\right) \\
        &= \mathbb{E}_{X^n}\left[ \Tr( \Delta_n J^Q \Delta_n^{QT} ) \right] (1 + o(1)) \\
        &= \frac{1}{2} \mathbb{E}_{X^n}\left[\Delta_n^{QT} J^Q \Delta_n + \Delta_n^T J^Q \Delta_n^Q\right] + o\left(\frac{1}{n}\right). 
    \end{align}
    The third equality follows from the fact that $\mathbb{E}_{X^n}[(1/n)\sum_{i=1}^{n} Z(X_i)] = \mathbb{E}_X[Z(X)]$ for a random matrix $Z$.
    The fourth equality can be obtained by a similar calculation to Eq. \eqref{eq:cov_nabla_eta}.
    The fifth equality follows from the definition of $\Delta_n$ and $\Delta_n^Q$ defined in Appendices \ref{subsec:Empirical process with log-likelihood}, \ref{subsec:Empirical process with classical shadow}.
    The last equality follows from the symmetry of $J^Q$.
    Thus, Eq. \eqref{eq:chi^Q_equal_cov} is proven, completing the proof of Theorem \ref{thm:q_QWAIC for regular cases}.
\end{proof}
This theorem ensures that the difference between $G_n^Q$ and $T_n^Q$ is asymptotically complemented by a newly introduced random variable $C_n^Q$ up to $o(1/n)$.
While it is the posterior variance of the classical log-likelihood in the classical case, the posterior covariance of the classical log-likelihood $\log p(x|\theta)$ and its quantum analog with a snapshot $\Tr(\hat{\rho}_x \log \sigma(\theta))$ is the key to resolve the difference between the generalization and training loss.
This can be attributed to the fact that the parameters are estimated classically, whereas the loss function is a quantum information-theoretic quantity.

\begin{rem}
\label{rem:QAIC_LL}
    Let us compare the result with our previous study \cite{yano2023Quantuma} on deriving a quantum analog of AIC, what we call QAIC. 
    Both information criteria, QAIC and QWAIC, aim to estimate the generalization performance of quantum models.
    In particular, we focus on $\mathrm{QAIC}_\mathrm{LL}$ in the previous work:
    \begin{equation}
        \mathrm{QAIC}_\mathrm{LL} = - \frac{1}{n}\sum_{i=1}^{n} \log p(x_i|\hat{\theta}) + \frac{1}{2n}\left( d + \Tr(\hat{J}^Q \hat{I}^{-1})  \right),
    \end{equation}
    where $\hat{\theta}$ is the maximum likelihood estimator, and $\hat{J}^Q$ and $\hat{I}$ are consistent estimators of $J^Q$ and $I$, respectively.
    (Note that the value of $\mathrm{QAIC}_\mathrm{LL}$ initially introduced in the paper is divided by $1/(2n)$ such that it is also an asymptotically unbiased estimator of the quantum cross entropy.)
    Although the derivation was done in a similar manner using the Taylor expansion, there are two main differences in the setting: the parameter estimation (the maximum likelihood estimation in $\mathrm{QAIC}_\mathrm{LL}$ case and Bayesian estimation in QWAIC case) and the definition of the training loss (the classical log-likelihood function in $\mathrm{QAIC}_\mathrm{LL}$ case and its quantum analog with the classical shadow in QWAIC case).
    This leads to the difference in the first and second terms between $\mathrm{QAIC}_\mathrm{LL}$ and QWAIC.
    As seen in the relation between AIC and WAIC, the second term of $\mathrm{QAIC}_\mathrm{LL}$ is replaced with the posterior covariance term of QWAIC.
    This is a crucial improvement, especially for the quantum case, because estimating the quantum and classical Fisher information matrix is necessary even in the realizable case.
\end{rem}

\subsection{Singular cases}
Finally, we develop a theory to deal with quantum singular models.
To define WAIC, one had to assume an $L^2$ and finiteness property of the classical log-likelihood ratio function.
To analyze the singularities arising from a quantum models $(\rho, \sigma(\theta))$, we introduce the quantum average log loss function (Eq. \eqref{eq:average quantum log loss function})
    \[K^{Q}(\theta) \coloneqq D (\sigma(\theta_0^Q)\|\sigma(\theta)).\]

We begin by establishing a key lemma that connects the quantum log-likelihood ratio function to an analytic function in $L^2(q)$.
This leads to the concept of the \textit{standard form} in the literature.

\begin{lem}
\label{lem:intro_a^Q}
    Suppose that Assumption \ref{ass:S1} is satisfied.
    There exists an analytic function $a^Q(\hat{\rho}_x,u)$ with respect to $u$ that takes values in $L^2(q)$ so that 
    \begin{equation}\label{eq:intro_a^Q}
        f^Q(\hat{\rho}_x,g(u)) = u^{k^Q} a^Q(\hat{\rho}_x,u),
    \end{equation}
    for $f^Q(\hat{\rho}_x,\theta)$.
\end{lem}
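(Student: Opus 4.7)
My plan is to lift the factorization of the scalar function $f^Q$ to a matrix-valued one, exploiting its linearity in the classical snapshot. Using $F(\theta,\theta_0^Q) = \log\sigma(\theta) - \log\sigma(\theta_0^Q)$ from Eq.~\eqref{eq:large F}, one rewrites $f^Q(\hat{\rho}_x,\theta) = -\Tr(\hat{\rho}_x F(\theta,\theta_0^Q))$, so the lemma will follow once the matrix-valued real analytic map $F(\cdot,\theta_0^Q)\circ g$ is shown to admit a decomposition $F(g(u),\theta_0^Q) = u^{k^Q}\tilde F(u)$ with $\tilde F$ matrix-valued analytic on $\widetilde{\Theta}$. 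Setting $a^Q(\hat{\rho}_x,u) := -\Tr(\hat{\rho}_x \tilde F(u))$ then yields the required factorization at once, with analyticity in $u$ inherited entrywise from $\tilde F$.

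To extract the factor $u^{k^Q}$, I would establish a uniform two-sided estimate $K^Q(\theta) \asymp \Tr\bigl(F(\theta,\theta_0^Q)^2\bigr)$ on some neighborhood of $\Theta_0^Q$. Pointwise, this reduces to positive-definiteness of the Bogoliubov--Kubo--Mori metric at full-rank states: a matrix Taylor expansion of $K^Q$ around any $\theta_\ast\in\Theta_0^Q$ via the Fr\'echet derivative of the matrix logarithm (well-defined by Fundamental condition~\ref{ass:fundamental condition}(1)) shows that $K^Q$ is quadratic in $F(\theta,\theta_0^Q)$ to leading order with a nondegenerate quadratic form, yielding both directions of the estimate locally. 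Compactness of $\Theta$ (Fundamental condition~\ref{ass:fundamental conditionII}) upgrades the pointwise bound to a uniform one on a neighborhood of $\Theta_0^Q$. Pulling back through $g$ and invoking $K^Q(g(u)) = r(u)u^{2k^Q}$ with $r(0)\neq 0$ gives $|F_{ij}(g(u),\theta_0^Q)| \le C|u^{k^Q}|$ entrywise on each coordinate chart of $\widetilde{\Theta}$. Since each such entry is real analytic in $u$ and uniformly bounded by $C|u^{k^Q}|$ across the simple normal crossing divisor $g^{-1}(\Theta_0^Q)$, direct Taylor-series inspection (equivalently, the Riemann removable singularity theorem applied branch by branch to the complex analytic extension) produces analytic $\tilde F_{ij}$ with $F_{ij}(g(u),\theta_0^Q) = u^{k^Q}\tilde F_{ij}(u)$.

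The $L^2(q)$-valuedness of $a^Q(\hat{\rho}_x,u)$ is handled separately: for classical shadows built from the random Pauli or Clifford ensembles used throughout the paper, the snapshots satisfy a deterministic operator-norm bound $\|\hat{\rho}_x\|\le C_D$ depending only on the Hilbert-space dimension, so $|a^Q(\hat{\rho}_x,u)|\le C_D\|\tilde F(u)\|_1$ is bounded uniformly in $x$ on compacta in $u$. Assumption~\ref{ass:S1} enters at this stage as the compatibility check that the induced second moment $\mathbb{E}_X[a^Q(\hat{\rho}_X,u)^2]$ is the correct renormalized variance of the quantum log-likelihood ratio that will feature in the singular asymptotic expansions of Section~\ref{subsec:main_result_Asymptotic_behaviors}. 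The principal obstacle is establishing the uniform two-sided estimate $K^Q\asymp \Tr(F^2)$ over all of $\Theta_0^Q$: while pointwise nondegeneracy of the BKM metric at full-rank states is classical, $\Theta_0^Q$ may itself contain algebro-geometric singularities, and promoting the bound uniformly across it requires Fundamental condition~\ref{ass:fundamental condition}(1) together with Fundamental condition~\ref{ass:fundamental conditionII} to keep the spectrum of $\sigma(\theta)$ bounded away from degeneracy along the entire locus. Once this analytic input is secured, the entrywise Riemann-extension step is routine.
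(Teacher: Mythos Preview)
Your approach is correct but genuinely different from the paper's. The paper argues directly in $L^2(q)$: from Assumption~\ref{ass:S1} one has $\mathbb{E}_X[f^Q(\hat\rho_X,g(u))^2]\le c\,K^Q(g(u))=c\,r(u)u^{2k^Q}$, so after writing $f^Q=u^{k^Q}a^Q+b^Q$ with $b^Q$ collecting the monomials not divisible by $u^{k^Q}$, the uniform bound $\mathbb{E}_X[(f^Q/u^{k^Q})^2]\le c\,r(u)$ forces $b^Q\equiv 0$ and simultaneously certifies $a^Q\in L^2(q)$. Assumption~\ref{ass:S1} is therefore the essential analytic input in the paper's argument, not a compatibility check.

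Your route instead lifts the problem to the matrix level: the two-sided estimate $K^Q\asymp\Tr(F^2)$ near $\{K^Q=0\}$ (which, as you note, follows from positive-definiteness of the BKM Hessian at full-rank states plus compactness) yields entrywise divisibility $F(g(u),\theta_0^Q)=u^{k^Q}\tilde F(u)$, and the snapshot operator-norm bound then gives $a^Q\in L^\infty\subset L^2(q)$. This is more structural---it produces an explicit matrix factor $\tilde F$ and shows that for the factorization itself one really only needs Fundamental condition~\ref{ass:fundamental condition}(1) and~\ref{ass:fundamental conditionII}, not the relatively finite variance. The trade-offs: the paper's proof is shorter, uses exactly the stated hypothesis, and works for any classical-shadow ensemble; yours gives a sharper conclusion but relies on the deterministic snapshot bound $\|\hat\rho_x\|\le C_D$, which holds for the Clifford and Pauli ensembles featured in the paper but would need separate justification for a general tomographically complete POVM with unbounded $\mathcal{M}^{-1}$. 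You should also correct the description of where Assumption~\ref{ass:S1} enters: in your argument it is effectively bypassed, whereas in the paper it is the linchpin.
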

\begin{proof}
This is a direct consequence of the relatively finite variance of $f^Q$.
The central concept of the proof is similar to the classical case \cite[Definition 14]{watanabe2018mathematical}, but we complete the proof considering the simultaneous resolution.
    Applying Hironaka's resolution theorem (Theorem \ref{thm:resolution_Watanabe}) to $K^Q(\theta) \geq 0$, we obtain 
    \begin{equation}
        K^Q(g(u)) = \mathbb{E}_{X}[f^Q(\hat{\rho}_X,g(u))] = r(u)u^{2k^Q}.
    \end{equation}
    From the above and Assumption \ref{ass:S1}, we obtain 
    \begin{equation}\label{eq:bound_var_f^Q}
        r(u) \geq c \mathbb{E}_{X}\left[\left(\frac{f^Q(\hat{\rho}_X,\theta)}{u^{k^Q}}\right)^2\right]
    \end{equation}
    for some $c>0$.
    Since $f^Q(\hat{\rho}_x,g(u))$ is an analytic function with respect to $u$, $f^Q(\hat{\rho}_x,g(u))$ can be written as 
    \begin{equation}
        f^Q(\hat{\rho}_x,g(u)) = u^{k^Q} a^Q(\hat{\rho}_x,u) + b^Q(\hat{\rho}_x,u),
    \end{equation}
    with some analytic functions $a^Q(\hat{\rho}_x,u)$ and $b^Q(\hat{\rho}_x,u)$.
    If $b^Q(\hat{\rho}_x,u) = 0$, then $b^Q(\hat{\rho}_x,u)/u^{k^Q}$ would not be bounded when $u^{k^Q}$ goes to $0$, which contradicts Eq. \eqref{eq:bound_var_f^Q} because $r(0)$ is bounded.
    It implies $b^Q(\hat{\rho}_x,u) = 0$.
\end{proof}

Next, we establish a lemma on the higher-order scaling for singular cases to guarantee that the assumption in Theorem \ref{thm:q_basic} holds even for singular models.
We use the standard form $a^Q(\hat{\rho}_x,u)$ introduced in the previous lemma for studying the right-hand side of the equations in Lemma \ref{lem:quantum higher order}.
\begin{lem}[Higher order scaling for singular cases]
\label{lem:higher_order_scaling_singular}
    Suppose that Assumptions \ref{ass:S1} and \ref{ass:S2} are satisfied. For $\ell \geq 3$, the higher order cumulants satisfy:
    \begin{align}
        \left| \mathbb{E}_X[\partial_\alpha^\ell s^Q(\hat{\rho}_X,\alpha) |_{\alpha=0}] \right| &\leq O_p\left(\frac{1}{n^{\ell/2}}\right), \\
        \left| \frac{1}{n} \sum_{i=1}^{n} \partial_\alpha^\ell s^Q(\hat{\rho}_{X_i},\alpha) |_{\alpha=0} \right| &\leq O_p\left(\frac{1}{n^{\ell/2}}\right).
    \end{align}
\end{lem}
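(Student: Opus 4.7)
The plan is to mirror the strategy of the regular case (Lemma \ref{lem:higher_order_scaling_regular}), but replace the quadratic bound on $|F(\theta,\theta_0^Q)|$ via the mean-value theorem with the normal-crossing bound supplied by the standard form of Lemma \ref{lem:intro_a^Q}, and then integrate using the renormalized posterior instead of the Gaussian one.

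First I would invoke Lemma \ref{lem:quantum higher order} to reduce both assertions to controlling $|\mathbb{E}_\theta[\Tr(\rho |F(\theta,\theta_0^Q)|^\ell)]|$ and its empirical-average analogue. Next, Schmidt-decompose $F(\theta,\theta_0^Q) = \sum_{j=1}^{D} \lambda_j |v_j\rangle\langle v_j|$ as in the regular proof, and introduce the same Haar-type snapshots $\hat{\rho}_{v_i}=(D+1)|v_i\rangle\langle v_i|-I_D$. The linear system that previously produced Eq.~\eqref{eq:relation of lambda} still gives, for every $\theta\in\Theta$,
\begin{equation}
\lambda_i = \frac{-1}{D+1}\Big( f^Q(\hat{\rho}_{v_i},\theta) + \sum_{j=1}^{D} f^Q(\hat{\rho}_{v_j},\theta) \Big).
\end{equation}
Now, instead of Taylor-expanding each $f^Q$ at $\theta_0^Q$, I substitute $\theta=g(u)$ and apply Lemma \ref{lem:intro_a^Q}: $f^Q(\hat{\rho}_{v_j},g(u)) = u^{k^Q} a^Q(\hat{\rho}_{v_j},u)$. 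By Assumption \ref{ass:S2} we have $k^Q=k$, so
\begin{equation}
|\lambda_i(g(u))| \leq C\, |u^{k}|\, M(u), \qquad M(u)\coloneqq \max_{1\le j\le D} |a^Q(\hat{\rho}_{v_j},u)|,
\end{equation}
with a constant $C$ depending only on $D$. Consequently $\Tr(\rho |F(g(u),\theta_0^Q)|^\ell) \le \sum_i |\lambda_i|^\ell \le C^{\ell}\, D\, |u^{k}|^{\ell}\, M(u)^{\ell}$, where $M(u)^{\ell}$ is bounded in $L^{2}(q)$ uniformly in $u$ by Fundamental condition \ref{ass:fundamental condition} (which guarantees $a^{Q}(\cdot,u)\in L^{2}(q)$ analytically).

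Taking the posterior mean, it remains to prove
\begin{equation}
\mathbb{E}_\theta\big[|u^{k}|^{\ell}\big] = O_p\!\left(\frac{1}{n^{\ell/2}}\right),
\end{equation}
since the $M$-factor contributes only an $O_p(1)$ term. This is where the singular-learning machinery replaces Watanabe's Lemma 12 used in the regular case: pass to the renormalized posterior of Definition \ref{defn:renormalized_posterior_dist} via the change of variables $t = n u^{2k}$ (so $|u^{k}|^{\ell}=t^{\ell/2}/n^{\ell/2}$), and use the standard fact that all positive moments $\mathbb{E}_\theta[t^{s}]$ of the renormalized posterior are $O_p(1)$ because the density $t^{\lambda-1}\exp(-t+\sqrt{t}\,\xi_n(u))$ decays exponentially in $t$ with $\xi_n$ tight on compacts (Proposition \ref{prop:classical convergence law}). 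This yields the first inequality. For the empirical version, the only change is that $\rho$ is replaced by $\frac{1}{n}\sum_{i=1}^{n}\hat{\rho}_{x_i}$; as noted in the proof of Lemma \ref{lem:quantum higher order}, this average is asymptotically positive semidefinite and converges to $\rho$ in probability, so the same chain of inequalities goes through with an additional $(1+o_p(1))$ factor.

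The main obstacle will be the posterior-moment estimate in the previous paragraph: unlike the regular setting, where Gaussian concentration gives $\mathbb{E}_\theta[|\theta-\theta_0|^{\ell}]=O_p(n^{-\ell/2})$ directly, here one must argue tightness of the $t$-moments under the renormalized posterior uniformly across the essential local coordinates $\{U_j\}$ produced by the log resolution, and verify that the $M(u)$-factor can be pulled out (or integrated jointly) without damaging the $n^{-\ell/2}$ scaling. Assumption \ref{ass:S2}, which enforces $k=k^Q$ on every chart, is what makes this pull-out legitimate: without it the rate coming from $u^{k^Q}$ could be strictly slower than the rate dictated by the renormalized posterior associated with $K(\theta)=u^{2k}$, and the proof would fail.
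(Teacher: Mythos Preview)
Your proposal is correct and follows essentially the same approach as the paper: reduce via Lemma \ref{lem:quantum higher order}, Schmidt-decompose $F$, use the Haar-type snapshots $\hat{\rho}_{v_i}$ to express each eigenvalue $\lambda_i$ through $f^Q(\hat{\rho}_{v_i},g(u))=u^{k^Q}a^Q(\hat{\rho}_{v_i},u)$, invoke $k^Q=k$ from Assumption \ref{ass:S2}, and conclude via $\mathbb{E}_\theta[|u^{k}|^{\ell}]=O_p(n^{-\ell/2})$. The only cosmetic differences are that the paper cites \cite[Theorem 13]{watanabe2018mathematical} for this last posterior-moment bound (where you spell out the renormalized-posterior argument), and that the paper keeps the explicit sum $a^Q(\hat{\rho}_{v_i},u)+\sum_j a^Q(\hat{\rho}_{v_j},u)$ rather than your $\max$-based $M(u)$; note also that $a^Q(\hat{\rho}_{v_j},u)$ here is a deterministic function of $u$ (the $v_j$ come from the Schmidt basis, not from data), so the relevant uniform bound is a $\sup_u$ over the compact parameter set rather than an $L^2(q)$ bound.
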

\begin{proof}
    The idea of the proof is similar to that of Lemma \ref{lem:higher_order_scaling_regular} for regular cases, but we need to account for the singularities in $\Theta_0$. Instead of using the mean-value theorem, we employ the normal crossing representation (Eq. \eqref{eq:resolution of K and K^Q}).
    Noting that Lemma \ref{lem:intro_a^Q} holds for a classical snapshot $\hat{\rho}_{v_i}$ with Haar random unitaries (Eq. \eqref{eq:snapshot_Haar}), we can obtain
    \begin{equation}
        f^Q(\hat{\rho}_{v_i},g(u)) = u^{k^Q} a^Q(\hat{\rho}_{v_i},u),  \quad i = 1, ..., D,
        \label{eq:a^Q_Haar}
    \end{equation}
    for $\hat{\rho}_{v_i} = (D+1) \ketbra*{v_i}{v_i} - I_D$ with $\ket{v_i}$ appeared in the Schmidt decomposition of $F(\theta,\theta_0^Q)$ in Eq. \eqref{eq:Schmidt_decomp_F}.
    Then, using Eqs. \eqref{eq:eigendcomp_f^Q} and \eqref{eq:a^Q_Haar}, and the coincidence $k^Q = k$ (i.e. the similarity of the classical and quantum average log loss functions), we obtain
    \begin{align}
        (D+1) \lambda_i - \sum_{j=1}^{D} \lambda_j = u^{k^Q} a^Q(\hat{\rho}_{v_i}, u) = u^{k} a^Q(\hat{\rho}_{v_i}, u), \quad i = 1, ..., D.
    \end{align}
    The system of the above $D$ linear equations then implies
    \begin{align}
        \lambda_i = \frac{u^{k}}{D+1} \left( a^Q(\hat{\rho}_{v_i}, u) + \sum_{j=1}^{D} a^Q(\hat{\rho}_{v_j}, u) \right), \quad i = 1, ..., D.
    \end{align}
    This ensures that the right-hand side of Eq. \eqref{eq:quantum_higher_order_G} is bounded as
    \begin{align}
        \left|\mathbb{E}_\theta\left[\Tr( \rho|F(\theta, \theta_0^Q)|^\ell )\right]\right|
        &= \left| \mathbb{E}_\theta \left[ \sum_{i=1}^{D} |\lambda_i|^\ell \Tr(\rho \ketbra*{v_i}{v_i}) \right] \right| \\
        &\leq \left|\mathbb{E}_\theta\left[ \left| \frac{u^{k}}{D+1} \right|^\ell \left( \sum_{i=1}^{D} \left| a^Q(\hat{\rho}_{v_i}, u) + \sum_{j=1}^{D} a^Q(\hat{\rho}_{v_j}, u) \right|^\ell \Tr(\rho \ketbra*{v_i}{v_i}) \right) \right]\right| \\
        &\leq \left| \mathbb{E}_\theta\left[ \left| \frac{u^{k}}{D+1} \right|^\ell \right] \left( \sum_{i=1}^{D} \sup_u \left| a^Q(\hat{\rho}_{v_i}, u) + \sum_{j=1}^{D} a^Q(\hat{\rho}_{v_j}, u) \right|^\ell \Tr(\rho \ketbra*{v_i}{v_i}) \right) \right| \\
        &= O(n^{-\ell/2}).
    \end{align}
    The last equality can be obtained in the same way as \cite[Theorem 13]{watanabe2018mathematical} for the evaluation of $\mathbb{E}_\theta[|u^k|^\ell]$.
    A similar calculation yields bounding the right-hand side of Eq. \eqref{eq:quantum_higher_order_T} by $O(n^{-\ell/2})$.
    Plugging these bounds into Eqs. \eqref{eq:quantum_higher_order_G} and \eqref{eq:quantum_higher_order_T} completes the proof. 
\end{proof}

Now, we proceed to obtain the asymptotic behaviors of $G_n^Q$ and $T_n^Q$.
For notational convenience, we introduce a new formal variable $t^Q \coloneqq n u^{2 k^Q}$.
This is a variable that, in the classical case, was closely related to the density of states (Definition \ref{defn:invariants in singular learning theory} (4)), and similar concepts can be considered in the context of quantum state estimation. Still, here, we consider it to be just a variable for the proofs. 
Now, Theorem \ref{thm:q_basic} and Lemma \ref{lem:higher_order_scaling_singular} lead us to derive the following explicit expansion formulas.
The central idea relies on the Taylor expansion around the singular points; see also Proposition \ref{prop:singular for Gn and Tn} for comparison.
\begin{thm}
\label{thm:q_expansion formulas of G^Q and T^Q for singular cases}
    Suppose that Assumptions \ref{ass:S1} and \ref{ass:S2} are satisfied. Then, the generalization loss $G_n^Q$ and training loss $T_n^Q$ can be expanded as follows:
    \begin{align*}
        G_n^Q &= - \Tr(\rho \log \sigma(\theta_0)) + \frac{1}{n}\left( r_{CQ} \lambda + r_{CQ} \frac{1}{2}\mathbb{E}_\theta[\sqrt{t} \xi_n(u)] \right) - \frac{1}{2} \Tr(\rho \mathbb{V}_{\theta}[\log \sigma(\theta)]) + o_p\left(\frac{1}{n}\right),  \\
        T_n^Q &= - \Tr(\left(\frac{1}{n}\sum_{i=1}^{n} \hat{\rho}_{x_i}\right) \log \sigma(\theta_0)) + \frac{1}{n}\left( r_{CQ} \lambda + r_{CQ} \frac{1}{2}\mathbb{E}_\theta[\sqrt{t} \xi_n(u)]  - \mathbb{E}_\theta\left[ \sqrt{t^Q} \xi_n^Q \right] \right)
        - \frac{1}{2} \Tr(\rho \mathbb{V}_{\theta}[\log \sigma(\theta)]) + o_p\left(\frac{1}{n}\right),
    \end{align*}
    for a positive real number $r_{CQ}$.
\end{thm}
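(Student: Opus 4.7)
The plan is to start from the basic theorem (Theorem~\ref{thm:q_basic}), whose hypothesis on higher-order cumulants is supplied by Lemma~\ref{lem:higher_order_scaling_singular}. The variance contribution $-\tfrac12\Tr(\rho\,\mathbb{V}_{\theta}[\log\sigma(\theta)])$ for $G_n^Q$ already appears in the claimed form, and for $T_n^Q$ the corresponding term carries the empirical density $\tfrac1n\sum_i\hat\rho_{x_i}$ in place of $\rho$; since $\mathbb{V}_{\theta}[\log\sigma(\theta)]$ is $O_p(1/n)$, the law of large numbers lets me swap the two at an $o_p(1/n)$ cost. This reduces the task to expanding the first-moment term $-\Tr(A\,\mathbb{E}_\theta[\log\sigma(\theta)])$ for $A=\rho$ and for $A=\tfrac1n\sum_i\hat\rho_{x_i}$.

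For either $A$, I would rewrite
\[
-\Tr(A\log\sigma(\theta)) \;=\; -\Tr(A\log\sigma(\theta_0)) + \Tr\bigl(A\{\log\sigma(\theta_0^Q)-\log\sigma(\theta)\}\bigr),
\]
using $\theta_0=\theta_0^Q$ from Eq.~\eqref{eq:fixed_optimal_para2}. The bracket on the right is exactly $f^Q(\hat\rho_x,\theta)$ (Eq.~\eqref{eq:q_log_likelihood_ratio}), which on the resolved space $\widetilde\Theta$ admits the standard form $f^Q(\hat\rho_x,g(u))=u^{k^Q}a^Q(\hat\rho_x,u)$ by Lemma~\ref{lem:intro_a^Q}. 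Taking $\mathbb{E}_X$ gives $\mathbb{E}_X[f^Q(\hat\rho_X,g(u))]=K^Q(g(u))=r(u)u^{2k^Q}$, while taking the empirical mean and using the definition of $\xi_n^Q$ (Eq.~\eqref{eq:xi_n^Q}) yields
\[
\frac{1}{n}\sum_{i=1}^{n} f^Q(\hat\rho_{x_i},g(u)) \;=\; r(u)\,u^{2k^Q} - \frac{\sqrt{t^Q}}{n}\,\xi_n^Q(u).
\]

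Next, I would apply the posterior mean $\mathbb{E}_\theta$ on $\widetilde\Theta$. Assumption~\ref{ass:S2} forces $k=k^Q$, so $t^Q=t$ and $u^{2k^Q}=t/n$, giving $\mathbb{E}_\theta[r(u)u^{2k^Q}]=\tfrac1n\mathbb{E}_\theta[r(u)t]$. Setting $r_{CQ}:=r(0)\neq 0$, the renormalized posterior of Definition~\ref{defn:renormalized_posterior_dist} concentrates on $u\to 0$, and real-analyticity of $r$ yields $\mathbb{E}_\theta[r(u)t]=r_{CQ}\mathbb{E}_\theta[t]+o_p(1)$. The remaining ingredient is the classical singular-learning identity
\[
\mathbb{E}_\theta[t] \;=\; \lambda + \tfrac{1}{2}\,\mathbb{E}_\theta[\sqrt{t}\,\xi_n(u)] + o_p(1),
\]
which is obtained by comparing Theorem~\ref{thm:q_basic} specialized to $G_n$ (with $p(x|\theta)$ in place of $\sigma(\theta)$) against Proposition~\ref{prop:singular for Gn and Tn}, since then $\mathbb{E}_\theta[K(g(u))]=\mathbb{E}_\theta[t]/n$. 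Assembling these pieces delivers the $G_n^Q$-expansion; the extra $-\mathbb{E}_\theta[\sqrt{t^Q}\xi_n^Q(u)]/n$ inherited from the empirical-mean step above gives the $T_n^Q$-expansion, after noting that $-\Tr\!\bigl(\tfrac1n\sum_i\hat\rho_{x_i}\log\sigma(\theta_0)\bigr)$ is already in the stated form.

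The main obstacle is the uniform control of the replacement $r(u)\to r_{CQ}$ inside $\mathbb{E}_\theta[\cdot]$ to order $o_p(1/n)$. This rests on two facts: first, the renormalized posterior concentrates at the scale $u^{2k}=O_p(1/n)$, so $\|u\|=o_p(1)$ on each essential local coordinate; second, $r$ is real-analytic with $r(0)\neq 0$, giving $r(u)-r_{CQ}=O(\|u\|)$, whose contribution to $\tfrac1n\mathbb{E}_\theta[r(u)t]$ is $o_p(1/n)$ once we combine with the $O_p(1)$ bound on $t$. One must patch these local estimates over the finite family of essential local coordinates produced by the simultaneous log resolution, and use the weak-convergence results of Proposition~\ref{prop:quantum convergence law} to control $\xi_n^Q$ uniformly near the exceptional locus; the identification $t^Q=t$ is where Assumption~\ref{ass:S2} enters crucially, since without it the quantum and classical vanishing orders would decouple and the $r_{CQ}\lambda$ and $r_{CQ}\,\mathbb{E}_\theta[\sqrt{t}\xi_n(u)]/2$ structure of the final formula would fail.
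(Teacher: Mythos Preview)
Your overall architecture matches the paper's: start from Theorem~\ref{thm:q_basic} (justified by Lemma~\ref{lem:higher_order_scaling_singular}), pass to the standard form via Lemma~\ref{lem:intro_a^Q}, reduce the first-moment term to $\mathbb{E}_\theta[r(u)u^{2k^Q}]$, invoke Assumption~\ref{ass:S2} to replace $k^Q$ by $k$, and then use $\mathbb{E}_\theta[t]=\lambda+\tfrac12\mathbb{E}_\theta[\sqrt{t}\,\xi_n(u)]+o_p(1)$; the empirical-mean step produces the extra $-\mathbb{E}_\theta[\sqrt{t^Q}\xi_n^Q]/n$ for $T_n^Q$, and the variance term is handled by the law of large numbers exactly as you describe.

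The genuine gap is your treatment of $r_{CQ}$. Setting $r_{CQ}:=r(0)$ and arguing that $\mathbb{E}_\theta[r(u)t]=r(0)\,\mathbb{E}_\theta[t]+o_p(1)$ requires the posterior to concentrate at $u=0$, and your justification ``$u^{2k}=O_p(1/n)$, so $\|u\|=o_p(1)$'' is false: $u^{2k}=u_1^{2k_1}\cdots u_d^{2k_d}$ small forces only the product to be small, and any coordinate $u_j$ with $k_j=0$ is not constrained at all. The renormalized posterior concentrates on the exceptional locus $\{u^k=0\}$, not at the origin, so $r(u)$ need not converge to $r(0)$. The paper avoids this entirely: since $r(u)\neq 0$ on the chart and $\Theta$ is compact, one has constants $0<r_1\le r(u)\le r_2$, hence $r_1\,\mathbb{E}_\theta[u^{2k^Q}]\le \mathbb{E}_\theta[r(u)u^{2k^Q}]\le r_2\,\mathbb{E}_\theta[u^{2k^Q}]$, and $r_{CQ}$ is simply \emph{defined} as the ratio $\mathbb{E}_\theta[r(u)u^{2k^Q}]/\mathbb{E}_\theta[u^{2k^Q}]\in[r_1,r_2]$. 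This is cruder---$r_{CQ}$ is not pinned down as an intrinsic geometric constant---but it is what the statement actually asserts and it bypasses the concentration issue completely. For the identity $\mathbb{E}_\theta[t]=\lambda+\tfrac12\mathbb{E}_\theta[\sqrt{t}\,\xi_n(u)]$ the paper just cites \cite[Theorem~12]{watanabe2018mathematical} directly rather than your comparison argument, but either route is fine there.
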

\begin{proof}
We see how each term of Eqs. \eqref{eq:G_n^Q_expanded} and \eqref{eq:T_n^Q_expanded} unfolds in singular cases.
    
    \noindent\textit{Expansion of $G_n^Q$: }
    First, we use the definition of $f^Q(\hat{\rho},\theta)$ and Lemma \ref{lem:homogeneous}, claiming $\Theta_0^Q=\Theta_0$, to obtain
    \begin{align}
        \Tr(\hat{\rho}_x \mathbb{E}_\theta[\log \sigma(\theta)]) &= \Tr(\hat{\rho}_x \log \sigma(\theta_0)) - \mathbb{E}_\theta[f^Q(\hat{\rho}_x,\theta)] \\ 
        &= \Tr(\hat{\rho}_x \log \sigma(\theta_0)) - \mathbb{E}_\theta[u^{k^Q} a^Q(\hat{\rho}_x, u)]. 
        \label{eq:sing_snapshot_posterior_mean}
    \end{align}
    The second equality follows from Lemma \ref{lem:intro_a^Q}.
    Next, taking the expectation $\mathbb{E}_X[\cdot]$ at both sides of the above equation yields
    \begin{align}
        \Tr(\rho \mathbb{E}_\theta[\log \sigma(\theta)]) &= \Tr(\rho \log \sigma(\theta_0)) - \mathbb{E}_\theta\left[ u^{k^Q} \mathbb{E}_{X}[a^Q(\hat{\rho}_X, u)] \right] \\
        &= \Tr(\rho \log \sigma(\theta_0)) - \mathbb{E}_\theta\left[ r(u) u^{2k^Q}\right],
    \end{align}
    where in the last equality, we use the relation
    \begin{equation}
        K^Q(g(u)) = \mathbb{E}_{X}[f^Q(\hat{\rho}_X,g(u))] = u^{k^Q} \mathbb{E}_{X}[a^Q(\hat{\rho}_X,u)] = r(u) u^{2k^Q}.
    \end{equation}
    Since $r(u) \neq 0$ on the whole of the affine open subset and $\Theta$ is compact,  we can evaluate as
    \begin{equation}
        r_1 \mathbb{E}_\theta[ u^{2k^Q}] \leq
        \mathbb{E}_\theta[r(u) u^{2k^Q}] \leq r_2 \mathbb{E}_\theta[ u^{2k^Q}], \quad r_1, r_2 \in \mathbb{R}_{>0},
    \end{equation}
    implying that there exists a positive real number $r_{CQ}$ ($r_1 \leq r_{CQ} \leq r_2$) such that 
    \begin{equation}
        \mathbb{E}_\theta[r(u) u^{2k^Q}] = r_{CQ} \mathbb{E}_\theta[ u^{2k^Q}].
    \end{equation}
    Then, use \cite[Theorem 12]{watanabe2018mathematical} and Assumption \ref{ass:S2} to obtain
    \begin{equation}
        r_{CQ} \mathbb{E}_\theta[ u^{2k^Q}] = r_{CQ} \mathbb{E}_\theta[u^{2k}] = \frac{r_{CQ}}{n} \mathbb{E}_\theta[t] = \frac{r_{CQ}}{n}\left( \lambda + \frac{1}{2} \mathbb{E}_\theta\left[ \sqrt{t} \xi_n(u) \right] \right),
    \end{equation}
    with the renormalized empirical process $\xi_n(u)$ defined in Eq. \eqref{eq:renormalized empirical process} and the variable $t \coloneqq n \cdot u^{2k}$ in Definition \ref{defn:invariants in singular learning theory}.

    \noindent\textit{Expansion of $T_n^Q$: }
    Taking the empirical mean at both sides of Eq. \eqref{eq:sing_snapshot_posterior_mean} yields
    \begin{align}
        \Tr( \left(\frac{1}{n}\sum_{i=1}^{n} \hat{\rho}_{x_i}\right) \mathbb{E}_\theta[\log \sigma(\theta)]) 
        &= \Tr(\left(\frac{1}{n}\sum_{i=1}^{n} \hat{\rho}_{x_i}\right) \log \sigma(\theta_0)) - \mathbb{E}_\theta\left[ u^{k^Q} \left(\frac{1}{n}\sum_{i=1}^{n} a^Q(\hat{\rho}_{x_i}, u)\right) \right] \\
        &= \Tr(\left(\frac{1}{n}\sum_{i=1}^{n} \hat{\rho}_{x_i}\right) \log \sigma(\theta_0)) - \left( \mathbb{E}_\theta\left[ r(u) u^{2k^Q} \right] - \mathbb{E}_\theta\left[ \frac{u^{k^Q}}{\sqrt{n}} \xi_n^Q \right] \right), 
    \end{align}
    where the second equality follows the definition of the quantum analog of a renormalized empirical process $\xi_n^Q$ in Eq. \eqref{eq:xi_n^Q}.
    The above second and third terms are calculated as
    \begin{equation}
        \mathbb{E}_\theta\left[ r(u) u^{2k^Q} \right] - \mathbb{E}_\theta\left[ \frac{u^{k^Q}}{\sqrt{n}} \xi_n^Q \right] 
        = \frac{1}{n}\left( r_{CQ} \lambda + r_{CQ} \frac{1}{2} \mathbb{E}_\theta\left[ \sqrt{t} \xi_n(u) \right] - \mathbb{E}_\theta\left[ \sqrt{t^Q} \xi_n^Q \right] \right).
    \end{equation}
    Moreover, following the same calculation in Lemma \ref{lem:higher_order_scaling_singular}, we have $\Tr(\hat{\rho} \mathbb{V}_{\theta}[\log \sigma(\theta)]) = O_p(1/n)$, which implies
    \begin{align}
        \Tr(\left(\frac{1}{n}\sum_{i=1}^{n} \hat{\rho}_{x_i}\right) \mathbb{V}_{\theta}[\log \sigma(\theta)]) = \Tr(\rho \mathbb{V}_{\theta}[\log \sigma(\theta)]) + o_p(1). 
    \end{align}
    Lastly, applying Lemma \ref{lem:higher_order_scaling_singular} and plugging the above into the equations in Theorem \ref{thm:q_basic} both for $G_n^Q$ and $T_n^Q$ completes the proof.
\end{proof}
Due to Assumptions \ref{ass:S1} and \ref{ass:S2}, the proof closely resembles that of the classical case.
One of the notable differences is a constant $r_{CQ}$ originating from $r(u)$ in the simultaneous resolution of singularities of $K(\theta)$ and $K^Q(\theta)$.
This constant may be regarded as the higher order version of the ratio of classical and quantum Fisher information in the realizable case, considering that $r_{CQ} \lambda$ corresponds to $\lambda^Q$ defined in Corollary \ref{cor:expansion of GnQ and TnQ for regular cases} in regular cases.
The other is $\mathbb{E}_\theta[\sqrt{t^Q} \xi_n^Q]$ in the expansion of $T_n^Q$ due to the empirical process $\xi_n^Q$, which will be detailed later.
As in regular cases, we can rewrite the above presentations more intrinsically.
\begin{thm}\label{thm:q_expansion formulas for the expectations for singular cases}
    Suppose that Assumptions \ref{ass:S1} and \ref{ass:S2} are satisfied.
    Then, 
    \begin{align*}
        \mathbb{E}_{X^n}[G_n^Q] &= - \Tr(\rho \log \sigma(\theta_0)) + \frac{1}{n}\left( r_{CQ} \lambda + r_{CQ} \nu - \nu^Q \right) + o\left(\frac{1}{n}\right), \\
        \mathbb{E}_{X^n}[T_n^Q] &= - \Tr(\rho \log \sigma(\theta_0)) + \frac{1}{n}\left(r_{CQ} \lambda + r_{CQ} \nu - 2 \chi^Q - \nu^Q \right) + o\left(\frac{1}{n}\right),
    \end{align*}
    with the real log canonical threshold $\lambda$ and singular fluctuation $\nu$ defined in Definition \ref{defn:invariants in singular learning theory}, and 
    \begin{align}
        \nu^Q \coloneqq \frac{n}{2} \mathbb{E}_{X^n} \left[ \Tr(\rho \mathbb{V}_{\theta}[\log \sigma(\theta)]) \right], \quad 
        \chi^Q \coloneqq \frac{1}{2} \mathbb{E}_{X^n}\left[ \mathbb{E}_\theta\left[ \sqrt{t^Q} \xi_n^Q \right] \right].
    \end{align}
\end{thm}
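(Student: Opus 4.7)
The plan is to take the expectation $\mathbb{E}_{X^n}[\cdot]$ of both expansions already established in Theorem \ref{thm:q_expansion formulas of G^Q and T^Q for singular cases} and identify each resulting term with the quantities appearing in the statement. After passing the expectation through and (see below) justifying that each $o_p(1/n)$ term in the original expansion converts to $o(1/n)$ after averaging, the structure of the proof is largely bookkeeping against the definitions of $\lambda$, $\nu$, $\nu^Q$, $\chi^Q$, and $r_{CQ}$.

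First, I would handle the leading constant term. For $G_n^Q$ there is nothing to do, while for $T_n^Q$ the unbiasedness of the classical shadow, $\mathbb{E}_{X^n}[\hat{\rho}_{x_i}] = \rho$ for every $i$, yields
\[
\mathbb{E}_{X^n}\!\left[\Tr\!\left(\tfrac{1}{n}\sum_{i=1}^n \hat{\rho}_{x_i}\,\log\sigma(\theta_0)\right)\right] = \Tr(\rho\,\log\sigma(\theta_0)),
\]
so both expansions acquire the same leading term $-\Tr(\rho\log\sigma(\theta_0))$. Next, the term $-\tfrac{1}{2}\Tr(\rho\,\mathbb{V}_\theta[\log\sigma(\theta)])$ contributes exactly $-\nu^Q/n$ after taking $\mathbb{E}_{X^n}$, by the definition of $\nu^Q$ in the statement. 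The quantum empirical-process term appearing only in $T_n^Q$, namely $-\tfrac{1}{n}\mathbb{E}_\theta[\sqrt{t^Q}\,\xi_n^Q]$, contributes exactly $-2\chi^Q/n$ after averaging, again by definition.

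The substantive step is to show
\[
\mathbb{E}_{X^n}\!\left[\tfrac{1}{2}\mathbb{E}_\theta[\sqrt{t}\,\xi_n(u)]\right] = \nu + o(1).
\]
Here I would invoke the classical singular learning theory already reviewed in Appendix \ref{section:Revisiting singular learning theory and WAIC}. By Proposition \ref{prop:classical convergence law}, the empirical process $\xi_n$ converges in law to a Gaussian process $\xi$, and by Watanabe's analysis of the renormalized posterior (Definition \ref{defn:renormalized_posterior_dist}), the quantity $\mathbb{E}_\theta[\sqrt{t}\,\xi_n(u)]$ can be reexpressed through the renormalized posterior representation so that its expectation equals $\mathbb{E}_\xi[V(\xi)] + o(1) = 2\nu + o(1)$; this is precisely the chain of identities underlying Proposition \ref{prop:singular for Gn and Tn}. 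Multiplying through by the constant $r_{CQ}$ coming from the expansion yields the claimed $r_{CQ}\nu/n$ contribution.

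The main obstacle will be the analytic justification for exchanging $o_p$ and $\mathbb{E}_{X^n}$. Directly, convergence in probability does not imply convergence of expectations; one needs uniform integrability, which in this setting follows from the boundedness structure imposed by Fundamental conditions \ref{ass:fundamental condition} and \ref{ass:fundamental conditionII} (compactness of $\Theta$ and $L^2(q)$-analyticity of $f^Q$), together with the higher-order moment control in Lemmas \ref{lem:quantum higher order} and \ref{lem:higher_order_scaling_singular}. The same uniform-integrability style argument that Watanabe uses to promote $o_p(1/n)$ to $o(1/n)$ for the classical case (via the standard-form decomposition and moment bounds on $u^k$ under the renormalized posterior) carries over, provided one uses Assumption \ref{ass:S2} to ensure $k = k^Q$ so that the classical moment bounds also control the quantum empirical-process residuals. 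Once this is in place, collecting the identified contributions $r_{CQ}\lambda/n$, $r_{CQ}\nu/n$, $-\nu^Q/n$, and (for $T_n^Q$) $-2\chi^Q/n$ completes the proof.
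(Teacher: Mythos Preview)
Your proposal is correct and follows essentially the same approach as the paper: take expectations in Theorem \ref{thm:q_expansion formulas of G^Q and T^Q for singular cases}, identify the terms with $\nu^Q$ and $\chi^Q$ by definition, and use the classical identity $\mathbb{E}_{X^n}\!\big[\tfrac{1}{2}\mathbb{E}_\theta[\sqrt{t}\,\xi_n(u)]\big] = \nu + o(1)$ for the remaining piece. The paper compresses this into a one-line appeal to \cite[Lemma 22]{watanabe2018mathematical}, which is precisely the identity you call the ``substantive step''; your discussion of the $o_p(1/n)\to o(1/n)$ passage is more explicit than the paper's but consistent with the underlying machinery.
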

\begin{proof}
    As a consequence of Theorem \ref{thm:q_expansion formulas of G^Q and T^Q for singular cases}, this theorem is immediately derived from \cite[Lemma 22]{watanabe2018mathematical}.
\end{proof}
The quantities $\nu^Q$ and $\chi^Q$ defined here first appear in the context of quantum state estimation. As noted at the end of the paper, more detailed studies on these, as in the case of classical singular learning theory, are eagerly awaited in the future.

To obtain the final form and property of QWAIC, now, let us study $\mathbb{E}_\theta[ (u^{k^Q}/\sqrt{n}) \xi_n^Q ]$ observed in the proof of Theorem \ref{thm:q_expansion formulas of G^Q and T^Q for singular cases}.
For the later discussion, here we introduce the following quantity, inspired by the fluctuation of the renormalized posterior distribution \cite[Definition 20]{watanabe2018mathematical}.
\begin{defn}
\label{defn:quantum analog of invariants}
    By using the renormalized posterior distribution (Eq. \eqref{eq:renormalized_posterior_dist}), we define 
    \begin{align}
        C^Q(\xi_n) &\coloneqq \mathbb{E}_X\left[ \mathrm{Cov}_{\theta}\left[ \sqrt{t}a(X,u), \sqrt{t^Q}a^Q(\hat{\rho}_X,u) \right]\right], 
        \label{eq:q_func_cov}
    \end{align}
    where $t$ and $a(x,u)$ are defined in singular learning theory (Definition \ref{defn:invariants in singular learning theory}).
    The notion $\xi_n(u)$ is the renormalized empirical process defined in Eq. \eqref{eq:renormalized empirical process}.
\end{defn}
Note that $C^Q(\xi_n)$ is a functional of $\xi_n$ because the posterior covariance $\mathrm{Cov}_{\theta}[\cdot,\cdot]$ depends on $\xi_n$.
The original definition of the fluctuation of the renormalized posterior distribution in \cite{watanabe2018mathematical} is given by the posterior variance of $\sqrt{t}a(X,u)$, instead of the posterior covariance of $\sqrt{t}a(X,u)$ and $\sqrt{t^Q}a^Q(\hat{\rho}_X,u)$ in Definition \ref{defn:quantum analog of invariants}. 
This difference arises because the definition of the generalization and training loss has changed in our task of quantum state estimation.

In the following lemma, we show the relation between two quantities $\mathbb{E}_\theta[\sqrt{t^Q}\xi_n^Q(u)]$ appeared in Theorem \ref{thm:q_expansion formulas of G^Q and T^Q for singular cases} and $C^Q(\xi_n)$ in Definition \ref{defn:quantum analog of invariants} in the asymptotic limit.
\begin{lem}\label{lem:q_empirical process}
    Suppose that Assumptions \ref{ass:S1} is satisfied.
    Let $\xi(u)$ and $\xi^Q(u)$ be the Gaussian processes referred to in Proposition \ref{prop:classical convergence law} and \ref{prop:quantum convergence law}, respectively.
    Then, the following relation holds:
    \begin{equation}
        \mathbb{E}_{\xi}\left[ \mathbb{E}_\theta\left[ \sqrt{t^Q} \xi^Q(u) \right]\right] = \mathbb{E}_{\xi}\left[ C^Q(\xi) \right].
        \label{eq:asym_exp_C^Q}
    \end{equation}
\end{lem}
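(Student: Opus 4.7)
The plan is to mirror Watanabe's proof of the classical identity $\mathbb{E}_\xi[\mathbb{E}_\theta[\sqrt{t}\xi(u)]]=\mathbb{E}_\xi[V(\xi)]$ (cf.\ \cite[proof of Theorem 16]{watanabe2018mathematical}), adapted to the cross-covariance setting between $\xi$ and $\xi^Q$.  The key structural observation is that the renormalized posterior $\mathbb{E}_\theta[\cdot]$ in Eq.~\eqref{eq:renormalized_posterior_dist} is a functional of the classical empirical process $\xi$ alone, while $\xi^Q$ is merely jointly Gaussian with $\xi$ through the shared i.i.d.\ data.  This separation is exactly what makes Gaussian integration by parts (Stein's identity) available.

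First, I would upgrade Propositions~\ref{prop:classical convergence law} and~\ref{prop:quantum convergence law} to joint convergence in law of the pair $(\xi_n,\xi_n^Q)$ to a centered jointly Gaussian process $(\xi,\xi^Q)$.  Direct computation of the cross-covariance of $\xi_n$ and $\xi_n^Q$ at finite $n$, using $\mathbb{E}_X[a(X,u)]=u^k$ and $\mathbb{E}_X[a^Q(\hat\rho_X,u)]=r(u)u^{k^Q}$ (which follow from Lemma~\ref{lem:intro_a^Q} and its classical analogue), yields
\[
\mathbb{E}[\xi(u')\xi^Q(u)] = \mathbb{E}_X[a(X,u')\,a^Q(\hat\rho_X,u)] - u'^k r(u)u^{k^Q} = \mathrm{Cov}_X[a(X,u'),\,a^Q(\hat\rho_X,u)].
\]
Joint tightness follows from Fundamental condition~\ref{ass:fundamental condition} together with Assumption~\ref{ass:S1}, applied to the multivariate version of \cite[Theorem 6.2]{watanabe2009algebraic}.

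Second, for any smooth functional $F[\xi]$, Gaussian integration by parts gives
\[
\mathbb{E}_\xi\!\left[\xi^Q(u)\,F[\xi]\right] = \int \mathbb{E}[\xi^Q(u)\xi(u')]\,\mathbb{E}_\xi\!\left[\frac{\delta F[\xi]}{\delta \xi(u')}\right] du'.
\]
Differentiating numerator and denominator of Eq.~\eqref{eq:renormalized_posterior_dist}, one obtains the key Stein-type identity that for any test $\phi(u')$,
\[
\int \phi(u')\,\frac{\delta \mathbb{E}_\theta[z(t,u)]}{\delta \xi(u')}\,du' = \mathrm{Cov}_\theta\!\left[z(t,u),\,\sqrt{t}\,\phi(u)\right].
\]
Taking $F[\xi] = \mathbb{E}_\theta[\sqrt{t^Q(u)}]$ (viewed, after reintroducing the outer integration over $u$, as a functional of $\xi$ only), applying Stein's identity, and using the explicit form of the cross-covariance $\mathbb{E}[\xi^Q(u)\xi(u')]=\mathrm{Cov}_X[a(X,u'),a^Q(\hat\rho_X,u)]$, the resulting quantity reorganizes via the posterior covariance structure into $\mathbb{E}_X\!\left[\mathrm{Cov}_\theta[\sqrt{t}\,a(X,u),\sqrt{t^Q}\,a^Q(\hat\rho_X,u)]\right] = C^Q(\xi)$.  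The centering terms $u'^k$ and $r(u)u^{k^Q}$ in the cross-covariance are exactly those that turn the pairing $\mathbb{E}_X[a\cdot a^Q]$ into a covariance $\mathrm{Cov}_X[a,a^Q]$, matching Definition~\ref{defn:quantum analog of invariants}.

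The main obstacle is the rigorous justification of three interchanges: (i) the functional derivative with the posterior integral over $(t,u)$, (ii) Stein's identity in the function-space setting for the Gaussian process $\xi$, and (iii) the passage to the limit $n\to\infty$.  Finite-$n$ bookkeeping corrections appear (the analogue of the classical $\mathbb{V}_\theta[\sqrt{t}u^k]/n$ term that arises when expanding the variance $V(\xi)=\mathbb{E}_X[\mathbb{V}_\theta[\sqrt{t}a(X,u)]]$), and these must be shown to vanish in the Gaussian-process limit thanks to the state-density normalization together with the moment bounds provided by Assumption~\ref{ass:S1} and the finite-variance properties of the renormalized posterior established in \cite[Chapter 6]{watanabe2009algebraic}.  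Once these technical interchanges are justified, the identity \eqref{eq:asym_exp_C^Q} follows from the computation above.
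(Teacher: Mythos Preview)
Your proposal is correct and takes essentially the same approach as the paper: Gaussian integration by parts (Stein's identity) applied to the joint process $(\xi,\xi^Q)$, exploiting that the renormalized posterior depends on $\xi$ alone. The paper makes this concrete via a Karhunen--Lo\`eve expansion in an orthonormal basis $\{e_i(x)\}$ of $L^2(q)$, representing both $\xi$ and $\xi^Q$ through the \emph{same} sequence $\{g_i\}$ of standard Gaussians and applying one-dimensional Stein's lemma coordinate-wise, which is the discrete counterpart of your functional-derivative formulation.
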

\begin{proof}
    Let $\{g_i\}_{i=1}^{\infty}$ and $\{g_i^Q\}_{i=1}^{\infty}$ be independent Gaussian random variables on $\mathbb{R}$ which satisfy $\mathbb{E}[g_i]=0$, $\mathbb{E}[g_ig_j] = \delta_{ij}$, $\mathbb{E}[g_i^Q]=0$, and $\mathbb{E}[g_i^Qg_j^Q] = \delta_{ij}$.
    For such random variables $g_i$, Stein's lemma implies 
    \begin{align}
        \mathbb{E}\left[g_i F(g_i)\right] = \mathbb{E}\left[\frac{\partial}{\partial g_i}F(g_i)\right]
        \label{eq:exp_rv_g}
    \end{align}
    for a differentiable absolutely continuous function of $F(\cdot)$.
    Since $L^2(q)$ is a separable Hilbert space, there exists a complete orthonormal system $\{e_i(x)\}_{i=1}^{\infty}$. 
    Putting 
    \begin{align}
        b_i(u) = \int a(x,u) e_i(u) q(x) dx, \quad b_i^Q(u) = \int a^Q(\hat{\rho}_x,u) e_i(u) q(x) dx,
    \end{align}
    it follows that 
    \begin{align}
        a(x,u) = \sum_{i=1}^{\infty} b_i(u) e_i(x), \quad
        a^Q(\hat{\rho}_x,u) = \sum_{i=1}^{\infty} b_i^Q(u) e_i(x), 
    \end{align}
    and 
    \begin{align}
        \mathbb{E}_X[a(X,u)a(X,v)] = \sum_{i=1}^{\infty} b_i(u) b_i(v),\quad \mathbb{E}_X[a(X,u)a^Q(\hat{\rho}_X,v)] = \sum_{i=1}^{\infty} b_i(u) b_i^Q(v).
        \label{eq:exp_a_a}
    \end{align}
    Consequently, the covariances are given by
    \begin{align}
        \mathbb{E}_\xi[\xi(u)\xi(v)] = \mathbb{E}_X[a(X,u)a(X,v)] = \sum_{i=1}^{\infty} b_i(u) b_i(v), \quad
        \mathbb{E}_\xi[\xi(u)\xi^Q(v)] = \mathbb{E}_X[a(X,u)a^Q(\hat{\rho}_X,v)] = \sum_{i=1}^{\infty} b_i(u) b_i^Q(v).
    \end{align}
    The first equality follows from Eq. \eqref{eq:cov_xi} and due to the fact that $u^k = 0$ because the supports of the posterior distribution are included in the set of $u$ satisfying $K(g(u)) = 0$.
    In addition, for Gaussian processes defined by
    \begin{align}
        \xi^*(u) = \sum_{i=1}^{\infty} b_i(u) g_i, \quad 
        {\xi^Q}^*(u) = \sum_{i=1}^{\infty} b_i^Q(u) g_i^Q
    \end{align}
    to have the same expectation and covariance matrices as $\xi(u)$ and $\xi^Q(u)$, we further require $\mathbb{E}[g_ig_j^Q]=\delta_{ij}$, which yields
    \begin{align}\label{eq:same_g}
        g_i=g_i^Q \quad \mathrm{for\ all}\ i.
    \end{align} 
    Next, based on the renormalized posterior distribution (Eq. \eqref{eq:renormalized_posterior_dist}), let us temporarily define 
    \begin{align}
        S[\cdot] = \int du \, D(u) \int_0^\infty dt \, t^{\lambda-1} \exp(-t) [\cdot].
    \end{align}
    Then, the left-hand side of Eq. \eqref{eq:asym_exp_C^Q} is calculated as
    \begin{align}
        \mathbb{E}_{\xi}\left[ \mathbb{E}_\theta\left[ \sqrt{t^Q} \xi^Q \right]\right] 
        &= \mathbb{E}_{\xi}\left[ \frac{S\left[\sqrt{t^Q}\xi^Q\exp(\sqrt{t}\xi)\right]}{S\left[\exp(\sqrt{t}\xi)\right]} \right] \\
        &= \sum_{i=1}^{\infty} \mathbb{E}_{\xi}\left[ \frac{\partial}{\partial g_i} \frac{S\left[\sqrt{t^Q}b_i^Q(u)\exp(\sqrt{t}\xi)\right]}{S\left[\exp(\sqrt{t}\xi)\right]} \right] \\
        &= \sum_{i=1}^{\infty} \left\{ \mathbb{E}_{\xi}\left[ \frac{S\left[\sqrt{tt^Q}b_i(u)b_i^Q(u)\exp(\sqrt{t}\xi)\right]}{S\left[\exp(\sqrt{t}\xi)\right]} \right]  - \mathbb{E}_{\xi}\left[ \frac{S\left[\sqrt{t^Q}b_i^Q(u)\exp(\sqrt{t}\xi)\right]S\left[\sqrt{t}b_i(u)\exp(\sqrt{t}\xi)\right]}{S\left[\exp(\sqrt{t}\xi)\right]^2} \right] \right\} \\
        &= \mathbb{E}_{\xi}\left[ \mathbb{E}_X\left[ \mathbb{E}_\theta\left[ \sqrt{tt^Q}a(X,u)a^Q(\hat{\rho}_X,u) \right] - \mathbb{E}_\theta\left[ \sqrt{t^Q}a^Q(\hat{\rho}_X,u) \right] \mathbb{E}_\theta\left[ \sqrt{t}a(X,u) \right]\right]\right] \\
        &= \mathbb{E}_{\xi}[C^Q(\xi)].
    \end{align}
    The second equality uses Eqs. \eqref{eq:exp_rv_g} and \eqref{eq:same_g}.
    The fourth equality follows from Eq. \eqref{eq:exp_a_a}.
    This completes the proof.
\end{proof}

\begin{rem}
\label{rem:q_suzuki36_finite}
    For a finite $n$, the following holds:
    \begin{align}
        \mathbb{E}_{\xi_n}\left[ \mathbb{E}_\theta\left[ \sqrt{t^Q} \xi_n^Q(u)\right]\right] = \mathbb{E}_{\xi_n}\left[ C^Q(\xi_n) \right] + o(1),
    \end{align}
    because of the convergence in distribution $\xi_n \to \xi$ and $\xi_n^Q \to \xi^Q$ (Propositions \ref{prop:classical convergence law} and \ref{prop:quantum convergence law}).
\end{rem}

From Theorem \ref{thm:q_expansion formulas for the expectations for singular cases} and Remark \ref{rem:q_suzuki36_finite}, we can see the difference \[\mathbb{E}_{X^n}[G_n^Q] - \mathbb{E}_{X^n}[T_n^Q]\] is $\mathbb{E}_{X^n}[C^Q(\xi_n)/n]$ up to $o(1/n)$.
Thus, the remaining problem to prove that QWAIC is an asymptotically unbiased estimator of $G_n^Q$ is to show
\begin{align}
    \mathbb{E}_{X^n}[ C_n^Q ] = \mathbb{E}_{X^n}\left[ \frac{C^Q(\xi_n)}{n} \right] + o\left(\frac{1}{n}\right), 
\end{align}
which is proven in the following lemma.
\begin{lem}
\label{lem:convergence_C_n^Q}
    Suppose that Assumptions \ref{ass:S1} is satisfied.
    Then,
    \begin{align}
        \mathbb{E}_{X^n}[ n C_n^Q ] \to \mathbb{E}_{\xi}[ C^Q(\xi) ] \quad (n \to \infty)
    \end{align}
    where $C_n^Q$ is defined in Eq. \eqref{eq:C_n^Q}.
\end{lem}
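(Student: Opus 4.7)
The plan is to reduce $nC_n^Q$ to the object $C^Q(\xi_n)$ of Definition \ref{defn:quantum analog of invariants} by combining the resolution of singularities with the standard forms of the log-likelihood ratios, and then to pass to the limit along the convergence in distribution $\xi_n \to \xi$ supplied by Proposition \ref{prop:classical convergence law}.

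First I would exploit the invariance of covariance under constant shifts of its arguments to replace $\log p(X_i|\theta)$ by $-f(X_i,\theta)$ and $\Tr(\hat\rho_{X_i}\log\sigma(\theta))$ by $-f^Q(\hat\rho_{X_i},\theta)$, so that
$$nC_n^Q = \sum_{i=1}^n \mathrm{Cov}_\theta\bigl[f(X_i,\theta),\,f^Q(\hat\rho_{X_i},\theta)\bigr].$$
After moving to the local chart $\theta = g(u)$, Lemma \ref{lem:intro_a^Q} and the classical standard form yield $f(x,g(u))=u^k a(x,u)$ and $f^Q(\hat\rho_x,g(u))=u^{k^Q}a^Q(\hat\rho_x,u)$. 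Using $\sqrt{t}=\sqrt{n}\,u^k$ and $\sqrt{t^Q}=\sqrt{n}\,u^{k^Q}$ (so each factor $u^k u^{k^Q}$ produces a $1/n$), this gives
$$nC_n^Q = \frac{1}{n}\sum_{i=1}^n \mathrm{Cov}_\theta\bigl[\sqrt{t}\,a(X_i,u),\,\sqrt{t^Q}\,a^Q(\hat\rho_{X_i},u)\bigr],$$
where $\mathrm{Cov}_\theta$ is now computed against the renormalized posterior of Definition \ref{defn:renormalized_posterior_dist}, which is itself a functional of $\xi_n$.

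Second, the right-hand side differs from $C^Q(\xi_n) = \mathbb{E}_X[\mathrm{Cov}_\theta[\sqrt{t}a(X,u),\sqrt{t^Q}a^Q(\hat\rho_X,u)]]$ only in that each summand uses $X_i$ both as the evaluation point and implicitly through the posterior. By the permutation symmetry of the posterior in $X^n$ and a leave-one-out argument — removing a single sample from $\xi_n$ perturbs it by $O(1/\sqrt{n})$, while $(X_i)_{i\le n}$ drawn fresh would produce exactly $\mathbb{E}_X[\,\cdot\,]$ — one shows
$$\mathbb{E}_{X^n}[nC_n^Q] \;=\; \mathbb{E}_{X^n}\bigl[C^Q(\xi_n)\bigr] + o(1).$$
Finally, $\xi_n\to\xi$ in distribution by Proposition \ref{prop:classical convergence law}. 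Because $C^Q(\cdot)$ is built from posterior integrals of the smooth weight $t^{\lambda-1}\exp(-t+\sqrt{t}\,\xi(u))$ together with the $L^2(q)$-valued analytic functions $a, a^Q$, it is continuous in $\xi$, and the continuous mapping theorem gives $C^Q(\xi_n)\to C^Q(\xi)$ in distribution. Combined with a uniform-integrability bound — which one extracts from the relatively finite variance of $f$ and $f^Q$ in Assumption \ref{ass:S1} and the $L^2(q)$ norms of $a, a^Q$ — this upgrades to convergence of expectations: $\mathbb{E}_{X^n}[C^Q(\xi_n)]\to\mathbb{E}_\xi[C^Q(\xi)]$.

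The main obstacle will be step two, namely making rigorous the replacement of the empirical mean $\frac{1}{n}\sum_i \mathrm{Cov}_\theta[\,\cdot_i]$ by $\mathbb{E}_X[\mathrm{Cov}_\theta[\,\cdot\,]]$ while the posterior itself depends on the very $X_i$ being summed. The classical analog of this ``self-interaction'' term is controlled via stability estimates of the renormalized posterior in $\xi_n$ and the $L^2(q)$-continuity of $x\mapsto a(x,\cdot)$ and $\hat\rho_x\mapsto a^Q(\hat\rho_x,\cdot)$; the quantum setting requires checking that the classical-shadow-valued map $\hat\rho_x\mapsto a^Q(\hat\rho_x,u)$ satisfies the same $L^2$ bounds, which is exactly what Lemma \ref{lem:intro_a^Q} and Fundamental condition \ref{ass:fundamental condition} provide. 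The uniform-integrability step is the second technical point, since one must control the moments of the Gaussian-process-indexed covariance functional $C^Q(\xi)$ uniformly in $n$.
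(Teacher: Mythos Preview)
Your proposal is correct and follows essentially the same route as the paper: rewrite $nC_n^Q$ in the standard-form coordinates as an empirical average of posterior covariances $\frac{1}{n}\sum_i \mathrm{Cov}_\theta[\sqrt{t}\,a(X_i,u),\sqrt{t^Q}\,a^Q(\hat\rho_{X_i},u)]$, then invoke the law of large numbers together with the convergence in distribution $\xi_n\to\xi$ to obtain $\mathbb{E}_\xi[C^Q(\xi)]$. The paper compresses your steps two and three into a single line (``due to the law of large numbers and the convergence in distribution of $\xi_n$ and $\xi_n^Q$, simultaneously''), whereas you spell out the leave-one-out decoupling, the continuous-mapping step, and the uniform-integrability upgrade; these are precisely the details needed to make that line rigorous, so your version is the more careful of the two but not a different argument.
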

\begin{proof}
    By definition,
    \begin{align}
        C_n^Q &\coloneqq \frac{1}{n} \sum_{i=1}^{n} \mathrm{Cov}_{\theta}[\log p(x_i|g(u)),\Tr(\hat{\rho}_{x_i} \log \sigma(g(u)))] = \frac{1}{n} \sum_{i=1}^{n} \mathrm{Cov}_{\theta}\left[\sqrt{\frac{t}{n}} a(x_i,u), \sqrt{\frac{t^Q}{n}} a^Q(\hat{\rho}_{x_i},u)\right]. 
    \end{align}
    This leads us to 
    \begin{align}
        \mathbb{E}_{X^n}[ n C_n^Q ] &= \mathbb{E}_{X^n}\left[ \frac{1}{n} \sum_{i=1}^{n} \mathrm{Cov}_{\theta}\left[\sqrt{t} a(X_i,u), \sqrt{t^Q} a^Q(\hat{\rho}_{X_i},u)\right] \right] \\
        &\to \mathbb{E}_\xi\left[ \mathbb{E}_X\left[ \mathrm{Cov}_{\theta}\left[\sqrt{t} a(X,u), \sqrt{t^Q} a^Q(\hat{\rho}_{X},u)\right] \right] \right] \\
        &= \mathbb{E}_\xi\left[ C^Q(\xi) \right].
    \end{align}
    The above convergence holds due to the law of large numbers and the convergence in distribution of $\xi_n$ and $\xi_n^Q$, simultaneously.
    The last equality follows from the definition of $C^Q(\xi)$ in Definition \ref{defn:quantum analog of invariants}.
\end{proof}

Summarizing the above computation, we obtain the following desired theorem.
It establishes an information criterion for quantum singular models.
\begin{thm}\label{thm:q_QWAIC is unbiased estimator for singular cases}
    Suppose that Assumptions \ref{ass:S1} and \ref{ass:S2} are satisfied.
    Then, $\mathrm{QWAIC}$ is an asymptotically unbiased estimator for $G_n^Q$:
    \[  \mathbb{E}_{X^n}[G_n^Q] = \mathbb{E}_{X^n}[\mathrm{QWAIC}] + o\left(\frac{1}{n}\right).\]
\end{thm}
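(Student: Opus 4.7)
The plan is to reduce the claim to an asymptotic identification of two posterior-covariance-type quantities: on one side, the singular-fluctuation-like quantity $2\chi^Q/n$ that encodes the gap between $\mathbb{E}_{X^n}[G_n^Q]$ and $\mathbb{E}_{X^n}[T_n^Q]$, and on the other side, the computable penalty term $\mathbb{E}_{X^n}[C_n^Q]$. From Theorem \ref{thm:q_expansion formulas for the expectations for singular cases} one immediately has
\[
\mathbb{E}_{X^n}[G_n^Q] - \mathbb{E}_{X^n}[T_n^Q] \;=\; \frac{2\chi^Q}{n} + o\!\left(\frac{1}{n}\right),
\]
so, since $\mathrm{QWAIC} = T_n^Q + C_n^Q$ by Definition \ref{def:QWAIC}, the theorem follows once we establish the asymptotic identity
\begin{equation}
\label{eq:plan_key}
\frac{2\chi^Q}{n} \;=\; \mathbb{E}_{X^n}\!\left[C_n^Q\right] + o\!\left(\frac{1}{n}\right).
\end{equation}

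To prove \eqref{eq:plan_key}, the strategy is to push everything through the log resolution $g:\widetilde{\Theta}\to\Theta$. By the standard form of $f^Q$ from Lemma \ref{lem:intro_a^Q}, combined with the analogous classical standard form $f(x,g(u)) = u^k a(x,u)$ from singular learning theory, I can rewrite the posterior covariance defining $C_n^Q$ in terms of $a$ and $a^Q$ lifted to $\widetilde{\Theta}$. Concretely, $\log p(x|g(u)) - \log p(x|\theta_0) = -u^k a(x,u)$ and $\Tr(\hat\rho_x \log\sigma(g(u))) - \Tr(\hat\rho_x\log\sigma(\theta_0)) = -u^{k^Q}a^Q(\hat\rho_x,u)$, so each posterior covariance summand in $C_n^Q$ becomes
\[
\mathrm{Cov}_{\theta}\!\left[\,u^k a(x_i,u),\; u^{k^Q} a^Q(\hat\rho_{x_i},u)\,\right]
= \frac{1}{n}\,\mathrm{Cov}_{\theta}\!\left[\sqrt{t}\,a(x_i,u),\;\sqrt{t^Q}\,a^Q(\hat\rho_{x_i},u)\right],
\]
using $t = n u^{2k}$, $t^Q = n u^{2k^Q}$. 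Averaging over the samples and invoking the law of large numbers against the empirical measure turns the sum into $\mathbb{E}_X[\,\cdot\,]$ and yields exactly the functional $C^Q(\xi_n)/n$ of Definition \ref{defn:quantum analog of invariants}; this is precisely the content of Lemma \ref{lem:convergence_C_n^Q}, which gives $\mathbb{E}_{X^n}[nC_n^Q] \to \mathbb{E}_\xi[C^Q(\xi)]$.

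Next I would handle $2\chi^Q$. By its definition in Theorem \ref{thm:q_expansion formulas for the expectations for singular cases}, $2\chi^Q = \mathbb{E}_{X^n}[\mathbb{E}_\theta[\sqrt{t^Q}\xi_n^Q]]$, a quantity living on the desingularized parameter space through the renormalized posterior (Definition \ref{defn:renormalized_posterior_dist}) and the renormalized empirical process $\xi_n^Q$ of Eq. \eqref{eq:xi_n^Q}. The Gaussian-process identification in Lemma \ref{lem:q_empirical process}, together with its finite-$n$ version in Remark \ref{rem:q_suzuki36_finite}, gives
\[
\mathbb{E}_{X^n}\!\left[\mathbb{E}_\theta[\sqrt{t^Q}\,\xi_n^Q]\right] \;=\; \mathbb{E}_{X^n}[C^Q(\xi_n)] + o(1).
\]
Combining this with Lemma \ref{lem:convergence_C_n^Q} yields $2\chi^Q = \mathbb{E}_{X^n}[nC_n^Q] + o(1)$, which is exactly \eqref{eq:plan_key} after dividing by $n$.

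The main obstacle I anticipate is the joint treatment of the classical and quantum renormalized empirical processes on $\widetilde{\Theta}$ when matching $\mathbb{E}_\theta[\sqrt{t^Q}\xi_n^Q]$ to $C^Q(\xi_n)$. Although Lemma \ref{lem:q_empirical process} does the heavy lifting at the level of Gaussian limits via a Stein-type integration-by-parts argument (exploiting a shared orthonormal basis of $L^2(q)$ so that the Gaussian representers of $\xi$ and $\xi^Q$ use the same random variables $g_i=g_i^Q$), transferring that identity to finite $n$ requires uniform integrability along the joint convergence $(\xi_n,\xi_n^Q)\Rightarrow(\xi,\xi^Q)$. This joint convergence is not explicit in Propositions \ref{prop:classical convergence law}--\ref{prop:quantum convergence law}, and to justify the $o(1)$ term in Remark \ref{rem:q_suzuki36_finite} rigorously one must either appeal to Assumption \ref{ass:S1} (controlling the $L^2$-norms of $a$ and $a^Q$ uniformly in $u$) or prove a continuous mapping argument for the functional $\xi \mapsto \mathbb{E}_\theta[\sqrt{t^Q}\xi^Q]$ on an appropriate Banach space. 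Once these convergence issues are settled, the proof is completed by simply chaining the three displayed asymptotic equalities.
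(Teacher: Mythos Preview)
Your proposal is correct and follows essentially the same route as the paper's proof: reduce to the identity $2\chi^Q = \mathbb{E}_{X^n}[nC_n^Q] + o(1)$, establish one side via Lemma \ref{lem:q_empirical process} together with Remark \ref{rem:q_suzuki36_finite} and the other via Lemma \ref{lem:convergence_C_n^Q}, then conclude through Theorem \ref{thm:q_expansion formulas for the expectations for singular cases}. Your closing concern about the joint convergence $(\xi_n,\xi_n^Q)\Rightarrow(\xi,\xi^Q)$ needed to justify the $o(1)$ in Remark \ref{rem:q_suzuki36_finite} is legitimate, and the paper handles this point at the same level of informality you flag.
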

\begin{proof}
     From Lemma \ref{lem:q_empirical process} and Remark \ref{rem:q_suzuki36_finite}, we find 
     \begin{align}
         \chi^Q = \frac{1}{2} \mathbb{E}_{X^n}\left[ C^Q(\xi_n) \right] + o(1).
     \end{align}
     Then, Theorem \ref{thm:q_expansion formulas for the expectations for singular cases} and Lemma \ref{lem:convergence_C_n^Q} ensures 
     \begin{align}
         \mathbb{E}_{X^n}[\mathrm{QWAIC}] &= \mathbb{E}_{X^n}\left[ T_n^Q + C_n^Q \right]\\
         &= - \Tr(\rho \log \sigma(\theta_0)) + \frac{1}{n} \left( r_{CQ} \lambda + r_{CQ} \nu - \nu^Q \right) + o\left(\frac{1}{n}\right) \\
         &= \mathbb{E}_{X^n}[G_n^Q] + o\left(\frac{1}{n}\right),
     \end{align}
     completing the proof.
\end{proof}

\begin{rem}
    QWAIC, defined in this paper, generalizes the classical WAIC for probability density estimation to quantum state estimation, accounting for the additional penalty introduced by quantum measurements. 
    In fact, we can confirm that QWAIC and WAIC behave the same by regarding a true quantum state $\rho$ and quantum statistical models $\sigma(\theta)$ as classical states and replacing the tomographic complete measurement with the computational basis measurement.
    This criterion can be used for model selection in quantum statistical inference, providing an asymptotically unbiased estimate of the quantum generalization loss.
\end{rem}

It would be interesting to consider the geometric meanings of $\nu^Q$, $\nu'^Q$, and $\chi^Q$ appearing in the analysis of QWAIC, similar to $\lambda$ and $\nu$ in singular learning theory.
This will be explored in future work.

\newpage

\end{document}